\let\originallhook\lhook
\newcommand{\lhook}{\mathrel{\raise.018ex\hbox{$\originallhook$}}}
\let\llncssubparagraph\subparagraph
\let\subparagraph\paragraph
\let\subparagraph\llncssubparagraph
\newcommand\ord[1]{\mathsf{ord}(#1)}
\newcommand\xcall[2]{\call{\hbox{{\it #1}}(#2)}}
\newcommand\xret[2]{\ret{\hbox{{\it #1}}(#2)}}
\newcommand\Lenc[1][L]{#1\textrm{enc}}
\newcommand\lto[1][]{\longrightarrow_{#1}}
\newcommand\redL{\longrightarrow_{\textrm{lib}}}
\newcommand\xr[2][]{\xlongrightarrow{#2}_{#1}}
\newcommand\tid{t_{\mathsf{id}}}
\newcommand\tred[2]{\xrightarrow{#2}_{#1}}
\newcommand\pred[2]{\xLongrightarrow{#2}_{}}
\newcommand{\capprox}{\,\raisebox{-.5ex}{$\stackrel{\textstyle\sqsubset}{\scriptstyle{\sim}}$}\,}
\newcommand\rett[1]{\mathop{\langle#1\rangle}}
\newcommand\letin[3][]{{\sf let}^{#1}\ #2\ {\sf in}\ #3}
\newcommand\link[2]{{\sf link}\ #1\ {\sf in}\ #2}
\newcommand\while[2]{{\sf while}\ #1\ #2}
\newcommand\ifthe[3]{{\sf if}\ #1\ {\sf then}\ #2\ {\sf else}\ #3}
\newcommand\Meths{{\sf Meths}}
\newcommand\Refs{{\sf Refs}}
\newcommand\Vars{{\sf Vars}}
\newcommand\Z{\mathbb{Z}}
\renewcommand\k{\mathcal{K}}
\renewcommand\l{\mathcal{L}}
\newcommand\EE{\mathcal{E}}
\newcommand\FF{\mathcal{F}}
\renewcommand\SS{\mathcal{S}}
\newcommand\PP{\mathcal{P}}
\newcommand\RR{\mathcal{R}}
\newcommand\CC{\mathcal{C}}
\renewcommand\AA{\mathcal{A}}
\newcommand\call[2][]{\mathsf{call}_{#1}\,#2}
\newcommand\ret[2][]{\mathsf{ret}_{#1}\,#2}
\newcommand\dom{\mathsf{dom}}
\newcommand\sem[1]{\llbracket #1 \rrbracket}
\newcommand\encsem[1]{\llbracket #1 \rrbracket_\textrm{enc}}
\renewcommand\tint{\mathsf{int}}
\newcommand\unit{\mathsf{unit}}
\newcommand\proj[2]{#1\upharpoonright #2}
\newcommand\comp{\mathop{\text{\bf ;}}}
\newcommand\insp{\mathit{\phi}}
\newcommand\clg[1]{\mathcal{#1}}
\newcommand\genapp{\capprox}
\newcommand\genlin{\sqsubseteq}
\newcommand\encapp{\capprox_{\textrm{\rm enc}}}
\newcommand\enclin{\sqsubseteq_{\textrm{\rm enc}}}
\newcommand\Rho{\mathrm{P}}
\newcommand\R{\mathcal{R}}
\newcommand\hseq[1]{\mathsf{H}_{\textrm{pre}}^{#1}}
\newcommand\natnum{\mathbb{N}}
\newcommand\sat[2]{\triangleleft_{#1 #2}}
\newcommand\hist{\clg{H}}
\newcommand\ctx{{C}}
\newcommand\btype{\Theta}
\tikzset{automaton/.style={node distance=2cm,on grid}}
\tikzset{every state/.style={draw=none,minimum size=5pt,inner sep=1pt}}
\tikzset{transition/.style={->,>=stealth',shorten >=1pt}}
\tikzset{every initial by arrow/.style={transition}}
\tikzset{initial text={}}
\tikzset{
    %Define standard arrow tip
    >=stealth',
    %Define style for boxes
    punkt/.style={
           rectangle,
           rounded corners,
           draw=black, thick,
           text width=6.5em,
           minimum height=2em,
           text centered},
    % Define arrow style
    pil/.style={
           ->,
           shorten <=2pt,
           shorten >=2pt,},
    qwe/.style={
           -,
           shorten <=2pt,
           shorten >=2pt,}           
}
\newcommand{\cutout}[1]{}
\newif\ifsubmitversion
\newcommand\colortext[2]{#2}
\newcommand{\sidenote}[1]{}
\newcommand\colortext[2]{{\color{#1}#2}}
\newcommand{\sidenote}[1]{\marginpar{\parbox{35mm}{\raggedright\small #1}}}
\newcommand\nt[1]{\colortext{blue}{#1}}
\newcommand\am[1]{\colortext{red}{#1}}
\newcommand\ntnote[1]{\sidenote{\nt{#1}}}
\newcommand\amnote[1]{\sidenote{\am{#1}}}
\newcommand\rarr\rightarrow
\newcommand\abra[1]{\langle #1 \rangle}
\newcommand\boldemph[1]{\textbf{\em #1}}
\let\olddefinition\definition
\renewcommand{\definition}{\olddefinition\normalfont}
\newtheorem{theorem}{Theorem}
\newtheorem{lemma}[theorem]{Lemma}
\newtheorem{corollary}[theorem]{Corollary}
\theoremstyle{definition}
\newtheorem{definition}[theorem]{Definition}
\theoremstyle{definition}
\newtheorem{example}[theorem]{Example}
\newtheorem{remark}[theorem]{Remark}
\newcommand\kvdash{\vdash_\mathsf{K}}
\newcommand\bvdash{\vdash_\mathsf{B}}
\newcommand\lvdash{\vdash_\mathsf{L}}
\newcommand\h{\mathtt{h}}
\newcommand\s{\mathtt{s}}
\begin{document}

\title{Higher-Order Linearisability}

\author{
  Andrzej S.\ Murawski\inst{1}
  \and Nikos Tzevelekos\inst{2}
}
\institute{
  $^1$\ University of Warwick\quad
  $^2$\  Queen Mary University of London
}

\maketitle

\begin{abstract}
Linearisability is a central notion for verifying concurrent
libraries: a given library is proven safe if its operational history
can be rearranged into a new sequential one which, in addition,
satisfies a given specification.
Linearisability has been examined for libraries in which method
arguments and method results are of ground type,
including libraries parameterised with such methods.
In this paper we extend linearisability to the general higher-order
setting: methods can be passed as arguments and returned as values.
A library may also depend on abstract methods of any order.
We use this generalised notion to show correctness of several
higher-order example libraries.
\end{abstract}

\section{Introduction}
\cutout{
\nt{Suggested changes to the Intro:
\begin{itemize}
\item Tone down the parameterised aspect and focus more on the high-order one. Remove the notion of ground libraries and mention parameterised linearisability as a special case of HO-linearisability
\item Talk about sequential histories and linearisability as being a rearrangement to a sequential history that is moreover safe. Sequential history now means one that can be broken into length-2 components of the same thread
\item Introduce first example
\end{itemize}}
}%
Computer programs often take advantage of \emph{libraries}, which are collections of routines, often of specialised nature, implemented
to facilitate software development and, among others, code reuse and modularity.
To support the latter, libraries should follow their specifications, which describe the range of expected behaviours the library should conform to for safe and correct deployment.
Adherence to given specifications can be formalised using the classic notion of contextual approximation (refinement),
which scrutinises the behaviour of code in any possible context. Unfortunately, the quantification makes it difficult
to prove contextual approximations directly, which motivates research into sound techniques for establishing it.

In the concurrent setting, a notion that has been particularly influential is that of \emph{linearisability}~\cite{HW90}.
Linearisability requires that, for each history generated by a library, one should be able to find another
history from the specification (its \emph{linearisation}),
which matches the former up to certain rearrangements of events.
In the original formulation by Herlihy and Wang~\cite{HW90}, these permutations were not allowed to disturb the order between library returns
and client calls. 
Moreover, linearisations were required to be \emph{sequential} traces, that is, sequences of method calls immediately followed by their returns.
This notion of linearisability only applies to closed, i.e.\ fully implemented, libraries in which both method arguments and results are of \emph{ground} types. 
The closedness limitation was lifted by
Cerone, Gotsman and Yang~\cite{CGY14}, who extended the techniques to 
\emph{parametric} libraries, whereby methods were divided into available routines (public methods) and unimplemented ones (abstract methods). 
However, both public and abstract methods were still restricted to first-order functions of type $\tint\rarr\tint$. %i.e.\ a restricted second-order setting.
In this paper, we make a further step forward and present linearisability 
for general higher-order concurrent libraries, where methods can be of arbitrary higher-order types. 
In doing so, we also propose a corresponding notion of sequential history for higher-order histories.

We examine libraries $L$ that can interact with their environments by means of public and abstract methods:
a library $L$ with abstract methods of types $\Theta=\theta_1,\cdots,\theta_n$ and public methods $\Theta'=\theta_1',\cdots,\theta_{n'}'$ 
is written as $L:\Theta\to \Theta'$. 
We shall work with arbitrary higher-order types generated from the ground types $\unit$ and $\tint$.
Types in $\Theta,\Theta'$ must always be function types, i.e.\ their order is at least $1$.

\begin{wrapfigure}{r}{0.6\textwidth}\vspace{-7.5mm}
\includegraphics[scale=0.35]{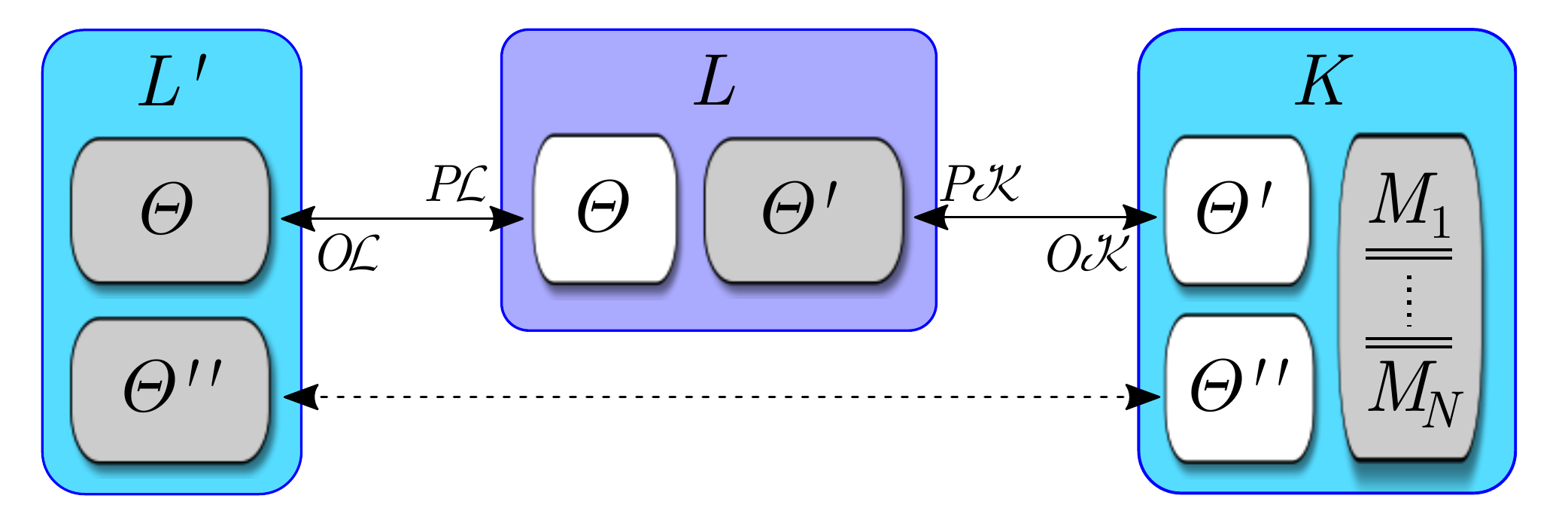}\vspace{-2mm}
\caption{A library $L:\Theta\to\Theta'$ in environment comprising a parameter library $L':\emptyset\to\Theta,\Theta''$ and a client $K$ of the form $\Theta',\Theta''\vdash M_1\|\cdots\|M_N$.}
\label{fig:frame}\vspace{-6mm}
\end{wrapfigure} % green: #c2f165ff
A library $L$ may be used in computations by placing it in a context 
that will keep on calling its public methods (via a client $K$)
as well as providing implementations for the abstract ones (via a parameter library $L'$). 
The setting is depicted in Figure~\ref{fig:frame}.
%We shall distinguish the two roles and call them $\k$ (client) and $\l$ (parameter library) respectively.
%Abstract methods are implemented by parameter libraries $L':1\to\Theta$. 
%
Note that,  as the library $L$ interacts with $K$ and $L'$, they exchange functions between each other. 
Consequently, in addition to $K$ making calls to public methods of $L$ and $L$ making calls to its abstract methods, 
$K$ and $L'$ may also issue calls to functions that were passed to them as arguments during higher-order interactions.
Analogously, $L$ may call functions that were communicated to it via library calls. 
%The assumption made is that $L'$ is encapsulated in $L$ and therefore cannot directly communicate with $K$: such a communication can only be mediated via $L$, e.g.\ in the case where $\Delta\cap\Delta'\not=\emptyset$. 

Our framework is operational in flavour and draws upon
concurrent~\cite{Lai01,GM04} and operational game semantics~\cite{JR05b,Lai07,GTz12}.
We shall model library use as a  game between two participants:
\emph{Player (P)}, corresponding to the library $L$, and \emph{Opponent (O)}, representing the environment {$(L',K)$} in which the library was deployed.
Each call will be of the form $\call{m(v)}$ with the corresponding return of the shape $\ret{m(v)}$,
where $v$ {is a value. As we work in a higher-order framework, $v$ may contain functions, which can participate in subsequent calls and returns.
Histories will be sequences of \emph{moves}, which are calls and returns paired with thread identifiers. 
A history is sequential just if every move produced by $O$ is immediately followed by a move by $P$ in the same thread. In other words, the library immediately responds to each call or return delivered by the environment. In contrast to classical linearisability, the move by $O$ and its response by $P$ need not be a call/return pair, as the higher-order setting provides more possibilities (in particular, the $P$ response may well be a call).}
{Accordingly, linearisable higher-order histories can be seen as sequences of atomic segments (linearisation points), starting at environment moves and ending with corresponding library moves.}
%can be the skip value $()$, an integer $i$, a product of values, or a name representing a higher-order function, which may well be used in subsequent calls and returns.

In the spirit of~\cite{CGY14}, we are going to consider two scenarios: one in which $K$ and $L'$ share an explicit communication channel (the general case)
as well as a situation in which they can only communicate through the library (the encapsulated case). Further, in the encapsulated case, we will 
handle the case
in which extra closure assumptions can be made about the parameter library (the relational case). The restrictions can deal with a variety of 
assumptions on the use of parameter libraries that may arise in practice.

In each of the three cases, we shall present a candidate definition of linearisability and illustrate it with  tailored examples.
The suitability of each kind of linearisability will be demonstrated by showing that it implies the relevant form of
contextual approximation (refinement). 
We shall also examine compositionality of the proposed concepts. 
One of our examples will discuss the correctness of an implementation of the flat-combining approach~\cite{HIST10,CGY14}, adapted to higher-order types.

% !TEX root =  holin.tex

\section{Higher-order linearisability}\label{sec:holin}\label{sec:defins}

\newcommand\XX{\mathcal{X}}
\newcommand\YY{\mathcal{Y}}
\newcommand\satsym{\diamond}
\newcommand\cS{\mathcal{S}}
\newcommand\mseta{I}
\newcommand\msetb{J}

As mentioned above, we examine libraries interacting with their context by means of abstract and public methods. 
In particular, we consider higher-order types given by the grammar on the left below,
and let $\Meths$ be a set of \emph{method names}.
\[
\theta ::= \unit\mid \tint\mid \theta\times\theta\mid\theta\to\theta
\qquad
\Meths = \biguplus\nolimits_{\theta,\theta'}\Meths_{\theta,\theta'}
\qquad
v::= ()\mid i\mid m\mid(v,v)
\]
Methods are ranged over my $m$ (and variants), and 
each set $\Meths_{\theta,\theta'}$ contains names for methods of type $\theta\to\theta'$.
Finally, we let $v$ range over computational \emph{values}, which include a unit value, integers, methods, and pairs of values.

The framework of a higher-order library and its environment is 
depicted in Figure~\ref{fig:frame}. 
Given $\Theta,\Theta'\subseteq\Meths$, a library $L$ is said to have type $\Theta\to\Theta'$ if it defines public methods with names (and types) as in $\Theta'$, using abstract methods $\Theta$.
The environment of $L$ consists of a \emph{client} $K$ (which invokes the public methods of $\Theta'$), and a \emph{parameter library} $L'$ (which provides the code for the abstract methods $\Theta$). In general, $K$ and $L'$ may be able to interact via a disjoint set of methods $\Theta''\subseteq\Meths$, to which $L$ has no access.

In the rest of this paper we will be implicitly assuming that  we work with a library $L$ operating in an environment presented in Figure~\ref{fig:frame}. 
The client $K$ will consist of a fixed number $N$ of concurrent threads.
Next we introduce a notion of history tailored to the setting and define how histories can be linearised.
In Section~\ref{sec:syntax} we present the syntax for libraries and clients; and in Section~\ref{sec:semantics} we 
define their semantics in terms of histories (and co-histories).

\subsection{Higher-order histories}

The operational semantics of libraries will be given in terms of \emph{histories}, 
which are sequences of method calls and returns each decorated with a thread identifier and a \emph{polarity index} $XY$, where $X\in\{O,P\}$ and $Y\in\{\l,\k\}$:
\[
(t,\call{ m(v)})_{XY} \qquad (t,\ret{ m(v)})_{XY}
\]
We refer to decorated calls and returns like above as \boldemph{moves}.
Here, $m$ is a method name and $v$ is a value of matching type. The index $XY$ is specifying which of the three entities ($L,L',K$) produces the move, and towards whom it is addressed. 
\begin{itemize}
\item If $X=P$ then the move is issued by $L$. Moreover, if $Y=\l$ then it is addressed to $L'$; 
otherwise, if $Y=\k$ then it is addressed to $K$.
\item If $XY=O\l$ then the move is issued by $L'$, and is addressed to $L$.
\item If $XY=O\k$ then the move is issued by $K$, and is addressed to $L$.
\end{itemize}
We can justify the choice of indices: 
the moves can be seen as defining a 2-player game between the library ($L$), which represents the \emph{Proponent} player in the game, and its enviroment ($L',K$) that represents the \emph{Opponent}. Moves played between $L$ and $L'$ are moreover decorated with $\l$; whereas those between $L$ and $K$ have $\k$. Note that the possible interaction between $L'$ and $K$ is invisible to $L$ and is therefore not accounted for in the game (but we will later see how it can affect it).
We use $O$ to refer to either $O\k$ or $O\l$, and $P$ to refer to either $P\k$ or $P\l$. 

\begin{definition}[Prehistories]
We define \emph{prehistories} as sequences of moves derived by one of the following grammars,
\[\begin{array}{rcl}
\hseq{O} &::=&\ \epsilon\ \mid\ \call{ m(v)}_{OY}\,\, \hseq{P}\,\, \ret{m(v')}_{P Y}\,\, \hseq{O}\\
\hseq{P} &::=&\ \epsilon\ \mid\ \call{ m(v)}_{PY} \,\, \hseq{O}\,\, \ret{m(v')}_{O Y}\,\, \hseq{P}
\end{array}\]
where, in each line, the two occurrences of $Y\in\{\k,\l\}$ and $m\in\Meths$ must each match.
Moreover, if $m\in\Meths_{\theta,\theta'}$, the types of $v,v'$ must match $\theta,\theta'$ respectively.
\end{definition}

The elements of $\hseq{O}$ are patterns of actions starting with an $O$-move, while those in $\hseq{P}$ start with a $P$-move. Note that, in each case, the polarities alternate
and the polarities of calls and matching returns always match the pattern $(XY,X'Y)$ for $X\not=X'$.

Histories will be interleavings of prehistories tagged with thread identifiers (natural numbers) that satisfy a number of technical conditions.
Given $h\in \hseq{O/P}$ and $t\in\natnum$, we write $ t\times h$ for $h$ in which each call or return is decorated with $t$.
We refer to such moves with $(t,\call{m(v)})_{XY}$ or $(t,\ret{m(v)})_{XY}$ respectively. 
If we only want to stress the $X$ or $Y$ membership, we shall drop $Y$ or $X$ respectively. Moreover, when no confusion arises, we may sometimes drop a move's polarity altogether.

\begin{definition}[Histories]\label{def:hist}
Given $\Theta,\Theta'$,
the set of \boldemph{histories} 
over $\Theta\to\Theta'$ is defined by:
\begin{align*}
\hist_{\Theta,\Theta'} &= \bigcup\nolimits_{N>0}\, \,\bigcup\nolimits_{h_1,\cdots,h_N\in \hseq{O}}  (1\times h_1)\mid \cdots\mid (N\times h_N)
%\hist^C_{\Theta,\Theta'} &= \bigcup\nolimits_{N>0}\, \,\bigcup\nolimits_{h_1,\cdots,h_N\in \hseq{P}}  (1\times h_1)\mid \cdots\mid (N\times h_N)
\end{align*}
where $(1\times h_1)\mid \cdots\mid (N\times h_N)$ is the set of all interleavings of 
$(1\times h_1), \cdots, (N\times h_N)$, satisfying the following conditions.
%Let $\hist_{\Theta,\Theta'}=\hist^L_{\Theta,\Theta'}\cup\hist^C_{\Theta,\Theta'}$.
\begin{compactenum}
\item For any $s_1 (t,\call{m(v)})_{XY} s_2\in \hist_{\Theta,\Theta'}$:
\begin{itemize}
\item either $m\in\Theta'$ and $XY=O\k$,
or $m\in\Theta$ and  $XY=P\l$,
\item or there is a  move $(t',x')_{X'Y}$ in $s_1$ with $X\neq X'$, such that $x'\!\in\{\call{m'(v)},\ret{m'(v)}\}$ and $v$ contains $m$.
\end{itemize}
\item For any $s_1 (t,x)_{XY} s_2\in \hist_{\Theta,\Theta'}$, where $x\in \{ \call{m(v)},\ret{m(v)}\}$
and $v$ includes some $m'\in\Meths$,
$m'$ must not occur in $s_1$.
\end{compactenum}
\end{definition}
Condition~1 in the definition above requires that any call must refer to  $\Theta$ or $\Theta'$, or be introduced earlier as a higher-order argument or result.
If the method is from $\Theta'$, the call must be tagged with $O\k$ (i.e.\ issued by $K$).
Dual constraints apply to $\Theta$. If a method name does not come from $\Theta$ or $\Theta'$, 
in order for the call $(t,\call{m(v)})_{XY}$ to be valid, $m$ must be introduced in an earlier action with the same tag $Y$ but with the opposite tag $X$.
Moreover, as specified by Condition~2, any action involving a higher-order value (i.e.\ a method name) in its argument or result must label it with a fresh name, one that has not been used earlier.
This is done to enable the history to refer unambiguously to each method name encountered during the interaction.

We shall range over $\hist_{\Theta,\Theta'}$ using  $h,s$. The subscripts $\Theta,\Theta'$ will often be omitted.
%For each library $L:\Theta\to\Theta'$, its semantics $\sem{L}$ is a set of histories from $\hist_{\Theta,\Theta'}$. 
\begin{remark}
Histories will be used to define the semantics $\sem{L}$ of libraries (cf.\ Section~\ref{sec:semantics}). In particular, 
for each library $L:\Theta\to\Theta'$, we shall have $\sem{L}\subseteq\hist_{\Theta,\Theta'}$.
\end{remark}

\begin{example}\label{ex:difference}
Let $\Theta=\{m:\tint\to\tint\}$ and $\Theta'=\{m':\tint\to\tint\}$.
Note that single-threaded library histories from $\hist_{\emptyset,\Theta'}$ must have the shape 
$(1,\call{m'(i_1)})_{O\k}$ $(1,\ret{m'(j_1)})_{P\k}$ $\cdots$ $(1,\call{m'(i_k)})_{O\k}$ $(1,\ret{m'(j_k)})_{P\k}$.
In this case, the definition coincides with~\cite{CGY14}. However, in general, our notion of histories
is more liberal. For example, the sequence
$(1,\call{m'(1)})_{O\k}$ $(1,\call{m(2)})_{P\l}$ $(1,\call{m'(3)})_{O\k}$ $(1,\ret{m'(4)})_{P\k}$ $(1,\ret{m(5)})_{O\l}$ $(1,\ret{m'(6)})_{P\k}$
is in $\hist_{\Theta,\Theta'}$, even though it is not allowed by Definition 1 of~\cite{CGY14}.
The sequence represents a scenario in which the public method  $m'$ is called for the second time before the first call is answered.
In our higher-order setting,  this scenario may arise if the parameter library 
communicates with the client
and the communication
includes a function that can issue a call to the public method $m'$.\footnote{\nt{By comparison, in~\cite{CGY14}, each $(1,\call{m(v)})_{P\l}$ must be followed by some $(1,\ret{m(v')})_{O\l}$.}}
%\ntnote{CGY reviewer says: ``the comparison history in the same example is not just allowed by [3], but also when libraries are first-order.''}
\end{example}

Finally, we present a notion of sequential history, which generalises that of~\cite{HW90}.
\begin{definition}
We call a history $h\in\hist_{\Theta,\Theta'}$ 
\boldemph{sequential} if it is of the form
\[
h=(t_1,x_1)_{OY_1}(t_1,x_1')_{PY_1'}\,\cdots\,(t_k,x_k)_{OY_k}(t_k,x_k')_{PY_k'}
\]
for some $t_i,x_i,x_i',Y_i,Y_i'$. 
We let $\hist_{\Theta,\Theta'}^{\sf seq}$ contain all sequential histories of $\hist_{\Theta,\Theta'}$.
\end{definition}

\cutout{
We call a history $h\in\hist_{\Theta,\Theta'}$ 
\boldemph{sequential} if $h=h_1h_2\cdots h_k$ and, for each $i$, 
the component $h_i$ contains moves from a single thread $t_i$ and 
\[
h_i=(t_i,x)_{OY}\,h_i'\,(t_i,x')_{PY}
\]
with $x,x'$ a matching call/return pair.
}

Next we consider our first example.
\begin{example}[{\bf Multiset}]\label{ex:multi}
Consider a concurrent multiset library $L_{\sf mset}$ that uses a private reference for storing the multiset's characteristic function. The implementation is given in 
Figure~\ref{fig:multi}.
This is a simplified version of the optimistic set algorithm of~\cite{HHLMSS05,OHRVYY10}, albeit extended with a higher-order update method.
{The method computes the new value for element $i$ without acquiring a lock on the characteristic function in the hope that
when the lock is acquired the value at $i$ will still be the same and the update can proceed (otherwise another attempt to update the value has to be made).}
The use of a single reference instead of a linked list means that memory safety is no longer problematic, so we focus on linearisability instead. \nt{Note we write 
$|j|$ for the absolute value of $j$.}

\begin{figure}[t]%\vspace{.5mm}
\begin{minipage}{.5\linewidth}%
\begin{lstlisting}
${\mathsf{public}}$ count, update;

$\mathsf{Lock}$ lock; 
F := $\lambda x$.0;

count = $\lambda$i. (!F)i

update = $\lambda$(i, g). upd_r(i,g,(!F)i)
\end{lstlisting}\end{minipage}
\begin{minipage}{.5\linewidth}%
\begin{lstlisting}[firstnumber=last]
upd_r = $\lambda$(i, g, j).
  let y = |gj| in 
    lock.acquire();
    let f = !F in 
      if (j==fi) then {
        F := $\lambda$x. if (x==i) then y else fx;
        lock.release(); y }
      else {$\,$lock.release(); upd_r(i,g,fi)$\,$}

\end{lstlisting}\end{minipage}
\vspace{-5mm}
\caption{Multiset library $L_{\sf mset}$. 
[\lstinline{count} $:\tint\to\tint$,
\lstinline{update} $:\tint\times(\tint\to\tint)\to\tint$]}\label{fig:multi}\vspace{-1mm}
\end{figure}
Our verification goal will be to prove linearisability of $L_{\sf mset}$  to a specification $A_{\sf mset}\subseteq\hist_{\emptyset,\Theta}^{\sf seq}$,
where $\Theta=\{\textit{count}, \textit{update}\}$ {(the method $\mathit{upd\_r}$ is private)}. 
$A_{\sf mset}$ certifies that $L_{\sf mset}$ correctly implements some integer multiset $\mseta$ whose elements change over time according to the moves in $h$. That is, for each history 
$h\in A_{\sf mset}$ 
there is a multiset $I$ that is empty at the start of $h$ (i.e.\ $\mseta(i)=0$ for all $i$) and:\footnote{%
For a multiset $\mseta$ and a natural number $i$, we write $\mseta(i)$ for the multiplicity of $i$ in $\mseta$; moreover, we set 
$\mseta[i\mapsto j]$ to be $\mseta$ with its multiplicity of $i$ set to $j$. }
\begin{itemize}
\item
If $\mseta$ changes value between two consecutive moves in $h$ then the second move is a $P$-move. In other words, 
the client cannot update the elements of $\mseta$.  
%Moreover, the projection of each $s\in A_{\sf mset}^\circ$ to each thread identifier $t$ must obey the grammar $\cS$
\item
Each call to {\it count} on argument $i$ must be immediately followed by a return with value $\mseta(i)$, and with $\mseta$ remaining unchanged.
%(there may be other moves in between but, once projected on $t$, the call and return should be consecutive).
\item 
Each call to {\it update} on $(i,m)$ must be followed by a call to $m$ on $i$, with $\mseta$ unchanged.
Then, $m$ must later return with some value $j$. Assuming at that point the multiset will have value $\mseta'$, if $\mseta(i)=\mseta'(i)$ then the next move is a return of the original {\it update} call, with value $j$; otherwise, a new call to $m$ on $i$ is produced, and so on.
\end{itemize}
We formally define the specification next.

Let $\hist_{\emptyset,\Theta}^\circ$ contain 
\emph{extended histories} over $\emptyset\to\Theta$, which are histories where each move is accompanied by a multiset 
(i.e. the sequence consists of elements of the form $(t,x,\mseta)_{XY}$).
For each $s\in\hist_{\emptyset,\Theta}^\circ$, we let $\pi_1(s)$ be the history extracted by projection, i.e. $\pi_1(s)\in\hist_{\emptyset,\Theta}$. 
For each $t$, we let $\proj{s}{t}$ be the subsequence of $s$ of elements with first component $t$. 
Writing $\sqsubseteq$ for the prefix relation, and dropping the $Y$ index from moves ($Y$ is always $\k$ here),
we define \
$A_{\sf mset} = \{\pi_1(s)\mid s\in A_{\sf mset}^\circ\}$ \ where:
\begin{align*}
A_{\sf mset}^\circ &= \{s\in\hist_{\emptyset,\Theta}^\circ\mid
\pi_1(s)\in\hist_{\emptyset,\Theta}^{\sf seq}\land
(\forall s'(\_\,,\mseta)_{P}(\_\,,\msetb)_{O}\sqsubseteq s.\,
\mseta=\msetb)
\land\forall t.\,\proj{s}{t}\in \cS\}
\end{align*}
and, for each $t$, the set of $t$-indexed histories $\cS$ is given by the following grammar.%\ntnote{tried to lighten notation: removed $t$ index and also $m$ index (but the grammar letter is $\cal M$..)}
\begin{align*}
\cS\ &\to\ \epsilon \quad |\quad (t,\xcall{cnt}{i},\mseta)_O\,(t,\xret{cnt}{\mseta(i)},\mseta)_P\,\cS \\
      &\quad\qquad|\quad (t,\xcall{upd}{i,m},\mseta)_O\, {\cal M}_{\mseta,\msetb}^{i,j}\,
(t,\xret{upd}{|j|},\msetb[i\mapsto |j|])_P\,\cS
\\
{\cal M}_{\mseta,\msetb}^{i,j}\ &\to\ (t,\call{m(\mseta(i))},\mseta)_P\,
\cS\, (t,\ret{m(j)},\msetb)_O
&&\hspace{-14mm}(\text{if }\mseta(i)=\msetb(i))
\\
{\cal M}_{\mseta,\msetb}^{i,j}\ &\to\ (t,\call{m(\mseta(i))},\mseta)_P\,
\cS\, (t,\ret{m(j')},\msetb')_O\, {\cal M}_{\msetb',\msetb}^{i,j}
&&\hspace{-14mm}(\text{if }\mseta(i)\not=\msetb'(i))
\end{align*}
By definition, the histories in $A_{\sf mset}$ are all sequential. 
The elements of $A_{\sf mset}^\circ$ carry along the multiset $\mseta$ that is being represented. 
The conditions on $A_{\sf mset}^\circ$  stipulate that $O$ cannot change the value of $\mseta$, while the rest of the conditions above are imposed by the grammar for $\cS$. 
With the notion of linearisability to be introduced next, it will be possible to show that
$\sem{L_{\sf mset}}$  indeed linearises to $A_{\sf mset}$ (see Section~\ref{sec:case1}).
\end{example}

\subsection{General linearisability}\label{sec:genlin}

We begin with a definition of reorderings on histories. 
\begin{definition}
Suppose $X,X'\in\{O,P\}$ and $X\neq X'$.
Let $\sat{X}{X'} \subseteq\hist_{\Theta,\Theta'}\times\hist_{\Theta,\Theta'}$ be 
the smallest binary relation over $\hist_{\Theta,\Theta'}$ satisfying 
\[\begin{array}{rcl}
s_1 (t',x')_{X'} (t,x) s_2 &\sat{X}{X'}&  s_1 (t,x) (t',x')_{X'} s_2\\
s_1 (t',x') (t,x)_X s_2& \sat{X}{X'} & s_1 (t,x)_X (t',x') s_2 \\
\end{array}\]
where $t\neq t'$.
\end{definition}
Intuitively, two histories $h_1,h_2$ are related by $\sat{X}{X'}$ if the latter can be obtained from the former by
swapping two adjacent moves from different threads in such a way that, after the swap, an $X$-move will occur earlier or 
an $X'$-move will occur later. Note that, because of $X\neq X'$, the relation always applies to pairs of moves of the same polarity.
On the other hand, we cannot have $s_1 (t,x)_X (t',x')_{X'} s_2 \sat{X}{X'} s_1 (t',x')_{X'} (t,x)_X s_2$. 
%Finally, observe that 
%\[
%h_1 \sat{X}{X'} h_2 \iff h_2\sat{X'}{X} h_1 \iff \overline{h_1}\sat{X'}{X}\overline{h_2}.
%\]

\begin{definition}[{\bf General Linearisability}]\label{def:genlin}
Given $h_1,h_2\in\hist_{\Theta,\Theta'}$, we say that
$h_1$ \emph{is linearised by} $h_2$, written $h_1\genlin h_2$,  if $h_1 \sat{P}{O}^\ast h_2$. 

Given libraries $L,L': \Theta\rarr\Theta'$
and set of sequential histories $A\subseteq\hist_{\Theta,\Theta'}$
we write
$L\genlin A$,
and say that $L$ \emph{can be linearised to $A$}, if for any $h\in \sem{L}$ there exists $h'\in A$ such that $h\genlin h'$. Moreover, we write $L\genlin L'$ if ${L}\genlin\sem{L'}\cap\hist_{\Theta,\Theta'}^{\sf seq}$.
\end{definition}

\begin{remark}
The definition above follows the classic definition from~\cite{HW90} and allows us to (first) express linearisability in terms of a given library $L$ and a sequential specification $A$. The definition given in~\cite{CGY14}, on the other hand, expresses linearisability as a relation between two libraries. This is catered for at the end of Definition~\ref{def:genlin}.
Explicitly, we have that 
$L\genlin L'$ if for all $h\in\sem{L}$  there is some sequential $h'\in\sem{L'}$ such that $h\genlin h'$.
\end{remark}

\begin{example}\label{ex:histories}
Let $\Theta=\{m:\tint\to\tint\}$ and $\Theta'=\{m':\tint\to\tint\}$.
First, consider $h,h_1,h_2\in\hist_{\emptyset,\Theta'}$ given by:
\begin{align*}
h &= 
(1,\call{m(1)})_{O\k}\,  (1,\ret{m(2)})_{P\k}\,  (2,\call{m(3)})_{O\k}\,  (2,\ret{m(4)})_{P\k}\\
h_1 &= 
(1,\call{m(1)})_{O\k}\,  (2,\call{m(3)})_{O\k}\,  (1,\ret{m(2)})_{P\k}\,  (2,\ret{m(4)})_{P\k}\\
h_2 &= 
(2,\call{m(3)})_{O\k}\,  (1,\call{m(1)})_{O\k}\,  (2,\ret{m(4)})_{P\k}\,  (1,\ret{m(2)})_{P\k}
\end{align*}
The histories are related in the following ways: for any $i=1,2$, we have
$h \sat{O}{P}^\ast h_i$ and  $h_i \sat{P}{O}^\ast h$. Moreover,
$h_1 \sat{X}{X'}^\ast h_2$ and $h_2 \sat{X}{X'}^\ast h_1$ for any $X\neq X'$.
Note that we do \emph{not} have $h\sat{P}{O}^\ast h_i$ or $h_i\sat{O}{P}^\ast h$.

Consider now $h_3,h_4\in\hist_{\Theta,\Theta'}$ given respectively by:
\begin{align*}
h_3  &=
(1,\call{m(1)})_{O\k}\,  (1,\call{m'(2)})_{P\l}\,  (1,\ret{m'(3)})_{O\l}\, (1,\ret{m(4)})_{P\k}\\
&\quad (2,\call{m(5)})_{O\k}\,  (2,\call{m'(6)})_{P\l}\,  (2,\ret{m'(7)})_{O\l}\, (2,\ret{m(8)})_{P\k}\\
h_4 &=
(1,\call{m(1)})_{O\k}\,  (2,\call{m(5)})_{O\k}\,  (1,\call{m'(2)})_{P\l}\, (1,\ret{m'(3)})_{O\l}  \\
&\quad (2,\call{m'(6)})_{P\l}\,  (2,\ret{m'(7)})_{O\l}\,  (2,\ret{m(8)})_{P\k}\, (1,\ret{m(4)})_{P\k}
\end{align*}
Observe that $h_3\sat{O}{P}^\ast h_4$ (and, thus, $h_4\sat{P}{O}^\ast h_3$).
However, we do not have $h_4\sat{O}{P}^\ast h_3$ or $h_3\sat{P}{O}^\ast h_4$ .

%\noindent
Regarding linearisability, we can make the following remarks.
\begin{itemize}
\item Observe that in histories from $\hist_{\emptyset,\Theta'}$, we shall have
the following actions: $\call{m'(i)}_O$ and  $\ret{m'(j)}_P$. Thus, $\sat{P}{O}^\ast$ cannot  swap $(t,\ret{m'(j)})$ with $(t',\call{m'(i)})$,
as in the standard definition of linearisability~\cite{HW90}.
\item When $\hist_{\Theta,\Theta'}$ is considered, the available actions are $\call{m'(i)}_O$, $\ret{m(j)}_O$ and
$\call{m(i)}_P$, $\ret{m'(j)}_P$. Then $\sat{P}{O}^\ast$ coincides with Definition 2 of~\cite{CGY14} for second-order libraries.
\end{itemize}
\end{example}

\subsection{Encapsulated linearisability}

A different notion of linearisability will be applicable in cases where the parameter library $L'$ of Figure~\ref{fig:frame} is encapsulated, that is, 
the client $K$ can have no direct access to it (i.e.\ $\Theta''=\emptyset$).
In particular, we shall impose an extra condition on  histories in order to reflect the more restrictive nature of interaction.
Specifically, in addition to sequentiality in every thread, we shall   disallow switches between $\l$ and $\k$ components by $O$.
\begin{definition}
We call a history $h\in\hist_{\Theta,\Theta'}$ 
\boldemph{encapsulated} if, for each thread $t$,
if $h=s_1\, (t,x)_{PY}\, s_2\, (t,x')_{OY'}\, s_3$ and moves from $t$ are absent from $s_2$ then $Y=Y'$.
%i.e.\  $O$ never switches between $\k$ and $\l$ in $t$.
Moroever, we set
$\hist^{\sf enc}_{\Theta,\Theta'}=\{h\in\hist_{\Theta,\Theta'}\mid h\text{ encapsulated}\}$
and $\encsem{L}=\sem{L}\cap\hist^{\sf enc}_{\Theta,\Theta'}$ (if $L:\Theta\to\Theta'$).\!\!
\end{definition}

We define the corresponding linearisability notion as follows.

\begin{definition}[{\bf Enc-linearisability}]\label{def:enclin}
Let  $\satsym\subseteq \hist_{\Theta,\Theta'}\times\hist_{\Theta,\Theta'}$ be the smallest binary relation on $\hist_{\Theta,\Theta'}$ such that
\[
s_1 (t,m)_{Y} (t',m')_{Y'} s_2\quad \satsym\quad
s_1 (t',m')_{Y'} (t,m)_{Y} s_2
\]
for any $Y,Y'\in\{\k,\l\}$ such that $Y\neq Y'$ and $t\neq t'$.

Given $h_1,h_2\in\hist_{\Theta,\Theta'}^{\sf enc}$, we say that
$h_1$ is \emph{enc-linearised} by $h_2$, and write $h_1\enclin h_2$, if \ $h_1 (\sat{P}{O}\cup \satsym)^\ast h_2$ \ and $h_2$ is sequential. 

A library $L: \Theta\rarr\Theta'$
can be \emph{enc-linearised to $A$}, 
written $L\enclin A$,
if 
$A\subseteq\hist^{\sf seq}_{\Theta,\Theta'}\cap\hist^{\sf enc}_{\Theta,\Theta'}$
and
for any $h\in \encsem{L}$ there exists $h'\in A$ such that $h\enclin h'$.
Moreover, we write $L\enclin L'$ if ${L}\enclin\encsem{L'}\cap\hist_{\Theta,\Theta'}^{\sf seq}$.
\end{definition}

\begin{remark}
Recall $\Theta,\Theta'$ from Example~\ref{ex:histories}. Note that histories may contain the following actions only:
$\call{m'(i)}_{O\k}$, $\ret{m(i)}_{O\l}$, $\call{m(i)}_{P\l}$, $\ret{m'(i)}_{P\k}$. 
Then $(\sat{P}{O}\cup\satsym)^\ast$ preserves the order between $\call{m(i)}_{P\l}$ and $\ret{m(i)}_{O\l}$
as well as that between $\ret{m'(i)}_{P\k}$ and $\call{m'(i)}_{O\k}$, i.e.\ it coincides with Definition 3 of~\cite{CGY14}.
\end{remark}

\begin{example}[{\bf Parameterised multiset}]\label{ex:mult2}
We revisit the multiset library of Example~\ref{ex:multi} and extend it with an abstract method {\it foo} and a corresponding update method {\it update\_enc} which performs updates using {\it foo} as the value-updating function. 
%We can imagine {\it foo} being a proprietary method for which we do not have the code.
In contrast to the {\it update} method of $L_{\sf mset}$, the method {\it update\_enc} is not optimistic: it retrieves the lock upon its call, and only releases it before return. In particular, the method calls {\it foo} while it retains the lock. 
We call this library $L_{\sf mult2}:\emptyset\to\Theta'$, with $\Theta'=\{\hbox{\it count, update, update\_enc}\}$.

Observe that, were {\it foo} able to externally call {\it update}, we would reach a deadlock: {\it foo} would be keeping the lock while waiting for the return of a method that requires the lock. On the other hand, if the library is encapsulated then the latter scenario is not plausible. In such a case, $L_{\sf mult2}$ linearises to the specification 
$A_{\sf mult2}$ which is defined as follows (cf.\ Example~\ref{ex:multi}).
Let \
$A_{\sf mset2} = \{\pi_1(s)\mid s\in A_{\sf mset2}^\circ\}$ \ where:
\begin{align*}
A_{\sf mset2}^\circ &= \{s\in\hist_{\emptyset,\Theta'}^\circ\mid
\pi_1(s)\in\hist_{\emptyset,\Theta'}^{\sf seq}\land
(\forall s'(\_\,,\mseta)_{P}(\_\,,\msetb)_{O}\sqsubseteq s.\,
\mseta=\msetb)
\land\forall t.\,\proj{s}{t}\in \cS\}
\end{align*}
and the set $\cS$ is now given by the grammar of Example~\ref{ex:multi} extended with the rule:
\begin{align*}
\cS &\to\ (t,\xcall{upd\_enc}{i},\mseta)_{O\k}\,
(t,\xcall{foo}{i},\mseta)_{P\l}\,(t,\xret{foo}{j},\mseta)_{O\l}\,
(t,\xret{upd\_enc}{|j|},\mseta')_{P\k}\,\cS
\end{align*}
with $\mseta'=\mseta[i\mapsto |j|]$. Linearisability is shown in Section~\ref{sec:case1}.
\end{example}

\begin{figure}[t]%\vspace{.5mm}
\begin{minipage}{.5\linewidth}%
\begin{lstlisting}
${\mathsf{public}}$ count, update, update_enc;
${\mathsf{abstract}}$ foo;
$\mathsf{Lock}$ lock; 
F := $\lambda x$.0;

count = $\lambda$i. (!F)i

update = $\lambda$(i, g). upd_r(i,g,(!F)i)
\end{lstlisting}\end{minipage}
\begin{minipage}{.5\linewidth}%
\begin{lstlisting}[firstnumber=last]
upd_r = $\lambda$(i, g, j).
...
\end{lstlisting}\vspace{-3.65mm}
\begin{lstlisting}[firstnumber=20]
update_enc = $\lambda$i.
  lock.acquire();
  let y = |foo((!F)i)| in
    let f = !F in
      F := $\lambda$x. if (x==i) then y else fx;
      lock.release(); y 
\end{lstlisting}\end{minipage}
\vspace{-5mm}
\caption{Parameterised multiset library $L_{\sf mset2}$.  
[\lstinline{foo,update_enc}$:\tint\to\tint$,
lines 10-16 as in Fig.~\ref{fig:multi}]}\label{fig:multi_enc}\label{fig:mult2}\vspace{-1mm}
\end{figure}

\subsection{Relational linearisability}

\newcommand\relation{\clg{R}}
\newcommand\rellin[1][\relation]{\sqsubseteq_{#1}}
\newcommand\relapp{\capprox_\relation}

Finally, we consider a special subcase of the encapsulated case, in which parameter libraries must satisfy additional constraints, specified
through relational closure. This notion is desirable in cases when unconditional encapsulated linearisability does not hold, yet one may want 
to show that it would hold conditionally on the parameter library. 
The condition on the parameter library $L':\emptyset\to\Theta$ is termed as a relation $\relation\subseteq \hist_{\emptyset,\Theta}\times\hist_{\emptyset,\Theta}$. 

\begin{definition}\label{def:projcomp}
Given a history $h\in\hist_{\Theta,\Theta'}$ and $X\in\{\k,\l\}$, we write
$h\restriction X$ for the subsequence of $h$ with moves whose second index is $X$.

Given a sequence of moves $s$, we write $\overline{s}$ for the sequence obtained from $s$ by simply swapping the $O/P$ indexes in its moves (e.g.\ $\overline{x_{O\l}\,y_{P\k}\,z_{P\k}}=x_{P\l}\,y_{O\k}\,z_{O\k}$).
\end{definition}
\begin{definition}[{\bf Relational linearisability}]\label{def:rellin}
Let $\clg{R}\subseteq \hist_{\emptyset,\Theta}\times\hist_{\emptyset,\Theta}$ be a set closed under permutations of names in $\Meths\setminus\Theta$. 
Given $h_1,h_2\in\hist^{\sf enc}_{\Theta,\Theta'}$,
we say that $h_1$ is \emph{$\cal R$-linearised} by $h_2$, written $h_1\rellin h_2$, if $h_2$ is sequential
and $(h_1\,{\restriction} \k)\, \sat{P}{O}^\ast (h_2\,{\restriction} \k)$
and  $(\overline{h_1}\restriction \l)\, \relation\, (\overline{h_2}\restriction \l)$.

Given $L:\Theta\to\Theta'$ and $A\subseteq \hist_{\emptyset,\Theta}^{\sf seq}\cap\hist_{\emptyset,\Theta}^{\sf enc}$, we write $L\rellin A$ if for any $h\in \encsem{L}$ there exists $h'\in A$ such that $h\rellin h'$.
Moreover, $L\rellin L'$ if ${L}\genlin\encsem{L'}\cap\hist_{\Theta,\Theta'}^{\sf seq}$.
%
%\nt{Can we prove that $\enclin{}={}\rellin[\sat{O}{P}^*]$, so that relational be a special case of encapsulated?}
\end{definition}

Note the above permutation-closure requirement: if $h\,\relation\, h'$ and $\pi$ is a (type-preserving) permutation on $\Meths\setminus\Theta$ 
then $\pi(h)\,\relation\,\pi(h')$.
The requirement adheres to the fact that, apart from the method names from a library interface, the other method names in its semantics are arbitrary and  can be freely permuted without any observable effect. Thus, $\relation$ in particular should not be distinguishing between such names.

Our third example extends the flat-combining case study from~\cite{CGY14} by lifting it to higher-order types.

\begin{example}\label{ex:FC}
Flat combining~\cite{HIST10} is a synchronisation paradigm that advocates the use of single thread holding a global lock
to process requests of all other threads.
To facilitate this, threads share an array to which they write the details of their
requests and wait either until they acquire a lock or their request has been processed by another thread.
Once a thread acquires a lock, it executes all requests stored in the array and the outcomes  are written to the shared array
for access by the requesting threads.

The authors of \cite{CGY14} analysed a parameterised library that reacts to concurrent requests by calling the corresponding
abstract methods subject to mutual exclusion. In Figure~\ref{fig:fcbasic}, we present
the code adapted to arbitrary higher-order types, which is possible thanks to the presence of higher-order references in our framework.
We assume:
\begin{align*}
\Theta &= \{ m_i\in \Meths_{\theta_i,\theta_i'} \,|\, 1\le i\le k\} & 
\Theta' &=\{ m_i'\in \Meths_{\theta_i,\theta_i'} \,|\, 1\le i\le k\}
\end{align*}
Thus the setup of~\cite{CGY14} corresponds to  $\theta_i=\theta_i'=\tint$.
The library $L_\textrm{fc}$ is built following the flat combining approach and,
on acquisition of a lock, the winning thread acts as a combiner of all registered requests. Note that the requests will be attended to one after another (thus guaranteeing mutual exclusion)
and only one lock acquisition will suffice to process one array of requests.

In Section~\ref{sec:case2}, we shall show that $L_\textrm{fc}$ can be $\relation$-linearised to the specification given by the sequential histories of the library $L_\textrm{spec}$
that implements $m_i'$ as follows:
\begin{lstlisting}[firstnumber=6]
$m_i' =\lambda z.$ ( lock.acquire(); let result = $m_i$(z) in lock.release(); result )
\end{lstlisting}
Thus, each abstract call in $L_\textrm{spec}$ is protected by a lock.
%\nt{What if we drop $L_\textrm{spec}$ altogether and define the spec as a set of sequential traces instead?}
\end{example}

\begin{figure}[t]%\vspace{.5mm}
%\fbox{$L_\textrm{spec}:\Theta\rarr\Theta'$}
%
%\fbox{$L_\textrm{fc}:\Theta\rarr\Theta'$}
%
\begin{minipage}{.43\linewidth}%
\begin{lstlisting}
${\mathsf{abstract}\,m_i}\,$;  $\;{\mathsf{public}\, m_i'}\,$;  $\dots\,$;
$\mathsf{Lock}$ lock; 
struct {op, parm, wait, retv} 
       requests[N];

$m_i'$ = $\lambda z$. 
  requests[$\tid$].op := i; 
  requests[$\tid$].parm := z; 
  requests[$\tid$].wait := 1;
\end{lstlisting}\end{minipage}
\begin{minipage}{.5\linewidth}
\begin{lstlisting}[firstnumber=last]
  while (requests[$\tid$].wait)
    if (lock.tryacquire()) { 
      for (t=0; t<N; t++) 
        if (requests[t].wait) {
          let j = requests[t].op in
            requests[t].retv := $m_j$ (requests[t].parm); 
          requests[t].wait := 0 
        }; lock.release() };
  requests[$\tid$].retv;
\end{lstlisting}\end{minipage}\vspace{-2mm}
\caption{Flat combination library $L_{\rm fc}$.}\label{fig:fcbasic}\vspace{-3mm}
\end{figure}

%%% Local Variables:
%%% mode: latex
%%% TeX-master: "holin"
%%% End:

% !TEX root =  holin.tex

%\newcommand\dcl{\mathsf{Dcl}}
\newcommand\blk{B} %\mathsf{Blk}}

\section{Library syntax}\label{sec:syntax}

We now look at the concrete syntax of libraries and clients.
Libraries comprise collections of typed methods.
The order\footnote{Type order is defined by $\ord{\unit}=\ord{\tint}=0$, 
$\ord{\theta_1\times\theta_2}=\max(\ord{\theta_1},\ord{\theta_2})$,
$\ord{\theta_1\rarr\theta_2}=\max(\ord{\theta_1}+1,\ord{\theta_2})$.} of their argument and result types is unrestricted and 
will adhere to the grammar: \
$
\theta ::= \unit\mid \tint\mid \theta\to\theta \mid \theta\times\theta
$.

We shall use three disjoint enumerable sets of names, referred to as $\Vars$, $\Meths$ and $\Refs$,
to name respectively variables, methods and references.
$x,f$ (and their decorated variants) will be used to range over $\Vars$; $m$ will range over $\Meths$ and
$r$ over $\Refs$.

Methods and references are implicitly typed, that is, we assume 
\[
\Meths = \biguplus\nolimits_{\theta,\theta'}\Meths_{\theta,\theta'}
\qquad
\Refs =\Refs_\tint \uplus \biguplus\nolimits_{\theta,\theta'}\Refs_{\theta,\theta'}
\]
where $\Meths_{\theta,\theta'}$ contains names for methods of type $\theta\to\theta'$, 
$\Refs_\tint$ contains names of integer references
and 
$\Refs_{\theta,\theta'}$ contains names for references to methods of type $\theta\to\theta'$.
We write $\uplus$ to stress the disjointness of sets in their union.

\newcommand\myquad[1][]{\quad\;\;#1}

\begin{figure}[t]
\begin{align*}
\text{\em Libraries}&& L &::=\ \blk\mid {\sf abstract}\ m;\,L\mid 
{\sf public}\ m;\,L
\qquad\qquad\qquad
\text{\em Clients}\quad\;\; K ::=\ M\,\|\cdots\|\,M
\\
\text{\em Blocks}&&
\blk &::=\ \epsilon \mid m= \lambda x.M;\, \blk \mid r:= \lambda x.M;\,\blk \mid  r:= i;\,\blk\\
\text{\em Terms}&&
M &::=\ ()\mid i\mid\tid\mid x\mid m\mid M\oplus M \mid\ifthe{M}{M}{M}
\mid \abra{M,M}\mid\pi_1\,M\mid\pi_2\,M\\ 
&&&\;\; \quad \mid \lambda x^\theta\!.M \mid xM\mid mM\mid \letin{x=M}{M} \mid r:=M\mid {!r}
\\[-6.5mm] %\mid \cas{r}{M}{M}
\end{align*}\hrule\vspace{-1.5mm}
\begin{gather*}
\infer{\Gamma\vdash ():\unit}{}
\myquad
\infer{\Gamma\vdash i:\tint}{}
\myquad
\infer{\Gamma\vdash \tid:\tint}{}
\myquad
\infer{\Gamma\vdash x:\theta}{\Gamma(x)=\theta}
\myquad
\infer{\Gamma\vdash m:\theta\to\theta'}{m\in\Meths_{\theta,\theta'}}
\myquad
\infer{\Gamma\vdash M_1\oplus M_2:\tint}{\Gamma\vdash M_1,M_2:\tint}
\\
\infer{\Gamma\vdash \ifthe{M}{M_1}{M_0}:\theta}{\Gamma\vdash M:\tint\quad\Gamma\vdash M_0,M_1:\theta}
\myquad
\infer{\Gamma\vdash \abra{M_1, M_2}:\theta_1\times\theta_2}{\Gamma\vdash M_i:\theta_i\;\;(i=1,2)}
\myquad
\infer{\Gamma\vdash \pi_i\,M:\theta_i\;\;(i=1,2)}{\Gamma\vdash \abra{M_1, M_2}:\theta_1\times\theta_2}
\myquad
\infer{\Gamma\vdash {!r}:\tint}{r\in\Refs_{\tint}}
\\
\infer{\Gamma\vdash r:=M:\unit}{r\in\Refs_{\tint}\quad \Gamma\vdash M:\tint}
\myquad
\infer{\Gamma\vdash {!r}:\theta\to\theta'}{r\in\Refs_{\theta,\theta'}}
\myquad
\infer{\Gamma\vdash r:=M:\unit}{r\in\Refs_{\theta,\theta'}\quad\Gamma\vdash M:\theta\to\theta'}
\myquad
\infer{\Gamma\vdash \lambda x^\theta\!.M:\theta\to\theta'}{\Gamma,x:\theta\vdash M:\theta'}
\\
\infer{\Gamma\vdash \letin{x=M}{N}:\theta'}{\Gamma\vdash M:\theta\quad\Gamma,x:\theta\vdash N:\theta'}
\myquad
\infer{\Gamma\vdash x M:\theta'}{\Gamma (x)=\theta\to\theta'\quad\Gamma\vdash M:\theta}
\myquad
\infer{\Gamma\vdash m M:\theta'}{m\in \Meths_{\theta,\theta'}\quad\Gamma\vdash M:\theta}
%\infer{\Gamma\vdash \cas{r}{M_1}{M_2}:\tint}{r\in\Refs_{\tint}\quad\Gamma\vdash M_1, M_2:\tint}
%\\
%
%\myquad
%\infer{\Gamma\vdash \ifthe{e}{M}{N}:\theta}{\Gamma\vdash M:\theta\quad\Gamma\vdash N:\theta}
\\[-6.5mm]
\end{gather*}
\hrule\vspace{-1.5mm}
\begin{gather*}
\infer{\bvdash \epsilon:\emptyset}{}\myquad[]
%\infer{\Gamma\vdash f^{\theta\to\theta'}(x)= M;\, \blk:\Gamma'\uplus\{f:\theta\to\theta'\},\Theta}{\Gamma\vdash \blk:\Gamma',\Theta\quad\Gamma,x:\theta\vdash M:\theta'\quad\Gamma(f)=\theta\to\theta'}
%
\infer{\bvdash m= \lambda x.M;\, \blk:\btype\uplus\{m\}}{m\in\Meths_{\theta,\theta'} \quad x:\theta\vdash M:\theta' \quad \bvdash \blk:\btype}
\myquad[]
\infer{\bvdash r:= \lambda x.M;\, \blk:\btype}{r\in\Refs_{\theta,\theta'} \quad x:\theta\vdash M:\theta' \quad \bvdash\blk:\btype}
\\
\infer{\bvdash r:= i;\, \blk:\btype}{r\in\Refs_{\tint} \quad \bvdash\blk:\btype}
% &
% \infer{\Gamma\vdash {\sf abstract}\,f;\, B:(\theta\to\theta',\Theta)\to\Theta'(f,\vec f)}{\Gamma\vdash B:\Theta\to\Theta'(\vec f)\quad\Gamma(f)=\theta\to\theta'}
% &
% \infer{\Gamma\vdash {\sf private}\,f(x)= M;\, B:\Theta\to\Theta'(f,\vec f)}{\Gamma\vdash B:\Theta\to\Theta'(\vec f)\quad\Gamma,x:\theta\vdash M:\theta'\quad\Gamma(f)=\theta\to\theta'}
% %\\[2mm]
% %\infer{\vdash \langle {\sf public} : B;\ {\sf private} : B'\rangle:\Theta'\to\Theta''}{\Theta\vdash B:\Theta'\to\Theta''\quad\Theta\vdash B':\emptyset\to\Theta'''\quad \Theta=\Theta'\uplus\Theta''\uplus\Theta'''}
\myquad[\quad]
\infer{\Meths(\blk)\lvdash \blk:\emptyset\to\Theta}{\bvdash \blk:\Theta}
%
%\infer{\Theta,X\vdash {\sf ref}\,r:=\lambda x.M; L\,:\Theta'\to\Theta''}{\Theta,X,r\vdash L:\Theta'\to\Theta''}\\
\myquad
\infer{\Theta\lvdash {\sf public}\ m; L:\Theta'\to\Theta''}{\Theta\uplus\{m\}\lvdash L:\Theta'\to\Theta''\quad m\in\Theta''}
\\
\infer{\Theta\lvdash {\sf abstract}\ m; L:\Theta'\uplus\{m\}\to\Theta''}{\Theta\uplus\{m\}\lvdash L:\Theta'\to\Theta''\quad m\notin\Theta''}
\myquad[\quad]
\infer{\Theta\kvdash M_1\|\cdots\| M_N:\unit}{\Meths(M_j)\subseteq \Theta\quad\vdash M_j:\unit
\;\,(j=1,\cdots,N)}
\\[-6.5mm]
\end{gather*}
\hrule\vspace{-1.5mm}
\caption{Library syntax and typing rules for terms ($\vdash$), blocks ($\bvdash$),  libraries ($\lvdash$), clients ($\kvdash$).}\label{fig:syntax}
\end{figure}

The syntax for building libraries is defined in Figure~\ref{fig:syntax}.
Thus, each library $L$ begins with a series of method declarations (public or abstract) followed by a block  $B$ consisting of 
method implementations ($m= \lambda x.M$) and reference initialisations ($r:= i$ or $r:= \lambda x.M$).
Our typing rules will ensure that each public method must be implemented within the block, in contrast to abstract methods. On the other hand, a client $K$ consists of a parallel composition of closed terms.

Terms $M$ specify the shape of allowable method bodies. $()$ is the skip command, $i$ ranges over integers, $\tid$ is the current thread identifier
and $\oplus$ represents standard arithmetic operations.
Thanks to higher-order references, we can simulate
divergence by $(!r)()$, where $r\in\Refs_{\unit,\unit}$ is initialised with $\lambda x^\unit.(!r)()$. Similarly, after $r:=\lambda x^\unit. \letin{y=M}{(\ifthe{y}{(N; (!r)())}{()})}$,
$\while{M}{N}$ can be simulated by $(!r)()$. 
We shall also use the standard  derived syntax for sequential composition, i.e. $M;N$ stands for $\letin{x=N}{M}$, where $x$ does not occur in $M$.

\begin{remark}
\nt{In Section~\ref{sec:defins} we used lock-related operations in our example libraries ($\textit{acquire}, \textit{tryacquire}, \textit{release}$),
on the understanding that they can be coded using shared memory. Similarly, the array of Example~\ref{ex:FC} can be constructed using references.}
\end{remark}

For each term $M$, we write $\Meths(M)$ for the set of method names occurring in $M$. We also use the same notation for method names in blocks and libraries.
%
%For technical convenience, we have restricted application to $xM$ and $mM$\footnote{This is to avoid lambda-abstractions as applicant. The reason for this will become evident when we introduce the operational semantics in the next section: as only method names can be values of function type, each redex of the form $(\lambda x.M)N$ would first need to be reduced to $mN$, with $m$ mapped to $\lambda x.M$, thus leading to an accumulation of dummy method names created in this way.}.
%
\cutout{
We also assume there in an integer-storing heap, and so $c$ also contains commands for its manipulation, such as integer arithmetic. 
On the other hand, $e$ ranges over a set of expressions which can refer to the heap as well. For brevity, we leave both expressions and command primitives unspecified. 
Finally, of the above constructs, $\lambda x^\theta.M$ and\ $\ldots= \lambda x.M$ are binding. We will be frequently dropping the type superscripts to reduce clutter. Let us stress that $m=\lambda x.M$ is not binding for $m$ ($m$ is fixed and appears in the signature of the defined library).
Finally, in method declarations we may write e.g.\ ${\sf public}\ m:\theta\to\theta'$ to highlight the fact that $m\in\Meths_{\theta,\theta'}$, and similarly for ${\sf ref}_{\theta\to\theta'}\ r=\ldots$.
%
%We moreover assume variables $x,y,f$, etc.\ sourced from a set $\Vars$.
%We write $\vec f,\vec f_{\textrm{ab}}$, etc. for arbitrary sequences of variables which we assume to be distinct unless stated otherwise.
%We shall refer to method names appearing in block definitions as \emph{constant}, to distinguish them from those created dynamically during interaction.
}
Terms are typed in environments $\Gamma = \{x_1:\theta_1,\cdots,x_n:\theta_n\}
$
assigning types to their free variables.

Method blocks are typed through judgements $\bvdash \blk: \btype$, where $\btype\subseteq\Meths$.
The judgments collect the names of methods defined in a block as well as making sure that the definitions respect types and are not duplicated.
Also, any initialisation statements will be scrutinised for type compliance.

Finally, we type libraries using statements of the form $\btype \lvdash L: \btype'\rarr\btype''$, where $\btype, \btype',\btype''\subseteq\Meths$ and $\btype'\cap\btype''=\emptyset$.
The judgment $\emptyset\lvdash L:\btype'\to\btype''$ guarantees that 
any method occurring in $L$ is present either in $\btype'$ or $\btype''$, 
that all methods in $\btype'$ have been declared as abstract and not implemented,
while all methods in $\btype''$ have been declared as public and defined.
Thus, $\emptyset\lvdash L:\Theta\to\Theta'$ stands for a library in which 
 $\Theta,\Theta'$ are the abstract and public methods respectively.
In this case, we also write $L:\Theta\to\Theta'$.

 \begin{remark}
 For simplicity, we do not include private methods but the same effect could be achieved by storing them in higher-order references.
 As we explain in the next section, references present in library definitions are de facto private to the library.
 
Note also that, according to our definition, sets of abstract and public methods are disjoint.
However, given $m,m'\in\Refs_{\theta,\theta'}$, one can define a ``public abstract" method with: \
$
{\sf public}\ m;\, {\sf abstract}\ m'; \,m = \lambda x^\theta.m' x
$\,.
% $f_{\textrm{pb}}(x)=f_{\textrm{ab}} x$.
%For simplicity, we have chosen to base our language on integer storage only and omit reference types in type signatures.
%Higher-order storage, local variables and reference types could be added at the cost of complicating the labelled transition system~\cite{Lai07}.
\end{remark}

%%% Local Variables: 
%%% mode: latex
%%% TeX-master: "holin"
%%% End: 

% !TEX root =  holin.tex

%\nt{Proposed change: merge the next two sections into one, and rename to Semantics}

\section{Semantics}\label{closed}\label{sec:semantics}

The semantics of our system will be given in several stages. 
First,
we define an operational semantics for  sequential and concurrent terms that may draw methods from a function repository.
We then adapt it to capture interactions of concurrent clients with libraries that do not feature
abstract methods. The extended notion is then used to define contextual approximation (refinement) for arbitrary libraries.
Subsequently, we introduce a trace semantics of arbitrary libraries that will be used
to define higher-order notions of linearisability and, ultimately, to relate them to contextual refinement.

\subsection{Library-client evaluation}

Libraries, terms and clients are evaluated in environments comprising:
\begin{itemize}
\item A {method environment} $\RR$, called \emph{own-method repository}, which is
a finite partial map on $\Meths$ assigning to each $m$ in its domain, with $m\in\Meths_{\theta,\theta'}$,  a term of the form $\lambda y.M$ (we omit type-superscripts from bound variables for economy). 
\item \nt{A finite partial map $S:\Refs\rightharpoonup (\Z\cup\Meths)$, called \emph{store},
which assigns to each $r$ in its domain an integer (if $r\in\Refs_\tint$) or name from $\Meths_{\theta,\theta'}$
(if $r\in\Refs_{\theta,\theta'}$).}
\ntnote{changed store type from $S:(\Refs_\tint\rightharpoonup\Z)+(\bigcup\nolimits_{\theta,\theta'}\Refs_{\theta,\theta'}\rightharpoonup\Meths_{\theta,\theta'})$, as the latter looked strange mathematically}
\end{itemize}
The evaluation rules are presented in Figure~\ref{fig:evaluation}.

\begin{remark}\label{rem:convention}
We shall assume that reference names used in libraries are library-private, i.e.\ 
sets of reference names used in different libraries are assumed to be disjoint.
Similarly, when libraries are being used by client code, this is done on the understanding
that the references available to that code do not overlap with those used by libraries.
Still, for simplicity, we shall rely on a single set $\Refs$ of references in our operational rules.
\end{remark}

\begin{figure}[t]
\begin{align*}
& (L) \redL (L,\emptyset,S_\mathrm{init})
&
&(r:=i; B,\RR,S)\redL (B,\RR,S[r\mapsto i])
\\
&({\sf abstract}\ m;L,\RR,S)\redL (L,\RR,S)
&
&(m=\lambda x.M;B\,,\RR,S)\redL
(B,\RR\uplus(m\mapsto \lambda x.M),S)
\\ 
&({\sf public}\ m;L,\RR,S)\redL (L,\RR,S)
&
&(r:=\lambda x.M;B,\RR,S)\redL (B,\RR\uplus(m\mapsto\lambda x.M),S_*)
\\[-6.5mm]
\end{align*}
\hrule
\[
\infer[(K_N)]{(M_1\|\cdots\|M_{t-1}\|M\|M_{t+1}\|\cdots\|M_N,\RR,S)\pred{N}{}
(M_1\|\cdots\|M_{t-1}\|M'\|M_{t+1}\|\cdots\|M_N,\RR',S')}
{(M,\RR,S)\tred{t}{} (M',\RR',S')}
\]
\hrule\vspace{-1.5mm}
\begin{align*}
E &::=\ \bullet\mid E \oplus M\mid i\oplus E \mid \ifthe{E}{M}{M}\mid
\pi_j\,E\mid\abra{E,M}\mid\abra{v,E} \mid mE\mid  \letin{x=E}{M} \mid  r:=E \\ %\mid \cas{r}{E}{M}\\ \mid \cas{r}{i}{E}\\ %\mid (\lambda x.M)E
  v &::=\ ()\mid i\mid m\mid\abra{v,v}\qquad\qquad\qquad 
\qquad\qquad\qquad 
\qquad\qquad
(\text{{\em Evaluation Contexts} and {\em Values}})
\\[-6.5mm]
\end{align*}
\hrule\vspace{-1.5mm}
\begin{align*}
&(E[\tid],\RR,S) \tred{t}{} (E[t],\RR,S)
&
&(E[i_1\oplus i_2],\RR,S) \tred{t}{} (E[i],\RR,S)\quad(i=i_1\oplus i_2)\\
&(E[{!r}],\RR,S) \tred{t}{} (E[S(r)],\RR,S)
&
&(E[\ifthe{0}{M_1}{M_0}],\RR,S) \tred{t}{} (E[M_0],\RR,S)\\
&(E[{r}:=v],\RR,S) \tred{t}{} (E[()],\RR,S_{**})
&
&(E[\ifthe{i_*}{M_1}{M_0}],\RR,S) \tred{t}{} (E[M_1],\RR,S)\\ %\;\;(i\neq 0)\\
%&(E[m'(\lambda x.M)],\RR,S) \red (E[m'm],\RR\uplus(m\mapsto\lambda x.M),S)\\
&(E[\lambda x.M],\RR,S) \tred{t}{} (E[m],\RR_*,S)
&
&(E[mv],\RR_{**},S) \tred{t}{} (E[M\{v/x\}],\RR_{**},S)\\ %\quad (\RR(m)=\lambda x.M) \\
%&(E[(\lambda y.N)(\lambda x.M)],\RR,S) \red (E[(\lambda y.N)m],\RR\uplus(m\mapsto\lambda x.M),S)\\
&
(E[\pi_j\abra{v_1,\!v_2}],\RR,S) \tred{t}{} (E[v_j],\RR,S)
&
&(E[\letin{x=v}{M}],\RR,S) \tred{t}{} (E[M\{v/x\}],\RR,S)\\[-8.5mm]
%&(E[(\lambda y.N)v],\RR,S) \tred (E[N\{v/y\}],\RR,S)\\
%&(E[\cas{r}{i}{j}],\RR,S) \tred (E[0],\RR,S)\quad\text{if } S(r)\neq i\\
%&(E[\cas{r}{i}{j}],\RR,S) \tred (E[1],\RR,S[r\mapsto j])\quad\text{if } S(r)=i\\
\end{align*}
\caption{Evaluation rules for libraries ($\redL$), clients ($\pred{}{}$), and terms ($\tred{t}{}$). Here, 
\hbox{$S_*=S[r\mapsto m]$,\!\!}
$S_{**}=S[r\mapsto v]$,
$\RR_*=\RR\uplus(m\mapsto\lambda x.M)$;
and $i_{*}\neq 0$, $\RR_{**}(m)=\lambda x.M$.}\label{fig:evaluation}
\end{figure}

First we evaluate the library to create an initial repository and store.
This is achieved by the first set of rules in Figure~\ref{fig:evaluation},
where we assume that $S_\mathrm{init}$ is empty.
Note that $m$ in the last rule is any fresh method name of the appropriate type.
Thus, library evaluation produces a tuple $(\epsilon,\RR_0,S_0)$ including a method repository and a store,
which can be used as the initial repository and store for evaluating $M_1 \| \cdots\| M_N$
using the ($K_N$) rule.
We shall call the latter evaluation semantics for clients (denoted by $\pred{}{}$) the \emph{multi-threaded operational semantics}.

Reduction rules rely on evaluation contexts $E$, defined along with values $v$ in the third group in Figure~\ref{fig:evaluation}.
Finally,
rules for closed-term reduction ($\tred{t}{}$) are given in the last group, where $t$ is the current thread index.
Note that the rule for $E[\lambda x.M]$
involves the creation of a new method name $m$, which is used to put the function in the repository $\RR$.

\cutout{
The above suffices for evaluating closed terms provided the repository contains a method definition for each name occurring in such a term.

Next we extend the semantics to a concurrent setting with a fixed number of threads. 
We write $\Theta'\mvdash M:\theta$ given $\vdash M:\theta$ such that $\Meths(M)\subseteq \Theta'$.
Next we shall consider the parallel composition of $N$ terms 
$\Theta'\mvdash M_1,\cdots,M_N:\unit$ that take advantage of methods from a common library $L:\emptyset\to\Theta'$.
Recall our assumption (Remark~\ref{rem:convention}) that the terms and the library do not share memory, 
though it may shared by the terms.

First we evaluate the library to create an initial repository and store.
This is achieved, for any library $L:\Theta\to\Theta'$, by the following rules.
where we assume that $S_\mathrm{init}$ is empty.
\begin{align*}
& (L) \redL (L,\emptyset,S_\mathrm{init})\\ 
&({\sf abstract}\ m;L,\RR,S)\redL (L,\RR,S)\\ 
&({\sf public}\ m;L,\RR,S)\redL (L,\RR,S)\\ 
%&({\sf ref}\ r;L,\RR,S)\redL (L,\RR,(S\stk\uplus(r\mapsto\lambda x.\Omega),S\hp))\\ 
&(r:=i; B,\RR,S)\redL (B,\RR,S[r\mapsto i])
\\
&(r:=\lambda x.M;B,\RR,S)\redL (B,\RR\uplus(m\mapsto\lambda x.M),S[r\mapsto m])\\
%&(f(x)= M;B\,,\RR,S) \redL (B,\RR\uplus(m\mapsto \lambda x.M),S)\{m/f\}
&(m=\lambda x.M;B\,,\RR,S)
\redL
(B,\RR\uplus(m\mapsto \lambda x.M),S)
\end{align*}
where $m$ in the last rule is any fresh method name of the appropriate type.
Thus, library evaluation produces a tuple $(\epsilon,\RR_0,S_0)$ including a method repository and a store,
which can be used as the initial repository and store for evaluating $M_1 \| \cdots\| M_N$
using the rule
\[
\infer{\begin{array}{rccl}
\multicolumn{4}{l}{(M_1\|\cdots\|M_{i-1}\|M\|M_{i+1}\|\cdots\|M_N,\RR,S)\pred{N}{}\qquad\qquad\qquad}\\
\multicolumn{4}{r}{\qquad\qquad\qquad(M_1\|\cdots\|M_{i-1}\|M'\|M_{i+1}\|\cdots\|M_N,\RR',S')}
\end{array}
}{(M,\RR,S)\tred{i}{} (M',\RR',S')}
\]
We shall call the latter the \emph{multi-threaded operational semantics}.
}

%We assume that different threads in the semantics share no common method names. To that effect, we shall initialise $L$ and consequently rename its method names in each of its threads.
% 
%Below and in the sequel we will assume a default initial heap $S_0$ (all cells initialised to $0$). 
%For each library $L:\Theta\to\Theta'$, there is a corresponding $\RR(L)$ repository that simply associates, with each method definition ${\sf public}\,f(x)=M$ in $L$ of type $\theta\to\theta'$, a fresh method name $m$, i.e.\  $\RR(L)(m)=\lambda x^\theta.M$. The choice of $\RR(L)$ is unique up to permutation of the method names $m$. Similarly, there is a unique up-to-permutation repository $\RR_{\rm p}(L)$ that associates method names to the private method definitions of $L$.

We define termination for clients linked with libraries that have no abstract methods.

\begin{definition}
Let $L:\emptyset\to\Theta'$ and $\Theta'\kvdash M_1\|\cdots\|M_N:\unit$.\footnote{Recall our convention (Remark~\ref{rem:convention}) that $L$ and $M_1,\cdots,M_N$ must access disjoint parts of the store. Terms $M_1, \cdots, M_N$ can share reference names, though.}
We say that \emph{$M_1\|\cdots\| M_N$ terminates with linked library $L$}
if
\begin{align*}
(M_1\|\cdots\| M_N,\RR_0,S_0)\pred{N}{}^*(()\|\cdots\|(),\RR,S)
\end{align*}
for some $\RR,S$, where $(L)\redL^*(\epsilon,\RR_0,S_0)$.
Then we write $\link{L}{(M_1\|\cdots\| M_N)}\,{\Downarrow}$.
\end{definition}
We shall build a notion of contextual approximation of libraries on top of termination:
one library will be said to approximate another if, whenever the former terminates when composed with any parameter library and client, so does the latter.

\nt{There are several ways of composing libraries. Here we will be considering 
the notions of \emph{union} and \emph{sequencing}. The latter is derived from the former with the aid of a third construct, called \emph{hiding}.}
\cutout{
\begin{itemize}
\item One can compose a library $L:\Theta\to\Theta'$ with a parameter library $L':\emptyset\to\Theta,\Theta''$ to obtain a composite library $L'\comp L:\emptyset\to\Theta',\Theta''$.
\item On the other hand, one can merge the code of  two libraries $L_1:\Theta_1\to\Theta_2$ and $L_2:\Theta_1'\to\Theta_2'$ to form a larger library $L_1\uplus L_2:\Theta_1\uplus\Theta_1'\to\Theta_2\uplus\Theta_2'$.
\end{itemize}
While in~\cite{CGY14} these two notions are presented as orthogonal, we consider a unifying notion of \emph{union}. 
In order to express composition, we will also define the notion of \emph{hiding}.}%
%
%\footnote{These terms correspond to \emph{parallel composition} and \emph{hiding} from game semantics respectively\,--\,we avoid the use of ``{parallel composition}'' here to avert confusion with e.g.\ parallel composition as in $M_1\|\cdots\|M_N$.}}
Below, we denote a library $L$ as $L=D;B$, where $D$ contains all the (public/abstract) method declarations of $L$, and $B$ is its method block.

\begin{definition}[{\bf Library union, hiding, sequencing}]
Let $L_1:\Theta_1\to\Theta_2$ be a library of the form $D_1;B_1$.
\begin{compactitem}
\item
Given library
$L_2:\Theta_1'\to\Theta_2'$ ($=D_2;B_2$)
which accesses disjoint parts of the store from $L_1$ and
such that $\Theta_2\cap\Theta_2'=\emptyset$,
we define the \emph{union} of $L_1$ and $L_2$ as:
\[
L_1\cup L_2: (\Theta_1\cup\Theta_1')\setminus(\Theta_2\cup\Theta_2')\to\Theta_2\cup\Theta_2'\,\ = \
(D_1; B_1)\cup(D_2; B_2) = D_1';D_2';B_1;B_2
\]
where $D_1'$ is $D_1$ with any 
``${\sf abstract}\, m$''
declaration removed for $m\in\Theta_2'$;  dually for $D_2'$.\!\!\!\!\!\!\!
\item
Given some $\Theta=\{m_1,\cdots,m_n\}\subseteq\Theta_2$, we define the \emph{$\Theta$-hiding} of $L_1$ as:
\[
L_1\setminus\Theta\,
:\,\Theta_1\to(\Theta_2\setminus\Theta)\,\
 = \ (D_1;B_1)\setminus\Theta = D_1';B_1'\{{!r_1}/m_1\}\cdots\{{!r_n}/m_n\}
\]
where $D_1'$ is $D_1$ without ${\sf public}\ m$ declarations for $m\in\Theta$ and, for each $i$, $r_i$ is a fresh reference matching the type of $m_i$, and
$B_1'$ is obtained from $B_1$ by replacing each definition $m_i=\lambda x.M$\ by $r_i:=\lambda x.M$\,.\smallskip
\end{compactitem}
The \emph{sequencing} of $L' :\emptyset\to\Theta_1,\Theta'$ with $L_1$ is: \
$
L'\!\comp L_1 
:\emptyset\to\Theta_2,\Theta'\
= \ (L'\cup\ L_1)\setminus\Theta_1
$.
\end{definition}

Thus, the union of two libraries $L_1$ and $L_2$ as above corresponds to merging their code and removing any abstract method declarations for methods defined in the union. On the other hand, 
the hiding of a public method simply renders it private via the use of references. 
These notions are used in defining contextual notions for libraries, that is, notions that require quantification over all possible contexts.

\cutout{
Given libraries $L' :\emptyset\to\Theta,\Theta''$ and 
$L :\Theta\to\Theta'$ as below,
such that $L$ and $L'$ access disjoint parts of the store, 
their composition $L'\!\comp L:\emptyset\to\Theta',\Theta''$ is given by:
\begin{align*}
\blk\comp L &= L;B \\
%({\sf ref}\ r;L')\comp L &= {\sf ref}\ r;(L'\comp L) && \text{ if }r\notin\Refs(L)\\
({\sf public}\ m;L')\comp L &= {\sf public}\ m;(L'\comp L) && \text{ if }m\in\Theta''\\
({\sf public}\ m;L')\comp L &= (L'_{r/m}\comp L_{-m})\{{!r}/m\} && \text{ if }m\in\Theta
\end{align*}
for a fresh reference $r$, where 
$L_{-m}$ is obtained from $L$ by removing the declaration ${\sf abstract}\ m$,
and $L'_{r/m}$ is $L'$ where the definition $m=\lambda x.\cdots$\ is replaced by ${\sf ref}\ r=\lambda x.\cdots$\,.
}

\begin{definition}\label{def:general}
Given $L_1,L_2:\Theta\to\Theta'$, we say that $L_1$ \boldemph{contextually approximates} 
$L_2$, written $L_1\capprox L_2$, if for all $L':\emptyset\to \Theta,\Theta''$ and 
$\Theta',\Theta''\kvdash M_1\|\cdots\|M_N:\unit$, if 
$\link{L'\!\comp L_1}{(M_1\|\cdots\| M_N)}\,{\Downarrow}$ 
then  $\link{L'\!\comp L_2}{(M_1\|\cdots\| M_N)}\,{\Downarrow}$.
In this case, we also say that $L_2$ \boldemph{contextually refines} $L_1$.
%We then write $L_1\cong L_2$ if $L_1\capprox L_2$ and $L_2\capprox L_1$.
%
\end{definition}
Note that, according to this definition, the parameter library $L'$ may communicate directly with the client terms through a common interface $\Theta''$.
We shall refer to this case as the \emph{general} case. Later on, we shall also consider more restrictive testing scenarios in which this  possibility of 
explicit communication is removed. Moreover, from the disjointness conditions in the definitions of sequencing and linking we have that $L_i$, $L'$ and $M_1\|\cdots\|M_N$ access pairwise disjoint parts of the store.

\begin{remark}
Our definition of contextual approximation models communication between the client and the parameter library explicitly through the shared interface $\Theta''$.
This is different in style (but not in substance) from~\cite{CGY14}, where the presence of public abstract methods inside the tested library provides  such a communication channel.
\end{remark}

\subsection{Trace semantics\label{open}}

Building on the earlier operational semantics, we next introduce a trace semantics of libraries, in the spirit of game semantics~\cite{AM98be}. As mentioned in Section~\ref{sec:defins},
the behaviour of a library will be represented as an exchange of moves between two players called $O$ and $P$, representing 
the library ($P$) and the corresponding context ($O$) respectively. The context consists of 
the client of the library as well as the parameter library, with an index on each move specifying which of them is involved in the move ($\k$ or $\l$ respectively).

In contrast to the semantics of the previous section, we will handle scenarios in which methods need not be present
in the repository $\RR$.
Calls to such undefined methods will be represented by labelled transitions\,--\,calls to the context made on behalf of the library ($P$).
The calls can later be responded to with labelled transitions corresponding to returns, made by the context ($O$).
On the other hand, $O$ will be able to invoke methods in $\RR$, which will also be represented through suitable labels.
Because we work in a higher-order setting, calls and returns made by both players may involve methods as arguments or results.
Such methods also become available for future calls: function arguments/results supplied by $P$ are added to the repository and can later be invoked by $O$,
while function arguments/results provided by $O$ can be queried in the same way as abstract methods.

\cutout{
%Overall, we shall define a labelled transition system and refer to the resultant \emph{trace semantics} by $\lto$.
Overall, library execution will be observed via sequences of \boldemph{actions}, also called \boldemph{moves} ($O$-moves or $P$-moves), 
where each action is of either of the two forms: $\call{m(v)}$ or $\ret{m(v)}$.
%The calls to library methods can be initiated by either entity in its computational context: a parameter library or a client. 
%In turn, the library can initiate calls to methods passed on to it by the parameter library or the client. 
Among $O$-moves, we shall distinguish between those that originate from the client ($O\k$) and the parameter library ($O\l$).
Similarly, for $P$-moves representing the library, we shall make a distinction between interactions with the parameter library ($P\l$)
and those with the client ($P\k$).
We use $O$ to refer to either $O\k$ or $O\l$, and $P$ to refer to either $P\k$ or $P\l$. 
}

\cutout{
For a sequence of tagged actions $h$, 
we let $\proj{h}{t}$ be the sequence of actions projecting on the thread $t$:
\begin{align*}
\proj{\epsilon}{t} &= \epsilon \\
\proj{h(t',x)}{t} &= \proj{h}{t} \\
\proj{h(t,x)}{t} &= (\proj{h}{t})\,x
\end{align*}
for any $t'\neq t$.

Moreover, for each such $h$,
a \emph{polarity function} for $h$ is a function $\lambda:\{1,\cdots,|h|\}\to\{O\k,O\l,P\k,P\l\}$ assigning a participant to each played action. 
Note that the assignments $P\k$ and $P\l$ both correspond to $P$ playing an action, yet in the former case $P$ calls a method that was first introduced by $O\l$; while in the latter the method belongs to $O\k$. 
Given a tagged action $h_i=(t,x)$ of $h$ we say that $(t,x)$ has polarity  $X$ just if $\lambda(i)=X$. Note that here we abuse notation as we implicitly treat actions as action instances in $h$. We will extend this terminology and say that the action $x$ of $\proj{h}{t}$ has polarity $X$ just if its underlying $(t,x)$ has polarity $X$.
}

After giving semantics to libraries, we shall also define a semantics for \boldemph{contexts}, i.e.\ clients paired with parameter libraries where the main library is missing. 
More precisely, given a parameter library $L':\emptyset\to\Theta,\Theta'$ and client  $\Theta',\Theta''\kvdash M_1\|\cdots\|M_N:\unit$,
we will define the semantics of  $M_1\|\cdots\|M_N$ when paired with $L'$.
In such a scenario, the roles of $O$ and $P$ will be reversed: $P\k$ will own moves played by the client, 
$P\l$ will be the parameter library, 
while $O$ will correspond to the missing main library ($O\k$ when interacting with the client, and $O\l$ when talking to the parameter library). 

\cutout{
Below we shall use $X$ and $Y$ to range over $\{O,P\}$ and $\{\k,\l\}$ respectively.
We start off by defining the shape of sequences of actions that will be produced by a library or its context. 
These will be sequences of moves equipped with a \emph{polarity function} assigning 
to each move one of four polarities: $P\k$, $P\l$, $O\k$ and $O\l$, corresponding to the participants in the interaction. Below, the polarity function is given via subscripts.
\begin{definition}[Sequential prehistories]
We define \emph{sequential prehistories} as sequences of moves with polarities derived by one of the following grammars,
\[\begin{array}{rcl}
\hseq{O} &::=& \epsilon \quad|\quad \call{ m(v)}_{OY}\,\, \hseq{P}\,\, \ret{m(v')}_{P Y}\,\, \hseq{O}\\
\hseq{P} &::=& \epsilon \quad|\quad \call{ m(v)}_{PY} \,\, \hseq{O}\,\, \ret{m(v')}_{O Y}\,\, \hseq{P}
\end{array}\]
where, in each line, the two occurrences of $Y\in\{\k,\l\}$ and $m\in\Meths$ must each match.
Moreover, if $m\in\Meths_{\theta,\theta'}$, the types of $v,v'$ must match $\theta,\theta'$ respectively.
\end{definition}

The elements of $\hseq{O}$ are patterns of actions starting with an $O$-move, while those in $\hseq{P}$ start with a $P$-move. Note that, in each case, the polarities alternate
and the polarities of calls and matching returns always match the pattern $(XY,X'Y)$ for $X\not=X'$.

\cutout{
\begin{definition}[Sequential histories]
We let \emph{sequential histories} be pairs $(h,\lambda)$ where $h$ is derived by one of the following two grammars:
\[\begin{array}{rcl}
\hseq{O} &::=& \epsilon \quad|\quad \call{ m(v)}\,\, \hseq{P}\,\, \ret{m(v')}\,\, \hseq{O}\\
\hseq{P} &::=& \epsilon \quad|\quad \call{ m(v)} \,\, \hseq{O}\,\, \ret{m(v')}\,\, \hseq{P}
\end{array}\]
and $\lambda:\{1,\cdots,|h|\}\mapsto\{O\k,O\l,P\k,P\l\}$ is a polarity function, such that the following conditions be satisfied. For all $(h_1,\lambda_1),(h_2,\lambda_2),(h_1',\lambda_1'),(h_2',\lambda_2')$ with $h_1,h_2\in\hseq{O}$ and $h_1',h_2'\in\hseq{L'}$:
\begin{itemize}
\item if $h_1=\call{ m(v)}\, h_1'\, \ret{m(v')}\, h_2$ then $\lambda_1=[OY,\lambda_1',PY,\lambda_2]$
\item if $h_1'=\call{ m(v)}\, h_1\, \ret{m(v')}\, h_2'$ then $\lambda_1'=[PY,\lambda_1,OY,\lambda_2']$
\end{itemize}
for some $Y\in\{\k,\l\}$.
\end{definition}

The notation above, $\lambda_1=[OY,\lambda_1',PY,\lambda_2]$, means that $\lambda_1$ assigns $OY$ to the move $\call{ m(v)}$, then assigns polarities to the following moves according to $\lambda_1'$, then assigns $PY$ to $\ret{m(v')}$, and finally assigns polarities like $\lambda_2$. Hence, in sequential histories we assign polarities intuitively as follows:
\[\begin{array}{rcl}
\hseq{O} &::=& \epsilon \quad|\quad \call{ m(v)}_{OY}\,\, \hseq{P}\,\, \ret{m(v)}_{P Y}\,\, \hseq{O}\\
\hseq{P} &::=& \epsilon \quad|\quad \call{ m(v)}_{PY} \,\, \hseq{O}\,\, \ret{m(v)}_{O Y}\,\, \hseq{P}
\end{array}\]
The elements of $\hseq{O}$ are histories produced by libraries, while those in $\hseq{P}$ are produced by contexts. Note that, in every case, polarities alternate between $O$ and $P$ in each sequential history and, moreover, polarities of calls and matching return always follow the pattern $(XY,X'Y)$ for $X\not=X'$.
}

Histories will be interleavings of sequential prehistories tagged with thread identifiers (natural numbers) that satisfy a number of technical conditions.
Given $h\in \hseq{O/P}$ and $t\in\natnum$, we write $ t\times h$ for $h$ in which each call or return is decorated with $t$.
We refer to such moves with $(t,\call{m(v)})_{XY}$ or $(t,\ret{m(v)})_{XY}$ respectively. 
If we only want to stress the $X$ or $Y$ membership, we shall drop $Y$ or $X$ respectively. Moreover, when no confusion arises, we may sometimes drop a move's polarity altogether.
}

Recall the notions of history (Def.~\ref{def:hist}) and history complementation (Def.~\ref{def:projcomp}). We next define a dual notion of history that is used for assigning semantics to contexts. 
%\amnote{added link to def:hist, but complementation is defined without a def environment.}

\begin{definition}
The set of \boldemph{co-histories} 
over $\Theta\to\Theta'$ is: \
$
\hist^{co}_{\Theta,\Theta'} = \{ \overline{h}\mid h\in \hist_{\Theta,\Theta'}\}
$.
\end{definition}
We shall range over $\hist^{co}_{\Theta,\Theta'}$ again using variables $h,s$. We can show the following.

\begin{lemma}
\begin{itemize}
\item For all $h\in\hist_{\Theta,\Theta'}$ we have $\proj{h}{\l}\in\hist_{\emptyset,\Theta}^{co}$ and  $\proj{h}{\k}\in\hist_{\emptyset,\Theta'}$.
\item For all $h\in\hist_{\Theta,\Theta'}^{co}$ we have $\proj{h}{\l}\in\hist_{\emptyset,\Theta}$ and $\proj{h}{\k}\in\hist_{\emptyset,\Theta'}^{co}$.
\end{itemize}
\cutout{Let us write $\bar h$ for the $O/P$ complement of the history $h$, that is, the history with the same sequence of moves as $h$ but where we have applied the function:
\[
{\sf comp}=(O\l\mapsto P\l,O\k\mapsto P\k, P\l\mapsto O\l, P\k\mapsto O\k)
\]
on move polarities. Then
$h\in\hist_{\Theta,\Theta'}$ iff $\overline h\in\hist_{\Theta,\Theta'}^{co}C$.}
\end{lemma}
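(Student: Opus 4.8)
The statement is a pair of closure properties of the history predicate under projection to the $\l$- and $\k$-components. Since the co-history case follows from the history case by unwinding the definition $\hist^{co}_{\Theta,\Theta'}=\{\overline h\mid h\in\hist_{\Theta,\Theta'}\}$ together with the observation that $\overline{h\restriction Y}=\overline{h}\restriction Y$ and that $O/P$ complementation sends $\hist_{\emptyset,\Theta}$ to $\hist_{\emptyset,\Theta}^{co}$ and vice versa, I would first dispatch that reduction and then concentrate entirely on the first bullet: for $h\in\hist_{\Theta,\Theta'}$, showing $h\restriction\l\in\hist_{\emptyset,\Theta}^{co}$ and $h\restriction\k\in\hist_{\emptyset,\Theta'}$. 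By symmetry of the two sub-claims (swap the roles of $\Theta/\Theta'$, of $\l/\k$, and of $O/P$ — note that in $h\restriction\l$ the $P\l$-moves are the ``outgoing'' calls, which is why a complement appears, whereas in $h\restriction\k$ the $O\k$-moves are the incoming calls and no complement is needed), it suffices to treat one of them carefully, say $h\restriction\k\in\hist_{\emptyset,\Theta'}$, and then indicate the dual changes for $h\restriction\l$.

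**Key steps.**
Fix $h\in\hist_{\Theta,\Theta'}$, say $h$ is an interleaving of $(1\times h_1),\dots,(N\times h_N)$ with each $h_t\in\hseq{O}$. First I would show that restricting a single thread's prehistory to its $\k$-moves again yields a prehistory in $\hseq{O}$: this is a structural induction on the grammar of $\hseq{O}$ (and the mutually-recursive $\hseq{P}$), using the fact that in the clause $\call{m(v)}_{OY}\,\hseq{P}\,\ret{m(v')}_{PY}\,\hseq{O}$ the two $Y$'s agree, so when $Y=\l$ the entire $\call{m(v)}\dots\ret{m(v')}$ block consists of $\l$-moves and is deleted wholesale, preserving the grammar, and when $Y=\k$ the outer call/return survive and the inner $\hseq{P}$-fragment is handled by the induction hypothesis. (The analogous lemma for $\hseq{P}$-fragments is needed simultaneously.) Since projection commutes with interleaving, $h\restriction\k$ is an interleaving of $(t\times(h_t\restriction\k))$ with each $h_t\restriction\k\in\hseq{O}$, so it has the right underlying shape for a member of $\hist_{\emptyset,\Theta'}$. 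Then I would verify Conditions 1 and 2 of Definition~\ref{def:hist} for $h\restriction\k$ with respect to $\emptyset\to\Theta'$. For Condition 1: a $\k$-call $(t,\call{m(v)})_{XY}$ in $h$ has $Y=\k$, and by Condition 1 for $h$ either $m\in\Theta'$ with $XY=O\k$ (the $m\in\Theta$, $XY=P\l$ alternative is impossible since then $Y=\l$), which is exactly the required alternative over $\emptyset\to\Theta'$, or $m$ was introduced by an earlier move $(t',x')_{X'Y}$ with opposite $X$; that earlier move also has tag $Y=\k$, hence survives in $h\restriction\k$, so the witness is retained. For Condition 2 (freshness): the set of names carried by $\k$-moves is a subset of all names carried in $h$, and if $m'$ does not occur before a given move in $h$ it a fortiori does not occur before it in the subsequence $h\restriction\k$. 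For $h\restriction\l$ I would run the mirror argument, the only twist being the $O/P$ swap: in $h$ the $\l$-calls are issued with tag $P\l$ when $m\in\Theta$ and with $O\l$ when $m$ is a higher-order argument supplied by the parameter library, so after complementation these become $O\l$ (matching ``$m\in\Theta$, $XY=O\k$'' for the target $\emptyset\to\Theta$) and $P\l$ respectively — i.e.\ complementation is exactly what realigns the polarities with the $\emptyset\to\Theta$ convention.

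**Main obstacle.**
The only genuinely delicate point is bookkeeping in the structural induction on prehistories: one must strengthen the induction hypothesis to speak about both $\hseq{O}$- and $\hseq{P}$-fragments at once (so the mutual recursion goes through), and one must be careful that deleting an entire matched $\l$-block from an $\hseq{O}$-fragment does not strand an unmatched return or break alternation — this is precisely where the syntactic constraint that the two occurrences of $Y$ in each grammar clause coincide does the work. Everything after that — that projection distributes over interleaving, and that Conditions 1 and 2 are preserved because witnesses for Condition 1 carry the same $Y$-tag as the move they justify and because freshness is monotone under taking subsequences — is routine. I would therefore present the prehistory lemma as the one displayed sublemma and treat the rest as direct verification.
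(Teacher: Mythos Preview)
Your key inductive step contains a genuine error. You claim that in the clause $\call{m(v)}_{OY}\,\hseq{P}\,\ret{m(v')}_{PY}\,\hseq{O}$, when $Y=\l$ ``the entire $\call{m(v)}\dots\ret{m(v')}$ block consists of $\l$-moves and is deleted wholesale.'' This is false: the matching of the two $Y$'s constrains only the outer call/return pair, not the inner $\hseq{P}$-fragment, which may freely contain $\k$-moves. When the outer $\l$-pair is deleted you must splice the $\k$-projection of the inner $\hseq{P}$-fragment directly before the $\k$-projection of the trailing $\hseq{O}$-tail, and there is no reason for the result to lie in $\hseq{O}$: if the inner fragment starts with a $P\k$-move, its $\k$-projection begins with $P$, and concatenating something in $\hseq{P}$ with something in $\hseq{O}$ does not yield an element of $\hseq{O}$. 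Your mutual induction hypothesis cannot be stated so that both cases close.

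In fact the paper's own Example~\ref{ex:difference} already breaks the lemma as literally stated. The history
\[
(1,\call{m'(1)})_{O\k}\,(1,\call{m(2)})_{P\l}\,(1,\call{m'(3)})_{O\k}\,(1,\ret{m'(4)})_{P\k}\,(1,\ret{m(5)})_{O\l}\,(1,\ret{m'(6)})_{P\k}
\]
lies in $\hist_{\Theta,\Theta'}$, but its $\k$-projection $(1,\call{m'(1)})_{O\k}\,(1,\call{m'(3)})_{O\k}\,(1,\ret{m'(4)})_{P\k}\,(1,\ret{m'(6)})_{P\k}$ has two consecutive $O$-moves in a single thread and hence fails the $\hseq{O}$ grammar, so it is not in $\hist_{\emptyset,\Theta'}$. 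The paper gives no proof of this lemma, and the claim appears to be correct only for \emph{encapsulated} histories, where $O$ cannot switch between $\k$ and $\l$: that hypothesis forces every maximal run of $\l$-moves in a thread to start at a $P$-position and end at an $O$-position, which is exactly what restores alternation after deletion. This restricted form is all the paper actually uses downstream (cf.\ Lemma~\ref{lem:decomp} and the proof of Lemma~\ref{lem:encclosure}). Your verification of Conditions~1 and~2 of Definition~\ref{def:hist} is sound; it is only the prehistory-shape argument that fails without the encapsulation hypothesis.
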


The trace semantics will utilise configurations that carry more components than the previous semantics, 
in order to compensate for the fact that we need to keep track of the evaluation history that is currently processed, 
as well as the method names that have been passed between $O$ and $P$. We define two kinds of configurations:
\[
\text{\em O-configurations } (\EE,-,\RR,\PP,\AA,S)
\ \text{ and } \
\text{\em P-configurations }(\EE,M,\RR,\PP,\AA,S)\,,
\]
where 
the component $\EE$ is an \emph{evaluation stack}, that is, a stack of the form $[X_1,X_2,\cdots,X_n]$ with each $X_i$ being either an evaluation context or a method name. 
On the other hand, $\PP=(\PP_\l,\PP_\k)$ with $\PP_\l,\PP_\k\subseteq\dom(\RR)$ being sets of \emph{public} method names, and
$\AA=(\AA_\l,\AA_\k)$ is a pair of sets of \emph{abstract} method names. 
$\PP$ will be used to record all the method names produced by $P$ and passed to $O$: those passed to $O\k$ are stored in $\PP_\k$, while those leaked to $O\l$ are kept in $\PP_\l$. 
Inside $\AA$, the story is the opposite one: $\AA_\k$ ($\AA_\l$) stores the method names produced by $O\k$ (resp.\ $O\l$) and passed to $P$. Consequently, the sets of names stored in 
$\PP_\l,\PP_k,\AA_\l,\AA_k$ will always be disjoint.

Given a pair $\PP$ as above and a set $Z\subseteq\Meths$, we write $\PP\cup_\k Z$ for the pair $(\PP_\l,\PP_\k\cup Z)$. We define $\cup_\l$ in a similar manner, and extend it to pairs $\AA$ as well. 
Moreover, given $\PP$ and $\AA$, we let $\phi(\PP,\AA)$ be the set of \emph{fresh} method names for $\PP,\AA$: $\phi(\PP,\AA)=\Meths\setminus(\PP_\l\cup\PP_\k\cup\AA_\l\cup\AA_\k)$. 

%\subsection{Production rules}
%\ntnote{A figure could be used here for compactification and clarity (rules are lost in text)}

We next give the rules generating the trace semantics. Note that the rules are parameterised by $Y\in\{\k,\l\}$.
This parameter will play a role in our treatment of the encapsulated case, as it allows us to track the labels
related to interactions with the client and the parameter library respectively.
 In all of the rules below,
whenever we write $m(v)$ or $m(v')$, we assume that the type of $v$ matches the argument type of $m$.
\medskip

\begin{asparadesc}
\item[Internal rule]
First we embed the internal rules, introduced earlier: if $(M,\RR,S) \tred{t}{} (M',\RR',S')$ and $\dom(\RR'\setminus\RR)$ consists of names that do not occur in $\EE, \AA$, 
we have:
\[
(\EE,M,\RR,\PP,\AA,S) \tred{t}{} (\EE,M',\RR',\PP,\AA,S')
\tag{\textsc{Int}}
\]
This includes internal method calls (i.e.\ $(E[mv],\RR,S)\tred{t}{}\cdots$ \ with $m\in\dom(\RR)$ ). 
\item[P calls]
In the next family of rules, the library ($P$) calls one of its abstract methods (either the original ones or those acquired via interaction). 
Thus, the rule applies to $m\in\AA_Y$\!.\!\!\!\!\!\!
\[
(\EE,E[mv],\RR,\PP,\AA,S) \tred{t}{\call{m(v')}_{PY}} (m::E::\EE,-,\RR',\PP',\AA,S)
\tag{\textsc{PQy}}
\]
If $v$ does not contain any method names then $v'=v$, $\RR'=\RR$, $\PP'=\PP$.
Otherwise, 
if $v$ contains the (pairwise distinct) names $m_1,\cdots,m_k$,
a fresh name $m_i'\in\phi(\PP,\AA)$ is created for each method name $m_i$ (for future reference to the method),
and replaced for $m_i$ in $v$. That is, \nt{$v'=v\{m_i'/m_i\mid 1\leq i\leq k\}$.}
We must also have that the $m_i'$'s are pairwise distinct (the rule fires for any such $m_i'$'s), and
also  $\RR'= \RR\uplus\{m_i'\mapsto \lambda x.m_ix\mid 1\leq i\leq k\}$ and $\PP'=\PP\cup_Y \{m_1',\cdots,m_k'\}$.
\item[P returns]
Analogously, the library ($P$) may return a result to an earlier call made by the context.
This rule is applicable provided $m\in\PP_Y$. 
\[
 (m::\EE,v,\RR,\PP,\AA,S) \tred{t}{\ret{m(v')}_{PY}} (\EE,-,\RR',\PP',\AA,S)
\tag{\textsc{PAy}}
\]
$v',\RR',\PP'$ are subject to the same constraints as in \textsc{(PQy)}.
\item[O calls]
The remaining rules are dual and represent actions of the context. Here the context calls a public method: either an original one 
or one that has been made public later (by virtue of having been passed to the context by the library).
Here we require $m\in\PP_Y$ and $\RR(m)=\lambda x.M$. 
\[
 (\EE,-,\RR,\PP,\AA,S) \tred{t}{\call{m(v)}_{OY}} (m::\EE,M\{v/x\},\RR,\PP,\AA',S) 
\tag{\textsc{OQy}}
\]
If $v$ contains the names $m_1,\cdots,m_k\in\Meths$, it must be the case that $m_i\in\phi(\PP,\AA)$, for each $i$,
and $\AA'=\AA\cup_Y\{m_1,\cdots,m_k\}$.
\item[O returns]
Finally, we have rules corresponding to values being returned by the context in response to calls made by the library.
In this case we insist on $m\in\AA_Y$. 
\[
 (m::E::\EE,-,\RR,\PP,\AA,S) \tred{t}{\ret{m(v)}_{OY}} (\EE,E[v],\RR,\PP,\AA',S)
\tag{\textsc{OAy}}
\]
As in the previous case, if $m\in\Meths$ is present in $v$ then we need $m\in\phi(\PP,\AA)$
and $\AA'$ is calculated in the same way.
\end{asparadesc}
\medskip

\cutout{
\begin{align*}
\textsc{(Int)} &&& (\EE,M,\RR,\PP,\AA,S) \lto (\EE,M',\RR',\PP,\AA,S') \quad\text{if } (M,\RR,S) \lto (M',\RR',S')\\
\textsc{(PQk)}  &&& (\EE,E[mv],\RR,\PP,\AA,S) \xr{\call{m(v')}} (m::E::\EE,\k,\RR',\PP\cup_\k\dom(\RR'\setminus\RR),\AA,S) \\
\textsc{(PQl)}  &&& (\EE,E[mv],\RR,\PP,\AA,S) \xr{\call{m(v')}} (m::E::\EE,\l,\RR',\PP\cup_\l\dom(\RR'\setminus\RR),\AA,S) \\
\textsc{(PAk)}  &&& (m::\EE,v,\RR,\PP,\AA,S) \xr{\ret{m(v')}} (\EE,\k,\RR',\PP\cup_\k\dom(\RR'\setminus\RR),\AA,S) \\
\textsc{(PAl)}  &&& (m::\EE,v,\RR,\PP,\AA,S) \xr{\ret{m(v')}} (\EE,\l,\RR',\PP\cup_\l\dom(\RR'\setminus\RR),\AA,S) \\
\textsc{(kOQ)}  &&& (\EE,\mathtt{x},\RR,\PP,\AA,S) \xr{\call{m(v)}} (m::\EE,m(v),\RR,\PP,\AA\cup_\k(\phi(\PP,\AA)\cap\{v\}),S) \\
\textsc{(lOQ)}  &&& (\EE,\mathtt{x},\RR,\PP,\AA,S) \xr{\call{m(v)}} (m::\EE,m(v),\RR,\PP,\AA\cup_\l(\phi(\PP,\AA)\cap\{v\}),S) \\
%\textsc{(OQ)}  &&& (\EE,\mathtt{x},\RR,\PP,\AA,S) \xr{\call{m(v)}} (m::\EE,m(v),\RR,\PP,\AA\cup_\l(\phi(\PP,\AA)\cap\{v\}),S) \\
\textsc{(kOA)}  &&& (m::E::\EE,\k,\RR,\PP,\AA,S) \xr{\ret{m(v)}} (\EE,E[v],\RR,\PP,\AA\cup_\k(\phi(\PP,\AA)\cap\{v\}),S) \\
\textsc{(lOA)}  &&& (m::E::\EE,\l,\RR,\PP,\AA,S) \xr{\ret{m(v)}} (\EE,E[v],\RR,\PP,\AA\cup_\l(\phi(\PP,\AA)\cap\{v\}),S) 
\end{align*}
where $\mathtt{x}\in\{\l,\k\}$. For brevity, we shall sometimes refer to call/return actions produced via P-rules ($\textsc{PQk,PQl,PAk,PAl}$) by $p$ and variants. Similarly, we may write actions triggered by O-rules as $o$, etc.
The P-rules feature a renaming of methods that are passed from P to O:
\begin{itemize}
\item if $v\in\Meths$ then $v'=m'$ and $\RR'=\RR\uplus(m'\mapsto \lambda x.vx)$, for some fresh $m'\in\Meths$;
\end{itemize}
and if $v$ is not a method then $v=v'$ and $\RR=\RR'$.

The rule \textsc{(Int)} for internal transitions is given with the side-condition that method names in $\dom(\RR'\setminus \RR)$ must be fresh for $\EE,\AA$. 
Moreover, 
\begin{itemize}
\item in rule \textsc{(PQk)} we must have $m\in \AA_\k$, 
and similarly for \textsc{(PQl)};
\item 
in rule \textsc{(kOQ)} we impose $m\in\PP_\k$, and similarly for \textsc{(lOQ)};
\end{itemize}
Finally, in rules \textsc{(kOQ)} and \textsc{(kOA)}:
\begin{itemize}
\item if $v\in\Meths\setminus\phi(\PP,\AA)$ then $v\in\PP_\k$,
\item the type of $v$ coincides with the argument/result type of $m$; 
\end{itemize}
and similarly for 
\textsc{(lOQ)} and \textsc{(lOA)}. }

\cutout{%maybe not helpful
\begin{remark}[{\bf Name refreshing}]\nt{%
In rules where $P$ handles control to $O$ (\textsc{PQy,\,PAy}), observe that the value that is passed from $P$ to $O$ is not the internal value $v$ but a \emph{refreshed version} $v'$ of it, in which all method names $m_i$ of $v$ have been replaced by fresh names $m_i'$. 
Moreover, each $m_i'$ is internally mapped to $\lambda x.m_ix$.
This refreshing, which stems from~\cite{Lai07}, allows to control the leakage of method names (cf.\ Remark~\ref{rem:noleak}). Concretely, it
reflects the fact that e.g.\ method implementations $m=\lambda x^{\tint\to\tint}\!.x$ and $m=\lambda x^{\tint\to\tint}\!.\lambda y^\tint.xy$ are observationally equivalent in our setting.
}\end{remark}
}
%\medskip

Finally, we extend the trace semantics to a concurrent setting where a fixed number of $N$-many threads run in parallel. Each thread has separate evaluation stack and term components, which we write as $\CC=(\EE,X)$ (where $X$ is a term or ``$-$'').
%In particular, different threads do not share method names.
Thus, a configuration now is of the following form, and we call it an \emph{$N$-configuration}:
\[
(\CC_1\|\cdots\|\CC_N,\RR,\PP,\AA,S)
\]
where, for each $i$, $\CC_i=(\EE_i,X_i)$ and 
$(\EE_i,X_i,\RR,\PP,\AA,S)$ is a sequential configuration.
We shall abuse notation a little and write $(\CC_i,\RR,\PP,\AA,S)$ for 
$(\EE_i,X_i,\RR,\PP,\AA,S)$. Also, below we write $\vec \CC$ for $\CC_1\|\cdots\|\CC_N$ and 
$\vec \CC[i\mapsto \CC']=\CC_1\|\cdots\|\CC_{i-1}\|\CC'\|\CC_{i+1}\|\cdots\|\CC_N$.

The concurrent traces are produced by the following two rules
with the proviso that the names freshly produced internally in \textsc{(PInt)}
%, as well as all fresh names introduced in rules \textsc{(kOQ,kOA)}, 
are fresh for the whole of $\vec\CC$. 
\begin{align*}
\infer[(\textsc{PInt})]{(\vec\CC,\RR,\PP,\AA,S) \pred{N}{} (\vec\CC[i\mapsto\CC'],\RR',\PP,\AA,S')}{(\CC_i,\RR,\PP,\AA,S) \tred{i}{} (\CC',\RR',\PP,\AA,S')} 
\\[2mm]
\infer[({\textsc{PExt}})]{(\vec \CC,\RR,\PP,\AA,S) \pred{N}{(i,x)_{XY}} (\vec \CC[i\mapsto \CC'],\RR',\PP',\AA',S')}{(\CC_i,\RR,\PP,\AA,S) \tred{i}{x_{XY}} (\CC',\RR',\PP',\AA',S')}
\end{align*}
We can now define the trace semantics of a library $L$. 
We call a configuration component $\CC_i$ \boldemph{final} if it is in one of the following forms:
\begin{align*}
&\CC_i=([],-)\,\text{ or }\,
 \CC_i=([],())
\end{align*}
for $O$- and $P$-configurations respectively. We call $(\vec\CC,\RR,\PP,\AA,S)$ final just if $\vec\CC=\CC_1\|\cdots\|\CC_N$ and each $\CC_i$ is final.

\begin{definition}\label{def:semL}
For each $L:\Theta\to\Theta'$, we define the $N$-trace semantics of $L$ to be:
\[
\sem{L}_N = \{\,s\,\mid\, (\vec\CC_0,\RR_0,(\emptyset,\Theta'),(\Theta,\emptyset),S_0)\,\pred{N}{s}{\!\!}^\ast\rho\ \land\ \rho\text{ final}\,\}
\]
where 
$\vec\CC_0=([],-)\|\cdots\|([],-)$ and
$(L)\redL^*(\epsilon,\RR_0,S_0)$. 
\end{definition}

For economy, in the sequel we might be dropping the
index $N$ from $\sem{L}_N$.

\cutout{The polarity function $\lambda$ assigns to each tagged action $(t,x)$:
\begin{itemize}
\item $KP$ if the action $x$ is produced via (\textsc{PQk}) or (\textsc{PAk}),
\item $LP$ if the action $x$ is produced via (\textsc{PQl}) or (\textsc{PAl}),
\item $KO$ if the action $x$ is produced via (\textsc{OQk}) or (\textsc{OAk}),
\item $LO$ if the action $x$ is produced via (\textsc{OQl}) or (\textsc{OAl}).
\end{itemize}}

We conclude this section by providing a semantics for library contexts.
Recall that in the definition of contextual approximation the library $L:\Theta\to\Theta'$ 
is deployed in a context consisting of a parameter library $L':\emptyset\to\Theta,\Theta''$ and a concurrent composition of client threads $\Theta',\Theta''\vdash M_i:\unit$ ($i=1,\cdots,N$).
%In such a scenario, we can consider the \emph{context term} comprising of the term $M_1\|\cdots\| M_N$ coupled with $L'$.
This context makes internal use of methods defined in the parameter library, while it recurs to the trace system for the methods in $\Theta'$.
At the same time, the context provides the methods in $\Theta$ in the trace system.
We shall write $\link{L'\!\comp-}{(M_1\|\cdots\| M_N)}$, or simply $\ctx$, to refer to such contexts.
\cutout{
Formally, given $\Theta'=\{m_1,\cdots,m_k\}$, we write $L'[\Theta']$ for the library:
\[
L'[\Theta']=
{\sf abstract}\ m_1;\cdots\,; {\sf abstract}\ m_k;
L'\Theta''
\]}
We give the following semantics to contexts.

\begin{definition}\label{semcon}
Let $\Theta',\Theta''\kvdash M_1\|\cdots\|M_N:\unit$ and $L':\emptyset\to\Theta,\Theta''$. 
We define the semantics of the context formed by $L'$ and $M_1,\cdots,M_N$ to be:
\begin{align*}
\sem{\link{L'\!\comp-}{(M_1\|\cdots\| M_N)}}=
\{\,s\mid (\vec\CC_0,\RR_0,(\Theta,\emptyset),(\emptyset,\Theta'),S_0)\,\pred{N}{s}{\!\!}^\ast\rho\,\land\, \rho\text{ final}\,\}
\end{align*}
where $(L')\redL^*(\epsilon,\RR_0,S_0)$ and
$\vec\CC_0=([],M_1)\|\cdots\|([],M_N)$.
\end{definition}

\begin{lemma}
For any $L:\Theta\to\Theta'$,
$L':\emptyset\to\Theta,\Theta''$  and
 $\Theta',\Theta''\kvdash M_1\|\cdots\|M_N:\unit$ 
we have 
 $\sem{L}_N\subseteq \hist_{\Theta,\Theta'}$ and
$\sem{\link{L'\!\comp-}{(M_1\|\cdots\| M_N)}}\subseteq\hist_{\Theta,\Theta'}^{co}$.
\end{lemma}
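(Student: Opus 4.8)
The plan is to prove both inclusions simultaneously by induction on the length of the generating concurrent reduction, strengthening the statement into an invariant on reachable configurations. For the first claim I would fix a run of the trace system starting from $(\vec\CC_0,\RR_0,(\emptyset,\Theta'),(\Theta,\emptyset),S_0)$ with $\vec\CC_0=([],-)\|\cdots\|([],-)$, let $s$ be the trace produced so far, and carry the following invariant on a reachable $(\vec\CC,\RR,\PP,\AA,S)$. \textbf{(a) Name accounting:} $\PP_\l,\PP_\k,\AA_\l,\AA_\k$ are pairwise disjoint, $\Theta'\subseteq\PP_\l\cup\PP_\k$, $\Theta\subseteq\AA_\l\cup\AA_\k$, every method name occurring in $\RR$, $S$, the $\vec\CC$ or inside a value transmitted by $s$ lies in $\dom(\RR)$, and the names occurring \emph{inside} transmitted values of $s$ are exactly those inserted into $\PP\cup\AA$ after initialisation. \textbf{(b) Freshness at introduction:} each such name first occurs in the move that picks it from $\phi(\PP,\AA)$. \textbf{(c) Typing:} closed-term reduction preserves types, so in every move $(t,\call{m(v)})$ (resp.\ $(t,\ret{m(v)})$) with $m\in\Meths_{\theta,\theta'}$ the value $v$ has type $\theta$ (resp.\ $\theta'$). \textbf{(d) Per-thread shape:} for each thread $t$, $\CC_t$ is a $P$-configuration iff the last move of $s$ in thread $t$ has polarity $O$ (vacuously if there is none), the evaluation stack of $\CC_t$ is an alternating list of method names and evaluation contexts whose method-name frames index precisely the still-unanswered calls and returns of thread $t$, and the restriction of $s$ to thread $t$ is a prefix of a sequence in $\hseq{O}$.

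I expect (a)--(c) to be preserved by a routine case analysis over the schemas \textsc{Int}, \textsc{PQy}, \textsc{PAy}, \textsc{OQy}, \textsc{OAy} and their concurrent liftings \textsc{PInt}, \textsc{PExt}: the \textsc{Int} side condition (names in $\dom(\RR'\setminus\RR)$ avoid $\EE,\AA$) and the \textsc{PInt} proviso (internally produced names are fresh for all of $\vec\CC$) stop $P$ from reusing a name; \textsc{PQy} and \textsc{PAy} replace every transmitted method name by a fresh one from $\phi(\PP,\AA)$ and record it in $\PP$ under the active tag; \textsc{OQy} and \textsc{OAy} demand that every transmitted name be either in $\phi(\PP,\AA)$, then recorded in $\AA$, or already public; and each rule carries the type match between $m$ and $v$ that yields (c) via subject reduction for $\tred{t}{}$. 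The interesting step is (d), and here the stack discipline does the work: \textsc{PQy} and \textsc{OQy} each push a method-name frame that only the dual return rule \textsc{OAy}, resp.\ \textsc{PAy}, can pop, while the configuration polarity flips exactly on external moves; so within a thread the moves alternate between $O$ and $P$, a call and its matching return share the same $Y$ and carry opposite $X$, and the projection is generated by the $\hseq{O}$ grammar.

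Granting the invariant, the first inclusion follows at the final configuration: there the stacks are empty, so by (d) each thread projection of $s$ is itself an element of $\hseq{O}$ and $s$ is an interleaving of such prehistories; Condition~2 of Definition~\ref{def:hist} is (a) together with (b); and Condition~1 follows from the guards on the call rules plus the initial $\PP,\AA$. A call move comes from \textsc{OQy} (so $m\in\PP_Y$, $X=O$) or \textsc{PQy} (so $m\in\AA_Y$, $X=P$); if $m\in\Theta'$ then $m\in\PP_\k$ from the start, forcing $XY=O\k$, and dually $m\in\Theta$ forces $XY=P\l$; otherwise $m$ was inserted into $\PP_Y$ by an earlier $P$-move towards the $Y$-side (for \textsc{OQy}), or into $\AA_Y$ by an earlier $O$-move towards the $Y$-side (for \textsc{PQy}), and that earlier move carried $m$ inside its value --- which is exactly the second clause of Condition~1, with the opposite $X$ and the matching $Y$.

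For the co-history inclusion I would run the same induction with $O$ and $P$ --- and correspondingly $\PP$ and $\AA$ --- interchanged. The context configuration of Definition~\ref{semcon} starts with $\PP=(\Theta,\emptyset)$ and $\AA=(\emptyset,\Theta')$, i.e.\ $\PP$ and $\AA$ swapped relative to the library case, and each thread starts as a $P$-configuration $([],M_i)$; hence every thread projection of $s$ is a prefix of a sequence in $\hseq{P}$, and swapping all $O/P$ indices turns it into an element of $\hseq{O}$ and Conditions~1--2 into their duals, so $s\in\{\overline{h}\mid h\in\hist_{\Theta,\Theta'}\}=\hist^{co}_{\Theta,\Theta'}$. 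The main obstacle I anticipate is bookkeeping rather than conceptual: one must pin down precisely which names $\PP\cup\AA$ tracks and show it coincides with the names visible in $s$ (so that Condition~2, a \emph{global} cross-thread freshness statement, goes through) while simultaneously maintaining the per-thread stack shape that underpins Condition~1 and the $O/P$ alternation; the typing part~(c) is an orthogonal, standard subject-reduction check for $\tred{t}{}$.
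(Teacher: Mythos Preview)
Your approach is essentially the paper's: the paper's proof is a two-sentence sketch observing that alternation and well-bracketing per thread come from the bipartite $O/P$ structure of the LTS together with the per-thread evaluation stacks, and that the remaining name conditions follow from the partition into $\PP_\k,\PP_\l,\AA_\k,\AA_\l$ plus the specific initialisation. Your proposal is a faithful unpacking of exactly this into an explicit invariant maintained by induction along the run, and the way you derive Conditions~1 and~2 from the call/return guards and the freshness discipline is correct.

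One small bookkeeping slip: in invariant~(a) you claim that every method name occurring in $\vec\CC$ or in a transmitted value lies in $\dom(\RR)$. This is false: names delivered by $O$ via \textsc{OQy}/\textsc{OAy} land in $\AA_Y$ and subsequently occur inside $\vec\CC$ (for instance as the $m$ in $E[mv]$ just before a \textsc{PQy} step), yet they are never added to $\RR$. Fortunately you never use this clause; the pieces you actually rely on are the pairwise disjointness of $\PP_\l,\PP_\k,\AA_\l,\AA_\k$, the fact that $\Theta'\subseteq\PP_\k$ and $\Theta\subseteq\AA_\l$ stay invariant (you need these in the specific components, not just their union, to force $Y=\k$ resp.\ $Y=\l$ in Condition~1), and that every name appearing in a transmitted value was inserted into $\PP\cup\AA$ at the move where it first occurs. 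If you tighten~(a) to state just those facts, the argument goes through unchanged.
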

\begin{proof}
The relevant sequences of moves are clearly alternating and well-bracketed, when projected on single threads, 
because the LTS is bipartite ($O$- and $P$-configurations) and\linebreak separate evaluation stacks control the evolution in each thread.
Other conditions for histories follow from the partitioning of names
into $\AA_\k,\AA_\l,\PP_\k,\PP_\l$ and 
suitable initialisa- tion: $\Theta,\Theta'$ are inserted into $\AA_\l, \PP_\k$ respectively (for $\sem{L}$)
and into $\PP_\l, \AA_\k$ for $\sem{C}$.
\end{proof}

%%% Local Variables: 
%%% mode: latex
%%% TeX-master: "holin"
%%% End: 

% !TEX root =  holin.tex
%\vspace{-2.75mm}

\section{Examples}\label{sec:case}
We now revisit the example libraries from Section~\ref{sec:holin} and show they each linearise to their respective specification, according to the relevant notion of linearisability 
(general/encapsulated/relational). 

\subsection{Multiset examples}\label{sec:case1}

Recall the multiset library $L_{\sf mset}$ and the specification $A_{\sf mset}$ 
of Example~\ref{ex:multi} and Figure~\ref{fig:multi}.
We show that $L_{\sf mset}\genlin A_{\sf mset}$. More precisely, taking an arbitrary history $h\in\sem{L_{\sf mset}}$ we show that $h$ can be rearranged using $\sat{P}{O}^\ast$ to match an element of $A_{\sf mset}$.
We achieve this by identifying, for each $O$-move $(t,x)_O$ and its following $P$-move $(t,x')_P$ in $h$, a \emph{linearisation point} between them, i.e.\ a place 
in $h$ to which $(t,x)_O$ can moved right and to which $(t,x')_P$ can be moved left so that they become consecutive and, moreover, the resulting history is still produced by $L_{\sf mset}$.
After all these rearrangements, we obtain a sequential history $\hat h$ such that $h\genlin\hat h$ and $\hat h$ is also produced by $L_{\sf mset}$. It then suffices to show that $\hat h\in A_{\sf mset}$.

\begin{lemma}[{\bf Multiset}]\label{lem:mult}
$L_{\sf mset}$ linearises to $A_{\sf mset}$.
\end{lemma}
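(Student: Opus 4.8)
The strategy is to take an arbitrary history $h\in\sem{L_{\sf mset}}$ and, by instrumenting the trace semantics of $L_{\sf mset}$ with the characteristic function stored in the reference \texttt{F}, produce an extended history $s$ over $\emptyset\to\Theta$ in which every move $(t,x)_{XY}$ is annotated with the multiset $\mseta$ represented by the current contents of \texttt{F} at the moment that move is played. Concretely, whenever a move is emitted in the LTS run generating $h$, we read off from the store $S$ the (finitely supported) function $S(r_F)$ and record the corresponding multiset. This gives a map $h\mapsto s$ with $\pi_1(s)=h$; our task reduces to showing that one can reorder $s$ by $\sat{P}{O}^\ast$ (acting on $\pi_1$) into a sequential extended history $\hat s\in A_{\sf mset}^\circ$, whence $\hat h=\pi_1(\hat s)\in A_{\sf mset}$ and $h\genlin\hat h$.

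\textbf{Identifying linearisation points.} For each $O$-move $(t,x)_{O\k}$ followed (in thread $t$, but not necessarily adjacently in $h$) by the matching $P$-move $(t,x')_{P\k}$, I would locate the linearisation point as the atomic moment in thread $t$'s internal computation where the "logical" effect takes place: for a \texttt{count}$(i)$ call this is the dereference \texttt{!F} in line~5; for an \texttt{update} call it is governed by \texttt{upd\_r}, where each iteration performs a speculative read, then (under the lock) re-reads \texttt{f = !F}, and either commits the write \texttt{F := ...} (the successful branch, \texttt{j == fi}) or retries. The linearisation point of the whole \texttt{update} is the successful \texttt{F := ...} assignment; the linearisation points of the nested abstract call to the supplied function $g$ on $i$ and of its return are the moments those moves are themselves played. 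The key structural fact to establish is that between the $O$-move and its chosen linearisation point, and between that point and the $P$-move, thread $t$ only performs internal steps and (for \texttt{update}) the $\l$-moves calling/returning $g$; all of these can be commuted past moves of other threads using $\sat{P}{O}$, because thread $t$'s private manipulations do not touch the shared state observably except exactly at the commit point, and that commit point touches only \texttt{F}, which other threads access only through their own lock-protected regions or through atomic reads.

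\textbf{Commutation argument.} The heart of the proof is a standard "movers" analysis: every internal step of $L_{\sf mset}$ that is not a linearisation point is both a left-mover and a right-mover with respect to steps of other threads, so in the trace semantics it can be swapped with an adjacent differently-threaded move, and such swaps are exactly instances of $\sat{P}{O}$ (moving a $P$-move earlier or an $O$-move later is always permitted; the converse is not, but we never need it since we only ever pull $O$-moves rightward toward their own linearisation point and push $P$-moves leftward). I would prove by induction on $h$ that the rearranged sequence $\hat h$ remains in $\sem{L_{\sf mset}}$ — this requires checking that each swap corresponds to a valid reordering of independent LTS transitions (separate evaluation stacks per thread, disjoint private store for $L_{\sf mset}$, and commutativity of non-conflicting store operations) — and simultaneously that the multiset annotation is preserved, which needs the invariant that $\mseta$ changes only at $P$-moves (the commit assignments), giving the condition $\forall s'(\_,\mseta)_P(\_,\msetb)_O\sqsubseteq s.\ \mseta=\msetb$ of $A_{\sf mset}^\circ$.

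\textbf{Matching the specification.} Finally, having obtained a sequential $\hat h$ with multiset annotations, I would verify $\proj{\hat s}{t}\in\cS$ for each $t$ by reading off the grammar: a sequentialised \texttt{count}$(i)$ returns $\mseta(i)$ with $\mseta$ unchanged (immediate from line~5 and the fact that no commit happens in a \texttt{count} call); a sequentialised \texttt{update}$(i,m)$ produces $(t,\call{m(\mseta(i))},\mseta)_P$, then $\cS$-behaviour of the nested computation, then $(t,\ret{m(j')},\msetb')_O$, and then either commits with $\msetb'(i)=\mseta(i)$ yielding the return $|j'|$ and the updated multiset $\msetb'[i\mapsto|j'|]$, or — because \texttt{upd\_r} recurses with the fresh value \texttt{fi} when $j\ne fi$ — re-enters ${\cal M}$ with the new observed value, matching the second production for ${\cal M}_{\msetb',\msetb}^{i,j}$. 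The one subtlety is that the value passed to $m$ at the start must be $\mseta(i)$ for the $\mseta$ recorded at that point: in $L_{\sf mset}$ the argument to $g$ is \texttt{(!F)i} read at call time (line~8 and the recursive call in line~17), which after sequentialisation is exactly $\mseta(i)$ — this alignment of the optimistic read with the linearisation-point multiset is the main obstacle, and it is handled by choosing the linearisation point of the \texttt{update} \emph{call move} to coincide with that very read of \texttt{!F}.

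\textbf{Expected main obstacle.} The delicate part is not any single commutation but the bookkeeping that shows the chosen linearisation points are mutually consistent across threads and that the retry loop of \texttt{upd\_r} generates exactly the nested ${\cal M}$-structure of the grammar — in particular that an unbounded number of failed attempts still yields a finite, well-formed derivation in $\cS$ for every terminating history, and that the multiset annotations threaded through successive retries telescope correctly. I expect to discharge this by an induction on the number of loop iterations within each \texttt{update} call, using the loop invariant that after each failed attempt the recursive call sees the then-current value of \texttt{F} at $i$.
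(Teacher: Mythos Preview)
Your overall strategy---identify a linearisation point for each $(O,P)$ pair, rearrange via $\sat{P}{O}^\ast$, then check membership in the specification---matches the paper's. But two points deserve correction, one of them substantive.

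\textbf{The annotation order is inverted.} You propose to annotate each move of $h$ with the multiset stored in $F$ \emph{at the instant that move is emitted in the original run}, obtaining $s$, and then to reorder $s$ (carrying the annotations along, ``acting on $\pi_1$'') into some $\hat s\in A_{\sf mset}^\circ$. This does not work. Consider $(t,\xcall{cnt}{i})_O$: it may be played well before its linearisation point (the $!F$ dereference), and in between another thread can commit an update. The annotation you record is then not the multiset $\mseta$ for which the library will return $\mseta(i)$, so the very first production of $\cS$ fails. The same obstruction hits the $(t,\xret{m}{j'})_O\,\ldots\,(t,\xret{upd}{|j'|})_P$ pair: other threads may commit between the return of $m$ and this thread's own commit, so your annotation of $\ret m$ is not the $\msetb$ for which the library writes $\msetb[i\mapsto|j'|]$. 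The paper avoids this by reversing the order: first rearrange $h$ to a sequential $\hat h$ and argue $\hat h\in\sem{L_{\sf mset}}$ (your movers reasoning suffices here), and \emph{only then} annotate each move of $\hat h$ from a run that produces $\hat h$. Since every commit is itself a linearisation point, in that rearranged run each move sits adjacent to the relevant commit and the annotations line up with $\cS$ automatically.

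\textbf{The case split on $(O,P)$ pairs is underspecified.} An \texttt{update} invocation generates not one but several $(O,P)$ pairs: $(\call{upd},\call{m})$, and then either $(\ret{m},\ret{upd})$ on success or $(\ret{m},\call{m})$ on retry. Your sentence ``the linearisation points of the nested abstract call to $g$ and of its return are the moments those moves are themselves played'' conflates individual moves with pairs and names no point for the retry pair. The paper enumerates four cases and assigns: the $!F$ on line~6 for \texttt{count}; the $!F$ on line~8 for $(\call{upd},\call{m})$; the write on line~14 for $(\ret{m},\ret{upd})$; and the $!F$ on line~12 (the \texttt{let f = !F} under the lock) for the retry pair $(\ret{m},\call{m})$. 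You effectively have the first three, but the retry pair needs its own read-point so that the argument passed to the next $\call{m}$ matches the annotation at that point.
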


\begin{proof}
Taking an arbitrary  $h\in\sem{L_{\sf mset}}$, we
demonstrate the linearisation points for pairs of $(O,P)$ moves in $h$, by case analysis on the moves (we drop $\k$ indices from moves as they are ubiquitous). Let us assume that $h$ 
has been generated by a sequence $\rho_1\Rightarrow\rho_2\Rightarrow\cdots\Rightarrow\rho_k$ of atomic transitions and that 
the variable $F$ of $L_{\sf mset}$ is instantiated with the reference $r_F$. Line numbers used below will refer to Figure~\ref{fig:multi}.
\begin{compactenum}
\item 
$h=\cdots\,(t,\xcall{cnt}{i})_{O}\,s\,(t,\xret{cnt}{j})_P\,\cdots$\ . Here the linearisation point is the configuration $\rho_i$ that dereferences $r_F$ as per line~6 in $L_{\sf mset}$ (the $!F$ expression).
\item
$h=\cdots\,(t,\xcall{upd}{i,m})_{O}\,s\,(t,\xcall{m}{j})_P\,\cdots$\ . The linearisation point is the dereferencing of $r_F$ in line~8.
\item
$h=\cdots\,(t,\xret{m}{j'})_{O}\,s\,(t,\xret{upd}{|j'|})_P\,\cdots$\ . The linearisation point is the update to $r_F$ in line~14.
\item
$h=\cdots\,(t,\xret{m}{j'})_{O}\,s\,(t,\xcall{m}{j''})_P\,\cdots$\ . The linearisation point is the dereferencing of $r_F$ in line~12.
\end{compactenum}
{Each of the linearisation points above specifies a $PO$-rearrangement of moves. For instance, 
for $h=s_0\,(t,\xcall{cnt}{i})_{O}\,s\,(t,\xret{cnt}{j})_P\,s'$, let $s=s_1s_2$ where $s_0\, s_1$ is the prefix of $h$ produced by $\rho_1\Rightarrow\rho_2\Rightarrow\cdots\Rightarrow\rho_i$. The rearrangement of $h$ is then the history $\hat h=s_0\,s_1\,(t,\xcall{upd}{i,m})_{O}\,(t,\xcall{m}{j})_P\,s_2\,s'$.} We thus obtain $h\sat{P}{O}^\ast\hat h$.\ntnote{I think this should simply say that we locate the point in the trace corresponding to the specified LP, and then move the preceding O-move right so that it gets just before the LP; and the P-move to the left, just after the LP.}

The selection of linearisation points is such that it guarantees that $\hat h\in\sem{L_\mathsf{mset}}$.
E.g.\ in case~1, the transitions occurring in thread $t$ between the configuration that follows $(t,\xcall{cnt}{i})_{O}$ and $\rho_i$ do not access $r_F$. Hence, we can postpone them and fire them in sequence just $\rho_i$. After $\rho_{i+1}$ and until 
$(t,\xret{cnt}{j})_P$ there is again no access of $r_F$ in $t$ and we can thus bring forward the corresponding transitions just after $\rho_{i+1}$. Similar reasoning applies to case~2. In case~4, we reason similarly but also take into account that rendering the acquisition of the lock by $t$ atomic is sound (i.e.\ the semantics can produce the rearranged history).
Case~3 is similar, but we also use the fact that the access to $r_F$ in lines~11-16 is inside the lock, and hence postponing dereferencing (line~12) to occur in sequence before update (line~14) is sound.

Now, any transition sequence $\alpha$ which produces $\hat h$ (in $\sem{L_{\sf mult}}$) can be used to derive an extended history $h^\circ\in A_{\sf mult}^\circ$, by attaching to each move in $\hat h$ the multiset represented in the configuration that produces the move ($\rho$ produces the move $x$ if $\rho\pred{}{x}\rho'$ in $\alpha$). By projection we then obtain $\hat h\in A_{\sf mult}$. 
\end{proof}

On the other hand, the multiset library of Example~\ref{ex:mult2} and Figure~\ref{fig:mult2} requires encapsulation in order to linearise (cf.\ Example~\ref{ex:mult2}).

\begin{lemma}[{\bf Parameterised multiset}]
$L_{\sf mset2}$ enc-linearises to $A_{\sf mset2}$.
\end{lemma}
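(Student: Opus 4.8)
The plan is to mirror the proof of Lemma~\ref{lem:mult}, now carried out inside $\encsem{\cdot}$ and with the larger rearrangement relation $(\sat{P}{O}\cup\satsym)^\ast$. Given an arbitrary $h\in\encsem{L_{\sf mset2}}$, generated by some transition sequence of $L_{\sf mset2}$ with $F$ instantiated by a reference $r_F$, I would designate, for each $O$-move of $h$ together with the matching following $P$-move, a linearisation point; pushing the $O$-move right and the $P$-move left to that point produces a sequential $\hat h$ with $h\,(\sat{P}{O}\cup\satsym)^\ast\,\hat h$ that is still emitted by $L_{\sf mset2}$, and it then remains to decorate $\hat h$ with multisets to obtain a witness in $A_{\sf mset2}$ (the inclusion $A_{\sf mset2}\subseteq\hist^{\sf seq}_{\emptyset,\Theta'}\cap\hist^{\sf enc}_{\emptyset,\Theta'}$ being immediate from the grammar).

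For the fragments of $h$ stemming from \emph{count}, \emph{update} and \emph{upd\_r}, which are copied verbatim from $L_{\sf mset}$, I would reuse the linearisation points and $\sat{P}{O}$-rearrangements of Lemma~\ref{lem:mult}; these live entirely on the $\k$ channel, so $\satsym$ is not needed there. The genuinely new case is \emph{update\_enc}: a call to \emph{update\_enc} in a thread $t$ contributes to $h$ a block
\[
(t,\xcall{upd\_enc}{i})_{O\k}\; s_1\; (t,\xcall{foo}{\mseta(i)})_{P\l}\; s_2\; (t,\xret{foo}{j})_{O\l}\; s_3\; (t,\xret{upd\_enc}{|j|})_{P\k},
\]
for which I would take the single linearisation point of the whole block to be the update of $r_F$ on line~24 (the dereference of $r_F$ feeding \emph{foo} on line~22 reads the same value, since the lock is held throughout lines~21--25). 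The point I would lean on is that encapsulation forces \emph{foo} to be called on an integer, so it cannot re-enter $L_{\sf mset2}$ and, in particular, cannot attempt to re-acquire the held lock; hence $t$ is blocked inside $s_2$, no other thread has a pending \emph{foo} call there, and every move of $s_2$ sits on the $\k$ channel. This lets me commute $(t,\xcall{foo}{\mseta(i)})_{P\l}$ rightward and $(t,\xret{foo}{j})_{O\l}$ leftward over $s_2$ using $\satsym$, while $(t,\xcall{upd\_enc}{i})_{O\k}$ is pushed right and $(t,\xret{upd\_enc}{|j|})_{P\k}$ left using $\sat{P}{O}$ (interleaving with the rearrangements of the other blocks), collapsing the block to one atomic segment at line~24.

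Performing this for every block of every thread yields a sequential $\hat h$ with $h\,(\sat{P}{O}\cup\satsym)^\ast\,\hat h$. For soundness, i.e.\ $\hat h\in\encsem{L_{\sf mset2}}$, I would argue as in Lemma~\ref{lem:mult} that the transitions of $t$ between its \emph{update\_enc} call and line~24 (the lock acquisition and the dereferences of $r_F$) can be fired atomically, because the lock held by $t$ prevents any other thread from writing $r_F$ in the interim, and symmetrically for the stretch from line~24 to the return; hence some transition sequence of $L_{\sf mset2}$ produces $\hat h$. Finally, attaching to each move of $\hat h$ the multiset held in the configuration that emits it gives an extended history in $A_{\sf mset2}^\circ$: $O$ never alters the multiset, and the collapsed \emph{update\_enc} block matches the new $\cS$-rule exactly, with $\mseta$ constant from the call through the \emph{foo} return and equal to $\mseta[i\mapsto |j|]$ at the return; projecting away the multisets gives $\hat h\in A_{\sf mset2}$.

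I expect the main obstacle to be the soundness step for the \emph{update\_enc} blocks: one must check that an arbitrary transition sequence producing $h$ can be re-scheduled into one producing $\hat h$, i.e.\ that collapsing each \emph{update\_enc} critical section to a single atomic step is legitimate in the presence of the other running threads. This hinges entirely on encapsulation --- no re-entrant \emph{foo}, hence no lock contention arising from inside a \emph{foo} call, hence $r_F$ genuinely frozen for the call's duration --- and it would break in the general setting, which is precisely why only enc-linearisability is claimed.
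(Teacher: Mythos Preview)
Your approach coincides with the paper's: same linearisation point (the update to $r_F$ on line~24), same split between $\satsym$ for the two $\l$-moves and $\sat{P}{O}$ for the two $\k$-moves, and the same soundness argument via reproducibility in $\encsem{L_{\sf mset2}}$.

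There is one imprecision in the mechanism that matters for soundness. Write $s_3=s_3^1 s_3^2$ with the split at line~24. Both $\l$-moves must move \emph{rightward} via $\satsym$---the call over $s_2 s_3^1$ and the return over $s_3^1$---so that the collapsed block sits at the $s_3^1|s_3^2$ boundary. Your ``$(t,\xret{foo}{j})_{O\l}$ leftward over $s_2$'' (and the omission of $s_3^1$) would instead deposit the four moves between $s_2$ and $s_3$, and then the soundness step fails: a \emph{count} dereference in $s_3^1$ that returned the old multiplicity of $i$ in the original run would, after the rearrangement, see the already-updated $r_F$ and return a different value, so $\hat h\notin\encsem{L_{\sf mset2}}$. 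The reason $s_2 s_3^1$ (not just $s_2$) contains only $\k$-moves is simply that $t$ holds the lock throughout, so no other thread can reach the \emph{foo} call at line~22; your ``integer argument plus encapsulation'' observation is correct but serves a different purpose---it forces the next $t$-move after $(t,\call{\mathit{foo}(j)})_{P\l}$ to be $(t,\ret{\mathit{foo}(j')})_{O\l}$ (since $\PP_\l=\emptyset$), giving the block its four-move shape in the first place.
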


\begin{proof}
Again, we identify linearisation points, this time for given $h\in\encsem{L_{\sf mult2}}$. For cases~1-4 as above we reason as in Lemma~\ref{lem:mult}. For {\it upd\_enc} we have the following case.
\[
h=s\,(t,\xcall{upd\_enc}{i})_{O\k}\,s_1(t,\xcall{foo}{j})_{P\l}\,s_2(t,\xret{foo}{j'})_{O\l}\,s_3(t,\xret{upd\_enc}{|j'|})_{P\k}\,\cdots
\]
Here, we need a linearisation point for all four moves above. We pick this to be the point corresponding to the update of the multiset reference $F$ on line~24 (Figure~\ref{fig:mult2}).
We now transform $h$ to $\hat h$ so that the four moves become consecutive, in two steps:
\begin{compactitem}
\item 
Let us write $s_3$ as $s_3=s_3^1 s_3^2$, where the split is at the linearisation point.\ntnote{the only $\l$ moves that we can have are foo ones. In $s_2s_3^1$, the lock is held by $t$. If there were an $\l$ move $x$ in it from $t'\not=t$ then, at the point when $x$ occurs, the lock would be with $t'$, which can't be.}
{Since the lock is constantly held by thread $t$ in $s_2s_3^1$, there can be no calls or returns to {\it foo} in $s_2s_3^1$. Hence, all moves in $s_2s_3^1$ are in component $\k$ and can be transposed with the $\l$-moves above, using $\satsym^*$, to obtain $h'=s\,(t,\xcall{upd\_enc}{i})_{O\k}\,s_1\,s_2\,s_3^1(t,\xcall{foo}{j})_{P\l}$
$(t,\xret{foo}{j'})_{O\l}\,s_3^2(t,\xret{upd\_enc}{|j'|})_{P\k}\,\cdots$\,}
\item {Next, by $PO$-rearrangement we obtain  
$\hat h=s\,s_1\,s_2\,s_3^1(t,\xcall{upd\_enc}{i})_{O\k}(t,\xcall{foo}{j})_{P\l}$
$(t,\xret{foo}{j'})_{O\l}(t,\xret{upd\_enc}{|j'|})_{P\k}\,s_3^2\,\cdots$\,. Thus, $h(\sat{P}{O}\cup\satsym)^*\hat h$.}\ntnote{$\text{\qquad}$\quad Then, the remaining P-move can be moved left by PO rearrangement.}
\end{compactitem}
To prove that $\hat h\in A_{\sf mult2}$ we work as in Lemma~\ref{lem:mult}, i.e.\ via showing that $\hat h\in\encsem{L_{\sf mult2}}$. For the latter, we rely on the fact that the linearisation point was taken at the reference update point (so that any dereferencings from other threads are preserved), and that the dereferences of lines~22 and~23 are within the same lock as the update.
\end{proof}

\subsection{Flat combining}\label{sec:case2}

Recall the libraries $L_\mathsf{fc}$ and $L_\mathsf{spec}$ from Example~\ref{ex:FC} and Figure~\ref{fig:fcbasic}.
We shall investigate the impact of introducing higher-order types to the flat combining algorithm, which will lead to several surprising discoveries.
First of all, let  us observe that, even if $\theta_i=\theta_i'=\tint$, higher-order interactions of both $L_\mathsf{spec}$ and $L_\mathsf{fc}$  with
a client and parameter library according to Definition~\ref{def:general} (general case) may lead to deadlock.
In this case, a client can communicate
with the parameter library (via $\Theta''$) and, for example, he could supply it with a function that calls a public method of the library, say, $m_i'$. 
That function could then be used to implement $m_i$ and, consequently, a client call to $m_i'$ would result in lock acquisition,
then a call to $m_i$, which would trigger another call to $m_i'$ and an attempt to acquire the same lock, while $m_i$ cannot return (cf. Example~\ref{ex:difference}).

Deadlock can also arise in the encapsulated case (Definition~\ref{def:encapsulated}) if the library contains a public method, say $m_i'$, with functional arguments.
Then the client can pass a function that calls $m_i'$ as an argument to $m_i'$ and, if the abstract method $m_i$ subsequently calls the argument, deadlock would follow
in the same way as before.
Correspondingly, in these cases there exist sequences of transitions induced by our transition system that cannot be extended to a history,
e.g. for $\theta_i=\unit\rarr\unit$ this happens after $(1,\call{m'(v)})_{O\k}$ $(1,\call{m(v')}_{P\l})$ $(1,\call{v'()})_{O\l}$ $(1,\call{v()})_{P\k}$ $(1,\call{m'(v'')})_{O\k}$. 
Consequently, the protocol should not really be used in an unrestricted higher-order setting.

However, the phenomenon described above cannot be replicated in the encapsulated case provided the argument types are ground 
 ($\theta_i=\unit,\tint)$.
 \cutout{and
result types are first-order types (i.e. types generated by the grammar $X::= B \mid B\rarr X$, where $B::=\unit,\tint$).}
In this case, without imposing any restrictions on the result types $\theta_i'$,  
we shall show that $L_\mathsf{fc}\rellin {L_\mathsf{spec}}$, where $\R$ stands for thread-name
invariance. Note that this is a proper extension of the result in~\cite{CGY14}, where $\theta_i'$ had to be equal to $\unit$ or $\tint$.
It is really necessary to use the finer notion of $\rellin$ here, as we do not have $L_\mathsf{fc}\enclin {L_\mathsf{spec}}$ (a parameter library that is sensitive 
to thread identifiers may return results that allow one to detect that a request has been handled by a combiner thread which is different from the original one).

\newcommand\igno{}
{
\begin{lemma}[Flat combining]
Let $\Theta=\{ m_i\in \Meths_{\tint,\theta_i'} \,|\, 1\le i\le k\}$ and
$\Theta'=\{ m_i'\in \Meths_{\tint,\theta_i'} \,|\, 1\le i\le k\}$ be such that $\Theta\cap\Theta'=\emptyset$.
Let $\R$ consist of pairs $(h_1,h_2)\in \hist_{\emptyset,\Theta}\times \hist_{\emptyset,\Theta}$  that are identical once thread identifiers
are ignored. Then $L_\mathsf{fc}\rellin L_\mathsf{spec}$.
\end{lemma}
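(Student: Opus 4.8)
The plan is to fix an arbitrary $h\in\encsem{L_\mathsf{fc}}$ together with a transition sequence $\rho$ of the trace semantics producing it, and to build a sequential history $h'$ with $h'\in\encsem{L_\mathsf{spec}}\cap\hist^{\sf seq}_{\Theta,\Theta'}$, $(h\restriction\k)\,\sat{P}{O}^\ast\,(h'\restriction\k)$, and $\overline h\restriction\l$ coinciding with $\overline{h'}\restriction\l$ up to thread identifiers (so the pair lies in $\R$). For a client invocation of a public method $m_i'$ the linearisation point is taken, as is standard for flat combining, to be the moment at which some combiner thread issues the corresponding abstract call $m_i(z)$ while scanning the request array; that thread need not be the invoking one, which is exactly why relabelling of $\l$-moves is required (and why $\enclin$ alone fails, as already noted before the lemma). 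Recall also that every history of $L_\mathsf{fc}$ is complete (Definition~\ref{def:semL} asks for a final configuration), so inside $h$ every call is eventually matched by a return.

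First I would analyse $\rho$. Every $P\l$-move of $h$ is of one of two kinds: \textbf{(a)} an abstract call $m_i(z)$ issued by the current holder of the flat-combining lock on behalf of a pending invocation of $m_i'$; or \textbf{(b)} a forwarding call emitted by a semantic wrapper $\lambda x.m\,x$ when the client invokes a name $m$ that $P$ had earlier handed it inside a result, together with any callback such forwarding triggers. Since the argument type $\theta_i=\tint$ is ground, no function is ever passed to the parameter library during a kind-(a) call, so such a call spawns no nested $P$-activity and in particular cannot deadlock; this is the sole use of the ground-argument hypothesis (cf.\ the discussion preceding the lemma). Using this, the fact that the combiner keeps the lock throughout its scan, and that encapsulation forbids $O$ from switching between the $\l$- and $\k$-interfaces inside a thread, I would establish: (i) each invocation episode of $m_i'$ by a thread $t$ contributes to $h$ exactly four moves $(t,\call{m_i'(z)})_{O\k}$, $(t_c,\call{m_i(z)})_{P\l}$, $(t_c,\ret{m_i(w)})_{O\l}$, $(t,\ret{m_i'(w)})_{P\k}$, call them $c_E,a_E,r_E,d_E$, occurring in this order, where $t_c$ is the servicing combiner (possibly $t_c=t$); (ii) between $c_E$ and $d_E$, thread $t$ makes no $\k$-move and no kind-(b) $\l$-move; (iii) the kind-(a) call/return pairs of distinct episodes are pairwise non-overlapping and appear in $h$ in the order of their calls; and (iv) the kind-(b) fragments and their callbacks are produced move-for-move, with the same choices of fresh names and the same $\RR,\PP,\AA$ bookkeeping, by $L_\mathsf{spec}$ as well, since $L_\mathsf{fc}$ and $L_\mathsf{spec}$ agree on the public interface and on the shape of forwarded results.

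From $h$ I would obtain $h'$ in two steps. First, in every $m_i'$-episode retag $a_E$ and $r_E$ with the invoking thread $t$ in place of $t_c$; by (i)--(iii) this yields a history in the sense of Definition~\ref{def:hist}, and since no move is altered other than its thread tag, $\overline{h'}\restriction\l$ equals $\overline h\restriction\l$ up to thread identifiers, which gives the $\R$-condition. Second, in the retagged history push each $c_E$ rightwards until it is immediately before the now $t$-tagged $a_E$, and each $d_E$ leftwards until it is immediately after $r_E$. By (ii) every move so crossed belongs to a thread other than $t$, hence each elementary swap moves an $O$-move later or a $P$-move earlier past a move of a different thread and is an instance of $\sat{P}{O}$; moreover $c_E$ stays within the span delimited by $c_E$ and $a_E$ and $d_E$ within that of $r_E$ and $d_E$, so intra-thread order is preserved throughout and the pushes compose into $h\,\sat{P}{O}^\ast\,h'$, whence $(h\restriction\k)\,\sat{P}{O}^\ast\,(h'\restriction\k)$. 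By construction every $O$-move of $h'$ is immediately followed by a $P$-move of the same thread — the $\k$-call of an episode by its $\l$-call to $m_i$, the $\l$-return of $m_i$ by the $\k$-return of $m_i'$, and in the untouched kind-(b) fragments each $O$-move as it already was — so $h'$ is sequential; and it is encapsulated, since retagging leaves every consecutive $PY,OY'$ pair inside a thread with $Y=Y'$.

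It remains to exhibit a run of $L_\mathsf{spec}$ that produces $h'$. It replays $\rho$ but substitutes the lock of $L_\mathsf{spec}$ for the flat-combining machinery: when thread $t$ performs $(t,\call{m_i'(z)})_{O\k}$ it enters $m_i'$, spins on the lock until the point at which $h'$ prescribes its call to $m_i$ — consistent because the kind-(a) pairs were already serialised in $h$ — performs $(t,\call{m_i(z)})_{P\l}$, accepts $(t,\ret{m_i(w)})_{O\l}$ with the same $w$ as in $h$ (legitimate, as in $\sem{L_\mathsf{spec}}$ the parameter library is the Opponent and the system allows any well-typed return, no concrete $L'$ being fixed here), releases the lock and performs $(t,\ret{m_i'(w)})_{P\k}$; the kind-(b) forwarding and callbacks are reproduced verbatim by (iv). Hence $h'\in\sem{L_\mathsf{spec}}$, so $h'\in\encsem{L_\mathsf{spec}}\cap\hist^{\sf seq}_{\Theta,\Theta'}$, and with the two relational conditions already checked we conclude $h\rellin h'$ and therefore $L_\mathsf{fc}\rellin L_\mathsf{spec}$. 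I expect the main obstacle to be this final step: the per-thread control flow of $L_\mathsf{spec}$, where each thread serves only its own request, is structurally unlike that of $L_\mathsf{fc}$, where one combiner serves many, so the simulation must carefully realign which thread performs which internal steps while keeping the visible moves — and the evaluation stacks generating them — exactly as in $h'$; a secondary delicate point is the interval-contraction argument in the second construction step, i.e.\ that the stated family of simultaneous $O$-right/$P$-left pushes is realisable as a single $\sat{P}{O}^\ast$ derivation.
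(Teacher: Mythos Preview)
Your overall plan matches the paper's: retag the $\l$-moves of each episode from the combiner thread $t_c$ to the invoking thread $t$ (this handles the $\R$-condition), then slide $c_E$ right and $d_E$ left so that each episode's four moves sit together. What you obtain is essentially the paper's intermediate history $h_1'$. The genuine gap is your claim that this $h'$ is already sequential.

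The problem is with your kind-(b) moves (the paper calls them \emph{inspection moves}). Within a single thread~$t'$, an inspection $O$-move $(t',\call{w(v)})_{O\k}$ is indeed immediately followed by the forwarding $P$-move $(t',\call{w'(v')})_{P\l}$; but in the interleaved, multi-threaded $h$ these two moves can be separated by moves of other threads. Your sentence ``in the untouched kind-(b) fragments each $O$-move as it already was'' silently assumes they are globally consecutive, and that is false in general. After your two construction steps the history can still contain $O$-moves not immediately followed by a same-thread $P$-move, so $h'$ need not be sequential.

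The paper supplies precisely the step you are missing: from $h_1'$ one further rearranges the inspection moves so that each such $O/P$ pair becomes adjacent. Because every inspection pair consists of one $\k$-move and one $\l$-move, this can be achieved by shifting only $\k$-moves (delaying the $O\k$-move rightwards, or advancing the $P\k$-move leftwards), which is a $\sat{P}{O}$-rearrangement on $h_1'\restriction\k$ that leaves the $\l$-projection untouched. Only after this second pass does one obtain a sequential $h_2\in\encsem{L_\mathsf{spec}}$; your argument that the result lies in $\encsem{L_\mathsf{spec}}$ then goes through.

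One minor wording issue: you write ``the pushes compose into $h\,\sat{P}{O}^\ast\,h'$'', but $h$ and $h'$ differ by the retagging of $\l$-moves, which is not a $\sat{P}{O}$-step. Only the restricted conclusion $(h\restriction\k)\,\sat{P}{O}^\ast\,(h'\restriction\k)$ holds, and that is all relational linearisability requires.
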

\begin{proof}
Observe that histories from $\encsem{L_{\mathsf{spec}}}$ feature threads built from
segments of one of the three forms (we suppress integer arguments for economy):
\begin{itemize}
\item $(t,\call{m_i' })_{O\k}\ (t,\call{m_i})_{P\l}\ (t,\ret{m_i(v)})_{O\l}\  (t,\ret{m_i'(v')})_{P\k}$, or
\item $(t',\call{w (v)})_{OY}\ (t',\call{w'(v')})_{PY'}$ for $Y\neq Y'$, where $w$ is a name introduced in an earlier move $(t'',  x)_{PY}$ and $w'$ is \nt{a corresponding name introduced by the move preceding $(t'',  x)_{PY}$}, or
\item $(t',\ret{w'(v'')})_{OY'}\  (t',\ret{w(v''')})_{PY}$ such that a segment $(t',\call{w (v)})_{OY}\ (t',\call{w'(v')})_{PY'}$ already occurred earlier.
\end{itemize}
We shall refer to moves in the second and third kind of segments as \emph{inspection moves} and write $\insp$ to refer
to sequences built exclusively from such sequences (we will use $\insp^\k$ and $\insp^\l$ if we want to stress that the moves are
exclusively tagged with $\k$ or $\l$).
\nt{Let us write $\cal X$ for the subset of  $\encsem{L_{\mathsf{spec}}}$ containing plays of the form:}
\begin{align*}
&%\scriptstyle
(t_0,\call{m_{i_0}'\igno})(t_0,\call{m_{i_0}\igno})(t_0,\ret{m_{i_0}(v_0)})(t_0,\ret{m_{i_0}'(v_0')})\, \insp_0\\
&\;\;%\scriptstyle
(t_1,\call{m_{i_1}'\igno})(t_1,\call{m_{i_1}\igno})\, \insp_1\, (t_1,\ret{m_{i_1}(v_{1})})(t_1,\ret{m_{i_1}'(v_{1}')})\, \insp_2\\
&\cdots\;
(t_k,\call{m_{i_k}'\igno})(t_k,\call{m_{i_k}\igno})\, \insp_{2k-1}\, (t_k,\ret{m_{i_k}(v_{k})})(t_k,\ret{m_{i_k}'(v_{k})})\, \insp_{2k}
\end{align*}
where each $\insp_i$ contains moves alternating between $O$ and $P$. %We refer to such plays by $\clg{L}$.

Let $h_1\in\encsem{L_\mathsf{fc}}$. Threads in $h_1$ are built from blocks of the shapes:
\begin{align*}
&(t,\call{m_i'\igno})_{O\k}\ ((t,\call{m_j\igno})_{P\l}\,(t,\ret{m_j(v)}_{O\l}))^\ast\ (t,\ret{m_i'(v')})_{P\k}\\
&\text{or }\ (t',\call{w(v)})_{OY}\, (t',\call{w'(v')})_{PY'}\ \text{ or }\ (t',\ret{w'(v'')})_{OY}\, (t',\ret{w(v''')})_{PY'}.
\end{align*}
In the first case, the $j$'s and $v$'s are meant to represent different values in each iteration. 
 In the second kind of block, $w$ needs to be introduced earlier by some $(t'',x)_{P\nt{Y}}$ move
and $w'$ is then \nt{a} name introduced by the preceding move.
For the third kind, an earlier calling sequence of the second kind must exist in the same thread.
\item Note that, due to locking and sequentiality of loops, $h_1\restriction\l$  takes the shape:
\[\begin{array}{l}%\scriptstyle
(t_0,\call{m_{i_0}\igno})_P\ (t_0,\ret{m_{i_0}(v_0)})_O\ \insp_{0}^{\l}\
%\scriptstyle
(t_1,\call{m_{i_1}\igno})_P\ \insp_{1}^{\l}\  (t_1,\ret{m_{i_1}(v_{1})})_O\ %(t_1,\ret{m_{i_1}'(v_{1}')})\ 
\insp_{2}^{\l}\\
\cdots\
%\scriptstyle
(t_k,\call{m_{i_k}\igno})_P\ \insp_{2k-1}^{\l}\ (t_k,\ret{m_{i_k}(v_{k})})_O\ \insp_{2k}^{\l}
\end{array}\]
Each segment \
$
S_j=(t_j,\call{m_{i_j}\igno})\, \insp_{2j-1}^{\l}\, (t_j,\ret{m_{i_j}(v_{j})})%(t_j,\ret{m_{i_j}'(v_{j}')})
$ \
in $h_1\restriction\l$ must be preceded (in $h_1$) by a corresponding public call $(t_j',\call{m_j'\igno})_{O\k}$
and followed by a matching return $(t_j',\ret{m_{\nt{j}}'(w_j)})_{P\k}$, where $t_j'$ need not be equal to $t_j$.
Note that there can be no other moves from $t_j'$ separating the two moves in $h_1\restriction\k$.
\\
Let $h_1'$ be obtained  from $h_1$ via the following operations around each segment $S_j$:
\begin{compactitem}
\item
move $(t_j',\call{m_{i_j}'\igno})$ right to precede $(t_j,\call{m_{i_j} \igno})$,
\item
move $(t_j',\ret{m_{i_j}'(v_j')})$ left to follow $(t_j,\ret{m_i(v_j)})$, 
\item
rename $(t_j,\call{m_i \igno})_{P\l} (t_j,\ret{m_{i_j}(v_j)})_{O\l}$ to 
$(t_j',\call{m_{i_j} \igno})_{P\l} (t_j',\ret{m_{i_j}(v_j)})_{O\l}$.
\end{compactitem}
We stress that the changes are to be performed simultaneously  for each segment $S_j$.
The rearrangements result in a library history, because they bring forward the points at which various $v_j, v_j'$ have been
introduced and, thus, inspection moves are legal. 
Then we have $h_1\rellin h_1'$, i.e.\ $(\overline{h_1}\restriction\l)\R(\overline{h_1'}\restriction\l)$ and $(h_1\restriction\k)\sat{P}{O}^\ast (h_1'\restriction\k)$.
The former follows from the multiple renaming of the moves originally tagged with $t_j$ to $t_j'$ and the fact
that their order in $h_1\restriction \l$ is unaffected. The latter holds, because $O$-moves move right and $P$-moves move left 
past other moves in $h_1\restriction\k$ that are not from the same thread.
To conclude, we show that there exists $h_2\in\clg{\nt X}$ such that 
$(\overline{h_1'}\restriction\l) = (\overline{h_2}\restriction\l)$ and $(h_1'\restriction\k)\sat{P}{O}^\ast (h_2\restriction\k)$.
We can obtain $h_2$ by rearranging inspection moves in different threads of $h_1'$ so that they alternate between $O$ and $P$.
Since the inspection moves come in pairs this can simply be done by bringing the paired moves next to each other. 
Because one of them is always from $\k$ and the other from $\l$, this can be achieved
 by rearranging moves in $h_1'\restriction \k$ only: if the $O$-move is from $\k$ it can be moved to the right, otherwise
 the $P$-move from $\k$ can be moved left.
 Then we have $h_2\in{\cal\nt X}\subseteq\encsem{L_\mathsf{spec}}$ with $h_1\rellin h_2$, as required.
\end{proof}}
\cutout{
\begin{remark}
A similar argument as above is possible for any function type $\theta_i'$,
the points of name introductions $\ret{m_i'(v_i)}$ are only being fast-forwarded by the rearrangements
and, consequently, the validity of inspection moves is preserved.

If argument types are functional then an additional complication arises.
We can have a second kind of inspection moves related to exploring the arguments. 
However, such inspections can only take place after $(t,\call{m_i(v_i)})_{P\l}$ and,
since these moves are not being rearranged, the fact that the corresponding $(t',\call{m_i(v_i')})_{O\k}$ is delayed
will not affect the correctness of the rearranged history.

Thus, $L_\mathsf{fc}\rellin L_\mathsf{spec}$ holds in each case. However,  as we pointed out earlier, in general the protocol is problematic
in the higher-order setting due to the possibility of deadlock.
\end{remark}}

\cutout{

\am{\large What follows is a reconstruction of the CGY proof in our setting, included for reference. I'll delete it before submission once
we have an argument for a more general case.}
\begin{corollary}[Flat combining~\cite{CGY14}]
Let $\Theta=\{ m_i \,|\, 1\le i\le k, m_i\in \Meths_{\tint,\tint}\}$,
$\Theta'=\{ m_i' \,|\, 1\le i\le k, m_i'\in \Meths_{\tint,\tint}\}$ be such that $\Theta\cap\Theta'=\emptyset$.
Consider $L_\mathsf{spec},L_\mathsf{fc}: \Theta\rarr\Theta'$ defined in Figure~\ref{fig:fcbasic}.
Let $\R$ consist of pairs $(h_1,h_2)$ of histories  in $\hist^L_{\emptyset,\Theta}$  that differ only in thread identifiers (if at all).
Then $L_\mathsf{fc}\rellin L_\mathsf{spec}$.
\end{corollary}
\begin{proof}
%\am{The itemisation is just for clarity. I'll inline everything later.}
\begin{itemize}
\item Let us observe that $\encsem{L_\mathsf{spec}}$ contains, among others, histories that are concatenations of segments of the shape
$(t',\call{m_i'(v_j)})_{O\k}$ $(t',\call{m_i (v_j)})_{P\l}$ $(t',\ret{m_i(w_j)})_{O\l}$ $(t',\ret{m_i'(w_j)})_{P\k}$.
\item Consider $h_1\in\encsem{L_\mathsf{fc}}$. 
Thanks to locks, $h_1 \restriction \l$ is a sequence of consecutive segments of the form
$(t,\call{m_i (v_j)})_{P\l} (t,\ret{m_i(w_j)})_{O\l}$ with $t\in\natnum$.
On the other hand, $h_1\restriction\k$ consists of an interleaving of threads, each of which
is a sequence of consecutive segments of the form $(t,\call{m_i' (v_j)})_{O\k} (t,\ret{m_i'(w_j)})_{P\k}$.
\item Each segment in $h_1\restriction\l$ must be preceded (in $h_1$) by a corresponding public call $(t',\call{m_i'(v_j)})_{O\k}$
and followed by a matching return $(t',\ret{m_i'(w_j)})_{P\k}$, where $t'$ need not be equal to $t$.

\item Let $h_2$ be obtained  from $h_1$ by performing the following operations for each segment in $\h_1\restriction\l$:
move $(t',\call{m_i'(v_j)})$ right to precede $(t,\call{m_i (v_j)})$,
move $(t',\ret{m_i'(w_j)})$ left to follow $(t,\ret{m_i(w_j)})$, 
rename $(t,\call{m_i (v_j)})_{P\l} (t,\ret{m_i(w_j)})_{O\l}$ to 
$(t',\call{m_i (v_j)})_{P\l} (t',\ret{m_i(w_j)})_{O\l}$.
We stress that the changes are to be performed simultaneously  for each call/return segment  from $h_1\restriction\l$.

\item Then $h_1\rellin h_2$, because $\overline{(h_1\restriction\l)}\R\overline{(h_2\restriction\l)}$ and $(h_1\restriction\k)\sat{P}{O}^\ast (h_2\restriction\k)$.
The former follows from the multiple renaming of the moves originally tagged with $t$ to $t'$. The latter holds, because $O$-moves move right and $P$-moves move left 
past other moves in $h_2\restriction\k$, which are not from $t'$, because there can be no moves from $t'$ in $h_2\restriction\k$ that separate
$(t',\call{m_i'(v_j)})_{O\k}$ and  $(t',\ret{m_i'(w_j)})_{P\k}$.
\end{itemize}
\end{proof}
}

\cutout{
The above Theorem can be used to tackle more complicated instances of flat combining~\cite{CGY14}.
The case handled therein concerned libraries where each public and abstract method had type $\tint\rarr\tint$.
Because the example relies on storing arguments and results, in order to handle its general case $\theta\rarr\theta'$,
one would have to add higher-order storage to our framework. Without it, i.e. with integer storage only,
it is still possible to handle some more complicated variations on the example, though the public methods must remain $\tint\rarr\tint$.
One can complicate the types used in abstract methods, though. For instance, the case of
$\mathsf{do}\_m_i (n:\tint) = (\mathsf{lock}; \letin{r=m_i (\lambda x^\tint.x+n_i)}{\mathsf{unlock}; r)}$, 
where $m_i: (\tint\rarr\tint)\rarr\tint$ is abstract, could be tackled analogously to~\cite{CGY14} by observing
that  single invocations of abstract methods (after combining) correspond to sequences of the form
\[\begin{array}{l}
(t',\call{\mathsf{do}\_m_i (n_i)})\,\, \cdots\\
(t,\call{m_i(v')})\quad {\sum_{j\in\Z}\,\, ((t,\call{v'(j)})\quad(t,\ret{v'(j+n_i))\,)}}^\ast \quad(t,\ret{m_i(r)})\\
\cdots\,\, (t',\ret{\mathsf{do}\_m_i(r)})
\end{array}
\]
instead of $(t',\call{\mathsf{do}\_m_i(n_i)})\,\cdots\, (t,\call{m_i(n_i)}) \, (t,\ret{m_i(r)})\,\cdots\, (t',\ret{\mathsf{do}\_m_i(r)})$.
}

%%% Local Variables: 
%%% mode: latex
%%% TeX-master: "holin"
%%% End: 

% !TEX root =  holin.tex

%\section{Higher-order linearisability\label{sec:hol}}

%We now present linearisability results for higher-order libraries. Following~\cite{CGY14}, we distinguish between three cases: the general case, the encapsulated one and, finally, the case of relational linearisability.
%For economy, in this section we shall be dropping the index $N$ from the semantic notation and e.g.\ write $\sem{L}$ for $\sem{L}_N$.

\section{Correctness}\label{sec:sat}

In this section we prove that the linearisability notions we introduce are correct: linearisability implies contextual approximation.
The approach is based on showing that, in each case, the semantics of contexts is saturated relatively to conditions that are dual to linearisability. Hence, linearising histories does not alter the observable behaviour of a library.
We start by presenting two compositionality theorems on the trace semantics, which will be used for relating library and context semantics.

\subsection{Compositionality}

The semantics we defined is {compositional} in the following ways:
\begin{compactitem}
\item To compute the semantics of a library $L$ inside a context $\ctx$, it suffices to compose the semantics of $\ctx$ with that of $L$, for a suitable notion of context-library composition ($\sem{\ctx}\oslash\sem{L}$).
\item To compute the semantics of a union library $L_1\cup L_2$, we can compose the semantics of $L_1$ and $L_2$, for a suitable notion of library-library composition ($\sem{L_1}\otimes\sem{L_2}$).
\end{compactitem}
The above
are proven using bisimulation techniques for
connecting syntactic and semantic compositions,
and are presented in Appendices~\ref{sec:comp} and
~\ref{sec:comp2} respectively.
%
%Compositionality will be pivotal in the next section for showing that linearisability implies contextual approximation.
%We state here the main results we use.
%before we develop them in full in Section~\ref{sec:comp}.

The latter correspondence is used in Appendix~\ref{sec:cong} for proving that linearisability is a congruence for library composition.
From the former correspondence we obtain the following result, which we shall use for showing correctness.

\begin{theorem}\label{t:comp}
Let $L:\Theta\to\Theta'$, $L':\emptyset\to\Theta,\Theta''$ and $\Theta',\Theta''\kvdash M_1\|\cdots\|M_N:\unit$,
with $L$, $L'$ and $M_1;\cdots;M_N$ accessing pairwise disjoint parts of the store.
Then: 
\begin{align*}
&\link{L'\comp L}{(M_1\|\cdots\| M_N)}\,{\Downarrow}
\iff
\exists h\in\sem{L}_N.\
\bar h\in\sem{\link{L'\comp-}{(M_1\|\cdots\| M_N)}}
\end{align*}
\end{theorem}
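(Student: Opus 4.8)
The plan is to establish the equivalence by relating the operational reduction sequences witnessing termination of the linked system $\link{L'\comp L}{(\vec M)}$ with pairs of complementary runs in the two trace systems defining $\sem{L}_N$ and $\sem{\link{L'\comp-}{(\vec M)}}$. The key idea is that a closed execution of the combined system can be \emph{sliced} along the interface $\Theta\uplus\Theta'$: whenever control passes from the $L$-part to the $(L',\vec M)$-part (a call to an abstract method of $L$, or a return to $K$ of a public-method call, or — in the higher-order case — a call/return on a method name that $L$ previously leaked), the combined reduction performs an internal step that we re-interpret as an external \textsc{PExt} move of $\sem{L}$ and, dually, as the complementary \textsc{PExt} move of $\sem{\link{L'\comp-}{(\vec M)}}$; symmetrically for the other direction. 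Internal reductions of the combined system on the $L$-side become \textsc{PInt} steps of $\sem{L}_N$, and those on the $(L',\vec M)$-side become \textsc{PInt} steps of the context semantics.

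Concretely, first I would set up a \textbf{composition relation} between, on one side, configurations of the linked operational semantics $\pred{N}{}$ (i.e.\ tuples $(\vec M,\RR,S)$ arising from $\link{L'\comp L}{(\vec M)}$ after unfolding the definitions of $\cup$, hiding, and $\comp$), and, on the other side, pairs consisting of an $N$-configuration $\rho_L$ of the $L$-trace system and an $N$-configuration $\rho_C$ of the context trace system that are \emph{compatible}: their shared $S$-components are the union of the disjoint library and context stores, their evaluation stacks interleave to reconstruct the combined call stacks, their repositories agree on the names each owns, and their $\PP/\AA$ components are dual ($\PP$ of one equals $\AA$ of the other, componentwise in $\k$ and $\l$). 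The renaming performed by the \textsc{PQy}/\textsc{PAy} rules (replacing a leaked $m_i$ by a fresh $m_i'$ mapped to $\lambda x.m_i x$) must be tracked by an explicit bijection between the ``$L$-view'' and ``context-view'' names; this $\eta$-expansion is observationally transparent, which is exactly what makes the bisimulation go through. Initial configurations are compatible by construction of Definition~\ref{def:semL} and Definition~\ref{semcon} (note $\Theta$ goes into $\AA_\l$ for $L$ and $\PP_\l$ for $C$, and $\Theta'$ into $\PP_\k$ for $L$ and $\AA_\k$ for $C$), and final configurations correspond to final configurations on both sides (all threads at $([],())$ vs.\ all client threads returned).

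Then I would prove the \textbf{bisimulation lemma}: if the combined configuration is compatible with $(\rho_L,\rho_C)$, then (a) every $\pred{N}{}$-step of the combined system is matched either by a \textsc{PInt} on exactly one of $\rho_L,\rho_C$ with the other unchanged, or by a \textsc{PExt} move $(i,x)_{XY}$ on one and the complementary move $(i,\bar x)$ on the other (preserving compatibility); and (b) conversely, any \textsc{PInt} of $\rho_L$ or $\rho_C$, and any pair of complementary \textsc{PExt} moves, can be realised by a corresponding combined reduction. A subtlety is that a \textsc{PExt} move of $\sem{L}_N$ is only ever matched by an \emph{internal} step of the combined system (control simply flows across the interface, no observable event happens in the linked program), so the two trace runs produce mutually reverse histories $h$ and $\bar h$ while the combined system just silently reduces — this is why the statement quantifies over $h\in\sem{L}_N$ with $\bar h\in\sem{\link{L'\comp-}{(\vec M)}}$. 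From the bisimulation, $\link{L'\comp L}{(\vec M)}\Downarrow$ iff the combined system reaches a final configuration iff there is a synchronised pair of runs reaching final configurations on both trace sides, i.e.\ iff there is $h\in\sem{L}_N$ with $\bar h\in\sem{\link{L'\comp-}{(\vec M)}}$ — using that the co-history of the context run and the history of the library run must be literal complements, which is guaranteed because at every external step we forced the moves to be complementary and the name-bijection keeps them syntactically matched.

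\textbf{Main obstacle.} The delicate part is the bookkeeping of method names across the interface: the combined semantics uses a single repository and creates genuinely fresh names (via the $E[\lambda x.M]$ rule), whereas the two trace semantics each maintain their own $\PP/\AA$ partitions and perform the fresh-renaming-with-$\eta$-expansion on every $P$-to-$O$ handover. Getting the invariant right — that the two views are related by a type-preserving bijection, that this bijection extends coherently when new names are leaked in either direction, and that the $\lambda x.m x$ indirection never causes a divergence in reduction length (one side takes an extra $\beta$-step that the other does not) — is where the real work lies, and it is what the appendix bisimulation argument (alluded to after the theorem) is presumably devoted to. Everything else is a routine, if lengthy, case analysis over the reduction and trace rules.
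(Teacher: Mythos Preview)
Your proposal is correct and follows essentially the same approach as the paper: a bisimulation between the combined operational semantics and compatible pairs of library/context trace configurations, with interface crossings realised as complementary external moves on the two sides. The paper handles your ``main obstacle'' not via an explicit name bijection but by factoring through an intermediate \emph{tagged} semantics (method names and $\lambda$s carry component indices $1,2$, and cross-component calls insert explicit $\langle m^i\rangle$ return markers together with fresh $\eta$-expanded names), then proving separately that this tagged semantics agrees with the untagged combined one (Lemma~\ref{lem:eta}) and that it is bisimilar to the external composition of the two trace systems (Lemma~\ref{l:comp1}).
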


\subsection{General linearisability}\label{sec:gensat}

Recall the general notion of linearisability defined in Section~\ref{sec:genlin}, which is based on move-reorderings inside histories. 

In Def.s~\ref{def:semL} and~\ref{semcon} we have defined the trace semantics of libraries and contexts.
The semantics turns out to be closed under $\sat{O}{P}^\ast$.
\begin{lemma}[Saturation]\label{lem:genclosure}
Let $X=\sem{L}$ (Def.~\ref{def:semL}) or $X=\sem{\link{L'\!\comp-}{(M_1\|\cdots\| M_N)}}$ (Def.~\ref{semcon}).
Then if $h\in X$ and $h\sat{O}{P}^\ast h'$ then $h'\in X$.
\end{lemma}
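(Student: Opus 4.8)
The plan is to prove the statement by induction on the length of the $\sat{O}{P}^\ast$-derivation, so it suffices to treat a single step $h \sat{O}{P} h'$, where $h'$ is obtained from $h$ by swapping two adjacent moves $(t',x')$ and $(t,x)$ from distinct threads ($t \neq t'$) in such a way that either an $O$-move is moved earlier or a $P$-move is moved later. Concretely there are two shapes to consider: $h = s_1\,(t',x')_{O}\,(t,x)\,s_2$ with $h' = s_1\,(t,x)\,(t',x')_{O}\,s_2$, and $h = s_1\,(t',x')\,(t,x)_{P}\,s_2$ with $h' = s_1\,(t,x)_{P}\,(t',x')\,s_2$. In both cases I will show that any run $\rho$ of the trace LTS ($\pred{N}{}$ for $\sem{L}$, or the corresponding context LTS) producing $h$ can be transformed into a run producing $h'$.

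The key observation is that the trace LTS is thread-wise: each thread $i$ carries its own evaluation stack $\EE_i$ and term component $X_i$, and a step $\tred{i}{}$ or $\tred{i}{x_{XY}}$ only touches $\CC_i = (\EE_i,X_i)$ together with the shared data $(\RR,\PP,\AA,S)$. First I would isolate, in the run $\rho$ producing $h$, the two transitions $\tau'$ and $\tau$ that emit the moves $(t',x')$ and $(t,x)$ respectively, which by hypothesis occur consecutively in the move-sequence although the run may contain internal $\pred{N}{}$ (\textsc{PInt}) steps between them and around them. Since internal steps do not emit moves, I may first push all internal steps that lie strictly between $\tau'$ and $\tau$ to either before $\tau'$ or after $\tau$, using that an internal step in thread $j$ commutes with an external step in thread $j' \neq j$ (they touch disjoint $\CC$-components and the internal step's effect on $\RR,S$ is independent, with freshness side-conditions preserved); for internal steps in the same thread as one of $\tau',\tau$, they can be absorbed on the appropriate side. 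This reduces us to the situation where $\tau'$ is immediately followed by $\tau$ in $\rho$.

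Now I need the core commutation: from a configuration $\kappa$, if $\kappa \tred{t'}{x'_{O\ldots}} \kappa_1 \tred{t}{x\ldots} \kappa_2$ with $t \neq t'$, then also $\kappa \tred{t}{x\ldots} \kappa_1' \tred{t'}{x'_{O\ldots}} \kappa_2$ for some $\kappa_1'$, and symmetrically when the second move is a $P$-move. The point is that an $O$-move (rules \textsc{OQy}, \textsc{OAy}) only reads from $\PP$ / $\AA$ and writes into $\AA$ (adding fresh names) and into the stack/term of its own thread; it never removes names, never touches $\RR$ or $S$, and imposes freshness conditions that are monotone. A $P$-move (\textsc{PQy}, \textsc{PAy}) reads $\AA$ / $\PP$, adds fresh names to $\PP$ and to $\RR$, and modifies its own thread. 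When the later of the two moves is $O$, or the earlier is $P$ — exactly the two cases in the definition of $\sat{O}{P}$ — the enabling conditions for each move are preserved by performing the other one first: an $O$-move performed later is only helped (more public names available, nothing it needed removed), and a $P$-move performed earlier needs only names that were already present. The freshness pools $\phi(\PP,\AA)$ shrink monotonically, so we can always re-pick the same fresh names; disjointness of $\PP_\k,\PP_\l,\AA_\k,\AA_\l$ is maintained. This is where I expect the bulk of the bookkeeping to live: a careful case split on which of \textsc{PQy}/\textsc{PAy}/\textsc{OQy}/\textsc{OAy} produces $x'$ and which produces $x$, checking enabledness and that the resulting shared state after both moves is literally the same $\kappa_2$ regardless of order. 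The thread-separation of $\EE_i,X_i$ makes the local components trivially commute; only the shared $(\RR,\PP,\AA,S)$ needs argument, and it is exactly the asymmetry baked into $\sat{O}{P}$ (and the fact that the forbidden swap $(t,x)_X(t',x')_{X'}\sat{X}{X'}(t',x')_{X'}(t,x)_X$ is excluded) that rules out the problematic direction, where an $O$-move would be delayed past a $P$-move that consumed a name the $O$-move needed, or a $P$-move advanced past the $O$-move introducing the name it calls.

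Finally, having commuted $\tau'$ past $\tau$ in $\rho$, I re-attach the internal steps that were parked on either side; the resulting run $\rho'$ has exactly the same initial and final configurations as $\rho$ — in particular it starts at $(\vec\CC_0,\RR_0,(\emptyset,\Theta'),(\Theta,\emptyset),S_0)$ (resp.\ the context-initial configuration) and ends at a final configuration — and emits the move-sequence $h'$. Hence $h' \in X$. The induction on derivation length then gives the general statement. The main obstacle, as noted, is the exhaustive verification of the single-step commutation across all combinations of move rules, ensuring in each case that the side-conditions (membership in $\PP_Y$/$\AA_Y$, freshness, $\RR(m)=\lambda x.M$) remain satisfied after reordering; everything else is routine given the thread-local structure of configurations.
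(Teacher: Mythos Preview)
Your overall architecture (reduce to a single swap, commute the two move-emitting transitions, handle the intervening $\epsilon$-steps) is the right one and matches the paper's proof. However, there is a genuine direction error that runs through the concrete part of your argument.

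You correctly state in words that for $\sat{O}{P}$ ``an $O$-move is moved earlier or a $P$-move is moved later'', but the two displayed shapes you then give contradict this: in $h = s_1\,(t',x')_{O}\,(t,x)\,s_2 \mapsto h' = s_1\,(t,x)\,(t',x')_{O}\,s_2$ the $O$-move has moved \emph{later}, and in the second shape the $P$-move has moved \emph{earlier}. These are the clauses of $\sat{P}{O}$, not $\sat{O}{P}$. Your core commutation claim (``$\kappa \tred{t'}{x'_{O}} \kappa_1 \tred{t}{x} \kappa_2$ implies $\kappa \tred{t}{x} \kappa_1' \tred{t'}{x'_{O}} \kappa_2$'') and your enabledness reasoning (``an $O$-move performed later is only helped \ldots a $P$-move performed earlier needs only names that were already present'') are likewise arguments for the $\sat{P}{O}$ direction. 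What you actually need is the opposite: delay a $P$-move past the following move, or advance an $O$-move before the preceding move. The monotonicity reasoning you sketch adapts, but as written it does not establish what the lemma asks.

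Beyond the direction issue, you miss the structural observation the paper exploits to make the $\epsilon$-handling trivial: in the single-threaded system, after a $P$-move the thread is in an $O$-configuration $(\EE,-)$ and can take no $\epsilon$-steps, and symmetrically no $\epsilon$-steps can precede an $O$-move in that thread. Together with the fact that labelled transitions neither read nor write the store $S$ (only $\epsilon$-transitions touch $S$), this means that when delaying a $P$-move (resp.\ advancing an $O$-move) across the intervening $\epsilon$-block, \emph{none} of those $\epsilon$-steps are in the thread being moved, so the commutation is immediate. Your ``absorbed on the appropriate side'' is vague exactly where this observation makes the argument one line. Reversing your polarities and inserting this observation would give a correct proof matching the paper's.
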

\begin{proof}
Recall that the same labelled transition system underpins the definition of $X$ in either case.
We make several observations about the single-threaded part of that system.
\begin{compactitem}
\item The store is examined and modified only during $\epsilon$-transitions.
\item The only transition possible after a $P$-move is an $O$-move. In particular, it is never the case that a $P$-move is separated from the following $O$-move
by an $\epsilon$-transition.
\end{compactitem}
Let us now consider the multi-threaded system and $t\neq t'$.
\begin{compactitem}
\item  Suppose
$\rho\pred{}{(t',m')_P} \rho_1\pred{}{\epsilon^\ast} \rho_2 \pred{}{(t,m)}  \rho_3$.
Then the $(t',m')_P$-transition can be delayed inside $t'$ until after $(t,m)$, i.e.\
$\rho\pred{}{\epsilon^\ast}  \rho_1' \pred{}{(t,m)}  \rho_2'\pred{}{(t',m')_P}  \rho_3$
for some $\rho_1',\rho_2'$.
This is possible because the ($(t',m')_P$-labelled) transition does not access or modify the store,
and none  of the $\epsilon$-transitions distinguished above can be in $t'$, 
thanks to our earlier observations about the behaviour of the single-threaded system.
\item Analogously, suppose
$\rho\pred{}{(t',m')}  \rho_1\pred{}{\epsilon^\ast}  \rho_2 \pred{}{(t,m)_O}  \rho_3$.
Then the $(t,m)_O$-transition can be brought forward, 
i.e.
$\rho\pred{}{(t,m)_O}  \rho_1' \pred{}{(t',m')}  \rho_2' \pred{}{\epsilon^\ast}  \rho_3$,
because it does not access or modify the store and the preceding $\epsilon$-transitions 
cannot be from $t$.\qedhere
\end{compactitem}
\end{proof}

This, along with the fact that
$
h_1 \sat{X}{X'} h_2 \iff h_2\sat{X'}{X} h_1 \iff \overline{h_1}\sat{X'}{X}\overline{h_2}.
$
lead us to the notion of linearisability defined in Def.~\ref{def:genlin}.

\begin{theorem}\label{thm:genlin}
$L_1 \genlin L_2$ implies $L_1\genapp L_2$.
\end{theorem}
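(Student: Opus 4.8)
### Proof proposal for Theorem~\ref{thm:genlin}

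The plan is to reduce the contextual claim to the compositionality result of Theorem~\ref{t:comp} together with the saturation Lemma~\ref{lem:genclosure}. Recall that $L_1\genlin L_2$ means: for every $h\in\sem{L_1}_N$ there is a sequential $h'\in\sem{L_2}_N$ with $h\sat{P}{O}^\ast h'$. To prove $L_1\capprox L_2$ we fix an arbitrary parameter library $L':\emptyset\to\Theta,\Theta''$ and client $\Theta',\Theta''\kvdash M_1\|\cdots\|M_N:\unit$ accessing pairwise disjoint parts of the store, assume $\link{L'\comp L_1}{(M_1\|\cdots\|M_N)}\Downarrow$, and derive $\link{L'\comp L_2}{(M_1\|\cdots\|M_N)}\Downarrow$.

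The key steps, in order, are as follows. First, apply the left-to-right direction of Theorem~\ref{t:comp} to the terminating computation of $L_1$: this yields a history $h\in\sem{L_1}_N$ such that $\overline h\in\sem{\link{L'\comp-}{(M_1\|\cdots\|M_N)}}$. Second, invoke the hypothesis $L_1\genlin L_2$ to obtain a sequential $h'\in\sem{L_2}_N$ with $h\sat{P}{O}^\ast h'$. Third — and this is the crux — transfer the reordering to the context side. Using the elementary equivalence noted right before the theorem, namely $h_1\sat{X}{X'}h_2 \iff \overline{h_1}\sat{X'}{X}\overline{h_2}$, the relation $h\sat{P}{O}^\ast h'$ gives $\overline h\sat{O}{P}^\ast\overline{h'}$. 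Since $\overline h$ lies in the context semantics and that semantics is closed under $\sat{O}{P}^\ast$ by Lemma~\ref{lem:genclosure}, we conclude $\overline{h'}\in\sem{\link{L'\comp-}{(M_1\|\cdots\|M_N)}}$. Fourth, apply the right-to-left direction of Theorem~\ref{t:comp}, now with the library $L_2$ and the witness $h'\in\sem{L_2}_N$ whose complement sits in the context semantics, to conclude $\link{L'\comp L_2}{(M_1\|\cdots\|M_N)}\Downarrow$, as required.

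The main obstacle is the third step, the directional bookkeeping of polarities under complementation. One must be careful that the definition of $\genlin$ moves $P$-moves earlier / $O$-moves later on the \emph{library} side ($\sat{P}{O}$), whereas the saturation lemma is stated for the \emph{dual} reordering $\sat{O}{P}$; these match up precisely because complementation swaps $O$ and $P$ and simultaneously reverses the roles of the two indices in $\sat{X}{X'}$. A secondary point of care is that Theorem~\ref{t:comp} is phrased with a fixed number of threads $N$ and with disjointness-of-store side conditions: these are exactly the conditions appearing in Definition~\ref{def:general}, so no extra work is needed, but the quantifier structure (``for all $L',M_1,\dots,M_N$, and for the same $N$ throughout'') must be threaded through consistently. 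Everything else is a direct chaining of the two cited results, so the argument is short once the polarity conventions are pinned down. Finally, note that although the theorem is stated for the library-to-library relation $L_1\genlin L_2$, the same argument proves the more general statement that $L_1\genlin A$ for a sequential specification $A$ implies the corresponding observational consequence, since $A$ is only used through its membership in a suitable trace set; we state and use only the form needed here.
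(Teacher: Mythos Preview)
Your proposal is correct and follows exactly the same route as the paper: extract a witnessing history via Theorem~\ref{t:comp}, linearise it using the hypothesis, push the reordering through complementation to get $\overline h\sat{O}{P}^\ast\overline{h'}$, apply the saturation Lemma~\ref{lem:genclosure} on the context side, and conclude with the converse direction of Theorem~\ref{t:comp}. The only cosmetic difference is that you record the sequentiality of $h'$, which the argument does not actually use.
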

\begin{proof}
Consider $\ctx$ such that $\ctx[L_1]\Downarrow$. We need to show $\ctx[L_2]\Downarrow$.
Because $\ctx[L_1]\Downarrow$, Theorem~\ref{t:comp} implies that there exists 
$h_1\in\sem{L_1}$ such that  $\overline{h_1}\in\sem{\ctx}$.
Because $L_1\genlin L_2$, there exists $h_2\in\sem{L_2}$ with $h_1\sat{P}{O}^\ast h_2$.
Note that $\overline{h_1} \sat{O}{P}^\ast \overline{h_2}$.
By Lem.~\ref{lem:genclosure}, $\overline{h_2}\in\sem{\ctx}$.
Because $h_2\in \sem{L_2}$ and $\overline{h_2}\in\sem{\ctx}$, using Theorem~\ref{t:comp} we can conclude $\ctx[L_2]\Downarrow$.
\end{proof}

%To wrap the section, we demonstrate that $\genlin$ is a compositional concept.

%\nt{In addition, $\genlin$ is a congruence with respect to union (cf.\ Appendix~\ref{sec:gencomp}).}

\begin{theorem}\label{thm:gencomp}
If $L_1\genlin L_2$ then, for suitably typed $L$ accessing disjoint part of the store than $L_1$ and $L_2$, we have $L\cup L_1 \genlin L\cup L_2$.
\end{theorem}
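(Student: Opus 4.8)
The plan is to reduce the statement about union libraries to the library--library compositionality theorem for the trace semantics (the $\sem{L_1}\otimes\sem{L_2}$ construction mentioned in Section~\ref{sec:sat} and proved in Appendix~\ref{sec:comp2}), and then to show that this semantic composition operator is monotone with respect to the linearisation preorder $\genlin$, in the appropriate sense. Concretely, suppose $L_1,L_2:\Theta\to\Theta'$ with $L_1\genlin L_2$, and let $L:\Theta_1\to\Theta_2$ access a disjoint part of the store. We want $L\cup L_1\genlin L\cup L_2$, i.e.\ for every $h\in\sem{L\cup L_1}$ there is a sequential $h'\in\sem{L\cup L_2}$ with $h\sat{P}{O}^\ast h'$.

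First I would invoke the compositionality result to write $\sem{L\cup L_i}=\sem{L}\otimes\sem{L_i}$ (for $i=1,2$), so that any $h\in\sem{L\cup L_1}$ arises from a pair of histories $h_L\in\sem{L}$ and $h_1\in\sem{L_1}$ that are ``compatible'' in the sense dictated by $\otimes$: they agree on the shared interface between $L$ and $L_1$ (the methods of $\Theta$ that $L_1$ exports and $L$ imports, or vice versa), and $h$ is obtained by an appropriate merge/hiding of the two. Since $L_1\genlin L_2$, pick a sequential $h_2\in\sem{L_2}$ with $h_1\sat{P}{O}^\ast h_2$. The core of the argument is then: (i) the reordering witnessing $h_1\sat{P}{O}^\ast h_2$ only swaps adjacent moves from different threads, pushing $P$-moves later relative to $O$-moves; since the interface moves between $L$ and $L_1$ appear in both $h$ and $h_i$ with a consistent polarity, the \emph{same} sequence of swaps can be replayed on the merged history $h$, yielding some $h'\in\sem{L}\otimes\sem{L_2}=\sem{L\cup L_2}$ with $h\sat{P}{O}^\ast h'$; and (ii) because $h_2$ is sequential and the moves contributed by $L$ to the merge are, after linearisation, "absorbed" into atomic segments, $h'$ can be taken sequential --- or, more carefully, we further linearise the $L$-part, using that $h_L\in\sem{L}$ is already saturated under $\sat{O}{P}^\ast$ on the context side via Lemma~\ref{lem:genclosure}, to sequentialise the remaining non-atomic segments while staying inside $\sem{L\cup L_2}$.

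The cleanest way to package step (i) is probably to prove a lemma of the form: \emph{if $h_1\sat{P}{O}^\ast h_2$ and $(h_L,h_1)$, $(h_L,h_2)$ are both $\otimes$-compatible, then any merge of $(h_L,h_1)$ is $\sat{P}{O}^\ast$-below some merge of $(h_L,h_2)$.} This follows by induction on the length of the swap sequence $h_1\sat{P}{O}^\ast h_2$: a single swap of two adjacent moves $(t',x')_P(t,x)$ in $h_1$ lifts to a swap (or a bounded sequence of swaps, if $L$-internal or $L$-interface moves of $h_L$ are interleaved between them in the merge) of the corresponding moves in the merged history, and one checks that each such lifted swap is itself a $\sat{P}{O}$ step — the polarity of a move is preserved under the merge, and moves from $h_L$ that sit between the swapped pair are in threads that can be shuffled past them harmlessly. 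Here the disjointness of the stores of $L$ and $L_i$ is essential: it guarantees no store-dependency obstructs the swaps, exactly as in the proof of Lemma~\ref{lem:genclosure}.

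\textbf{The main obstacle} I anticipate is step (ii): ensuring the witness $h'$ can be chosen \emph{sequential}, not merely $\sat{P}{O}^\ast$-above $h$. The linearisation of $h_1$ to the sequential $h_2$ sequentialises the $L_1$-contributions, but the merged history also contains moves internal to $L$'s own interaction with the outer environment, and these need not come out sequential for free. I would handle this by first observing that sequentiality of $h_2$ forces the $\Theta$-interface between $L$ and $L_2$ to appear as immediate call/return (or, in the higher-order case, $O$-move/$P$-response) pairs, so after the merge the only non-atomic blocks left are ones entirely within $L$'s external interface; then I would argue these can be linearised separately — either by assuming, as seems implicit in the intended use, that the statement is really about $L\genlin L$ trivially composed (so $L$'s own external behaviour is already arbitrary and we only need $\sem{L\cup L_2}$ to \emph{contain} a sequential history above $h$, which membership is witnessed by re-running the LTS), or by appealing again to Lemma~\ref{lem:genclosure}-style saturation on the combined system. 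Getting the bookkeeping of which interface is shared and which is external exactly right, and confirming that $\otimes$ interacts with sequentiality as needed, is where the real work lies; everything else is a routine replay of swaps.
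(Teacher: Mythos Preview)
Your outline matches the paper's: decompose $h\in\sem{L\cup L_1}$ via Theorem~\ref{supercomp} as $h = h'\doublewedge^\sigma_{\Pi,\Rho} h_1'$ with $h'\in\sem{L}$ and $h_1'\in\sem{L_1}$; obtain a sequential $h_2'\in\sem{L_2}$ with $h_1'\sat{P}{O}^\ast h_2'$ from the hypothesis; then recompose. But your step~(i) rests on a misunderstanding that makes the argument break down.

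The moves on the shared interface between $L$ and $L_1$ carry \emph{opposite} polarities in the two components: a $P$-move in $h_1'$ is an $O$-move in $h'$, and vice versa. Moreover these shared moves are \emph{hidden} in the composite $h$, so there is nothing to ``replay'' on $h$ for them. What actually has to happen is that $h'$ must be modified into some $h''$ that is $\doublewedge$-compatible with $h_2'$; because of the polarity flip, the $\sat{P}{O}$ swaps on $h_1'$ that involve shared moves translate to $\sat{O}{P}$ swaps on $h'$. This is precisely why Lemma~\ref{lem:genclosure} (closure of $\sem{L}$ under $\sat{O}{P}^\ast$) is the right tool: it gives $h''\in\sem{L}$, and one then takes $h_2=h''\doublewedge^{\sigma'}_{\Pi,\Rho}h_2'$. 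Your proposal never isolates this flip; the phrase ``consistent polarity'' is wrong, and your packaged lemma assumes ``$(h_L,h_1)$ and $(h_L,h_2)$ are both $\otimes$-compatible'', a hypothesis that is in general false---the \emph{same} $h_L$ will not be compatible with $h_2$. The paper also does genuine work checking that no $h'$-moves from the relevant thread can sit between a swapped pair (using the scheduler side-conditions on $\doublewedge$), which your sketch does not address.

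Two smaller points. You have the direction of $\sat{P}{O}$ backwards: it moves $P$-moves \emph{earlier} (equivalently $O$-moves later), not the other way round. And store-disjointness is needed so that Theorem~\ref{supercomp} applies at all, not to justify individual history swaps---histories carry no store.

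Your worry about sequentiality in step~(ii) is reasonable: the paper's proof concludes $h_1\genlin h_2$ at the history level (i.e.\ $h_1\sat{P}{O}^\ast h_2$) without separately arguing that the composite $h_2=h''\doublewedge^{\sigma'}h_2'$ lies in $\hist^{\sf seq}$, so on this point neither account is fully explicit.
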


\cutout{
\begin{theorem}
If $L_1\genlin L_2$ then, for suitably typed $L$, we have $L ;L_1\genlin L; L_2$ and $L_1; L \genlin L_2; L$, as applicable.
\end{theorem}
\begin{proof}
We tackle $L; L_1\genlin L; L_2$. The other case is symmetric. Assume $L_1\genlin L_2$ and suppose $h_1\in \sem{L; L_1}$.
By Corollary~\ref{supercomp}, $h_1 = h'\doublewedge_\sigma h_1'$, where $h'\in\sem{L}$ and $h_1'\in\sem{L_1}$.
Because $L_1\genlin L_2$, there exists $h_2'\in \sem{L_2}$ such that $h_1'\genlin h_2'$, i.e. $h_1'\sat{P}{O}^\ast h_2'$.
Note that some of the rearrangements necessary to transform $h_1'$ into $h_2'$ may concern actions shared by $h_1'$ and $h'$;
their polarity will then be different in $h'$. Let $h''$ be obtained by applying such rearrangements to $h'$. Note that $h'\sat{O}{P}^\ast h''$ (because of the difference in polarities)\amnote{gap?}
and that $h''$ and $h_2'$ will be $\sigma$-compatible.
Since $h'\in\sem{L}$,  Lem.~\ref{lem:genclosure} implies $h''\in\sem{L}$. 
Take $h_2$ to be  $h''\doublewedge_\sigma  h_2'$. 
We then have $h_2\in\sem{L ;L_2}$. Moreover, $h_1\genlin h_2$ thanks to $h_1'\genlin h_2'$ and the fact that the rearrangements 
made to obtain $h''$ from $h'$ have been hidden in $h_2$.
Hence, $h_2\in \sem{L ;L_2}$ and $h_1\genlin h_2$. Thus,  $L ;L_1 \genlin L ;L_2$.
\end{proof}
}

\subsection{Encapsulated linearisability} 

In this case libraries are being tested by clients that do not communicate with the parameter library explicitly.
%i.e. clients do not have direct access to methods from the parameter library.

\begin{definition}[{\bf Encapsulated $\capprox$}]\label{def:encapsulated}
Given libraries $L_1,L_2:\Theta\to\Theta'$,
we write $L_1\capprox_{\textrm{enc}}\ L_2$ when, for all $L':\emptyset\to \Theta$ and $\Theta'\kvdash M_1\|\cdots\|M_N:\unit$,
if $\link{L'\comp L_1}{(M_1\|\cdots\| M_N)}\,{\Downarrow}$ then
$\link{L'\comp L_2}{(M_1\|\cdots\| M_N)}\,{\Downarrow}$.
\end{definition}
We shall call contexts of the above kind \emph{encapsulated}, because the parameter library $L'$ can
no longer communicate directly with the client, unlike in Def.~\ref{def:general}, where they shared methods in $\Theta''$. 
Consequently, $\sem{\link{L'\comp -}{(M_1\|\cdots\| M_N)}}$ can be decomposed via parallel composition into two components, 
whose labels correspond to  $\l$ (parameter library) and $\k$ (client) respectively.
\begin{lemma}[Decomposition]\label{lem:decomp}
Suppose $L':\emptyset\to\Theta$ and $\Theta'\kvdash M_1\|\cdots\|M_N:\unit$, where $\Theta\cap\Theta'=\emptyset$.
Then, setting $\ctx'\equiv\link{\emptyset\comp-}{(M_1\|\cdots\| M_N)}$, we have:
\[
\sem{\link{L'\comp-}{(M_1\|\cdots\| M_N)}}
=\{\, h\in \hist^{co}_{\Theta,\Theta'} \,\,|\,\, (h\restriction\l)\in\sem{L'},\,\,(h\restriction\k)\in\sem{\ctx'}\,\}\,.
\]
\end{lemma}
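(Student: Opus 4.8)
The plan is to prove both inclusions separately, exploiting the fact that, because $L'$ and the client threads $M_1,\dots,M_N$ share no methods (they communicate only through the missing library $L$, i.e.\ $\Theta''=\emptyset$), a run of the context LTS never has an $O$-move switching between the $\l$ and $\k$ components within a thread, and no transition of $L'$ reads or writes the part of the store used by the $M_i$ and vice versa. First I would set up the projection maps $(-)\restriction\l$ and $(-)\restriction\k$ on the co-history and, correspondingly, a way to split a context configuration $(\vec\CC,\RR,\PP,\AA,S)$ into an ``$\l$-part'' carrying the repository entries and store locations owned by $L'$ together with the stack fragments $\EE_i$ lying in $\l$-calls, and a ``$\k$-part'' carrying those of the client. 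The key structural observation, to be stated as a sub-claim, is that the bipartite/well-bracketed shape of each thread's stack means that at any point of the run each $\CC_i$ is engaged in exactly one ``side'' (working on behalf of $L'$ or on behalf of the client), so this split is well-defined and is preserved by every transition rule.

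For the left-to-right inclusion I would take a run $(\vec\CC_0,\RR_0,(\Theta,\emptyset),(\emptyset,\Theta'),S_0)\pred{N}{h}{}^\ast\rho$ with $\rho$ final and show, by induction on its length, that the $\l$-projection of the run is a valid run of $\sem{\link{\emptyset\comp-}{(M_1\|\cdots\|M_N)}}$ — wait, rather of $\sem{L'}$ — producing $h\restriction\l$, and the $\k$-projection is a run of $\sem{\ctx'}$ producing $h\restriction\k$. Each transition rule (\textsc{Int}, \textsc{PQy}, \textsc{PAy}, \textsc{OQy}, \textsc{OAy}, lifted through \textsc{PInt}/\textsc{PExt}) is classified as an $\l$-step or a $\k$-step by its $Y$ index (with \textsc{Int} steps sorted by which side the active stack fragment belongs to), and one checks it projects to a legal step of the corresponding smaller LTS and leaves the other component untouched — here the disjointness of stores and of $\PP_\l/\PP_\k/\AA_\l/\AA_\k$ is exactly what is needed. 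Finiteness: since $\rho$ is final, every stack is empty, so both projected runs end in final configurations.

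For the right-to-left inclusion I would take $h\in\hist^{co}_{\Theta,\Theta'}$ with $h\restriction\l\in\sem{L'}$ and $h\restriction\k\in\sem{\ctx'}$, fix witnessing runs of the two smaller systems, and interleave them along the move-order dictated by $h$ itself, building a combined context configuration at each step by amalgamating the two component configurations (disjoint union of repositories and stores, union of the $\PP$/$\AA$ pairs componentwise). The interleaving is well-defined precisely because, within any single thread $t$, consecutive moves of $h$ never switch between $\l$ and $\k$ across an $O$-to-$P$ boundary in the way forbidden; more carefully, one argues that at each prefix the thread-$t$ control state demanded by the $\l$-run and that demanded by the $\k$-run are ``compatible'' — one is active, the other is a suspended stack segment — so the amalgamated $\CC_t$ is well-formed and the next move of $h$ is enabled. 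One then checks the amalgamated run is a genuine run of the context LTS producing $h$, and that it is final since both component runs are.

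The main obstacle I anticipate is the bookkeeping in the right-to-left direction: making precise the notion of ``compatible thread states'' and showing the amalgamation is closed under the transition rules, in particular that fresh-name side-conditions ($\phi(\PP,\AA)$) in \textsc{PQy}/\textsc{OQy} for the combined system follow from those of the two components plus the disjointness of the four name-sets, and that the evaluation stacks splice correctly when a thread's run alternates between serving $L'$ and serving the client at different depths. This is the point where the encapsulation hypothesis ($\Theta''=\emptyset$, hence no $O$ $\l/\k$-switching inside a thread) is genuinely used, and I would isolate it as a separate lemma about the shape of thread stacks before assembling the two inclusions.
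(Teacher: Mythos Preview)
The paper states Lemma~\ref{lem:decomp} without proof (it is invoked in the proof of Lemma~\ref{lem:encclosure} and of Theorem~\ref{thm:relcor}, but never argued for), so there is no authorial argument to compare against. Your two-inclusion plan---project a combined run to obtain two component runs; interleave two component runs along the move order of $h$ to rebuild a combined run---is the natural route and is essentially sound. The store/repository split is clean because $L'$ and the client access disjoint references and produce disjoint fresh method names, and the well-bracketed prehistory structure of $h\in\hist^{co}_{\Theta,\Theta'}$ is exactly what guarantees that the spliced evaluation stack in the merged run obeys the LTS discipline.

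One point needs fixing. You twice assert that encapsulation gives ``no $O$-move switching between $\l$ and $\k$'' (and later ``no $O$ $\l/\k$-switching inside a thread''). This is backwards. In the context LTS it is $P$---the context side, i.e.\ client code or $L'$ code---that cannot change component: after an $O$-move tagged $Y$ the running code belongs wholly to side $Y$, so the next $P$-move in that thread is tagged $Y$ as well. By contrast $O$ (the absent library) \emph{can} switch: after a $P\k$-call it is free to issue an $O\l$-call into $L'$. This is precisely the content of the second bullet of Lemma~\ref{lem:encclosure} (``inside a thread only $O$ can switch''). Your later phrase ``across an $O$-to-$P$ boundary'' is in fact consistent with the correct picture, so this may be a slip rather than a misconception, but make sure the invariant you isolate for the right-to-left direction has the roles the right way round: the sub-lemma you need is that whenever the combined thread is in a $P$-state, exactly one of the two component threads is in a $P$-state (and it is the one whose tag matches the most recent $O$-move), while the other is in an $O$-state.

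A small further technicality the paper also leaves implicit: $h\restriction\l$ consists of $\l$-tagged moves, whereas $\sem{L'}\subseteq\hist_{\emptyset,\Theta}$ contains only $\k$-tagged moves (since $L'$ has no abstract methods). The inclusion $(h\restriction\l)\in\sem{L'}$ therefore presupposes a silent relabelling $\l\mapsto\k$, which you should make explicit when defining your projection maps.
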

\cutout{
The Lemma states that  $(h\restriction \k)\in \hist^C$ and $(h\restriction \l)\in\hist^L$. In order for their interleaving to
be included in $\sem{\ctx}$ they must satisfy correctness conditions for membership in $\hist^C$. In particular,
the first action must come from $(h\restriction \l)$ and, inside each thread, the local history must alternate between $O$ an $P$,
as well as satisfying the well-bracketed discipline between calls and returns.
}

\begin{remark}\label{rem:noleak}\nt{%
Consider parameter library
$L':\emptyset\to \{m\}$ and client $\{m'\}\kvdash M:\unit$ with $m,m'\in\Meths_{\unit\to(\unit\to\unit)}$, and
suppose we insert in their context a ``copycat'' library $L$ which implements $m'$ as \ $m'=\lambda x.mx$\,.
Then the following scenario may seem to contradict encapsulation: 
\begin{inparaitem}
\item
$M$ calls $m'()$; 
\item $L$ calls $m()$; 
\item $L'$ returns with $m(m'')$ to $L$; 
\item and finally $L$ copycats this return to $M$.
\end{inparaitem}
However, by definition the latter copycat is done by $L$ returning $m'(m''')$ to $M$, for some \emph{fresh} name $m'''$, and recording internally that $m'''\mapsto \lambda x.m''x$. Hence, no methods of $L'$ can leak to $M$ and encapsulation holds.}
\end{remark}

Because of the above decomposition, the context semantics satisfies a stronger closure property
than that already specified in Lem.~\ref{lem:genclosure}, which in turn leads to the notion of encapsulated linearisability of Def.~\ref{def:enclin}. The latter is defined
in term of the symmetric reordering relation $\satsym$,
which allows for swaps (in either direction) between moves from different threads if they are tagged with $\k$ and $\l$ respectively.

Moreover, we can show the following.

\begin{lemma}[Encapsulated saturation]\label{lem:encclosure}
Consider 
$
X=\sem{\link{L'\comp-}{(M_1\|\cdots\| M_N)}}
$ (Definition~\ref{semcon}). Then:
\begin{compactitem}
\item
If $h\in X$ and $h\, (\sat{O}{P}\cup\satsym)^\ast\, h'$ then $h'\in X$.
\item
Let $s_1\, (t,x)_{OY}\, s_2\, (t,x')_{PY'}\, s_3 \in X$ be such that no move in $s_2$ comes from thread $t$.
Then $Y=Y'$, i.e.\ inside a thread only $O$ can switch between $\k$ and $\l$.
\end{compactitem}
\end{lemma}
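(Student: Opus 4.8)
The plan is to establish both bullet points simultaneously by exploiting the decomposition of Lemma~\ref{lem:decomp}, which says that a context history $h\in X$ is precisely an interleaving of a library history $h\restriction\l\in\sem{L'}$ (on the $\l$-labelled moves) with a client history $h\restriction\k\in\sem{\ctx'}$ (on the $\k$-labelled moves). The key point is that the two constraints we must prove concern exactly these two projections and the way they are interleaved. For the first bullet, I would observe that $\sat{O}{P}$ never swaps an $\l$-move with a $\k$-move that is not already separated (since $\sat{O}{P}$ preserves each projection up to thread-respecting reorderings), whereas $\satsym$ only swaps an $\l$-move with a $\k$-move from a different thread and leaves each projection $h\restriction\l$, $h\restriction\k$ entirely unchanged. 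Hence under $(\sat{O}{P}\cup\satsym)^\ast$ the projections $h\restriction\l$ and $h\restriction\k$ are each transformed only by $\sat{O}{P}^\ast$ (on the appropriate co-history/history side), which is exactly the situation covered by Lemma~\ref{lem:genclosure}: applying it to $h\restriction\l$ viewed inside $\sem{L'}$ and to $h\restriction\k$ viewed inside $\sem{\ctx'}$ gives that the new projections are still in $\sem{L'}$ and $\sem{\ctx'}$. Then re-assembling via Lemma~\ref{lem:decomp} gives $h'\in X$, provided the interleaving still satisfies the co-history conditions; the latter is where I would need to check that $\satsym$ and the thread-respecting swaps in $\sat{O}{P}$ preserve per-thread alternation and well-bracketing — but this is immediate because both relations only ever transpose moves from distinct threads, so no single-thread subsequence changes.

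For the second bullet, I would argue directly from the decomposition rather than from closure. Suppose $s_1\,(t,x)_{OY}\,s_2\,(t,x')_{PY'}\,s_3\in X$ with no thread-$t$ move in $s_2$. Restricting attention to thread $t$, the projection onto thread $t$ has the form $\cdots (x)_{OY}(x')_{PY'}\cdots$, i.e.\ these are two consecutive moves of that thread. Now I would split by whether $Y$ equals $\k$ or $\l$. If $Y=\k$, then $(x)_{O\k}$ lies in $h\restriction\k$, which is a history in $\sem{\ctx'}$ where $\ctx'=\link{\emptyset\comp-}{(M_1\|\cdots\|M_N)}$ — a context with \emph{empty} parameter library, so all of its moves are $\k$-labelled and $x'$ must also be $\k$, whence $Y'=\k=Y$. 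Symmetrically, if $Y=\l$, then $(x)_{O\l}$ is in $h\restriction\l\in\sem{L'}$ with $L':\emptyset\to\Theta$, whose semantics contains only $\l$-labelled moves (its client interface is empty), so the matching $P$-move must be $\l$ as well. In both cases the point is that a move of a given thread cannot pass control from the client-side interaction to the parameter-library-side interaction (or vice versa) without an intervening move of that same thread, because the control in thread $t$ is dictated by a single evaluation stack that is partitioned by Lemma~\ref{lem:decomp} into the part running inside $L'$ and the part running inside the client, and switching sides requires a $P$-response from the side just entered.

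The main obstacle I anticipate is the bookkeeping in the first bullet: making precise that a single $\sat{O}{P}$-step of $h$, when it transposes two moves that happen to fall in different projections, is actually already ruled out — that is, that $\sat{O}{P}$ on $h$ induces on $h\restriction\k$ and $h\restriction\l$ \emph{only} reorderings of the form $\sat{O}{P}$ again (with the right polarity convention after complementation), and never anything outside the scope of Lemma~\ref{lem:genclosure}. Concretely one must check: if the two swapped moves have the same $Y$-index they sit in the same projection and the swap is a legitimate $\sat{O}{P}$-step there; if they have different $Y$-indices then $\sat{O}{P}$ cannot have applied (one would be an $O$-move going right past a $P$-move of the other polarity class, but $\sat{O}{P}$ requires both moves to have the \emph{same} polarity $O$ or the same polarity $P$ — re-examining the definition of $\sat{X}{X'}$, the two moves transposed are of equal polarity, so they are either both $O$ or both $P$, and either way only one of them is "targeted", so in fact the cross-index case is handled by $\satsym$, not $\sat{O}{P}$). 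Once this case analysis is nailed down, the rest is a routine reassembly argument via Lemma~\ref{lem:decomp}, and I would present the second bullet first since it is genuinely short and self-contained.
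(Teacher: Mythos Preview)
Your plan for the first bullet is broadly sound but needlessly indirect, and contains a small slip. The paper does not decompose first: it invokes Lemma~\ref{lem:genclosure} \emph{directly} on $X$ for closure under $\sat{O}{P}$ (that lemma is already stated for context semantics as well as library semantics), and then invokes Lemma~\ref{lem:decomp} only for $\satsym$, since a $\satsym$-step leaves both projections $h\restriction\k$ and $h\restriction\l$ literally unchanged. Your route of decomposing and applying Lemma~\ref{lem:genclosure} to each projection separately also works, but forces you into the reassembly bookkeeping you yourself flag as an obstacle. Note also that your parenthetical claim that ``the two moves transposed by $\sat{X}{X'}$ are of equal polarity'' is false: rule one of $\sat{O}{P}$ with $(t,x)$ an $O$-move gives a $PO\to OP$ swap. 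This does not actually damage your argument (the same-$Y$/different-$Y$ case split is what matters), but the stated reason is wrong.

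Your argument for the second bullet has a genuine gap. When $Y=\k$ you write that $(x)_{O\k}$ lies in $h\restriction\k\in\sem{\ctx'}$ and, since $\ctx'$ has empty parameter library, ``all of its moves are $\k$-labelled and $x'$ must also be $\k$.'' This is a non-sequitur: every move of $h\restriction\k$ is $\k$-labelled by definition of the projection, and if $Y'=\l$ then $(t,x')$ simply is not in $h\restriction\k$ at all. You never rule this out. (Your final sentence gestures at the right operational intuition about control residing in a single evaluation stack, but that is not the argument you actually run.) The paper's proof is different: assume a violation $Y\neq Y'$ and take the \emph{earliest} such pair in $h$; then look at $h\restriction Y'$ restricted to thread $t$ and observe that it fails to alternate between $O$ and $P$ (the immediate predecessor of $(t,x')_{PY'}$ in that projection is either another $P$-move, giving non-alternation directly, or an $O$-move whose successor in $h$ yields an earlier violation). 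Since Lemma~\ref{lem:decomp} says $h\restriction Y'$ is a valid (co-)history, this is the required contradiction. To repair your proof you would need to carry out exactly this alternation argument on the projection.
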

%\begin{proof}
%Closure under $\sat{O}{P}$ (resp.\ $\satsym$) follows from Lem.~\ref{lem:genclosure}. 
%(resp.\ Lem.~\ref{lem:decomp}).
%\end{proof}
%Another consequence of Lem.~\ref{lem:decomp} is a restriction on switching between $\k$ and $\l$  inside threads  in $X$.
%\begin{lemma}
%Consider $X=\sem{\link{L'\comp-}{(M_1\|\cdots\| M_N)}}_N$.
%Let $s_1\, (t,x)_{OY}\, s_2\, (t,x')_{PY'}\, s_3 \in X$ be such that no move in $s_2$ comes from thread $t$.
%Then $Y=Y'$, i.e. inside a thread only $O$ can switch between $\k$ and $\l$.
%\end{lemma}
%\begin{proof}
%Suppose $h=s_1\, (t,x)_{OY}\, s_2\, (t,x')_{PY'}\, s_3$ violates the lemma and $(t,x)$, $(t,x')$ is the earliest such violation in $h$, i.e. no violations occur in $s_1$.
%Observe that then $h$ restricted to moves of the form $(t,z)_{XY'}$ would not be alternating, which contradicts the fact that $h \restriction Y'$ is a history (Lem.~\ref{lem:decomp}).
%\end{proof}
Due to Theorem~\ref{t:comp}, the above property of contexts means that, in order to study termination in the encapsulated case, one can safely restrict
attention to library traces satisfying a dual property to the one above, i.e.\ to elements of $\encsem{L}$.
\cutout{
\begin{definition}
Given $\Theta,\Theta'$, let $\hist^{\Lenc}_{\Theta,\Theta'}$  be the subset of $\hist^L_{\Theta,\Theta'}$ consisting of histories $h$ satisfying: for each thread $t$,
if $h=s_1\, (t,x)_{PY}\, s_2\, (t,x')_{OY'}\, s_3$ and moves from $t$ are absent from $s_2$ then $Y=Y'$, i.e. 
$O$ never switches between $\k$ and $\l$ in $t$.
\\
Given a library $L:\Theta\rarr\Theta'$, let $\encsem{L}$ be $\sem{L}\cap\hist^{\Lenc}_{\Theta,\Theta'}$.
\end{definition}
}%
Note that $\encsem{L}$ can be obtained directly from our labelled transition system by restricting its single-threaded part to reflect the switching condition.
Observe that Theorem~\ref{t:comp} will still hold for $\encsem{L}$ (instead of $\sem{L}$), because we have preserved all the histories that are compatible with
context histories. 
We are ready to prove correctness of encapsulated linearisability.

\begin{theorem}\label{thm:enclin}
$L_1 \enclin L_2$ implies $L_1\encapp L_2$.
\end{theorem}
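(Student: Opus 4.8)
The plan is to follow the same template as the proof of Theorem~\ref{thm:genlin}, but using the stronger, encapsulated saturation result (Lemma~\ref{lem:encclosure}) in place of the general one. Fix an encapsulated context $\ctx=\link{L'\comp-}{(M_1\|\cdots\| M_N)}$, with $L':\emptyset\to\Theta$ and $\Theta'\kvdash M_1\|\cdots\|M_N:\unit$, and suppose $\ctx[L_1]\Downarrow$; the goal is $\ctx[L_2]\Downarrow$. By Theorem~\ref{t:comp} (which, as noted after Lemma~\ref{lem:encclosure}, also holds with $\encsem{\cdot}$ in place of $\sem{\cdot}$) there is $h_1\in\sem{L_1}_N$ with $\overline{h_1}\in\sem{\ctx}$.

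The first, and key, step would be to upgrade this to $h_1\in\encsem{L_1}$, i.e.\ to check that $h_1$ is an encapsulated history. This is where the second clause of Lemma~\ref{lem:encclosure} is used: since $\overline{h_1}\in\sem{\ctx}$, inside every thread the only polarity that may switch between $\k$ and $\l$ is $O$; complementing, inside every thread of $h_1$ the only polarity that may so switch is $P$, which is precisely the defining condition of an encapsulated history. Consequently $h_1\in\sem{L_1}_N\cap\hist^{\sf enc}_{\Theta,\Theta'}=\encsem{L_1}$, so the hypothesis $L_1\enclin L_2$ applies to it and produces a sequential $h_2\in\encsem{L_2}$ with $h_1\enclin h_2$, i.e.\ $h_1\,(\sat{P}{O}\cup\satsym)^\ast\,h_2$.

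It remains to push this rearrangement through complementation and then through saturation. From $h_1\,(\sat{P}{O}\cup\satsym)^\ast\,h_2$ one gets $\overline{h_1}\,(\sat{O}{P}\cup\satsym)^\ast\,\overline{h_2}$, using $h\sat{X}{X'}h'\iff\overline h\sat{X'}{X}\overline{h'}$ for the $\sat{}{}$-steps and the fact that $\satsym$ ignores $O/P$ polarities for the $\satsym$-steps. The first clause of Lemma~\ref{lem:encclosure}, applied to $\overline{h_1}\in\sem{\ctx}$, then gives $\overline{h_2}\in\sem{\ctx}$. Since $h_2\in\encsem{L_2}\subseteq\sem{L_2}_N$ and $\overline{h_2}\in\sem{\ctx}$, Theorem~\ref{t:comp} yields $\ctx[L_2]\Downarrow$, completing the argument.

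The only part that is not pure bookkeeping is the first step: making the hypothesis $L_1\enclin L_2$ --- which constrains only encapsulated library histories --- actually applicable by showing that the witness $h_1$ extracted from $\ctx[L_1]\Downarrow$ is itself encapsulated. This rests on the precise duality between the switching restriction that defines encapsulated histories and the switching property that Lemma~\ref{lem:encclosure} establishes for context semantics; once that is in place, the rest is the same complementation-and-saturation pattern as in Theorem~\ref{thm:genlin}.
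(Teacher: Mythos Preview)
Your proposal is correct and follows essentially the same approach as the paper: the paper's proof simply says ``similarly to Theorem~\ref{thm:genlin}, except we invoke Lemma~\ref{lem:encclosure} instead of Lemma~\ref{lem:genclosure},'' and the justification that the witness $h_1$ is encapsulated (your ``key step'') is handled in the surrounding text, where the paper observes that Theorem~\ref{t:comp} still holds with $\encsem{L}$ in place of $\sem{L}$ precisely because of the second clause of Lemma~\ref{lem:encclosure}. Your write-up is a faithful, more explicit unpacking of the same argument.
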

\begin{proof}
%The proof is the same as that of 
Similarly to Theorem~\ref{thm:genlin},
except we invoke
Lemma~\ref{lem:encclosure} instead of Lemma~\ref{lem:genclosure}.
\end{proof}

We next examine the behaviour of $\enclin$ with respect to library composition. In contrast to Section~\ref{sec:gensat}, we need to restrict composition for it to be compatible with encapsulation.

\begin{remark}
The general case of union does not conform with encapsulation in the sense that 
encapsulated testing of $L\cup L_i$ ($i=1,2$) according to Def.~\ref{def:encapsulated} 
may subject $L_i$ to unencapsulated testing.
For example, because method names of $L$ and $L_i$ are allowed to overlap, methods in $L$ may call public methods from $L_i$
as well as implementing abstract methods from $L_i$. This amounts to $L$ playing the role of both $\k$ and $\l$, which in addition can
communicate with each other, as both are inside $L$.

Even if we make $L$ and $L_i$ non-interacting (i.e.\ without common abstract/public methods), if higher-order parameters are still involved, the encapsulated tests of $L\cup L_i$ can violate the encapsulation hypothesis for $L_i$.
For instance, consider the methods \
$
m_2,m_1',m_2'\in\Meths_{\unit,\unit}$ \ and \ $ m_1\in\Meths_{(\unit\to\unit),\unit}
$\,, 
and libraries $L_1,L_2:\{m_1\}\to\{m_2\}$ and $L:\{m_1'\}\to\{m_2'\}$, as well as the unions $L\cup L_i:\{m_1,m_2\}\to\{m_1',m_2'\}$. A possible trace in $\encsem{L\cup L_i}$ is this one:
\begin{align*}
h_i &=(1,\call{m_2()})_{O\k}\ (1,\call{m_1(v)})_{P\l}\ (1,\ret{m_1()})_{O\l}\\
&\quad (1,\ret{m_2()})_{P\k}\ (1,\call{m_2'()})_{O\k}\ (1,\call{m_1'()})_{P\l}\
(1,\call{v()})_{O\l}
\end{align*}
which decomposes as $h_i=h'\doublewedge^\sigma_{\Pi,\emptyset}h_i'$,  with $\Pi=\{m_1,m_2,m_1',m_2'\}$,
$\sigma=2222112$, $h'=(1,\call{m_2'()})_{O\k}\, (1,\call{m_1'()})_{P\l}$
and:
\begin{align*}
h_i' &=(1,\call{m_2()})_{O\k}\ (1,\call{m_1(v)})_{P\l}\ (1,\ret{m_1()})_{O\l}\
(1,\ret{m_2()})_{P\k}\ 
(1,\call{v()})_{O\l}
\end{align*}
We now see that $h_i'\notin\encsem{L_i}$ as in the last move O is changing component from $\k$ to $\l$.
\cutout{
 the library $L$ may facilitate the communication between the $\k$ and $\l$ contexts of $L_i$ (or, more explicitly, $L$ can itself play both context roles for $L_i$).}
\end{remark}
\cutout{
\begin{remark}
The general case of union does not conform with encapsulation, in the sense that taking union of $L_i$ ($i=1,2$) with an arbitrary $L$ may break the encapsulation hypothesis for $L_i$. For instance, consider the methods:
\[
m_2,m_1',m_2'\in\Meths_{\unit,\unit}\quad m_1\in\Meths_{(\unit\to\unit),\unit}
\]
and libraries $L_1,L_2:\{m_1\}\to\{m_2\}$ and $L:\{m_1'\}\to\{m_2'\}$, as well as the unions $L\cup L_i:\{m_1,m_2\}\to\{m_1',m_2'\}$. A possible trace in $\encsem{L\cup L_i}$ is this one:
\begin{align*}
h_i &=(1,\call{m_2()})_{O\k}\ (1,\call{m_1(v)})_{P\l}\ (1,\ret{m_1()})_{O\l}\\
&(1,\ret{m_2()})_{P\k}\ (1,\call{m_2'()})_{O\k}\ (1,\call{m_1'()})_{P\l}\
(1,\call{v()})_{O\l}
\end{align*}
which decomposes as $h_i=h'\doublewedge^\sigma_{\Pi,\emptyset}h_i'$,  with $\Pi=\{m_1,m_2,m_1',m_2'\}$,
$\sigma=2222112$, $h'=(1,\call{m_2'()})_{O\k}\, (1,\call{m_1'()})_{P\l}$
and:
\begin{align*}
h_i' &=(1,\call{m_2()})_{O\k}\ (1,\call{m_1(v)})_{P\l}\ (1,\ret{m_1()})_{O\l}\\
&\quad\;(1,\ret{m_2()})_{P\k}\ 
(1,\call{v()})_{O\l}
\end{align*}
We can observe now that $h_i'\notin\encsem{L_i}$ as in the last move O is changing component from $\k$ to $\l$.
\cutout{
 the library $L$ may facilitate the communication between the $\k$ and $\l$ contexts of $L_i$ (or, more explicitly, $L$ can itself play both context roles for $L_i$).}
\end{remark}
}
We therefore look at compositionality for two specific cases: \nt{encapsulated sequencing (e.g.\ of $L\!:\Theta\to\Theta'$ with  $L'\!:\Theta'\to\Theta''$)} and disjoint union for first-order methods.
%
%Given libraries $L' :\Theta''\to\Theta'$ and 
%$L :\Theta'\to\Theta$,
%we define their \emph{composition}:
%\[
%L'\!\comp L = (L'\cup\ L)\setminus\Theta'\ :\ \Theta''\to\Theta
%\]
%This is the kind of composition considered in the next result. On the other hand, 
Given 
$L :\Theta_1\to\Theta_2$ and 
$L' :\Theta_1'\to\Theta_2'$, we define their \emph{disjoint union} \
$
L\uplus L'=L\cup L'\ :\ (\Theta_1\cup\Theta_1')\to(\Theta_2\cup\Theta_2')
$
under the assumption that $(\Theta_1\cup\Theta_2)\cap(\Theta_1'\cup\Theta_2')=\emptyset$.

\begin{theorem}\label{thm:enccomp}
Let $L_1,L_2:\Theta_1\to\Theta_2$ and $L:\Theta_1'\to\Theta_2'$.
If $L_1\enclin L_2$ then: %, assuming in each case the respective constructions are applicable:
\begin{compactitem}
\item 
assuming $\Theta_2'=\Theta_1$, we have
$L\comp L_1\enclin L\comp L_2$ and $L_1\comp L \enclin L_2\comp L$;
\item if $\Theta_1,\Theta_2,\Theta_1',\Theta_2'$ are first-order then $L \uplus L_1\enclin L \uplus L_2$.
\end{compactitem}
\end{theorem}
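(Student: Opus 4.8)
The plan is to derive all three composites from the library--composition correspondence $\sem{L_1\cup L_2}=\sem{L_1}\otimes\sem{L_2}$ of Appendix~\ref{sec:comp2}, adapting the skeleton of the proof of Theorem~\ref{thm:gencomp} while carrying along the encapsulation discipline. First I would reduce the two sequencing cases to disjoint union plus hiding, using $L\comp L_1=(L\cup L_1)\setminus\Theta_1$ (and symmetrically for $L_1\comp L$): it then suffices to prove that \emph{(a)} union preserves $\enclin$ when the interfaces of the two libraries either overlap only along the ``pipe'' $\Theta_1$ (sequencing) or are disjoint and first-order ($\uplus$), and that \emph{(b)} hiding a block of public methods preserves $\enclin$.

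For \emph{(a)}, take $h\in\encsem{L\cup L_1}$ and use the correspondence to split it into synchronised components $h_L\in\sem{L}$ and $h_{L_1}\in\sem{L_1}$ (moves on the shared interface carry dual polarities in the two components; the remaining moves are distributed according to the ownership of their method/thread). Since $L_1\enclin L_2$, there is a sequential $h_{L_2}\in\sem{L_2}$ with $h_{L_1}\,(\sat{P}{O}\cup\satsym)^\ast\,h_{L_2}$. Transferring to $h_L$ precisely those transpositions that displace a shared move produces $h_L'$ with $h_L\,(\sat{O}{P}\cup\satsym)^\ast\,h_L'$ --- the $\sat{P}{O}$-swaps become $\sat{O}{P}$-swaps because a shared move has the opposite $P/O$ polarity in $h_L$, whereas $\satsym$ is symmetric and transfers verbatim. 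The encapsulated saturation Lemma~\ref{lem:encclosure}, applied to the context/library LTS underlying $\sem{L}$, gives $h_L'\in\sem{L}$; recomposing $h_L'$ with $h_{L_2}$ along the same synchronisation yields $h'\in\sem{L\cup L_2}$ that is sequential and satisfies $h\,(\sat{P}{O}\cup\satsym)^\ast\,h'$, as required.

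The technical heart --- and the step I expect to cause most trouble --- is keeping the whole argument inside the encapsulated fragment. Two things must be checked: \emph{(i)} that the $L_1$-component $h_{L_1}$ of an encapsulated $h$ is itself encapsulated, so that $L_1\enclin L_2$ may be invoked; and \emph{(ii)} that the $\satsym$-transpositions used on the $L_1$-side lift to \emph{legal} $\satsym$-transpositions on the $L$-side and that the recomposed $h'$ is again encapsulated. Point \emph{(i)} is exactly where the hypotheses bite. For disjoint union with first-order interfaces, no method names are ever exchanged between the libraries or their private contexts, so the $\k/\l$-switching pattern of $h$ restricts verbatim to $h_{L_1}$. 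For sequencing, $L_1$'s $\l$-context is $L$ itself, and $L$ (together with the parameter library behind it) can be active in a thread only while nested inside an $L_1$-call on $\Theta_1$; hence, whenever control re-enters $L_1$ from its $\l$-side, $L_1$'s last move in that thread was a $P\l$-call, so no illegal polarity switch can arise --- and any higher-order value relayed through $L_1$ from the $\k$-side to $L$ and back is refreshed at each crossing of $L_1$ (Remark~\ref{rem:noleak}), so such callbacks are likewise tagged $\l$ from $L_1$'s viewpoint. This nesting property fails for disjoint union precisely because there the client reaches $L$ directly, which is why first-order types are imposed. Point \emph{(ii)} is easier: $\satsym$ only swaps a $\k$- with an $\l$-move from distinct threads, and such a swap on $h_{L_1}$ involves either private $L_1$-moves or shared moves, so the mirror swap on $h_L$ has the same $\k/\l$-shape; encapsulation of $h'$ then follows from that of $h$ together with the same nesting/first-order argument.

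Step \emph{(b)} is routine: within $L\cup L_1$ the methods of $\Theta_1$ are already invoked internally (caller and callee live in the same repository), so hiding them merely suppresses the option of the external context calling $\Theta_1$; consequently $\encsem{(L\cup L_1)\setminus\Theta_1}$ is obtained from $\encsem{L\cup L_1}$ by deleting the $\Theta_1$-moves, an operation that visibly commutes with $(\sat{P}{O}\cup\satsym)^\ast$ and preserves sequentiality and the encapsulation condition. Combining \emph{(a)} and \emph{(b)} gives $L\comp L_1\enclin L\comp L_2$, and symmetrically $L_1\comp L\enclin L_2\comp L$; \emph{(a)} alone, in the disjoint first-order case, gives $L\uplus L_1\enclin L\uplus L_2$.
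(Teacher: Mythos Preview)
Your overall architecture---decompose via Theorem~\ref{supercomp}, apply $L_1\enclin L_2$ to the $L_1$-component, transfer the induced rearrangements to the $L$-component, and recompose---is the paper's. The gap is in the transfer step: the sentence ``the encapsulated saturation Lemma~\ref{lem:encclosure}, applied to the context/library LTS underlying $\sem{L}$, gives $h_L'\in\sem{L}$'' does not go through. Lemma~\ref{lem:encclosure} is proved for \emph{context} semantics via Lemma~\ref{lem:decomp}, which exploits that in an encapsulated context the parameter library and the client share no state. A library $L$ does share its store between its $\k$- and $\l$-interactions, and $\encsem{L}$ is \emph{not} closed under $\satsym$. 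For instance, with $m_1'=\lambda().\,r:=1$ and $m_2'=\lambda().\,m(!r)$ (abstract $m$, $r$ initially~$0$), the history
$(2,\call{m_2'()})_{O\k}\,(1,\call{m_1'()})_{O\k}\,(2,\call{m(1)})_{P\l}\cdots$
is in $\encsem{L}$, but the $\satsym$-swap of its last two moves gives
$(2,\call{m_2'()})_{O\k}\,(2,\call{m(1)})_{P\l}\,(1,\call{m_1'()})_{O\k}\cdots$,
which is not: thread~$2$ can only read $r=0$ at that point.

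The paper avoids needing $\satsym$-closure of $\sem{L}$ altogether. In the sequencing case the shared interface $\Theta_1$ lies on the $\l$-side of $L_1$ and the $\k$-side of $L$; a $\satsym$-swap on $h_{L_1}$ therefore involves one $\k$-move (with the client, \emph{not} shared) and one $\l$-move (with $L$, shared). So there is no ``mirror swap'' on $h_L$---your point~(ii) misreads this---only a change of scheduler. The paper checks, using encapsulation of the intervening segment, that $h_L$ unchanged can still be composed with $h_{L_2}$ under a suitable~$\sigma'$. Thus the only genuine rearrangement of $h_L$ comes from the $\sat{P}{O}$-swaps on $h_{L_1}$, which become $\sat{O}{P}$-swaps on $h_L$, and the basic saturation Lemma~\ref{lem:genclosure} suffices. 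In the first-order $\uplus$-case neither swapped move is shared at all, and one again only has to verify that the scheduler conditions for recomposition survive (using that first-order $\l$-returns are immediately preceded in the same thread by the matching call, itself an $L_1$-move).
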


\subsection{Relational linearisability}

Finally, we examine relational linearisability (Def.~\ref{def:rellin}).
We begin with a suitable notion of relation $\relation$.
We next restrict encapsulated contextual testing to $\relation$-closed contexts.

\begin{definition}
Let $\clg{R}\subseteq \hist_{\emptyset,\Theta}\times\hist_{\emptyset,\Theta}$ be a set closed under permutation of names in $\Meths\setminus\Theta$. 
We say that
$L:\emptyset\to \Theta$ is \emph{$\relation$-closed} if, for any $h,h'$ such that $h\,\relation\, h'$, if $h\in \sem{L}$ then $h'\in \sem{L}$.
\end{definition}

\begin{definition}[{\bf $\R$-closed encapsulated $\capprox$}]
Given $L_1,L_2:\Theta\to \Theta'$,
we write $L_1\relapp L_2$ if, for all $\relation$-closed $L':\emptyset\to \Theta$ and for all $\Theta'\kvdash M_1\|\cdots\|M_N:\unit$,
whenever  $\link{L'\comp L_1}{(M_1\|\cdots\| M_N)}\,{\Downarrow}$ then we also have
$\link{L'\comp L_2}{(M_1\|\cdots\| M_N)}\,{\Downarrow}$.
\end{definition}

%Recall relational linearisability from Def.~\ref{def:rellin}.

\begin{theorem}\label{thm:relcor}
$L_1\rellin L_2$ implies $L_1\relapp L_2$.
\end{theorem}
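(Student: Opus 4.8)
The plan is to mimic the proofs of Theorems~\ref{thm:genlin} and~\ref{thm:enclin}, combining the generic client-side saturation of Lemma~\ref{lem:genclosure} with $\relation$-closure on the parameter-library side, the two being glued together by the Decomposition Lemma~\ref{lem:decomp}. Concretely, I would fix an $\relation$-closed $L':\emptyset\to\Theta$ and a client $\Theta'\kvdash M_1\|\cdots\|M_N:\unit$, set $\ctx\equiv\link{L'\comp-}{(M_1\|\cdots\| M_N)}$ and $\ctx'\equiv\link{\emptyset\comp-}{(M_1\|\cdots\| M_N)}$, and assume $\ctx[L_1]\Downarrow$; the goal is $\ctx[L_2]\Downarrow$. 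Since $\Theta''=\emptyset$ the context is encapsulated, so Theorem~\ref{t:comp}, used in its $\encsem{-}$ form (valid as noted after Lemma~\ref{lem:encclosure}), yields $h_1\in\encsem{L_1}$ with $\overline{h_1}\in\sem{\ctx}$. Unfolding $L_1\rellin L_2$ then gives a sequential $h_2\in\encsem{L_2}$ with $h_1\rellin h_2$, i.e.\ $(h_1\restriction\k)\,\sat{P}{O}^\ast\,(h_2\restriction\k)$ and $(\overline{h_1}\restriction\l)\,\relation\,(\overline{h_2}\restriction\l)$.

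Next I would push $h_2$ back into $\sem{\ctx}$. By Lemma~\ref{lem:decomp}, $\overline{h_1}\in\sem{\ctx}$ splits as $(\overline{h_1}\restriction\l)\in\sem{L'}$ and $(\overline{h_1}\restriction\k)\in\sem{\ctx'}$. For the $\l$-part, $\relation$-closure of $L'$ together with $(\overline{h_1}\restriction\l)\,\relation\,(\overline{h_2}\restriction\l)$ gives $(\overline{h_2}\restriction\l)\in\sem{L'}$. For the $\k$-part, observe that $O/P$-complementation commutes with $\restriction\k$ and that $s\,\sat{P}{O}^\ast\,s'$ iff $\overline{s}\,\sat{O}{P}^\ast\,\overline{s'}$; hence $(\overline{h_1}\restriction\k)\,\sat{O}{P}^\ast\,(\overline{h_2}\restriction\k)$, and since $\sem{\ctx'}$ is closed under $\sat{O}{P}^\ast$ by Lemma~\ref{lem:genclosure}, we obtain $(\overline{h_2}\restriction\k)\in\sem{\ctx'}$. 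As $h_2\in\encsem{L_2}$, its complement $\overline{h_2}$ is a well-formed (encapsulated) co-history in $\hist^{co}_{\Theta,\Theta'}$, so the converse direction of Lemma~\ref{lem:decomp} reassembles the two projections into $\overline{h_2}\in\sem{\ctx}$. Applying Theorem~\ref{t:comp} once more, now with $h_2\in\encsem{L_2}$ and $\overline{h_2}\in\sem{\ctx}$, yields $\ctx[L_2]\Downarrow$, as required.

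The step I expect to cost the most care is the use of Lemma~\ref{lem:decomp} in the reassembly direction: one must check that membership of $\overline{h_2}$ in $\sem{\ctx}$ really does follow from the two projection conditions alone, which rests on $\overline{h_2}$ already being a legal encapsulated co-history---guaranteed here because $h_2\in\encsem{L_2}$ and both co-history well-formedness and the encapsulation condition are preserved by $O/P$-complementation. The remaining work is routine bookkeeping about how $\overline{\cdot}$ interacts with $\restriction\k$, $\restriction\l$ and with $\sat{P}{O}/\sat{O}{P}$; no ingredient beyond Lemmas~\ref{lem:genclosure}, \ref{lem:decomp} and Theorem~\ref{t:comp} appears to be needed, which is why the argument parallels that of Theorem~\ref{thm:enclin} almost verbatim, save for splitting the saturation step into its $\k$- and $\l$-components.
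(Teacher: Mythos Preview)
Your proposal is correct and follows the paper's proof essentially verbatim: both split the context via Lemma~\ref{lem:decomp}, use $\relation$-closure of $L'$ on the $\l$-projection and $\sat{O}{P}^\ast$-saturation on the $\k$-projection, then reassemble and conclude via Theorem~\ref{t:comp}. The only cosmetic difference is that the paper cites Lemma~\ref{lem:encclosure} for the client-side closure whereas you cite Lemma~\ref{lem:genclosure}; since only $\sat{O}{P}^\ast$-closure is needed and $\ctx'$ has an empty parameter library (so $\encsem{\ctx'}=\sem{\ctx'}$), either lemma suffices.
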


We conclude by showing that $\rellin$ is also compositional in the sense proposed in~\cite{CGY14}.
Given $\relation, \clg{G}\subseteq \hist\times\hist$, we say that $L$ is \emph{$\relation\choose\clg{G}$-closed}
if, for all $k\in\hist$ and $h'\in\encsem{L}$, $(h'\restriction \k)\,\,\relation\,\, k$ implies that there is $h''\in\encsem{L}$
with $({h''}\restriction \k) =k$ and $(\overline{h'}\restriction \l)\,\,\clg{G}\,\,(\overline{h''}\restriction \l)$. 

\begin{theorem}\label{thm:relcomp}
Let $\relation, \clg{G}\subseteq \hist\times\hist$, $L_1,L_2:\Theta_1\to\Theta_2$ 
and $L:\Theta_1'\to\Theta_2'$,
such that $L_1\rellin L_2$. If $L$ is suitably typed:
\begin{compactitem}
\item 
if $L$ is $\relation\choose\clg{G}$-closed,
we have $L\comp L_1\sqsubseteq_{\clg{G}} L\comp  L_2$;\quad {\bf --} $L_1\comp L\rellin L_2\comp L$;
\item if $\Theta_1,\Theta_2,\Theta_1',\Theta_2'$ are first-order then $L\uplus L_1\rellin[\relation^+]\! L\uplus L_2$, where $\relation^+=\{(s,s')\in\hist_{\emptyset,\Theta_1\cup\Theta_1'}\times\hist_{\emptyset,\Theta_1\cup\Theta_1'}\mid (s{\restriction}\Theta_1)\relation(s'{\restriction}\Theta_1),(s{\restriction}\Theta_1')=(s'{\restriction}\Theta_1')\}$and 
$s\restriction\Theta$ is the largest subsequence of $s$ belonging to $\hist_{\emptyset,\Theta_1}$.
\end{compactitem}
\end{theorem}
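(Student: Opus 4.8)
The plan is to prove Theorem~\ref{thm:relcomp} by lifting the compositionality argument from the semantic decomposition of contexts (Lemma~\ref{lem:decomp}) together with the library-library composition operator $\otimes$ announced in Section~\ref{sec:sat}, following the same strategy as in the proofs of Theorems~\ref{thm:gencomp} and~\ref{thm:enccomp}, but now tracking the $\l$-projection up to the relation $\clg{R}$ (resp.\ $\clg{G}$) rather than up to equality. I will treat the three claims separately, since each requires slightly different bookkeeping.

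\medskip\noindent\textbf{Sequencing on the left ($L\comp L_1\sqsubseteq_{\clg{G}}L\comp L_2$).} First I would recall that, by the union/sequencing compositionality result of Appendix~\ref{sec:comp2}, $\sem{L\comp L_1}$ is recovered from $\sem{L}$ and $\sem{L_1}$ via the composition $\otimes$ followed by hiding of $\Theta_1'=\Theta_2$; concretely every $h\in\encsem{L\comp L_1}$ decomposes as a pair $(h^L,h^{L_1})$ with $h^L\in\encsem{L}$, $h^{L_1}\in\encsem{L_1}$, agreeing on the shared $\Theta_2$-interface (played by $P$ in $h^{L_1}$ and $O$ in $h^L$, the usual $O/P$-dualisation on the glued moves). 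Given such an $h$, since $L_1\rellin L_2$ there is a sequential $h_2^{L_1}\in\encsem{L_2}$ with $(h^{L_1}{\restriction}\k)\sat{P}{O}^\ast(h_2^{L_1}{\restriction}\k)$ and $(\overline{h^{L_1}}{\restriction}\l)\,\clg{R}\,(\overline{h_2^{L_1}}{\restriction}\l)$. The key point is that the $\l$-interface of $L_1$ inside $L\comp L_1$ is exactly the $\k$-facing interface of $L$ after gluing; so the relation $(h^{L_1}{\restriction}\l)\,\clg{R}\,(h_2^{L_1}{\restriction}\l)$ (after dualisation) is precisely the hypothesis needed to invoke $\relation\choose\clg{G}$-closedness of $L$: there is $h_2^L\in\encsem{L}$ with the glued $\k$-component matching $h_2^{L_1}$ and with the outer $\l$-component related by $\clg{G}$. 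I then re-glue $h_2^L$ and $h_2^{L_1}$ to obtain $h_2\in\encsem{L\comp L_2}$, using Lemma~\ref{lem:encclosure}/the saturation of $\encsem{-}$ to absorb any $\sat{P}{O}$-rearrangements that the gluing forces on the $L$-side (exactly as in the cut-out proof of Theorem~\ref{thm:gencomp}, where rearrangements that fall on shared actions are hidden in the composite). This yields $h\sqsubseteq_{\clg{G}}h_2$, i.e.\ $L\comp L_1\sqsubseteq_{\clg{G}}L\comp L_2$.

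\medskip\noindent\textbf{Sequencing on the right ($L_1\comp L\rellin L_2\comp L$).} Here $L_1$ sits at the bottom, so its $\k$-interface is visible as the outer client interface of $L_1\comp L$ only through $L$'s mediation, while its $\l$-interface is the true outermost abstract interface. I would decompose $h\in\encsem{L_1\comp L}$ as $(h^{L_1},h^L)$, apply $L_1\rellin L_2$ to $h^{L_1}$ to get $h_2^{L_1}$, observe that the $\k$-side of $h^{L_1}$ is glued to $L$ and so the $\sat{P}{O}^\ast$-rearrangement on it is absorbed by saturation of $\encsem{L}$ on the glued moves, and that the $\l$-side of $h^{L_1}$ is untouched by $L$ (it is the outer interface), so the $\clg{R}$-relation survives verbatim. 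Re-gluing gives $h_2\in\encsem{L_2\comp L}$ with $(h{\restriction}\k)\sat{P}{O}^\ast(h_2{\restriction}\k)$ and $(\overline h{\restriction}\l)\,\clg{R}\,(\overline{h_2}{\restriction}\l)$, i.e.\ $h\rellin h_2$. No $\relation\choose\clg{G}$-closedness is needed because $L$ does not see the relational interface.

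\medskip\noindent\textbf{Disjoint union, first-order case ($L\uplus L_1\rellin[\relation^+]L\uplus L_2$).} Since $(\Theta_1\cup\Theta_2)\cap(\Theta_1'\cup\Theta_2')=\emptyset$ and all the types are first-order, a history $h\in\encsem{L\uplus L_1}$ splits cleanly by interface: $h{\restriction}\Theta_1',\Theta_2'\in\encsem{L}$ and $h{\restriction}\Theta_1,\Theta_2\in\encsem{L_1}$, and — crucially in the first-order case — there is no transfer of method names across the two groups, so the two sub-histories are genuinely independent interleavings (this is where higher-order would break, as in the remark preceding the theorem). Apply $L_1\rellin L_2$ to the $L_1$-part to get $h_2^{L_1}$; keep the $L$-part fixed; interleave them back, using $\satsym^\ast$ and saturation to re-sequentialise (both components are already sequential, and the merge of two thread-wise sequential encapsulated histories over disjoint interfaces can be sequentialised by $\satsym$-swaps exactly because $O$ never switches between $\k$ and $\l$ within a thread). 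The resulting $h_2$ is sequential, $(h{\restriction}\k)\sat{P}{O}^\ast(h_2{\restriction}\k)$, and on the $\l$-side we get $(\overline h{\restriction}\l){\restriction}\Theta_1\,\clg{R}\,(\overline{h_2}{\restriction}\l){\restriction}\Theta_1$ from $L_1\rellin L_2$ while $(\overline h{\restriction}\l){\restriction}\Theta_1'=(\overline{h_2}{\restriction}\l){\restriction}\Theta_1'$ from $L$ being fixed — which is exactly the definition of $\relation^+$. The permutation-closure of $\relation^+$ follows from that of $\relation$ and equality.

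\medskip\noindent\textbf{Main obstacle.} The delicate point throughout is the handling of rearrangements that, after gluing, land on the \emph{shared} interface: on one side of the composition such a move is a $P$-move, on the other an $O$-move, so a $\sat{P}{O}$-step on the inner library corresponds to a $\sat{O}{P}$-step on the outer one, and one must check these are absorbed by the appropriate saturation lemma (Lemma~\ref{lem:encclosure}) rather than escaping the composite semantics — and that after absorption the composite history is still encapsulated (no illegal $O$-switch is introduced) and still sequential when required. Getting the $\relation\choose\clg{G}$-closedness hypothesis to apply with exactly the right orientation of $\clg{R}$ on the glued $\k$-interface of $L$ is the second subtle spot; this is precisely why the hypothesis is phrased in terms of $h'{\restriction}\k$ and $\clg{G}$ on $\overline{h''}{\restriction}\l$, matching the interface identifications above. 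Everything else is the routine gluing bookkeeping already developed for Theorems~\ref{thm:gencomp} and~\ref{thm:enccomp}.
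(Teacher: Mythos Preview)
Your proposal is essentially correct and follows the same three-part structure as the paper's proof: decompose via Theorem~\ref{supercomp}, apply $L_1\rellin L_2$ to the $L_1$-component, and re-glue. The identification of the polarity flip on the shared interface as the main technical obstacle is apt, and your handling of the second and third claims matches the paper's argument.

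One small divergence worth noting: in the first claim ($L\comp L_1\sqsubseteq_{\clg{G}}L\comp L_2$) you invoke saturation of $\encsem{L}$ to ``absorb any $\sat{P}{O}$-rearrangements that the gluing forces on the $L$-side''. The paper does not need this. The point of the $\relation\choose\clg{G}$-closedness hypothesis is precisely that it hands you $h''\in\encsem{L}$ whose $\k$-component already equals $(\overline{h_2'}\restriction\l)$ exactly---no rearrangement on $L$'s side is required, only an adjustment of the scheduler $\sigma$. Saturation (Lemma~\ref{lem:encclosure}/\ref{lem:genclosure}) is what drives the \emph{second} claim, where no closedness hypothesis is available and the $\sat{P}{O}$-rearrangement on $h_1'\restriction\k$ must be transported (with flipped polarity, becoming $\sat{O}{P}$) to $h'$'s glued $\l$-component. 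So your instinct to reach for saturation is right for $L_1\comp L$, but superfluous for $L\comp L_1$; the asymmetry between the two sequencing directions is exactly why only the left sequencing needs the closedness assumption.
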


%%% Local Variables: 
%%% mode: latex
%%% TeX-master: "holin"
%%% End: 

\section{Related and future work}

%\paragraph{Related Work} 
%\am{What new do we contribute? How do we differ? Can we do things that others can't?}
%
%Surprisingly, the paper considers call/return traces as the only specification method for concurrent objects, totally ignoring related work on using program logics for the same purpose. For instance, the observational refinement of the Flat Combiner has been first formally established in the work 
%-- ``Unifying Refinement and Hoare-Style Reasoning in a Logic for Higher-Order Concurrency" by Turon et al. (ICFP'13)
%and a more client-friendly higher-order specification has been given to it in the work 
%-- ``Specifying and Verifying Concurrent Algorithms with Histories and Subjectivity" by Sergey et al. (ESOP'15). 
%Comparing to those approaches could possibly shed some light on the proper contribution of this work.

Since the work of Herlihy and Wing~\cite{HW90}, linearisability has been consistently used as a correctness criterion for concurrent algorithms on a variety of data structures~\cite{MoirS04},
%
%It is also being applied as a consistency criterion on distributed databases~\cite{Scatter,Spanner}.
%
and has yielded a variety of proof methods~\cite{DongolD15}.
As mentioned in the Introduction,
the field has focussed on libraries with methods of base-type inputs and outputs, 
with 
Cerone et al.\ 
recently catering for the presence of abstract as well as public methods~\cite{CGY14}. 
An explicit connection between linearisability and refinement was made by Filipovic et al.\ in~\cite{FORY10},
where it was shown that, in base-type settings, linearisability and refinement coincide. Similar results have been proved in~\cite{DerrickSW11,GotsmanY11,LiangF13,CGY14}.
Our contributions herein are notions of linearisability 
that can serve as correctness criteria
for libraries with methods of arbitrary higher-order types. Moreover, we relate them to refinement, thus establishing  the soundness of linearisability, and demonstrate they are well-behaved with respect to library composition.

Verification of concurrent higher-order programs has been extensively studied outside of linearisability; we next mention works most closely related to linearisability reasoning. 
At the conceptual level,~\cite{FORY10} proposed that the verification goal behind linearisability is observational refinement.
In the same vein,~\cite{TuronTABD13} utilised logical relations as a direct method for proving refinement in a higher-order concurrent setting, while~\cite{TuronDB13} introduced a program logic that builds on logical relations.
On the other hand,
proving conformance to a history specification has been addressed in~\cite{SergeyNB15} by supplying history-aware interpretations to off-the-shelf Hoare logics for concurrency. 
Other logic-based approaches for concurrent higher-order libraries, which do not use linearisability or any other notion of logical atomicity, include Higher-Order and Impredicative Concurrent Abstract Predicates~\cite{SvendsenB14,SvendsenBP13}.

%In this paper we proposed notions of linearisability for fully-fledged higher-order libraries and proved them sound with respect to observational refinement. 

One possible avenue for expansion of this work, following the example of~\cite{FORY10}, would be to identify language fragments where higher-order linearisability {coincides} with observational refinement. 
Based on the game semantic results of~\cite{GM04}, such a correspondence may be possible to demonstrate already in the language examined herein.

The higher-order language we examined used memory in the form of references, which were global and moreover followed the standard memory model (sequential consistency). 
Therefore, future research also includes
enriching the setting with dynamically allocated memory and expanding its reach to weak memory models. In the latter direction, our traces could need to be strengthened towards truly-concurrent structures, such as event-structures, following the recent examples of~\cite{Cast16,JJ16}.

%\acks -- Leave for final version
%
%We thank the authors of~\cite{CGY14} for bringing the higher-order linearisability problem to our attention.
%We would also like to thank Kasper Svedsen for constructive comments, and Christine Tzevelekou for help with Figure~1.

\bibliographystyle{abbrv}
\bibliography{my}

\clearpage
\appendix

% !TEX root =  holin.tex

\section{Big-step vs small-step reorderings}

\newcommand\satbig{\triangleleft_{PO}^\textrm{\rm big}}

\cite{CGY14} defines linearisation in the general case using a ``big-step'' relation that applies a single permutation
to the whole sequence. This contrasts with our definition as $\sat{P}{O}^\ast$, in which we combine multiple adjacent swaps.
We show that the two definitions are equivalent.

\begin{definition}[\cite{CGY14}]
Let $h_1, h_2\in\hist_{\Theta,\Theta'}$ of equal length. We write $h_1\satbig h_2$ 
if there is a permutation $\pi: \{1,\cdots, |h_1|\}\to\{1,\cdots,|h_2|\}$ such that, writing $h_i(j)$ for the $j$-th element of $h_i$:
for all $j$,  we have $h_1(j)=h_2(\pi(j))$ and, for all $i<j$:
\[\begin{array}{l}
((\exists t.\,h_1(i)=(t,-)\land\,h_1(j)=(t,-))\\
\lor (\exists t_1,t_2.\, h_1(i)=(t_1,-)_P \land h_1(j)=(t_2,-)_O))\implies h_2(i)<h_2(j)
\end{array}\]
\end{definition}
In other words, $h_2$ is obtained from $h_1$ by permuting moves in such a way that their order in threads is preserved
and whenever a $O$-move occurred after an $P$-move in $h_1$, the same must apply to their permuted copies in $h_2$.

\begin{lemma}
$\satbig = \sat{P}{O}^\ast$.
\end{lemma}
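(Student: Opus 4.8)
The statement is a set-theoretic equality of two relations on histories, so I would prove the two inclusions separately. The direction $\sat{P}{O}^\ast \subseteq \satbig$ is the easy one: it suffices (since $\satbig$ is reflexive, as witnessed by the identity permutation, and $\sat{P}{O}^\ast$ is the reflexive-transitive closure of $\sat{P}{O}$) to check that $\sat{P}{O} \subseteq \satbig$ and that $\satbig$ is transitive. For the single-step inclusion, suppose $h_1 = s_1 (t',x')_{X'}(t,x) s_2 \sat{P}{O} s_1 (t,x)(t',x')_{X'} s_2 = h_2$ (or the symmetric shape), where $X' = O$ and $t \neq t'$, or $X = P$ in the other clause. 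The witnessing permutation $\pi$ is the transposition of the two swapped positions, identity elsewhere. One then verifies the implication defining $\satbig$: the only pair $(i,j)$ whose relative order changes is the swapped one, and in each of the two cases of $\sat{P}{O}$ the hypothesis of the implication fails for that pair — the two moves are in different threads ($t \neq t'$), and the pattern is not ``$P$ before $O$'' (in the first clause the later move has polarity $O$ and is being moved earlier, so after the swap we do not have a $P$ preceding it illegitimately; in the second clause the earlier move has polarity $P$ and is moved later). So the implication holds vacuously for that pair and trivially for all others. Transitivity of $\satbig$ is immediate by composing permutations and chaining the order-preservation implications.

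The substantive direction is $\satbig \subseteq \sat{P}{O}^\ast$: given $h_1 \satbig h_2$ via a permutation $\pi$, I must realise $\pi$ as a finite sequence of adjacent $\sat{P}{O}$-swaps. The plan is an induction on the number of \emph{inversions} of $\pi$, i.e.\ pairs $i<j$ with $\pi(i) > \pi(j)$. If there are no inversions then $\pi$ is the identity and $h_1 = h_2$. Otherwise there is an adjacent inversion: a position $i$ with $\pi(i) > \pi(i+1)$. Let the two moves at positions $i, i+1$ in $h_1$ be $\mu = (t,x)_X$ and $\nu = (t',x')_{X'}$. The key claim is that swapping them is a legal $\sat{P}{O}$-step, for which I need (a) $t \neq t'$, and (b) the polarity pattern is compatible with one of the two clauses of $\sat{P}{O}$ — concretely, it is \emph{not} the case that $X = P$ and $X' = O$. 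Both follow from the defining implication of $\satbig$ applied to the pair $(i, i+1)$: since $\pi(i) > \pi(i+1)$, i.e.\ $h_2$ reverses their order, the hypothesis of that implication must be false, which gives exactly $t \neq t'$ and $\neg(X = P \wedge X' = O)$. Hence $h_1 \sat{P}{O} h_1'$ where $h_1'$ has $\mu,\nu$ transposed; and $h_1' \satbig h_2$ via $\pi$ precomposed with the transposition of $i,i+1$, which has one fewer inversion. (One must check this new permutation still witnesses $\satbig$ from $h_1'$ to $h_2$: the set of thread-order constraints and $P$-before-$O$ constraints on $h_1'$ is obtained from that on $h_1$ by relabelling through the transposition, and the single flipped pair was shown to carry no constraint, so the implications transport.) By induction $h_1' \sat{P}{O}^\ast h_2$, and prepending the first step gives $h_1 \sat{P}{O}^\ast h_2$.

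The main obstacle — really the only place care is needed — is the bookkeeping in the inductive step: verifying that after performing one adjacent legal swap, the reduced permutation genuinely still satisfies the $\satbig$ condition with respect to the \emph{new} history $h_1'$, and that ``legal $\sat{P}{O}$-swap'' is precisely what the failure of the $\satbig$-implication on an adjacent inversion delivers. I would isolate this as a small lemma: \emph{if $h_1 \satbig h_2$ via $\pi \neq \mathrm{id}$, then there is an adjacent inversion $i$ of $\pi$, the transposition of positions $i,i+1$ is a $\sat{P}{O}$-step from $h_1$ to some $h_1'$, and $h_1' \satbig h_2$ with strictly fewer inversions.} Everything else is routine. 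A secondary subtlety worth a sentence in the writeup: $\sat{P}{O}$ as defined has two clauses (one ``moving an $O$-move earlier'', one ``moving a $P$-move later''), and for a given adjacent inversion with $\neg(X=P \wedge X'=O)$ at least one clause applies — if $X' = O$ use the first, otherwise $X' = P$ so $X \in \{O,P\}$ and if $X = P$... wait, we excluded $X=P \wedge X'=O$, not $X = P \wedge X' = P$; but if $X = P$ then the second clause (a $P$-move, namely $\mu$, moved later) applies regardless of $X'$; and if $X = O$ then $X'$ is $O$ or $P$, and in the $X' = O$ case the first clause applies while in the $X' = P$ case again the second clause applies to $\nu$. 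So some clause always fires, completing the argument.
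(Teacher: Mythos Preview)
Your proof is correct and takes a genuinely different route from the paper. You argue by induction on the number of inversions of $\pi$: an adjacent inversion exists whenever $\pi\neq\mathrm{id}$, the $\satbig$ condition forces the two moves involved to lie in different threads and not form the pattern $(P,O)$, so the swap is a legal $\sat{P}{O}$ step, and the residual permutation has one fewer inversion. The paper instead uses a maximality argument: among all $h$ with $h_1 \sat{P}{O}^\ast h$ and $h \satbig h_2$, pick one sharing the longest common prefix with $h_2$; if $h \neq h_2$, the next move of $h_2$ after the common prefix can be pulled forward in $h$ across the intervening block (whose moves are all in different threads from it, and a polarity case analysis shows the pull is a chain of $\sat{P}{O}$ steps), contradicting maximality. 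Your inversion-count induction is the more standard and elementary argument (essentially bubble sort on permutations); the paper's move-a-whole-block argument is slightly more indirect but avoids tracking the residual permutation explicitly.

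One small wobble: your closing paragraph's case analysis on which clause of $\sat{P}{O}$ fires is tangled (e.g.\ ``if $X'=O$ use the first'' has the roles reversed, and clause~2 moves a $P$-move \emph{earlier}, not later). The clean statement is: swapping adjacent moves $\mu,\nu$ (in that order) is a $\sat{P}{O}$ step iff $\mu$ has polarity $O$ (clause~1 applies) or $\nu$ has polarity $P$ (clause~2 applies); since you have already established that not both ``$\mu$ has polarity $P$'' and ``$\nu$ has polarity $O$'' hold, at least one disjunct is satisfied. That one line replaces your final paragraph.
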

\begin{proof}
It is obvious that $\sat{P}{O}^\ast\subseteq \satbig$, so it suffices to show the converse.

Suppose $h_1\satbig h_2$. Consider the set $X_{h_1, h_2} = \{ h \,|\, h_1\sat{P}{O}^\ast h,\, h\satbig h_2\}$.
Note that $X_{h_1,h_2}$ is not empty, because $h_1\in X_{h_1,h_2}$. 

For two histories $h',h''$, define $\delta(h',h'')$ to be the length of the longest common prefix of $h'$ and $h''$.
Let $N=\max\limits_{h} \{ \delta(h,h_2)\,|\,h\in X_{h_1,h_2}\}$. Note that $N\le |h_1|=|h_2|$.

\begin{itemize}
\item If $N=|h_2|$ then we are done, because $N=|h_2|$ implies $h_2\in X_{h_1,h_2}$ and, thus, $h_1\sat{P}{O}^\ast h_2$.
\item Suppose $N < |h_2|$ and consider $h$ such that $N=\delta(h,h_2)$.
We are going to arrive at a contradiction by exhibiting $h'\in X_{h_1,h_2}$ such that $\delta(h',h_2)>N$.

Because $N=\delta(h,h_2)$ and $N<|h_2|$, we have 
\[\begin{array}{rcl}
h_2&=& a_1\cdots a_N (t,m) u\\ 
h&=& a_1\cdots a_N (t_1,m_1) \cdots (t_k,m_k) (t,m) u',
\end{array}\]
where $t_i\neq t $, because order in threads must be preserved.
Consider 
\[
h'=a_1\cdots a_N (t,m) (t_1,m_1)\cdots (t_k,m_k) u'.
\]
Clearly $\delta(h',h_2) > N$ so, for a contradiction,
it suffices to show that $h'\in X_{h_1,h_2}$. Note that because $h\satbig h_2$, we must also have $h'\satbig h_2$, 
because the new $PO$ dependencies in $h'$ (wrt $h$) caused by moving $(t,m)$ forward are consistent with $h_2$.
Hence, we only need to show that $h\sat{P}{O}^\ast h'$.
Let us distinguish two cases.
\begin{itemize} 
\item If $(t,m)$ is a $P$-move  then, clearly, $h\sat{P}{O}^\ast h'$ ($P$-move moves forward).
\item If $(t,m)$ is an $O$-move then, because $h\satbig h_2$, all of the $(t_i,m_i)$ actions must be $O$-moves (otherwise their position
wrt $(t,m)$ would have to be preserved in $h_2$ and it isn't). Hence, $h\sat{P}{O}^\ast h'$, as required.
\end{itemize}
\end{itemize}
\end{proof}

\section{Proofs from Section~\ref{sec:sat}}

\begin{proof}[Proof of Lemma~\ref{lem:encclosure}]
For the first claim,
closure under $\sat{O}{P}$ (resp.\ $\satsym$) follows from Lemma~\ref{lem:genclosure}. 
(resp.\ Lemma~\ref{lem:decomp}).
%\end{proof}
%Another consequence of Lem.~\ref{lem:decomp} is a restriction on switching between $\k$ and $\l$  inside threads  in $X$.
%\begin{lemma}
%Consider $X=\sem{\link{L'\comp-}{(M_1\|\cdots\| M_N)}}_N$.
%Let $s_1\, (t,x)_{OY}\, s_2\, (t,x')_{PY'}\, s_3 \in X$ be such that no move in $s_2$ comes from thread $t$.
%Then $Y=Y'$, i.e. inside a thread only $O$ can switch between $\k$ and $\l$.
%\end{lemma}
%\begin{proof}

Suppose $h=s_1\, (t,x)_{OY}\, s_2\, (t,x')_{PY'}\, s_3$ violates the second claim and $(t,x)$, $(t,x')$ is the earliest such violation in $h$, i.e. no violations occur in $s_1$.
Observe that then $h$ restricted to moves of the form $(t,z)_{XY'}$ would not be alternating, which contradicts the fact that $h \restriction Y'$ is a history (Lemma~\ref{lem:decomp}).
\end{proof}

\begin{proof}[Proof of Theorem~\ref{thm:relcor}]
Consider $\ctx$ such that $\ctx[L_1]\Downarrow$. We need to show $\ctx[L_2]\Downarrow$.
Since $\ctx[L_1]\Downarrow$, by Theorem~\ref{t:comp} there exists $h_1\in\encsem{L_1}$ such that $\overline{h_1}\in\encsem{\ctx}$.
Also, by Lemma~\ref{lem:decomp}, $(\overline{h_1}\restriction \k)\in \encsem{\ctx'}$ and $(\overline{h_1}\restriction \l)\in\encsem{L'}$
for $\ctx', L'$ specified in that lemma.
Because $L_1\rellin L_2$, there exists $h_2\in\encsem{L_2}$ such that
$(h_1\restriction \k)\, \sat{P}{O}^\ast\, (h_2\restriction \k)$ 
and $(\overline{h_1}\restriction \l)\,\relation\, (\overline{h_2}\restriction \l)$.
Note that the former implies $(\overline{h_1}\restriction \k)\, \sat{O}{P}^\ast\, (\overline{h_2}\restriction \k)$.
Because $(\overline{h_1}\restriction \l)\in\encsem{L'}$, $(\overline{h_1}\restriction \l)\,\relation\, (\overline{h_2}\restriction \l)$ and $L'$ is $\relation$-closed,
we have $(\overline{h_2}\restriction \l)\in\encsem{L'}$.
On the other hand, because $(\overline{h_1}\restriction \k)\in \encsem{\ctx'}$ and
$(\overline{h_1}\restriction \k)\, \sat{O}{P}^\ast\, (\overline{h_2}\restriction \k)$
Lemma~\ref{lem:encclosure} implies $(\overline{h_2}\restriction \k)\in\encsem{\ctx'}$.
Consequently, $(\overline{h_2}\restriction \k)\in\encsem{\ctx'}$ and $(\overline{h_2}\restriction \l)\in\encsem{L'}$, so Lemma~\ref{lem:decomp} entails $\overline{h_2}\in\encsem{\ctx}$.
Hence, since $h_2\in \sem{L_2}$ and $\overline{h_2}\in\encsem{\ctx}$,  
we can conclude $\ctx[L_2]\Downarrow$ by Theorem~\ref{t:comp}.
\end{proof}

%%% Local Variables:
%%% mode: latex
%%% TeX-master: "holin"
%%% End:

\section{Trace compositionality}\label{sec:comp}

\newcommand\qweto{\hookrightarrow}

In this section we demonstrate how the semantics of a library inside a context can be drawn by composing the semantics of the library and that of the context. 
%
%We then extend this reasoning to composition of libraries via union.
The result played a crucial role in our arguments about linearisability and contextual refinement in Section~\ref{sec:sat}.
%The more general version of compositionality will enable us to prove that linearisability is also compositional with respect to library union.

%\subsection{Library inside a context}

Let us divide (reachable) evaluation stacks into two classes:
{$L$-stacks}, which can be produced in the trace semantics of a library; and {$C$-stacks}, which appear in traces of a context.
\begin{align*}
\EE_L &::=\ []\mid m::E::\EE_L' 
&
\EE_C &::=\ []\mid m::\EE_C' 
\\
\EE_L' &::=\ m::\EE_L
&
\EE_C' &::=\ m::E::\EE_C
\end{align*}
From the trace semantics definition
we have that $N$-configurations in the semantics of a library feature evaluation stacks of the forms $\EE_L$ (in $O$-configurations) and $\EE_L'$ (in $P$-configurations): these we will call \boldemph{$L$-stacks}.
On the other hand,
those produced from a context utilise
\boldemph{$C$-stacks} which are of the forms
$\EE_C$ (in $P$-configurations) and $\EE_C'$ (in $O$-configurations). 

From here on, when we write $\EE$ we will mean an
$L$-stack or a $C$-stack.
Moreover, we will call an $N$-configuration $\rho$ an \boldemph{$L$-configuration} (or a \boldemph{$C$-configuration}), if $\rho=(\vec\CC,\cdots)$ and, for each $i$, $\CC_i=(\EE_i,\cdots)$ with $\EE_i$ an $L$-stack (resp.\ a $C$-stack).

Let $\rho,\rho'$ be $N$-configurations and suppose $\rho=(\vec\CC,\RR,\PP,\AA,S)$ is a $C$-configuration and $\rho'=(\vec\CC',\RR',\PP',\AA',S')$
an $L$-configuration. We say that $\rho$ and $\rho'$ are \boldemph{compatible}, written $\rho\asymp\rho'$,
if $S$ and $S'$ have disjoint domains and, for each $i$:
\begin{compactitem}
\item $\CC_i=(\EE_C,M)$ and $\CC'_i=(\EE_L,-)$, or
$\CC_i=(\EE_C',-)$ and $\CC'_i=(\EE_L',M)$.
\item If the public and abstract names of $\CC_i$ are $(\PP_\l,\PP_\k)$ and $(\AA_\l,\AA_\k)$ respectively, and those of $\CC_i'$ are $(\PP_\l',\PP_\k')$ and $(\AA_\l',\AA_\k')$, then $\PP_\l=\AA'_\l$, $\PP_\k=\AA'_\k$,
$\AA_\l=\PP'_\l$ and $\AA_\k=\PP'_\k$.
\item The private names of $\rho$ (i.e.\ those in $\dom(\RR)\setminus\PP_\l\setminus\PP_\k$) do not appear in $\rho'$, and dually for the private names of $\rho'$.
\item If $\CC_i=(\EE,\cdots)$ and $\CC_i'=(\EE',\cdots)$ then $\EE$ and $\EE'$ are in turn {compatible}, that is:
\begin{itemize}
%\item either $\EE=\EE'=[]$,
\item either $\EE=m::E::\EE_1$, $\EE'=m::\EE_1'$ and $\EE_1,\EE_1'$ are compatible,\!\!\!\!\!\!
\item or $\EE=m::\EE_1$, $\EE'=m::E::\EE_1'$ and $\EE_1,\EE_1'$ are compatible,
\end{itemize}
or $\EE=\EE'=[]$.
\end{compactitem}
Note, in particular, that if $\rho\asymp\rho'$ then $\rho$ must be a context configuration, and $\rho'$ a library configuration. 

We next define a trace semantics on compositions of compatible such $N$-configurations. We use the symbol $\oslash$ for configuration composition: 
we call this \boldemph{external composition}, to distinguish it from the composition of $\rho$ and $\rho'$ we can obtain by merging their components, which we will examine later.
\begin{gather*}
\infer[\textsc{Int}_1]{\rho_1\oslash\rho_2\lto\rho_1'\oslash\rho_2}{\rho_1\pred{}{}\rho_1'}
\qquad
\infer[\textsc{Int}_2]{\rho_1\oslash\rho_2\lto\rho_1\oslash\rho_2'}{\rho_2\pred{}{}\rho_2'}
\\
\infer[\textsc{Call}]{\rho_1\oslash\rho_2\lto\rho_1'\oslash\rho_2'}{\rho_1\pred{N}{(t,\call{m(v)})}\rho_1'\quad\rho_2\pred{N}{(t,\call{m(v)})}\rho_2'}
\\
\infer[\textsc{Retn}]{\rho_1\oslash\rho_2\lto\rho_1'\oslash\rho_2'}{\rho_1\pred{N}{(t,\ret{m(v)})}\rho_1'\quad\rho_2\pred{N}{(t,\ret{m(v)})}\rho_2'}
\end{gather*}
The {\sc Int} rules above have side-conditions imposing that the resulting pairs of configurations are still compatible. Concretely, this means that the names created fresh in internal transitions do not match the names already present in the configurations of the other component. 
Note that external composition is not symmetric, due to the context/library distinction we mentioned.

Our next target is to show a correspondence between the above-defined semantic composition and the semantics obtained by (syntactically) merging compatible configurations. 
This will demonstrate that composing the semantics of two components is equivalent to first syntactically composing them and then evaluating the result. In order to obtain this correspondence, we need to make the semantics of syntactically composed configurations more verbose: in external composition methods belong either to the context or the library, and when e.g.\ the client wants to evaluate $mm'$, with $m$ a library method, the call is made explicit and, more importantly, $m'$ is replaced by a fresh method name. On the other hand, when we compose syntactically such a call will be done internally, and without refreshing $m'$.

To counter-balance the above mismatch, we extend the syntax of terms and evaluation contexts, and the operational semantics of closed terms as follows. The semantics will now involve quadruples of the form:
\[
(E[M],\RR_1,\RR_2,S)\text{ written also }(E[M],\vec\RR,S)
\]
where the two repositories correspond to context and library methods respectively, so in particular $\dom(\RR_1)\cap\dom(\RR_2)=\emptyset$. Moreover, inside $E[M]$ we tag method names and lambda-abstractions with indices $1$ and $2$ to record which of the two components (context or library) is enclosing them: the tag $1$ is used for the context, and $2$ for the library. Thus e.g.\ a name $m^1$ signals an occurrence of method $m$ inside the context. Tagged methods are passed around and stored as ordinary methods, but their behaviour changes when they are applied. 
Moreover, we extend (tagged) evaluation contexts by explicitly marking return points of methods:
\[
E\, ::=\ \bullet\mid \cdots \mid\letin{x=E}{M}\mid mE\mid r:=E\mid \rett{m^i}E
\]
In particular, $E[M]$ may not necessarily be a (tagged) term, due to the return annotations.
The new reduction rules are as follows (we omit indices when they are not used in the rules).
%\begin{figure}[t] 
\begin{align*}
&(E[i_1\oplus i_2],\vec\RR,S) \tred{t}{}' (E[i],\vec\RR,S')\quad (i=i_1\oplus i_2)\\
&(E[\tid],\vec\RR,S) \tred{t}{}' (E[t],\vec\RR,S')\\
&(E[\pi_j\abra{v_1,v_2}],\vec\RR,S) \tred{t}{}' (E[v_j],\vec\RR,S')\\
&(E[\ifthe{i}{M_0}{M_1}],\vec\RR,S) \tred{t}{}' (E[M_j],\vec\RR,S)\;\; (j=(i>0))\\
&(E[\lambda^i x.M],\vec\RR,S) \tred{t}{}' (E[m^i],\vec\RR\uplus_i(m\mapsto\lambda x.M),S)\\
&(E[m^iv],\vec\RR,S) \tred{t}{}' (E[M\{v/x\}^i],\vec\RR,S)\quad
 \text{if }\RR_i(m)=\lambda x.M \\
%&(E[m^iv],\vec\RR,S) \tred{t}{}' (\letin[i\!]{y=M\{v/x\}^{3-i}}{E[y]},\vec\RR,S)\\
%%&(E[m^iv],\vec\RR,S) \tred{t}{}' (E[\rett{m^i}M\{v/x\}^{3-i}],\vec\RR,S)\\
%%&\hspace{35mm} \text{if }\RR_{3-i}(m)=\lambda x.M\text{ and }v\notin\Meths \\
%%&(E[m^im'],\vec\RR,S) \tred{t}{}' (E[\rett{m^i}M\{m''\!/x\}^{3-i}],\vec\RR',S)\\
%%&\hspace{15mm} \text{if }\RR_{3-i}(m)=\lambda x.M\text{ with }
&(E[m^iv],\vec\RR,S) \tred{t}{}' (E[\rett{m^i}M\{v'/x\}^{3-i}],\vec\RR',S)\quad
\text{if }\RR_{3-i}(m)=\lambda x.M\text{ with}\\
&\qquad\Meths(v)=\{m_1,\cdots,m_k\},v'=v\{\,m_j'/m_j\mid1\leq j\leq k\},
\vec\RR'=\vec\RR\uplus_i\{m_j'\mapsto\lambda y.m_jy\mid 1\leq j\leq k\} \\
&(E[\rett{m^i}v],\vec\RR,S) \tred{t}{}' (E[v'^i],\vec\RR\uplus_{3-i}\!\{m'_j\mapsto\lambda y.m_jy\},S) \text{ with $m_j,m_j'$ and $v'$ as above}
\\
%\end{align*}
%\begin{align*}
%%&(E[\rett{m^i}v],\vec\RR,S) \tred{t}{}' (E[v],\vec\RR,S)\quad \text{if }v\notin\Meths \\
%%&(E[\rett{m^i}m'],\vec\RR,S) \tred{t}{}' (E[m''^i],\vec\RR\uplus_{3-i}\!(m''\mapsto\lambda y.m'\!y),S) \\
&(E[\letin{x=v}{M}],\vec\RR,S) \tred{t}{}' (E[M\{v/x\}],\vec\RR,S)\\
&(E[{!r}],\vec\RR,S) \tred{t}{}' (E[S(r)],\vec\RR,S)\\
&(E[{r}:=i],\vec\RR,S) \tred{t}{}' (E,\vec\RR,S[r\mapsto i])\\
&(E[{r}:=m^i],\vec\RR,S) \tred{t}{}' (E,\vec\RR,S[r\mapsto m^i])
\end{align*}
%\end{figure}
Above we write $M^i$ for the term $M$ with all its methods and lambdas tagged (or re-tagged) with $i$. 
Moreover, we use the convention e.g.\ $\vec\RR\uplus_1(m\mapsto \lambda x.M)=(\RR_1\uplus(m\mapsto \lambda x.M),\RR_2)$.
Note that the repositories need not contain tags as, whenever a method is looked up, we subsequently tag its body explicitly.

Thus, the computationally observable difference of the new semantics is in the rule for reducing $E[m^iv]$ when $m$ is not in the domain of $\RR_i$: this corresponds precisely to the case where e.g.\ a library method is called by the context with another method as argument. 
A similar behaviour is exposed when such a method is returning.
However, this novelty merely adds fresh method names by $\eta$-expansions and does not affect the termination of the reduction. 

Defining parallel reduction $\pred{N}{}'$ in an analogous way to $\pred{N}{}$, we can show the following.
We let a quadruple $(M_1\|\cdots\|M_N,\RR,S)$ 
be \emph{ final} if $M_i=()$ for all $i$, and we write 
$(M_1\|\cdots\|M_N,\RR,S)\Downarrow$ if $(M_1\|\cdots\|M_N,\RR,S)$ can reduce to some final quadruple; these notions are defined for $(M_1\|\cdots\|M_N,\RR_1,\RR_2,S)$ in the same manner.

\begin{lemma}\label{lem:eta}
For any legal $(M_1\|\cdots\|M_N,\RR_1,\RR_2,S)$, we have that $(M_1\|\cdots\|M_N,\RR_1,\RR_2,S)\Downarrow$ iff $(M_1\|\cdots\|M_N,\RR_1\cup\RR_2,S)\Downarrow$.
\end{lemma}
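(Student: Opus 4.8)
The plan is to relate the two reduction relations $\pred{N}{}'$ and $\pred{N}{}$ by a simulation argument, exhibiting a relation between quadruple configurations $(M_1\|\cdots\|M_N,\RR_1,\RR_2,S)$ and triple configurations $(M_1'\|\cdots\|M_N',\RR,S)$ that is preserved by reduction in both directions and that respects finality. The key idea is that the only genuine difference between the two semantics is that $\tred{t}{}'$ performs $\eta$-expansions (introducing fresh names $m_j'$ mapped to $\lambda y.m_j y$) and inserts explicit return markers $\rett{m^i}(\cdot)$ at cross-component calls, whereas $\tred{t}{}$ does neither. So I would define an \emph{erasure} operation $\lfloor\cdot\rfloor$ that takes a tagged term/evaluation context, deletes all tags and all $\rett{m^i}(\cdot)$ annotations, and — crucially — substitutes away the $\eta$-expansion names, i.e.\ replaces every name $m'$ with $\RR_i(m')=\lambda y.m_j y$ by the underlying $m_j$ (iterating, since such names may themselves have been passed around and re-expanded). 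Erasure extends to repositories and stores pointwise, and the claim becomes that $\lfloor\cdot\rfloor$ maps $\pred{N}{}'$-reductions to $\pred{N}{}$-reductions-or-equalities and, conversely, every $\pred{N}{}$-reduction can be lifted.

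The steps, in order, would be: (1) make precise what a \emph{legal} quadruple is — tags consistent with the two repositories, $\dom(\RR_1)\cap\dom(\RR_2)=\emptyset$, every $\eta$-name recorded as $\lambda y.m_j y$ — and check this invariant is preserved by $\pred{N}{}'$; (2) define $\lfloor\cdot\rfloor$ on terms, evaluation contexts, repositories and stores, and verify it sends legal quadruples to well-formed triples with $\lfloor\RR_1\cup\RR_2\rfloor = \lfloor\RR_1\rfloor\cup\lfloor\RR_2\rfloor$ and that a quadruple is final iff its erasure is; (3) \emph{soundness}: show that if $(\vec M,\vec\RR,S)\pred{N}{}'(\vec M',\vec\RR',S')$ then $\lfloor(\vec M,\vec\RR,S)\rfloor\pred{N}{}{}^{=}\lfloor(\vec M',\vec\RR',S')\rfloor$ (equality in the case of the two administrative rules for $\rett{m^i}(\cdot)$, a single step otherwise — the $\eta$-name bookkeeping disappears under erasure because $\lfloor E[m^iv]\rfloor = \lfloor E\rfloor[m_j\,\lfloor v\rfloor]$ already for the underlying call); (4) \emph{completeness / lifting}: show that any $\pred{N}{}$ reduction sequence from $\lfloor(\vec M,\vec\RR,S)\rfloor$ can be mirrored by a $\pred{N}{}'$ sequence from $(\vec M,\vec\RR,S)$, inserting the extra administrative steps where a cross-component application occurs; (5) conclude both directions of the iff by combining (3) and (4) with the finality correspondence, using additionally that the extra $\rett{m^i}(\cdot)$-steps are deterministic and strongly normalising so they cannot introduce or destroy termination.

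The main obstacle I expect is the correctness of the $\eta$-name substitution in the erasure, and in particular confirming that it commutes with all term-level operations used in the reduction rules — substitution $M\{v/x\}$, evaluation-context plugging $E[M]$, and especially the dereference/assignment rules, since $\eta$-names can be written into and read out of the store. One has to be careful that erasing a name to its underlying method is well-founded (the "points to $\lambda y.m_j y$" chains are finite because each fresh name is strictly newer than the one it wraps) and that it does not change the set of reachable final configurations — e.g.\ that calling $m'$ with $\RR(m') = \lambda y.m_j y$ really does simulate calling $m_j$ after one extra internal step, which is exactly the $\eta$-law and holds operationally here because our values are call-by-value and methods are only ever eliminated by application. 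Once these commutation lemmas are in place, steps (3)–(5) are routine inductions on the length of the reduction sequence.
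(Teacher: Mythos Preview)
Your proposal is essentially the paper's own approach: the paper defines an erasure $\phi^{\#\RR}$ that strips tags and $\rett{\cdot}$ markers and applies a substitution $\{\RR\}$ sending each $\eta$-name to the leaf of its expansion chain (your ``iterating'' idea, formalised via an acyclic graph on an \emph{expansion class} $\RR\subseteq\RR_1\cup\RR_2$), and then proves an indexed bisimulation between quadruples and their erasures. One small correction: your step~(2) claims finality is preserved as an \emph{iff}, but a quadruple whose components are $\rett{m^i}()$ is not final while its erasure is; the paper's bisimulation accordingly weakens the backward clause to ``$\phi^{\#\RR}$ final implies $\phi\Downarrow$'' (using acyclicity of the expansion graph), which is exactly what your step~(5) acknowledges --- and for the lifting direction the paper uses a size measure counting path-lengths of $\eta$-names in call and return positions rather than a plain length-of-reduction induction.
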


We now proceed to syntactic composition of $N$-configurations.
Given a pair $\rho_1\asymp\rho_2$, we define a single quadruple corresponding to their {syntactic} composition, called their \boldemph{internal composition}, as follows. Let 
$\rho_1=(\vec\CC,\RR_1,\PP_1,\AA_1,S_1)$ and $\rho_2=(\vec\CC',\RR_2,\PP_2,\AA_2,S_2)$ and, for each $i$,
$\CC_i=(\EE_i,X_i)$ and $\CC_i'=(\EE_i',X_i')$, with $\{X_i,X_i'\}=\{M_i,-\}$,
and we let $k_i=1$ just if $X_i=M_i$.
We let the internal composition of $\rho_1$ and $\rho_2$ be the quadruple:
\[
\rho_1\doublewedge\rho_2 = ((\EE_1\doublewedge\EE_1')[M_1^{k_1}]\|\cdots\|(\EE_N\doublewedge\EE_N')[M_N^{k_N}],\RR_1,\RR_2,S_1\uplus S_2)
\]
where compatible evaluation stacks $\EE,\EE'$ are composed into a single evaluation context $\EE\doublewedge\EE'$, as follows.
\begin{align*}
(m::E::\EE)\doublewedge(m::\EE')
&=(\EE\doublewedge\EE')[E[\rett{m}\bullet]^1]\\
(m::\EE')\doublewedge(m::E::\EE)
&=(\EE\doublewedge\EE')[E[\rett{m}\bullet]^2]
\end{align*}
and 
$[]\doublewedge[] = \bullet$.
Unfolding the above, we have that, for example:
\begin{align*}
&[m_k,E_k,m_{k-1},m_{k-2},E_{k-2},\cdots,m_1,E_1]\\
&\!\doublewedge[m_k,m_{k-1},E_{k-1},m_{k-2},\cdots,m_1]
= E_1^1[\rett{m_1^1}E_2^2[\cdots E_k^{k'}[\rett{m_k^{k'}\!}\bullet]\cdots]]
\end{align*}
where $k'=2-(k\mod 2)$.
\cutout{
Note that the configuration above is a closed-term configuration (i.e.\ there are only three components), and corresponds to evaluating $N$ terms in parallel. We will be using the symbol\ $\tred{t}[N]$\ for reductions between configurations of this form. 
%We will be calling this kind of semantics the \emph{$N$-parallel operational semantics}.

We will next relate the transition systems induced by composing compatible configurations semantically (via $\oslash$) and syntactically (via $\doublewedge$). }

We proceed to fleshing out the correspondence.
We observe that an $L$-configuration $\rho$ can be the final configuration of a trace just if all its components are $O$-configurations with empty evaluation stacks. On the other hand, for $C$-configurations, we need to reach $P$-configurations with terms (). Thus, we call an $N$-configuration $\rho$ \emph{final} if $\rho=(\vec\CC,\RR,\PP,\AA,S)$ and
either $\CC_i=([],-)$ for all $i$, or 
$\CC_i=([],())$ for all $i$.

Let us write $(\SS_1,\qweto_1,\FF_1)$ for the transition system induced from external composition, and $(\SS_2,\qweto_2,\FF_2)$ be the transition system derived from internal composition:
\begin{itemize}
\item $\SS_1=\{\rho\oslash\rho'\mid \rho\asymp\rho'\}$, $\FF_1=\{\rho\oslash\rho'\in\SS_1\mid\rho,\rho'\text{ final}\}$, and $\qweto_1$ the transition relation $\lto$ defined previously.
\item $\SS_2=\{(M_1\|\cdots\|M_N,\vec\RR,S)\mid (M_1\|\cdots\|M_N,\RR_1\uplus\RR_2,S)\text{ valid}\}$, $\FF_2=\{x\in\SS_2\mid x\text{ final}\}$, and $\qweto_2$ the transition relation $\pred{N}{}'$ defined above.
\end{itemize}
A relation $R\subseteq\SS_1\!\times\SS_2$ 
is called a \emph{bisimulation} if, for all $(x_1,x_2)\in\! R$:
\begin{itemize}
\item $x_1\in\FF_1$ iff $x_2\in\FF_2$,
\item if $x_1\qweto_1 x_1'$ then $x_2\qweto_2 x_2'$ and $(x_1',x_2')\in R$,
\item if $x_2\qweto_2 x_2'$ then $x_1\qweto_1 x_1'$ and $(x_1',x_2')\in R$.
\end{itemize}
\cutout{
where we write $\_^=$ for the reflexive closure operator. Here,
$\mapsto_{\eta}$ is defined by the following eta-rule:
\[
(\textsc{Eta})
\quad
{(M_1\|\cdots\|M\|\cdots\|M_N,\RR,S)\mapsto_{\eta}(M_1\|\cdots\|M'\|\cdots\|M_N,\RR\uplus(m'\mapsto \lambda x.mx),S)}
\]
for $m\in\dom(\RR)$, with $M'$ obtained from $M$ by replacing an arbitrary number of occurrences of $m$ inside it with $m'$.
Note that the addition of eta-rules as above does not affect the termination of a term evaluation apart from adding $\eta$-redexes to be reduced.}
Given  $(x_1,x_2)\in\SS_1\times\SS_2$, we say that $x_1$ and $x_2$ are \emph{bisimilar}, written $x_1\sim x_2$, if $(x_1,x_2)\in R$ for some bisimulation $R$.

\begin{lemma}\label{l:comp1}
Let $\rho\asymp\rho'$ be compatible $N$-configurations. Then, $(\rho\oslash\rho')\sim(\rho\doublewedge\rho')$.
\end{lemma}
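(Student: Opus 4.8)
The claim is that external composition (via $\oslash$) and internal composition (via $\doublewedge$) of two compatible $N$-configurations produce bisimilar states in the respective transition systems. The natural approach is to exhibit an explicit bisimulation relation $R\subseteq\SS_1\times\SS_2$ and then verify the three bisimulation clauses. The plan is to take
\[
R=\{\,(\rho_1\oslash\rho_2,\ \rho_1\doublewedge\rho_2)\ \mid\ \rho_1\asymp\rho_2\,\},
\]
i.e.\ $R$ relates exactly the externally-composed pairs to their syntactic counterparts, for all compatible pairs $\rho_1\asymp\rho_2$. The lemma then follows from the fact that $R$ is a bisimulation containing $(\rho\oslash\rho',\rho\doublewedge\rho')$. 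Note that $\doublewedge$ is well-defined on compatible pairs precisely because the store domains are disjoint (so $S_1\uplus S_2$ makes sense) and the evaluation stacks are compatible (so $\EE_i\doublewedge\EE_i'$ is defined), and one should first record that $\rho_1\doublewedge\rho_2\in\SS_2$, i.e.\ that the merged quadruple is valid, using the name-disjointness conditions in $\asymp$ (private names of each side fresh for the other, and the $\PP/\AA$ matching).

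First I would check the final-state clause: $\rho_1\oslash\rho_2\in\FF_1$ iff both $\rho_1,\rho_2$ are final, iff every component is $([],-)$ on the $L$-side and $([],())$ on the $C$-side (or all empty), which by the definition of $\doublewedge$ on empty stacks ($[]\doublewedge[]=\bullet$) corresponds exactly to $\rho_1\doublewedge\rho_2$ being $()\|\cdots\|()$, i.e.\ final in $\SS_2$. This is a routine unfolding. Then I would handle the forward simulation, doing a case analysis on which rule of $\oslash$ fires. For $\textsc{Int}_1$ and $\textsc{Int}_2$: an internal $\pred{N}{}$ step on one component involves a $\tred{t}{}$ step in some thread, which is a closed-term reduction under an evaluation context; this is matched by the corresponding $\tred{t}{}'$ step in $\SS_2$ inside the composed evaluation context $\EE_i\doublewedge\EE_i'$. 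The only subtlety is the fresh-name side-condition in the \textsc{Int} rules of $\oslash$ (names created internally must not clash with the other component's names), which is exactly what keeps the result a compatible pair, hence in the image of $R$; on the $\SS_2$ side the $\tred{t}{}'$ semantics creates the same fresh names (the $\lambda$-abstraction and dereference rules behave identically, modulo the tags $1,2$ which are inert for internal steps). For $\textsc{Call}$: a $(t,\call{m(v)})$ move fired jointly by $\rho_1$ and $\rho_2$ means one side does a $P$-call \textsc{(PQy)} (pushing $m::E::\EE$, refreshing $v$ to $v'$, adding $\eta$-expansions $m_j'\mapsto\lambda x.m_jx$) and the other a matching $O$-call \textsc{(OQy)} (pushing $m::\EE$, substituting the \emph{refreshed} $v'$ into $\RR(m)=\lambda x.M$). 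In $\SS_2$, this single external call/return handshake is precisely modelled by the extended rule $(E[m^iv],\vec\RR,S)\tred{t}{}'(E[\rett{m^i}M\{v'/x\}^{3-i}],\vec\RR',S)$: it creates exactly the same refreshed $v'$ and the same $\eta$-expanded repository entries, and the new $\rett{m^i}\bullet$ annotation is exactly the $m::E::\bullet$ that $\doublewedge$ produces, so the composed evaluation-context equation is preserved. \textsc{Retn} is dual, matched by the $(E[\rett{m^i}v],\vec\RR,S)\tred{t}{}'\cdots$ rule together with the \textsc{(PAy)}/\textsc{(OAy)} pair. Backward simulation is symmetric: each $\qweto_2$ step is either an ordinary $\tred{t}{}'$ redex (matched by $\textsc{Int}_1$ or $\textsc{Int}_2$, according to whether the redex sits in a $1$-tagged or $2$-tagged context position, i.e.\ which component owns it) or one of the two new rules involving $m^i$ with $m\notin\dom(\RR_i)$ resp.\ $\rett{m^i}v$ (matched by \textsc{Call} resp.\ \textsc{Retn}), and in each case the side-conditions line up.

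The main obstacle I expect is the bookkeeping that shows the $\doublewedge$-composed evaluation context stays in sync across a call/return handshake, together with verifying that the name-refreshing and $\eta$-expansion performed by the extended $\tred{t}{}'$ rules coincides \emph{exactly} with the composite effect of the two $\oslash$-rules \textsc{(PQy)}+\textsc{(OQy)} (resp.\ the return rules). In particular one must track which component ($1$ or $2$) owns $m$ versus the argument method names $m_j$, ensure the tags in $M^{3-i}$ and $v'^i$ are assigned to the right side, and confirm that the fresh names chosen are available on both sides — this is guaranteed by $\asymp$ (the private names of each side are fresh for the other, and $\phi(\PP,\AA)$ is computed from the shared $\PP,\AA$ so the two components agree on which names are fresh). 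The rest — the internal steps and the finality clause — is genuinely routine. Once $R$ is verified to be a bisimulation, since $(\rho\oslash\rho',\rho\doublewedge\rho')\in R$ whenever $\rho\asymp\rho'$, we get $(\rho\oslash\rho')\sim(\rho\doublewedge\rho')$, which is the statement.
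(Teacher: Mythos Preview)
Your proposal is correct and follows essentially the same approach as the paper: you take the same candidate bisimulation $R=\{(\rho_1\oslash\rho_2,\rho_1\doublewedge\rho_2)\mid\rho_1\asymp\rho_2\}$, verify the finality clause, and then handle forward and backward simulation by the same case analysis on \textsc{Int}$_1$/\textsc{Int}$_2$/\textsc{Call}/\textsc{Retn} (resp.\ on the shape of the $\tred{t}{}'$ redex), matching the cross-component call/return handshake against the extended rules for $m^iv$ and $\rett{m^i}v$. Your discussion of the name-refreshing and $\eta$-expansion bookkeeping is in fact more explicit than the paper's, which treats these details tersely.
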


\cutout{
In general, though, we can have the case where $M_1''$ is a value $v'$, which needs to be returned internally between $\rho_1$ and $\rho_2$ before the redex, which is present in some evaluation context in $\EE_1$ or $\EE_2$, is reached. Thus, a series of returns is necessary before the transition can be simulated.
Let us consider the most interesting case, where $M_1=mv'$ and $M_1'=M\{v'/x\}$, with $\RR(m)=\lambda x.M$ and $v'\in\Meths$.
The described reasoning gives us the following:
\begin{align*}
(\CC_1^1,S_1)&\xr{\ret{m_1(v_1)}}\cdots\xr{\ret{m_n(v_n)}}(\EE_1',E'[mv_n],\cdots)\\
&\xr{\call{m(v_{n+1})}}(m::E'::\EE_1',\RR_1',\PP_1',\AA_1',S_1)\\
(\CC_1^2,S_2)&\xr{\ret{m_1(v_1)}}\cdots\xr{\ret{m_n(v_n)}}(\EE_2',\cdots)\\
&\xr{\call{m(v_{n+1})}}(m::\EE_2',M\{v_{n+1}/x\},\RR_2',\PP_2',\AA_2',S_2)
\end{align*}
where $\PP_1'$ extends $\PP_1$ with the odd-indexed $v_i$'s,
$\AA_1'$ extends $\AA_1$ with the even-indexed $v_i$'s, and $\RR_1'=\RR_1\uplus\{(v_i\mapsto\lambda x.v_{i-1}x)\mid i\text{ odd}\}$, with $v_0=v'$; and dually for $\RR_2',\PP_2',\AA_2'$. 
Moreover, $E=(\EE_1'\doublewedge\EE_2')[E']$.
Writing $\rho_1'$ and $\rho_2'$ for the resulting $N$-configurations, we have $\rho_1\oslash\rho_2\to^*\rho_1'\oslash\rho_2'$.
In addition, $(M_1'\|M_2\|\cdots\|M_N,\RR',S')\mapsto_\eta\cdots\mapsto_\eta\rho_1'\doublewedge\rho_2'$ by $n+1$ $\eta$-expansions.
}

Recall we write $\bar h$ for the $O/P$ complement of the history $h$. We can now prove Theorem~\ref{t:comp}, which states that the behaviour of a library $L$ inside a context $C$ can be deduced by composing the semantics of $L$ and $C$.

\paragraph{Theorem~\ref{t:comp}}{\em%
Let $L:\Theta\to\Theta'$, $L':1\to\Theta,\Theta_1$ and $\Theta',\Theta_1\vdash M_1,\cdots,M_N:\unit$,
with $L$, $L'$ and $M_1;\cdots;M_N$ accessing pairwise disjoint parts of the store.
Then,
 $\link{L'\comp L}{(M_1\|\cdots\| M_N)}\,{\Downarrow}$ iff there is 
$h\in\sem{L}_N$ such that 
$\bar h\in\sem{\link{L'\comp-}{(M_1\|\cdots\| M_N)}}$.}

\begin{proof}
Let $C$ be the context $\link{L'\comp-}{(M_1\|\cdots\| M_N)}$, and suppose $(L)\redL^*(\epsilon,\RR_0,S_0)$ and
$(L')\redL^*(\epsilon,\RR_0',S_0')$ with $\dom(\RR_0)\cap\dom(\RR_0')=\dom(S_0)\cap\dom(S_0')=\emptyset$.
We set:
\begin{align*}
\rho_0 &=(([],-)\|\cdots\|([],-),\RR_0,(\emptyset,\Theta'),(\Theta,\emptyset),S_0)\\ \rho_0'&=(([],M_1)\|\cdots\|([],M_N),\RR_0',(\Theta,\emptyset),(\emptyset,\Theta'),S_0')
\end{align*}
We pick these as the initial $N$-configurations for  $\sem{L}_N$ and $\sem{C}$ respectively. Moreover,
we have that $(L'\comp L)\redL^*(\epsilon,\RR_0'',S_0'')$ where 
$\RR_0''=\{(m,(\RR_0\uplus\RR_0')(m)\{{!\vec r}/\vec m\})\mid m\in\dom(\RR_0\uplus\RR_0')\}$ and $S_0''=(S_0\uplus S_0')\{{!\vec r}/\vec m\}\uplus_\s\{(r_i,m_i)\mid i=1,\cdots,n\}$, assuming $\Theta=\{m_1,\cdots,m_n\}$ and $r_1,\cdots,r_n$ are fresh references of corresponding types.
Hence, the initial triple for $\sem{C[L]}$ is taken to be
$\phi_0=(([],M_1)\|\cdots\|([],M_N),\RR_0'',S_0'')$.
On the other hand,
$\rho_0'\doublewedge\rho_0=(([],M_1)\|\cdots\|([],M_N),\RR_0',\RR_0,S_0\uplus S_0')$ and, using also Lemma~\ref{lem:eta}, we have that $\phi_0\Downarrow$ iff $\rho_0'\doublewedge\rho_0\Downarrow$.

Then, for the forward direction of the claim, from 
$\phi_0\,{\Downarrow}$ we obtain that $\rho_0'\doublewedge\rho_0\,{\Downarrow}$. From the previous lemma, we have that so does $\rho_0'\oslash\rho_0$. From the latter reduction we obtain the required common history.
Conversely, suppose 
$h\in\sem{L}_N$ and $\bar h\in\sem{C}$. WLOG, assume that $\Meths(h)\cap(\dom(\RR_0)\cup\dom(\RR_0'))\subseteq\Theta\cup\Theta_1\cup\Theta'$ (we can appropriately alpha-covert $\RR_0$ and $\RR_0'$ for this).
Then, $\rho_0$ and $\rho_0'$ both produce $h$, with opposite polarities. By definition of the external composite reduction, we then have that $\rho_0'\oslash\rho_0$ reduces to some final state. By the previous lemma, we have that $\rho_0'\doublewedge\rho_0$ reduces to some final quadruple, which in turn implies that $\phi_0\,{\Downarrow}$, i.e.\
$\link{L'\comp L}{(M_1\|\cdots\| M_N)}\,{\Downarrow}$.
\end{proof}

%%% Local Variables: 
%%% mode: latex
%%% TeX-master: "holin"
%%% End: 

%\section{Proofs from Section~\ref{sec:comp}}

\subsection{Lemma~\ref{lem:eta}}
\newcommand\und[1]{\underline{#1}}
\newcommand\td{\tilde}
\newcommand{\sz}{\mathsf{size}}

We purpose to show that, for any legal $(M_1\|\cdots\|M_N,\RR_1,\RR_2,S)$,  $(M_1\|\cdots\|M_N,\RR_1,\RR_2,S)\Downarrow$ iff $(M_1\|\cdots\|M_N,\RR_1\cup\RR_2,S)\Downarrow$.

We prove something stronger. For any repository $\RR$ whose entries are of the form $(m,\lambda x.m'x)$, we define a directed graph ${\cal G}(\RR)$ where vertices are all methods appearing in $\RR$, and $(m,m')$ is a (directed) edge just if $\RR(m)=\lambda x.m'x$. 
In such a case, we call $\RR$ an \boldemph{expansion class} if ${\cal G}(\RR)$ is acyclic and all its vertices have at  most one outgoing edge.\ntnote{internal note: isn't the outgoing degree anyway always leq $1$?}
Moreover, given an expansion class $\RR$, we define the method-for-method substitution $\{\RR\}$ that assigns to each vertex $m$ of ${\cal G}(\RR)$ the (unique) leaf $m'$ such that there is a directed path from $m$ to $m'$ in
${\cal G}(\RR)$. Let us write ${\cal L}(\RR)$ for the set of leaves of ${\cal G}(\RR)$.
For any quadruple $\phi=(E_1[M_1]\|\cdots\|E_N[M_N],\RR_1,\RR_2,S)$ and expansion class $\RR\subseteq\RR_1\cup\RR_2$, we define the triple:
\begin{align*}
\phi^{\#\RR} &=(\und{E_1[M_1]}\|\cdots\|\und{E_N[M_N]},\RR_1\cup\RR_2,S)\{\RR\}
\\
&=(\und{E_1[M_1]}\{\RR\}\|\cdots\|\und{E_N[M_N]}\{\RR\},(\RR_1\cup\RR_2)\{\RR\},S\{\RR\})
\end{align*}
where 
$\RR'\{\RR\}=\{(m,\RR'(m)\{\RR\})\ |\ m\in\dom(\RR'\setminus\RR)\cup {\cal L}(\RR)\}$, 
$S\{\RR\}=(S\upharpoonright\Refs_\tint)\cup\{(r,S(r)\{\RR\})\ |\ r\in\dom(S)\setminus\Refs_\tint\}$, and
$\underline{E[M]}$ is the term obtained from $E[M]$ by removing all tagging.

We next define a notion of indexed bisimulation between the transition systems produced from quadruples and triples respectively. Given an expansion class $\RR$, a relation $R_\RR$ between quadruples and triples is called an \emph{$\RR$-bisimulation} if, whenever $\phi_1 R_\RR\phi_2$:
\begin{itemize}
\item $\phi_1$ final implies $\phi_2$ final
\item $\phi_2$ final implies $\phi_2\Downarrow$
\item $\phi_1\pred{N}{}'\phi_1'$ implies $\phi_2\pred{N}{}^=\phi_2'$ and $\phi_1' R_{\RR'} \phi_2'$ for some expansion class $\RR'\supseteq\RR$
\item $\phi_2\pred{N}{}\phi_2'$ implies $\phi_1\pred{N}{}'^*\phi_1'$ and $\phi_1' R_{\RR'} \phi_2'$ for some expansion class $\RR'\supseteq\RR$.
\end{itemize}
Thus, Lemma~\ref{lem:eta} directly follows from the next result.

\begin{lemma}
For all expansion classes $\RR$, the relation $R_\RR = $
\[
\{
(\phi,\phi^{\#\RR}\!)\mid \phi=({E_1[M_1]}\|\cdots\|E_N[M_N],\vec\RR,S)\text{ legal }\land\RR\subseteq\RR_1\cup\RR_2
\}
\]
is a bisimulation.
\end{lemma}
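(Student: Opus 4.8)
The plan is to verify that $R_\RR$ is a bisimulation by checking the four clauses in the definition of $\RR$-bisimulation, for an arbitrary legal quadruple $\phi=(E_1[M_1]\|\cdots\|E_N[M_N],\RR_1,\RR_2,S)$ and expansion class $\RR\subseteq\RR_1\cup\RR_2$, paired with $\phi^{\#\RR}$. The first two clauses are immediate: if $\phi$ is final then all its $M_i$ are $()$, and since $\underline{\,\cdot\,}\{\RR\}$ fixes $()$, the triple $\phi^{\#\RR}$ is final; moreover a final triple trivially converges, so the second clause holds vacuously-usefully. The substance is in the two simulation clauses, which I would prove by a case analysis on the reduction rule used, thread by thread (parallel composition is handled by the congruence rule $(K_N)$, so it suffices to reason about a single active thread).

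For the forward clause ($\phi\pred{N}{}'\phi'$ implies $\phi^{\#\RR}\pred{N}{}^= (\phi')^{\#\RR'}$ for some $\RR'\supseteq\RR$), I would split the reduction rules of $\pred{N}{}'$ into three groups. (i) The ``ordinary'' rules — arithmetic, $\tid$, projection, conditional, $\mathsf{let}$-of-value, dereference, integer/method assignment, the tagged $\beta$-rule $E[m^iv]\to E[M\{v/x\}^i]$ when $\RR_i(m)=\lambda x.M$, and the tagged $\lambda$-rule $E[\lambda^ix.M]\to E[m^i]$ — each of these is simulated by the corresponding untagged rule, since erasing tags and applying $\{\RR\}$ commutes with the rule (here one uses that $\{\RR\}$ leaves $\mathcal L(\RR)$ pointwise fixed and that the newly created $m$ in the $\lambda$-rule can be chosen outside $\dom(\RR)$, so $\RR'=\RR$ works). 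For the $\lambda$-rule one must also note $(m\mapsto\lambda x.M)\{\RR\}$ is added consistently to both sides. (ii) The cross-tag call rule $E[m^iv]\to E[\rett{m^i}M\{v'/x\}^{3-i}]$ with $\RR_{3-i}(m)=\lambda x.M$ and $v'=v\{m_j'/m_j\}$, $\vec\RR'=\vec\RR\uplus_i\{m_j'\mapsto\lambda y.m_jy\}$: this is the key case. On the triple side there is no return marker and no refreshing, so $\underline{E[\rett{m^i}M\{v'/x\}^{3-i}]}\{\RR'\}$ must equal $\underline{E[M\{v/x\}]}\{\RR\}$ — which it does, because $\RR'=\RR\uplus\{m_j'\mapsto\lambda y.m_jy\}$ is still an expansion class (the new vertices $m_j'$ are fresh, each with a single outgoing edge to $m_j$, creating no cycle) and $\{\RR'\}$ sends each $m_j'$ to the same leaf as $m_j$; hence after erasure the $v'$ collapses back to (the $\{\RR\}$-image of) $v$. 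So the triple makes \emph{zero} steps here — this is exactly where the reflexive closure $\pred{N}{}^=$ is needed. (iii) The return-marker rules $E[\rett{m^i}v]\to E[v'^i]$ and the companion: these are again matched by zero triple-steps, since erasing $\rett{m^i}\bullet$ is the identity and the refreshing collapses under $\{\RR'\}$, with $\RR'$ extended by the new $\eta$-expansion entries exactly as above. The $\mathsf{let}$-of-value rule is ordinary. In every subcase the store side matches because $\{\RR\}$ acts on $S$ componentwise and only non-integer references carry method names.

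For the backward clause ($\phi^{\#\RR}\pred{N}{}\psi'$ implies $\phi\pred{N}{}'^*\phi'$ with $\phi'\,R_{\RR'}\,\psi'$), I would argue that any redex fired in $\phi^{\#\RR}$ is the $\{\RR\}$-erasure of a redex in $\phi$, possibly \emph{after} first performing in $\phi$ a bounded number of administrative steps — namely firing any pending $\rett{m^i}v$ return markers and any cross-tag $\beta$-steps that the tagged term interposes before exposing the same redex. Concretely, one shows that from $\phi$ one can reach, by $\pred{N}{}'^*$, a configuration whose tagged term in the active thread has its redex in the ``same place'' as in $\phi^{\#\RR}$, and that this sequence of steps does not change the erasure-under-$\{\RR'\}$; then the matching real step is fired, landing in some $\phi'$ with $(\phi')^{\#\RR'}=\psi'$. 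The main obstacle I expect is precisely this clause: making rigorous the claim that the extra tagging/return-marker machinery in the quadruple semantics only ever interposes finitely many $\eta$-administrative reductions (no divergence is introduced), and that these reductions are ``invisible'' after erasure. This is intuitively the content of the remark in the text that the novelty ``merely adds fresh method names by $\eta$-expansions and does not affect termination,'' but turning it into a clean induction requires a size measure on tagged terms — e.g. counting return markers plus cross-tag applications — that strictly decreases along administrative steps, so they cannot loop. Once that measure is in place, the backward clause follows by the same case analysis as the forward one, read in reverse, and the lemma is complete; Lemma~\ref{lem:eta} then drops out by taking $\RR=\emptyset$ (so $\{\RR\}$ and the erasure are the only operations, and $\phi^{\#\emptyset}=(\underline{M_1}\|\cdots\|\underline{M_N},\RR_1\cup\RR_2,S)$), observing that bisimilar configurations converge together.
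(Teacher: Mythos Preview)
Your overall strategy matches the paper's: a forward case analysis on the primed reduction rules, and a backward direction controlled by a size measure bounding the administrative work the quadruple must perform before exposing the triple's redex. The substantive gap is your choice of measure. ``Return markers plus cross-tag applications'' does not decrease along administrative steps: a same-tag $\eta$-step $\tilde m^i v \to \tilde m'^{\,i} v$ (with $\tilde m\in\dom(\RR)\cap\dom(\RR_i)$ and $\RR_i(\tilde m)=\lambda x.\tilde m' x$) adds no return marker and, when $\tilde m'\in\dom(\RR_i)$ as well, no cross-tag application either; a cross-tag $\eta$-step $\tilde m^i v \to \rett{\tilde m^i}(\tilde m' v')^{3-i}$ actually \emph{adds} a return marker. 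So your induction stalls or even goes the wrong way. The paper's measure instead weights each occurrence of a method $m\in\dom(\RR)$ in call position by $2|m|_\RR$, twice the length of the path from $m$ to its leaf in ${\cal G}(\RR)$, and adds $1$ per return marker with name in $\dom(\RR)$. Then a same-tag $\eta$-step drops the measure by at least $2$, a cross-tag $\eta$-step by $2|\tilde m|_\RR - (2|\tilde m'|_\RR + 1) = 1$, and popping a return marker by $1$; acyclicity of ${\cal G}(\RR)$ is exactly what makes $|m|_\RR$ finite and the induction well-founded.

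There is also a confusion in your forward direction: you claim the cross-tag call (your case (ii)) is always matched by zero triple steps, but this holds only when $m\in\dom(\RR)$, in which case $M=m''x$ for some $m''$ and the erasure collapses back to $\und{E[mv]}\{\RR\}$. When $m\notin\dom(\RR)$ the triple fires its $\beta$-rule once, reaching $\und{E}[M\{v/x\}]\{\RR\}=(\phi')^{\#\RR'}$. The same split on whether $m\in\dom(\RR)$ is needed for the same-tag $\beta$-rule in your case (i). Finally, the second finality clause as written in the paper is almost certainly a typo for ``$\phi_2$ final implies $\phi_1{\Downarrow}$'' (the paper's proof argues precisely this): when $\phi^{\#\RR}$ is final each thread of $\phi$ is a tower $\rett{m_1^{i_1}}\cdots\rett{m_k^{i_k}}()$, and one must check these unwind to $()$; this is not entirely vacuous.
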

\begin{proof}
Suppose $\phi R_{\RR}\phi^{\#\RR}$.
We note that finality conditions are satisfied: if $\phi$ is final then so is $\phi^{\#\RR}$; while if $\phi^{\#\RR}$ is final then all its contexts are from the grammar:
\[
E'\,::=\,\bullet\mid\rett{m^i}E'
\]
so $\phi\Downarrow$ by acyclicity of ${\cal G}(\RR)$.
\\
Suppose now $\phi\pred{N}{}\phi'$, say due to $(E_1[M_1],\RR_1,\RR_2,S)\tred{1}{}'(E_1'[M_1'],\RR_1',\RR_2',S')$. In case the reduction is not a function call or return, then it can be clearly simulated by $\phi^{\#\RR}$. Otherwise, suppose:
\begin{itemize}
\item $(E_1[m^iv],\vec\RR,S)\tred{1}{}'(E_1[M\{v/x\}^i],\vec\RR,S)$. If $m\notin\dom(\RR)$ then, writing $\RR_{12}$ for $\RR_1\cup\RR_2$, the above can be simulated by $(\underline{E_1}[mv],\RR_{12},S)\{\RR\}\tred{1}{}(\underline{E_1}[M\{v/x\}],\RR_{12},S)\{\RR\}$. If, on the other hand, $m\in\dom(\RR)$, suppose $\RR_i(m)=\lambda x.m'x$, then $M=m'x$ and $m\{\RR\}=m'\{\RR\}$ so we have:
\[
\underline{E_1[M\{v/x\}^i]}\{\RR\}=\underline{E_1[(m'v)^i]}\{\RR\}=
\underline{E_1[(mv)^i]}\{\RR\}
\]
and $E_1[(mv)^i]=E_1[m^iv]$ by the way the semantics was defined,
so $\phi'^{\#\RR}=\phi^{\#\RR}$.
\cutout{
\item $(E_1[m^iv],\vec\RR,S)\tred{1}{}'(E_1[\rett{m^i}M\{v/x\}^{3-i}],\vec\RR,S)$,
with $\RR_{3-i}(m)=\lambda x.M$ and $v\notin\Meths$.
If $m\notin\dom(\RR)$ then, as above, the reduction can be simulated by $\phi^{\#\RR}$. If $\RR(m)=\lambda x.m'x$ then 
\[
\underline{E_1[\rett{m^i}M\{v/x\}^{3-i}]}\{\RR\}=
\underline{E_1[m'^{3-i}v]}\{\RR\}=
\underline{E_1[m^iv]}\{\RR\}
\]
so $\phi'^{\#\RR}=\phi^{\#\RR}$.}
\item 
$(E_1[m^iv],\vec\RR,S)\tred{1}{}'(E_1[\rett{m^i}M\{v'/x\}^{3-i}],\vec\RR',S)$,
with $\RR_{3-i}(m)=\lambda x.M$, $\Meths(v)=\{m_1,\cdots,m_k\}$, $v'=\{\vec m'/\vec m\}$ and $\vec\RR'=\vec\RR\uplus_i\{m_j'\mapsto \lambda x.m_jx\mid 1\leq j\leq k\}$. 
Let $\RR'=\RR\uplus \{m_j'\mapsto \lambda x.m_jx\mid 1\leq j\leq k\} \subseteq\RR_1'\cup\RR_2'$.
If $m\notin\dom(\RR)$ then 
$(\und{E_1}[mv],\RR_{12},S)\{\RR\}\tred{1}{}(\und{E_1}[M\{v/x\}],\RR_{12},S)\{\RR\}$, and we have:
\begin{align*}
\underline{E_1[\rett{m^i}M\{v'/x\}^{3-i}]}\{\RR'\} &=
\underline{E_1}[M\{v'/x\}]\{\RR'\}
\\
&=
\underline{E_1}[M\{v/x\}]\{\RR\}
\end{align*}
Moreover, $\RR_{12}\{\RR\}=(\RR_1'\cup\RR_2')\{\RR'\}$ and $S\{\RR\}=S\{\RR'\}$, so $\phi'^{\#\RR}= 
(\und{E_1}[M\{v/x\}],\RR_{12},S)\{\RR\}$.
\\
On the other hand, if $\RR(m)=\lambda x.m''x$ then:
\begin{align*}
\und{E[\rett{m^i}M\{v'/x\}^{3-i}]}\{\RR'\} &=
\und{E}[m''v']\{\RR'\}\\ & = \und{E}[m''v]\{\RR\} = \und{E}[mv]\{\RR\}  
\end{align*}
so $\phi^{\#\RR}=\phi'^{\#\RR'}$.
\item Finally, the cases for method-return reductions are treated similarly as above. 
\end{itemize}
Suppose now $\phi^{\#\RR}\pred{N}{}\phi'$, 
where recall that we write $\phi$ as $({E_1[M_1]}\|\cdots\|E_N[M_N],\vec\RR,S)$.
We show by induction on $\sz_\RR(E_1[M_1],\cdots,E_N[M_N])$ that 
$\phi\pred{N}{}'\phi''$ and $\phi' R_{\RR'}\phi''$ for some $\RR'\supseteq\RR$.
The size-function we use measures the length of ${\cal G}(\RR)$-paths that appear inside its arguments:
\begin{align*}
\sz_\RR(E_1[M_1],\cdots,E_N[M_N]) 
& =
\sz_\RR(E_1[M_1])+\cdots+\sz_\RR(E_N[M_N])\\
\sz_\RR(E[M]) &=\sum_{m\in X_1}2|m|_\RR+\sum_{m\in X_2}1
\end{align*}
where $X_1$ is the multiset containing all occurrences of methods $m\in\dom(\RR)$ inside $E[M]$ in call position (e.g.\ $mM'$), and $X_2$ contains all occurrences of methods $m\in\dom(\RR)$ inside $E[M]$ in return position (i.e.\ $\rett{m^i}\cdots$).
We write $|m|_\RR$ for the length of the unique directed path from $m$ to a leaf in ${\cal G}(\RR)$. The fact that $X_1,X_2$ are multisets reflects that we count all occurrences of $m$ in call/return positions.
Suppose WLOG that the reduction to $\phi'$ 
is due to some 
$(\und{E_1[M_1]},\RR_{12},S)\{\RR\}\tred{1}{}(E'[M'],\RR',S')$. If the reduction happens inside $\und{M_1}\{\RR\}$ (this case also encompasses the base case of the induction) then the only case we need to examine is that of the reduction being a method call. In such a case, suppose we have $\und{E_1[M_1]\{\RR\}}=E[mv]$, $E'=E$, $M'=M\{v/x\}$ and $\RR_{12}\{\RR\}(m)=\lambda x.M$.
Then, $E_1[M_1]=\tilde E[\tilde m^i\tilde v]$ for some $\tilde E,\tilde m,\tilde v$ such that $\tilde m\{\RR\}=m$, 
$\tilde v\{\RR\}=v$ and $\und{\tilde E}\{\RR\}=E$. 
If $m\not=\td m$ then, supposing $\RR(\td m)=\lambda x.\td m'x$ we have the following cases:
\begin{itemize}
\item $(\td E[\tilde m^i\tilde v],\vec\RR,S)\tred{1}{}'(\td E[\td m'^i\td v],\vec\RR,S)=\phi_1''$
%\item $(\td E[\tilde m^i\tilde v],\vec\RR,S)\tred{1}{}'(\td E[\rett{\td m^i}(\td m'\td v)^{3-i}],\vec\RR,S)=\phi_1''$
\item $(\td E[\tilde m^i\tilde v],\vec\RR,S)\tred{1}{}'(\td E[\rett{\td m^i}(\td m'v')^{3-i}],\vec\RR',S)=\phi_1''$, with $\vec\RR'=\vec\RR\uplus_{3-i}\{m'_j\mapsto\lambda x.m_j x\mid 1\leq j\leq k\}$, etc.
\end{itemize}
Let $\phi''$ be the extension of $\phi_1''$ to an $N$-quadruple by using the remaining $E_i[M_i]$'s of $\phi$, so that $\phi\pred{N}{}'\phi''$.
In the first case above we have that $\phi''^{\#\RR}=\phi$, and in the latter that $\phi''^{\#\RR'}=\phi$ (with $\RR'=\RR\uplus\{m'_j\mapsto\lambda x.m_j x\mid 1\leq j\leq k\}$), and we appeal to the IH.
\\
Suppose now that $\td m=m$ and $\RR_{12}(m)=\lambda x.\td M$. Then, one of the following is the case:
\begin{itemize}
\item $(\td E[\tilde m^i\tilde v],\vec\RR,S),\vec\RR,S)\tred{1}{}'(\td E[\td M\{\td v/x\}^i],\vec\RR,S)=\phi_1''$
%\item $(\td E[\tilde m^i\tilde v],\vec\RR,S)\tred{1}{}'(\td E[\rett{\td m^i}\td M\{\td v/x\}^{3-i}],\vec\RR,S)=\phi_1''$
\item $(\td E[\tilde m^i\tilde v],\vec\RR,S)\tred{1}{}'(\td E[\rett{\td m^i}\td M\{v'/x\}^{3-i}],\vec\RR',S)=\phi_1''$, with $\vec\RR'=\vec\RR\uplus_{3-i}\{m'_j\mapsto\lambda x.m_j x\mid 1\leq j\leq k\}$, etc. 
\end{itemize}
Extending $\phi_1''$ to $\phi''$ as above,
in the former case we then have that $\phi''^{\#\RR}=\phi'$, and in the latter that $\phi''^{\#\RR'}=\phi'$, as required.
\\
Finally, let us suppose that $M_1$ is some value $v$.
Then, we can write $E_1$ as $E_1=E_2[E']$, with $E'$ coming from the grammar
$E'\,::=\,\bullet\mid\rett{m^i}E'$
and $E_2$ not being of the form $E''[\rett{m^i}\bullet]$. Observe that $\und{E_1}=\und{E_2}$. If $E'=\bullet$ then by a case analysis on $E_1$ we can see that $\phi^{\#\RR}$ can simulate the reduction. Otherwise, $(E_2[E'[v]],\vec\RR,S)\tred{1}{}'(E_2[E''[v'^i]],\vec\RR',S)$ whereby 
$E'=E''[\rett{m^i}\bullet]$ and
$\vec\RR'=\vec\RR\uplus_{3-i}\{m'_j\mapsto\lambda x.m_jx\mid 1\leq j\leq k\}$, etc. We have that
\[
\phi_1''=(\und{E_2[E''[v'^i]]},\vec\RR',S)\{\RR'\}=(\und{E_2[E'[v]]},\vec\RR,S)\{\RR\}
\]
and hence, extending $\phi_1''$ to $\phi''$, we have $\phi''^{\#\RR'}=\phi^{\#\RR}$. We can now appeal to the IH.
\end{proof}

%This concludes the proof of Lemma~\ref{lem:eta}. Lemma~\ref{lem:eta2} can then be proven in a similar fashion.

\subsection{Lemma~\ref{l:comp1}}
\emph{%
Let $\rho\asymp\rho'$ be compatible $N$-configurations. Then, $(\rho\oslash\rho')\sim(\rho\doublewedge\rho')$.}
\smallskip

\noindent
We prove that the relation
$
R =\{ (\rho_1\oslash\rho_2,\rho_1\doublewedge\rho_2)\mid \rho_1\asymp\rho_2\}
$
 is a bisimulation. Let us suppose that $(\rho_1\oslash\rho_2,\rho_1\doublewedge\rho_2)\in R$.
\begin{itemize}
\item 
Suppose $\rho_1\oslash\rho_2\qweto_1\rho_1'\oslash\rho_2'$. If the transition is due to ({\sc Int1}) then $\rho_2=\rho_2'$ and we can see that $\rho_1\doublewedge\rho_2\pred{N}{}'\rho_1'\doublewedge\rho_2$. Similarly if the transition is due to ({\sc Int2}). Suppose now we used instead ({\sc Call}), e.g.\ $\rho_1\pred{N}{(1,\call{m(v)})}\rho_1'$ and
$\rho_2\pred{N}{(1,\call{m(v)})}\rho_2'$, and let us consider the case where $v\in\Meths$ (the other case is simpler). Then, assuming $\rho_1=(\CC_1^1\|\cdots,\RR_1,\PP_1,\AA_1,S_1)$ and $\rho_2=(\CC_1^2\|\cdots,\RR_2,\PP_2,\AA_2,S_2)$, we have that either of the following scenarios holds, for some $\texttt{x}\in\{\k,\l\}$:
$\CC_1^1=(\EE_1,E[mm'])$, $\CC_1^2=(\EE_2,-)$ and
\begin{align*}
&
(\EE_1,E[mm'],\RR_1,\PP_1,\AA_1,S_1)\tred{1}{\call{m(v)}}\\ 
&\hspace{7.5mm}
(m::E::\EE_1,\RR_1\uplus(v\mapsto\lambda x.m'x),\PP_1\cup_{\mathtt{x}}\{v\},\AA_1,S_1)\\
&(\EE_2,-,\RR_2,\PP_2,\AA_2,S_2)\tred{1}{\call{m(v)}}\\ 
&\hspace{23mm}
(m::\EE_2,M\{v/x\},\RR_2,\PP_1,\AA_1\cup_{\mathtt{x}}\{v\},S_2)
\end{align*}
or its dual, where $\rho_2$ contains the code initiating the call. 
Focusing WLOG in the former case and setting $S=S_1\uplus S_2$:
\begin{align*}
\rho_1\doublewedge\rho_2&=((\EE_1\doublewedge\EE_2)[E[m^1m']]\|\cdots,\RR_1,\RR_2,S)\\
&\qweto_2
((\EE_1\doublewedge\EE_2)[E[\rett{m^1}M\{v/x\}^2]]\|\cdots,\RR_1',\RR_2,S)\\&=\rho_1'\doublewedge\rho_2'\quad(\RR_1'=\RR_1\uplus(v\mapsto\lambda x.m'x))
\end{align*}
The case for ({\sc Retn}) is treated similarly.
\item 
Suppose $\rho_1\doublewedge\rho_2=(E[M_1]\|M_2\|\cdots\|M_N,\vec\RR,S)\qweto_2 (E[M_1']\|$ $M_2\|\cdots\|M_N,\vec\RR',S')$ and let 
$\rho_1=((\EE_1,M_1'')\|\cdots,\RR_1,\PP_1,\AA_1,S_1)$
and 
$\rho_2=((\EE_2,-)\|\cdots,\RR_2,\PP_2,\AA_2,S_2)$, where $(\EE_1\doublewedge\EE_2)[M_1'']=E[M_1]$. If the redex $M_1$ is not of the forms  $M_1=m^1v$ or $M_1=\rett{m^1}v$, with $m\in\dom(\RR_2)$, then the reduction can clearly be simulated by $\rho_1\oslash\rho_2$ (internally, by $\rho_1$).
Otherwise, similarly as above, the reduction can be simulated by a mutual call/return of $m$.
\end{itemize}
Finally, it is clear that
$\rho_1\oslash\rho_2$ is final iff $\rho_1\doublewedge\rho_2$ is final.
\qed

\cutout{
\subsection{Lemma~\ref{lem:bisim2}}
{\em%
Let $\rho\asymp_\Pi^w\rho'$ be compatible $N$-configurations. Then, $(\rho\otimes_\Pi^w\rho')\sim(\rho\doublewedge_\Pi^w\rho')$.}
\smallskip

\noindent
We prove that the relation
$
R =\{ (\rho_1\otimes_\Pi^w\rho_2,\rho_1\doublewedge_\Pi^w\rho_2)\mid \rho_1\asymp_\Pi^w \rho_2\}
$
 is a bisimulation. Let us suppose that $(\rho_1\otimes_\Pi^w\rho_2,\rho_1\doublewedge_\Pi^w\rho_2)\in R$. 
%The base cases and those involving internal calls/returns are treated like in Lemma~\ref{l:comp1}. We focus on the rest of the cases.
\begin{itemize}
%\item 
%If $\rho_1\otimes_\Pi^w\rho_2$ is final then clearly $\rho_1\doublewedge^w_\Pi\rho_2$ is final as well.
%\item 
%If $\rho_1\doublewedge^w_\Pi\rho_2$ is final then the evaluation stacks in $\rho_1,\rho_2$ must only contain base evaluation contexts of the form $\bullet$, and their terms must be (). By compatibility, we can therefore reduce $\rho_1\otimes_\Pi^w\rho_2$ to a final compound configuration by triggering a series of returns.
\item 
Let $\rho_1\otimes_\Pi^w\rho_2\xr{(t,x)}\rho_1'\otimes_{\Pi'}^{w'}\rho_2'$ with the transition being due to ({\sc XCall}$_1$), e.g.\
$\rho_1\pred{N}{(1,\call{m(v)})}\rho_1'$ and $\rho_2'=\rho_2$, $w'=1+_1w$ and $\Pi'=\Pi\uplus_1\{v\}$, 
given WLOG that $v\in\Meths$ (so $v\notin\Meths(\rho_2)$).
Then, assuming $\rho_1=(\CC_1^1\|\cdots,\RR_1,\PP_1,\AA_1,S_1)$, we have that one of the following holds, for some $\texttt{x}\in\{\k,\l\}$:
\begin{align*}
&\CC_1^1=(\EE_1,E[mm'])\ \text{ and }\ (\CC_1^1,\RR_1,\PP_1,\AA_1,S_1)\xr{\call{m(v)}}\\
&
\qquad(m::E::\EE_1,-,\RR_1\uplus(v\mapsto\lambda x.m'x),\PP_1\cup_{\mathtt{x}}\{v\},\AA_1,S_1)\\
&\CC_1^1=(\EE_1,-)\ \text{ and }\ (\CC_1^1,\RR_1,\PP_1,\AA_1,S_1)\xr{\call{m(v)}}\\
&\qquad\qquad\qquad\qquad\qquad
(m^1::\EE_1,mv,\RR_1,\PP_1,\AA_1\cup_{\mathtt{x}}\{v\},S_1)
\end{align*}
In the former case, if $\rho_2=((\EE_2,-)\|\cdots,\RR_2,\PP_2,\AA_2,S_2)$ with $\EE_1\doublewedge^{w_1}\EE_2=(E',\EE)$, we get:
\begin{align*}
&\rho_1\doublewedge^w_\Pi\rho_2=((\EE,E'[E[mm']^1])\|\cdots,\RR_1,\RR_2,\PP,\AA,S)\\
&\pred{N}{(1,\call{m(v)})}{\!\!\!\!\!\!}'\\
&(m^1\!::E'[E^1]::\EE,-)\|\cdots,\RR_1\uplus(u\mapsto\lambda x.m'x),\RR_2,\PP'\!,\AA,S)
\end{align*}
with $\PP,\AA$ as in the definition of composition and $\PP'=\PP\cup_{\tt x}\{v\}$,
and the latter $N$-configuration equals $\rho_1'\doublewedge^{w'}_{\Pi'}\rho_2$.
The other case is treated in the same manner, and we work similarly 
for ({\sc Retn}$_1$). 
\item On the other hand, if the transition is due to 
({\sc Call}) or ({\sc Retn}) then we work as in the proof of Lemma~\ref{l:comp1}.
\item 
Suppose $\rho_1\doublewedge^w_\Pi\rho_2=(\CC_1\|\cdots,\vec \RR,\PP,\AA,S)\pred{N}{(1,\call{m(v)})}{}\!\!\!\!\!\!\!\!\!'$ $(\CC_1'\|\cdots,\vec\RR',\PP',\AA',S)$.
Then, assuming WLOG that $v\in\Meths$, one of the following must be the case,
for some $\texttt{x}\in\{\k,\l\}$ and $i\in\{1,2\}$:
\begin{align*}
&\CC_1=(\EE,E[m^im'])\ \text{ and }\ (\CC_1,\vec\RR,\PP,\AA,S)\xr{\call{m(v)}}{\!\!}'\\
&\qquad\qquad\quad(m^i\!::E::\EE,\vec\RR\uplus_i(v\mapsto\lambda x.m'x),\PP\cup_{\mathtt{x}}\{v\},\AA,S)\\
&\CC_1(\EE,-)\ \text{ and }\ (\CC_1,\vec\RR,\PP,\AA,S)\xr{\call{m(v)}}{\!\!}'\\
&\qquad\qquad\qquad\qquad\qquad(m::\EE,M\{v/x\}^i,\vec\RR,\PP,\AA\cup_{\mathtt{x}}\{v\},S)
\end{align*}
We only examine the former case, as the latter one is similar, and suppose that $i=1$.
Taking
$\rho_j=(\CC_1^j\|\cdots,\RR_j,\PP_j,\AA_j,S_i)$, 
for $j=1,2$, we have that $(\CC_1^1,\CC_1^2)=((\EE_1,E'[mm'],(\EE_2,-))$,
for some $E,\EE_1,\EE_2$ such that
%(the other case is entirely dual). 
$\EE_1\doublewedge^{w_1}\EE_2=(E'',\EE)$ and $E=E''[E'^1]$. Moreover, taking 
$\RR_1'=\RR_1\uplus(v\mapsto \lambda x.m'x)$, $\PP_1'=\PP_1\uplus_{\tt x}\{v\}$, 
$w'=1+_1w$ and $\Pi'=\Pi\uplus\{v\}$, 
$\rho_1\otimes^w_\Pi\rho_2\xr{(1,\call{m(v)})}$
\[
((m::E'::\EE_1,-)\|\cdots,\RR_1',\PP_1',\AA_1,S_1)\otimes^{w'}_{\Pi'}\rho_2=\rho_1'\otimes^{w'}_{\Pi'}\rho_2
\]
and $\rho_1'\doublewedge^{w'}_{\Pi'}\rho_2=(\CC_1'\|\cdots,\vec\RR',\PP',\AA',S)$ as required.
The case for return transitions is similar.
\item On the other hand, if the transition out of $\rho_1\doublewedge^w_\Pi\rho_2$ does not have a label then we work as in the proof of Lemma~\ref{l:comp1}.
\end{itemize}
Moreover, by definition of syntactic composition, $\rho_1\otimes^w_\Pi\rho_2$ is final iff 
$\rho_1\doublewedge^w_\Pi\rho_2$ is.
%\end{proof}
\qed

\subsection{Lemma~\ref{lem:sched}}

We show that,
for any compatible $N$-configurations $\rho_1\asymp_\Pi^w\rho_2$ and history suffix $s$, 
$(\rho_1\otimes_\Pi^w\rho_2)\Downarrow s$ iff:
\[
  \exists s_1,s_2,\sigma.\
\rho_1\Downarrow s_1\land\rho_2\Downarrow s_2\land s=s_1\doublewedge^\sigma_{\Pi,\Rho}s_2
\]
where $\Rho$ is computed from $\rho_1,\rho_2$ and $\Pi$ as in Section~\ref{sec:comp2}.

The left-to-right direction follows from straightforward induction on the length of the reduction that produces $s$.
For the right-to-left direction, we do induction on the length of $\sigma$. If $\sigma=\epsilon$ then $s_1=s_2=s=\epsilon$. Otherwise, we do a case analysis on the first element of $\sigma$. We only look at the most interesting subcase, namely of $\sigma=0\sigma'$. Then, for some $m\in\Rho$:
\[
s_1=(t,\call{m(v)})s_1'\qquad
s_2=(t,\call{m(v)})s_2'
\]
By $\rho_i\Downarrow s_i$ and $\rho_1\asymp_\Pi^w\rho_2$ we have that $\rho_1\otimes^w_\Pi\rho_2\lto\rho_1'\otimes^{w'}_\Pi\rho_2$, where $w'=0+_tw$ and 
$\rho'_1\asymp_\Pi^{w'}\rho_2'$. Also, $\rho'_i\Downarrow s_i'$ and $s=s_1'\doublewedge^{\sigma'}_{\Pi,\Rho'}s_2'$ so, by IH, $(\rho_1'\otimes_\Pi^{w'}\rho_2')\Downarrow s$.
\qed

}
%%% Local Variables: 
%%% mode: latex
%%% TeX-master: "holin"
%%% End: 

% !TEX root =  holin.tex

\section{General compositionality}\label{sec:comp2}

This  compositionality result will  allow us to compose histories of component libraries in order to obtain those of their composite library.
Let  $L_1:\Theta_1\to\Theta_2$ and $L_2:\Theta_1'\to\Theta_2'$. The 
semantic
composition will be guided by two sets of names $\Pi,\Rho$.
$\Pi$ contains method names that are shared between by the respective libraries and their context. Thus
$\Pi\supseteq\Theta_1\cup\Theta_1'\cup\Theta_2\cup\Theta_2'$.
The names in $\Rho$, on the other hand, will be used for private communication between $L_1$ and $L_2$.
Consequently,  $\Pi\cap\Rho$ consists of names that can be used both for internal communication between $L_1$ and $L_2$, and   for contextual interactions, i.e.
$\Pi\cap\Rho=(\Theta_1\cup\Theta_1')\cap(\Theta_2\cup\Theta_2')$.

Given $h_i\in\sem{L_i} (i=1,2)$, we define the \emph{composition} of $h_1$ and $h_2$, written $h_1\doublewedge^{\sigma}_{\Pi,\Rho}h_2$, as a partial operation depending 
on $\Pi,\Rho$ and an additional parameter $\sigma\in\{0,1,2\}^*$ which we call a \emph{scheduler}.
It is given inductively as follows.
We let $\epsilon\doublewedge^{\epsilon}_{\Pi,\Rho}\epsilon=\epsilon$ and:
\begin{align*}
&(t,\call{m(v)})s_1\doublewedge^{0\sigma}_{\Pi,\Rho}(t,\call{m(v)})s_2
=s_1\doublewedge^{\sigma}_{\Pi,\Rho'}s_2\\
&(t,\ret{m(v)})s_1\doublewedge^{0\sigma}_{\Pi,\Rho}(t,\ret{m(v)})s_2
=s_1\doublewedge^{\sigma}_{\Pi,\Rho'}s_2\\
&(t,\call{m(v)})_{PY}s_1\doublewedge^{1\sigma}_{\Pi,\Rho}s_2
=(t,\call{m(v)})_{PY}(s_1\doublewedge^{\sigma}_{\Pi',\Rho}s_2)
\\
&(t,\ret{m(v)})_{PY}s_1\doublewedge^{1\sigma}_{\Pi,\Rho}s_2
=(t,\ret{m(v)})_{PY}(s_1\doublewedge^{\sigma}_{\Pi',\Rho}s_2)
\\
&(t,\call{m(v)})_{OY}s_1\doublewedge^{1\sigma}_{\Pi,\Rho}s_2
=(t,\call{m(v)})_{OY}(s_1\doublewedge^{\sigma}_{\Pi',\Rho}s_2)
\\
&(t,\ret{m(v)})_{OY}s_1\doublewedge^{1\sigma}_{\Pi,\Rho}s_2
=(t,\ret{m(v)})_{OY}(s_1\doublewedge^{\sigma}_{\Pi',\Rho}s_2)
\end{align*}
along with the dual rules for the last four cases (i.e.\ where we schedule 2 in each case). 
Note that the definition uses 
sequences of moves that are suffixes of histories (such as $s_i$). The above equations are subject to the following side conditions:
\begin{compactitem}
\item $\Meths(v)\cap(\Pi\cup \Rho)=\emptyset$, $\Pi'=\Pi\uplus\Meths(v)$ and $\Rho'=\Rho\uplus\Meths(v)$;
\item $m\in\Rho$ in the 0-scheduling cases;
\item $m\in\Pi$ in the 1-scheduling cases and, also, $m\in\Pi\setminus\Rho$ in the third case (the $P$-call);
\item in the 1-scheduling cases, we also require that the leftmost move with thread index $t$ in $s_2$ is not a $P$-move.
\end{compactitem}
History composition is a partial function: if the conditions above are not met, or $h_1,h_2,\sigma$ are not of the appropriate form, 
then the composition is undefined.
The above conditions ensure that the composed histories are indeed compatible and can be produced by composing actual libraries. For instance, the last condition 
corresponds to determinacy of threads: there can only be at most one component starting with a $P$-move in each thread $t$.
We then have the following correspondence. % (cf.\ Appendix~\ref{sec:comp2}).

\cutout{
Given a 
history $h$ and a
set of methods $\Theta$, we let 
$h\setminus \Theta$ be the history obtained from $h$ by removing all moves that are calls or returns to methods in $\Theta$.}

\begin{theorem}\label{supercomp}
If $L_1:\Theta_1\to\Theta_2$ and  
$L_2:\Theta_1'\to\Theta_2'$ 
access disjoint parts of the store
then 
\[
\sem{L_1\cup L_2}_N = 
\{\, h\in\clg{H}\,\mid\, \exists \sigma,h_1\!\in\!\sem{L_1}_N\!, h_2\!\in\!\sem{L_2}_N\!.\,\,
h=h_1\doublewedge_{\Pi_0,\Rho_0}^{\sigma}h_2
\}\]
with $\Pi_0=\Theta_1\cup\Theta_2\cup\Theta_1'\cup\Theta_2'$ and
$\Rho_0=(\Theta_1\cup\Theta_1')\cap(\Theta_2\cup\Theta_2')$.
\end{theorem}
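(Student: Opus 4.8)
The statement is Theorem~\ref{supercomp}, the general compositionality theorem: the $N$-trace semantics of a union library $L_1\cup L_2$ equals the set of all $\doublewedge$-compositions of traces of $L_1$ and $L_2$, with the sharing sets $\Pi_0,\Rho_0$ fixed as indicated. Let me think about how to prove this.

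The obvious strategy: mirror the proof architecture already used for Theorem~\ref{t:comp} (library-in-context composition). There, one introduces an "external composition" $\oslash$ of compatible configurations with its own LTS, shows by a bisimulation that external composition agrees with the "internal composition" $\doublewedge$ obtained by syntactically merging configurations (Lemma~\ref{l:comp1}, via the $\eta$-expansion bookkeeping of Lemma~\ref{lem:eta}), and then reads off the trace statement. For Theorem~\ref{supercomp} the structure should be parallel but with two twists: (i) the two components are now both libraries (so the merged object is again a library, not a closed client term) and (ii) communication between $L_1$ and $L_2$ is now genuinely two-sided — a method in $\Rho\setminus\Pi$ is private to the pair, one in $\Pi\cap\Rho$ can be used both internally and externally — so the composition is parameterised by $\Pi,\Rho$ and a scheduler $\sigma\in\{0,1,2\}^\ast$ rather than being a fixed asymmetric $\oslash$.

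Here is the plan. First I would set up the analogue of $\asymp$: a compatibility relation $\asymp_{\Pi,\Rho}$ between $N$-configurations $\rho_1$ (for $L_1$) and $\rho_2$ (for $L_2$), requiring disjoint store domains, that the public/abstract name-sets match up across the shared interface $\Pi\cap\Rho$ in the expected dual way, that private names of each side are fresh for the other, and that at each thread the evaluation stacks are compatible in the same "interleaving of frames" sense as in Appendix~\ref{sec:comp}. Second, I would define a semantic composite LTS $\rho_1\otimes_{\Pi}^{w}\rho_2$ carrying a scheduler-state $w$ (a string recording, per thread, which side currently "holds" control), with rules: an internal-silent step on either side; a synchronised call/return on a shared name in $\Rho$ (scheduling digit $0$, which may reclassify names and grow $\Rho$); and an exported call/return on a name in $\Pi$ visible to the outside (scheduling digit $1$ or $2$), which emits a move and grows $\Pi$ (and $\Rho$) with the freshly passed method names. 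The scheduler string produced along a run is exactly the $\sigma$ of the statement. Third, I would define the internal/syntactic composite $\rho_1\doublewedge_{\Pi}^{w}\rho_2$ exactly as in Appendix~\ref{sec:comp} — merge the two blocks after the standard $\uplus$ on declarations, splice compatible evaluation stacks into a single context with $\rett{\cdot}$-markers, tag methods with $1/2$ — and prove a bisimulation lemma $\rho_1\otimes_{\Pi}^{w}\rho_2\sim\rho_1\doublewedge_{\Pi}^{w}\rho_2$, reusing Lemma~\ref{lem:eta} to collapse the $\eta$-expansions that the tagged semantics introduces on cross-component calls/returns. This is the exact analogue of Lemma~\ref{l:comp1}; the case analysis (internal step, shared synchronised call, exported call, and the dual returns) is routine but needs care about which of $\Pi,\Rho$ is extended in each case, so that the side-conditions listed after the definition of $\doublewedge^{\sigma}_{\Pi,\Rho}$ are met — in particular the "at most one $P$-starting component per thread" determinacy condition, which corresponds to the scheduler-state $w$ being well-defined. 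Finally I would initialise: $(L_1\cup L_2)\redL^\ast$ produces a repository/store that, up to the private-method renaming used in the definition of $\cup$, is the $\uplus$ of the two initial configurations with $\Pi_0=\Theta_1\cup\Theta_2\cup\Theta_1'\cup\Theta_2'$ inserted into the public/abstract sets and $\Rho_0=(\Theta_1\cup\Theta_1')\cap(\Theta_2\cup\Theta_2')$ the shared internal names. Running the composite LTS from there, forgetting the scheduler, gives exactly the traces of $L_1\cup L_2$ (by the bisimulation), and the sequence of $1/2$-digits together with $h_1\restriction(\text{side }1)$ and $h_2\restriction(\text{side }2)$ recovers $h_1,h_2,\sigma$ with $h=h_1\doublewedge^{\sigma}_{\Pi_0,\Rho_0}h_2$; conversely any such $h_1,h_2,\sigma$ with the composition defined corresponds to a synchronised run of $\rho_1\otimes\rho_2$, hence of $\rho_1\doublewedge\rho_2$, hence to a trace of $L_1\cup L_2$.

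The main obstacle is the bookkeeping around the two-sided, dynamically-growing sharing sets $\Pi,\Rho$ and their interaction with name-freshness. In the context case of Theorem~\ref{t:comp} the library/context roles are fixed and asymmetric, so $\oslash$ has a clean orientation; here a method name passed from $L_1$ can later be called by $L_2$ internally or by the outside, and the composition must track — via membership in $\Rho$ versus $\Pi\setminus\Rho$ versus $\Pi\cap\Rho$ — which of these is legitimate at each point, while simultaneously the fresh-renaming performed by the P-rules ((\textsc{PQy}),(\textsc{PAy})) of the trace semantics must be reconciled with the fact that in the syntactically-merged library such a renaming does not happen for purely internal calls. This is precisely what the tagged semantics plus Lemma~\ref{lem:eta} is designed to absorb, so the argument goes through, but verifying that the side-conditions on $\doublewedge^{\sigma}_{\Pi,\Rho}$ are exactly the invariants preserved by the bisimulation — especially the determinacy condition ruling out two $P$-initial components in one thread, and the freshness condition $\Meths(v)\cap(\Pi\cup\Rho)=\emptyset$ — is the delicate part and where I would spend most of the effort. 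The remaining ingredients (well-bracketing and alternation per thread, disjointness of name-partitions) are inherited from the single-library results and need only be checked to be stable under composition.
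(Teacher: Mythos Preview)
Your proposal is correct and follows essentially the same architecture as the paper: a $(w,\Pi)$-indexed compatibility relation, an external composition $\otimes_\Pi^w$ with scheduler-tracking rules, an internal composition $\doublewedge_\Pi^w$ via tagged merged configurations, a bisimulation between them (the paper's Lemma~\ref{lem:bisim2}), and then initialisation plus an unwinding lemma (the paper's Lemma~\ref{lem:sched}) to connect runs of $\otimes$ with the history-level operation $\doublewedge^\sigma_{\Pi,\Rho}$. The only small adjustments the paper makes that you gloss over are that Lemma~\ref{lem:eta} must be re-proved in a version carrying the $\PP,\AA$ components (Lemma~\ref{lem:eta2}), and that the ``read off'' step you describe is isolated as its own inductive lemma on $\sigma$ rather than folded into the bisimulation.
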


\cutout{
\begin{theorem}\label{supercomp}
For all $L'\!:\emptyset\to\Theta,\Theta_1$ and $L\!:\Theta\to\Theta'$,
\begin{align*}
&\sem{L'\comp L}_N\\
&=\{ h\setminus\Theta\mid h\in\clg{H}^L\!\land
\exists \sigma,h_2\in\sem{L}_N, h_1\in\sem{L'}_N.\,
h=h_1\!\doublewedge_{\Pi_0,\Theta}^{\sigma}h_2
\}
\end{align*}
with $\Pi_0=\Theta\cup\Theta'\cup\Theta_1$.
\end{theorem}
}

%\subsection{Proof of Theorem~\ref{supercomp}}

The rest of this section is devoted in proving the Theorem.
\bigskip

\noindent
Recall that
we examine library composition in the sense of union of libraries. This scenario is more general than the one of Appendix~\ref{sec:comp} as, during composition via union, the calls and returns of each of the component libraries may be caught by the other library or passed as a call/return to the outer context. 
Thus, the setting of this section comprises given libraries $L_1:\Theta_1\to\Theta_2$ and  
$L_2:\Theta_1'\to\Theta_2'$, such that $\Theta_2\cap\Theta_2'=\emptyset$, and relating their semantics to that of their union 
$L_1\cup L_2:(\Theta_1\cup\Theta_1')\setminus(\Theta_2\cup\Theta_2')\to\Theta_2\cup\Theta_2'$.

Given configurations for $L_1$ and $L_2$, in order to be able to reduce them together we need to determine which of their methods can be used for communication between them, and which for interacting with the external context, which represents player $O$ in the game. We will therefore employ a set of method names, denoted by $\Pi$ and variants, to register those methods used for interaction with the external context.
Another piece of information we need to know is in which component in the composition was the last call played, or whether it was an internal call instead. This is important so that, when $O$ (or $P$) has the choice to return to both components, in the same thread, we know which one was last to call and therefore has precedence. We use for this purpose sequences $w=(w_1,\cdots,w_N)$ where, for each $i$, $w_i\in\{0,1,2\}^*$. Thus, if e.g.\ $w_1=2w_1'$, this would mean that, in thread 1, the last call to $O$, was done from the second component; 
if, on the other hand, $w_1=0w_1'$ then the last call in thread 1 was an internal one between the two components. Given such a $w$ and some $j\in\{0,1,2\}$, for each index $t$, we write $j+_tw$ for $w[t\mapsto(jw_t)]$.

Let us fix libraries $L_1:\Theta_1\to\Theta_2$ and $L_2:\Theta_1'\to\Theta_2'$.
Let $\rho_1,\rho_2$ be $N$-configurations, and in particular $L$-configurations, and suppose that $\rho_1=(\vec\CC,\RR,\PP,\AA,S)$ and $\rho_2=(\vec\CC',\RR',\PP',\AA',S')$. 
Moreover, let $\Theta_1\cup\Theta_2\cup\Theta_1'\cup\Theta_2'\subseteq\Pi$.
We say that $\rho_1$ and $\rho_2$ are \boldemph{$(w,\Pi)$-compatible}, written $\rho_1\asymp_\Pi^w\rho_2$,
if $S,S'$ have disjoint domains 
%there are $\Rho_1,\Rho_2,\Rho_3$ such that $\Pi_{\textrm{com}}=\Rho_1\uplus\Rho_2\uplus\Rho_3$
and, for each $i$;
%, setting $\Rho=\Pi_t$ and $u=w_t$:
\begin{itemize}
\item $\CC_i=(\EE_L',M)$ and $\CC'_i=(\EE_L,-)$, or
$\CC_i=(\EE_L,-)$ and $\CC'_i=(\EE_L',M)$,
or
$\CC_i=(\EE_{L1},-)$ and $\CC'_i=(\EE_{L2},-)$.
\item We have 
$\Theta_1\subseteq\AA_l$, $\Theta_2\subseteq\PP_\k$, $\Theta_1'\subseteq\AA_\l'$, $\Theta_2'\subseteq\PP_\k'$ and,
setting
\[
\Rho=(\PP_\k\cap\AA_\l')\uplus(\PP_\l\cap\AA_\k')\uplus(\PP_\k'\cap\AA_\l)\uplus(\PP_l'\cap\AA_\k)
\]
we also have:
\begin{itemize}
\item $(\PP_\l\uplus\PP_\k\uplus\AA_l\uplus\AA_\k)\cap(\PP_\l'\uplus\PP_\k'\uplus\AA_l'\uplus\AA_\k')=\Rho\uplus(\Theta_1\cap\Theta_1')$,
\item
$\Pi\cap\Rho=(\Theta_2\cup\Theta_2')\cap(\Theta_1\cup\Theta_1')$, 
\item
$\Pi\cup\Rho=\PP_\l\cup\PP_\k\cup\PP_l'\cup\PP_\k'\cup\AA_\l\cup\AA_\k\cup\AA'_\l\cup\AA_\k'$.
\end{itemize}
\item The private names of $\RR$ do not appear in $\rho_2$, and dually for the private names of $\RR'$.
\item If $\CC_i=(\EE,\cdots)$ and $\CC_i'=(\EE',\cdots)$ then $\EE$ and $\EE'$ are \emph{$w_i$-compatible}, that is, either $\EE=\EE'=[]$ or:
\begin{itemize}
\item $\EE=m::\EE_1$ and $\EE'\in\EE_L$, with $m\in\Pi$, $w_i=1u$ and $\EE_1,\EE'$ are $u$-compatible,
\item or $\EE=m::\EE_1$ and $\EE'=m::E::\EE_2$, with $m\in\Rho$, $w_i=0u$ and $\EE_1,\EE_2$ are $u$-compatible,
\item or $\EE=m::E::\EE_1$ and $\EE'\in\EE_L$, with $m\in\Pi\setminus\Rho$, $w_i=1u$ and $\EE_1,\EE'$ are $u$-compatible,
\end{itemize}
or the dual of one of the three conditions above holds.
\end{itemize}
Given $\rho_1\asymp^w_\Pi\rho_2$, we let their external composition be denoted as $\rho_1\otimes^w_\Pi\rho_2$ (and note that now the notation is symmetric for $\rho_1$ and $\rho_2$) and define the semantics for external composition by these rules:
\begin{gather*}
\infer[\textsc{Int}_1]{\rho_1\otimes_\Pi^w\rho_2\lto\rho_1'\otimes_\Pi^w\rho_2}{\rho_1\pred{N}{}\rho_1'}
\\
\infer[\textsc{Call}\;(m\in\Rho)]{\rho_1\otimes_\Pi^w\rho_2\lto\rho_1'\otimes_\Pi^{0+_tw}\rho_2'}{\rho_1\pred{N}{(t,\call{m(v)})}\rho_1'\quad\rho_2\pred{N}{(t,\call{m(v)})}\rho_2'}
\\
\infer[\textsc{Retn}\;(m\in\Rho)]{\rho_1\otimes_\Pi^{0+_tw}\rho_2\lto\rho_1'\otimes_\Pi^w\rho_2'}{\rho_1\pred{N}{(t,\ret{m(v)})}\rho_1'\quad\rho_2\pred{N}{(t,\ret{m(v)})}\rho_2'}
\\
\infer[\textsc{PCall}_1\;(m\in\Pi\setminus\Rho)]{\rho_1\otimes_\Pi^w\rho_2\xr{(t,\call{m(v)})_{PY}}\rho_1'\otimes_{\Pi'}^{1+_tw}\rho_2}{\rho_1\pred{N}{(t,\call{m(v)})_{PY}}\rho_1'}
\\
\infer[\textsc{PRetn}_1\;(m\in\Pi)]{\rho_1\otimes_{\Pi}^{1{+_t}w}\rho_2\xr{(t,\ret{m(v)})_{PY}}\rho_1'\otimes_{\Pi'}^{w}\rho_2}{\rho_1\pred{N}{(t,\ret{m(v)})_{PY}}\rho_1'}
\\
\infer[\textsc{OCall}_1\;(m\in\Pi)]{\rho_1\otimes_\Pi^w\rho_2\xr{(t,\call{m(v)})_{OY}}\rho_1'\otimes_{\Pi'}^{1+_tw}\rho_2}{\rho_1\pred{N}{(t,\call{m(v)})_{OY}}\rho_1'}
\\
\infer[\textsc{ORetn}_1\;(m\in\Pi\setminus\Rho)]{\rho_1\otimes_{\Pi}^{1{+_t}w}\rho_2\xr{(t,\ret{m(v)})_{OY}}\rho_1'\otimes_{\Pi'}^{w}\rho_2}{\rho_1\pred{N}{(t,\ret{m(v)})_{OY}}\rho_1'}
\end{gather*}
along with their dual counterparts {\sc(Int$_2$, XCall$_2$, XRetn$_2$)}.
The internal rules above have the same side-conditions on name privacy as before.
Moreover, in ({\sc XRetn}$i$) and ({\sc XCall}$i$), for {\sc X=O,P}, we let $\Pi'=\Pi\uplus_t\Meths(v)$  and impose that 
the $t$-th component of $\rho_{3-i}$ be an $O$-configuration
and $\Meths(v)\cap\Meths(\rho_{3-i})=\emptyset$.

We can now show the following.

\begin{lemma}
Let $\rho_1\asymp_\Pi^w\rho_2$ and suppose $\rho_1\otimes_\Pi^w\rho_2\xr{s}{\!\!}^*\;\rho_1'\otimes_{\Pi'}^{w'}\rho_2'$ for some sequence $s$ of moves. Then, $\rho_1'\asymp_{\Pi'}^{w'}\rho_2'$.
\end{lemma}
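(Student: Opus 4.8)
The plan is to prove the lemma by induction on the length of the reduction sequence $\rho_1\otimes_\Pi^w\rho_2\xr{s}{\!\!}^*\rho_1'\otimes_{\Pi'}^{w'}\rho_2'$. The base case (empty sequence) is immediate since $\rho_1'\otimes_{\Pi'}^{w'}\rho_2'=\rho_1\otimes_\Pi^w\rho_2$ and we are given $\rho_1\asymp_\Pi^w\rho_2$. For the inductive step it suffices to prove the single-step statement: if $\rho_1\asymp_\Pi^w\rho_2$ and $\rho_1\otimes_\Pi^w\rho_2\xr{\alpha}\rho_1''\otimes_{\Pi''}^{w''}\rho_2''$ (where $\alpha$ is either a label or empty), then $\rho_1''\asymp_{\Pi''}^{w''}\rho_2''$; the full result then follows by composing such steps.

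For the single-step claim I would proceed by case analysis on which rule of external composition was applied, i.e.\ one of $\textsc{Int}_i$, $\textsc{Call}$, $\textsc{Retn}$, $\textsc{PCall}_i$, $\textsc{PRetn}_i$, $\textsc{OCall}_i$, $\textsc{ORetn}_i$. In each case I would verify, component by component, the four clauses in the definition of $\asymp_\Pi^w$: disjointness of stores; the shape constraint on each $\CC_i$ (evaluation stacks being $\EE_L$, $\EE_L'$, $\EE_{L1}$ or $\EE_{L2}$); the name-partitioning conditions relating $\PP_\l,\PP_\k,\AA_\l,\AA_\k,\PP_\l',\PP_\k',\AA_\l',\AA_\k'$, $\Pi$ and the derived $\Rho$; privacy of private names; and $w_i$-compatibility of the paired evaluation stacks. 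The $\textsc{Int}_i$ case is easiest: a single-threaded internal transition does not alter $\PP,\AA,\Pi$ or $w$, only $\RR$ and $S$ (and possibly adds private names, which by the side-condition are fresh for the other component), and does not change the top-of-stack structure since the sequential LTS keeps $O$- and $P$-configurations separated, so all clauses are preserved. For $\textsc{Call}$/$\textsc{Retn}$ (internal communication on $m\in\Rho$), I would check that the change $w\mapsto 0+_t w$ (or its inverse for returns) matches the way the two stacks evolve: one side pushes $m::E::\EE$, the other pushes $m::\EE$ (or pops them on returns), so the "$m\in\Rho$, $w_i=0u$" clause is established/consumed correctly; the name conditions are untouched since no value crosses to the external context, and if $v$ carries fresh method names they are added symmetrically to both $\PP$/$\AA$ sides, landing in $\Rho$ but not in $\Pi$, consistently with $\Pi\cap\Rho$ being unchanged in content relative to the interfaces. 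For the external rules $\textsc{PCall}_i$, $\textsc{PRetn}_i$, $\textsc{OCall}_i$, $\textsc{ORetn}_i$ (communication on $m\in\Pi$, possibly $m\in\Pi\setminus\Rho$), I would use the side-conditions that the $t$-th component of $\rho_{3-i}$ is an $O$-configuration and that $\Meths(v)$ is fresh for $\rho_{3-i}$, to conclude $\Pi'=\Pi\uplus_t\Meths(v)$ keeps $\Pi\supseteq\Theta_1\cup\Theta_2\cup\Theta_1'\cup\Theta_2'$, that the new names go only into the acting side's $\PP$ or $\AA$ and hence into $\Pi$ but not $\Rho$, and that the stack update ($m::\EE_1$ or $m::E::\EE_1$ pushed/popped on the acting side, $\EE_{3-i}$ an $\EE_L$-stack on the other) matches the "$m\in\Pi$, $w_i=1u$" or "$m\in\Pi\setminus\Rho$, $w_i=1u$" clauses.

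The main obstacle I anticipate is the bookkeeping around the $\Rho$ set and the $w$-compatibility of evaluation stacks in the \emph{external} (non-internal) rules: one must be careful that a method name which is in $\Pi$ but becomes, say, also referenced by the partner component does not spuriously enter $\Rho$, and conversely that names genuinely used for internal $L_1$--$L_2$ communication stay in $\Rho$ and are correctly reflected in the "$w_i=0u$" branch of stack compatibility. The freshness side-conditions on the external rules ($\Meths(v)\cap\Meths(\rho_{3-i})=\emptyset$ and the $O$-configuration requirement on the partner's $t$-th component) are exactly what is needed to rule out the bad cases, so the argument is really a matter of checking that each rule's side-conditions line up precisely with the invariant's clauses; this is tedious but not deep. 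I would organise the proof so that the name-partitioning bullet and the stack-compatibility bullet are each discharged by a short uniform sub-argument parameterised by "which side acts" and "whether the move is a call or a return", invoking the dual rules by symmetry rather than repeating the analysis.
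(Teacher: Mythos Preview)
The paper states this lemma without proof (the text reads ``We can now show the following'' and then immediately moves on), treating it as a routine invariant-preservation result. Your proposal---induction on the number of transitions, reducing to single-step preservation, then case analysis on the external-composition rule applied, checking each clause of the $\asymp_\Pi^w$ definition---is precisely the standard argument one expects here and is correct in outline; the subtleties you flag around $\Rho$ and $w$-compatibility are real but, as you note, are discharged by the side-conditions on the rules.
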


We next juxtapose the semantics of external composition to that obtained by internally composing the libraries and then deriving the multi-threaded semantics of the result. As before, we call the latter form \emph{internal composition}. The traces we obtain are produced from a transition relation, written $\pred{N}{}'$, between configurations of the form $(\CC_1\|\cdots\|\CC_N,\RR_1,\RR_2,\PP,\AA,S)$, 
also written $(\vec\CC,\vec\RR,\PP,\AA,S)$. In particular, in each $\CC_i=(\EE_i,X_i)$ with $X_i=E_i[M_i]$ or $X_i=-$, $E_i$ is selected from the extended evaluation contexts and $\EE_i$ is 
an \emph{extended $L$-stack}, that is, of either of the following two forms:
\[
\EE_{\sf ext}\, ::=\, [] \mid m^i::E::\EE_{\sf ext}'\qquad
\EE_{\sf ext}'\,::=\,m::\EE_{\sf ext}
\]
where $E$ is again from the extended evaluation contexts.

First, given $u$-compatible evaluation stacks $\EE,\EE'$, 
we construct a pair $\EE\doublewedge^{\!u}\EE'$ consisting of an extended evaluation context and an extended $L$-stack, as follows.
Given $\EE\doublewedge^{\!u}\EE'=(E',\EE'')$:
\begin{align*}
&(m::E::\EE)\doublewedge^{\!0u}(m::\EE')
= (E'[E[\rett{m}\bullet]^1],\EE'')
\\
&(m::\EE)\doublewedge^{\!0u}(m::E::\EE')
= (E'[E[\rett{m}\bullet]^2],\EE'')
\\
&(m::\EE)\doublewedge^{\!1u}\EE'
= \EE\doublewedge^{\!2u}(m::\EE')
=(\bullet,m::E'::\EE'')
%\;\;\text{if }\EE'\not= m::\cdots
\\
&(m::E::\EE)\doublewedge^{\!1u}\EE'
= \EE\doublewedge^{\!2u}(m::E::\EE')\\
&\hspace{40mm}=(\bullet,m::E'[E]::\EE'')
\;\;\text{if }\EE'\in\EE_L
\end{align*}
and 
$[]\doublewedge^\epsilon[] = (\bullet,[])$.

For each pair $\rho_1\asymp_\Pi^w\rho_2$, we define a configuration corresponding to their {syntactic} composition as follows. Let 
$\rho_1=(\CC_1\|\cdots\|\CC_N,\RR_1,\PP_1,\AA_1,S_1)$ and $\rho_2=(\CC_1'\|\cdots\|\CC_N',\RR_2,\PP_2,\AA_2,S_2)$ and, for each $i$, 
$\CC_i=(\EE_i,X_i)$ and $\CC_i'=(\EE_i',X_i')$. 
If $\EE_i\doublewedge^{\!u}\EE_i'=(E_i,\EE_i'')$, 
we set:
\[
\CC_i\doublewedge^{\!u}\CC_i'=
\begin{cases}
(\EE_i'',E_i[M^1])
& \text{ if } X_i=M\text{ and }X_i'=-\\
(\EE_i'',E_i[M^2])
& \text{ if } X_i=-\text{ and }X_i'=M\\
(\EE_i'',-)
& \text{ if }X_i=X_i'=-
\end{cases}
\]
We then let the internal composition of $\rho_1$ and $\rho_2$ be:
\[
\rho_1\doublewedge_\Pi^w\rho_2 = (\CC_1\doublewedge^{\!w_1}\!\CC_1'\|\cdots\|\CC_N\doublewedge^{\!w_N}\!\CC_N',
\RR_1,\RR_2,\PP',\AA',
S_1\uplus S_2)
\]
where we set
$\PP'=((\PP_{1\l}\uplus\PP_{2\l})\cap\Pi,(\PP_{1\k}\uplus\PP_{2\k})\cap\Pi)$ and $\AA'=((\AA_{1\l}\cup\AA_{2\l})\cap(\Pi\setminus\Rho),(\AA_{1\k}\uplus\AA_{2\k})\cap\Pi)$.

Now, as expected, the definition of $\pred{N}{}'$ builds upon $\tred{t}{}'$. The definition of the latter is given by the following rules.
\begin{gather*}
\infer[(\textsc{Int}')]
{(\EE,E[M],\vec\RR,\PP,\AA,S) \tred{t}{}' (\EE,E'[M'],\vec\RR',\PP,\AA,S')}
{(E[M],\vec\RR,S) \tred{t}{}' (E'[M'],\vec\RR',S')}
\\
(\EE,E[m^iv],\vec\RR,\PP,\AA,S) \tred{t}{\call{m(v')}_{PY}}{\!\!\!}'\ (m^i\!::\!E::\EE,-,\vec\RR',\PP',\AA,S)
\tag{\textsc{PQy}$'$}
\\
 (m::\EE,v,\vec\RR,\PP,\AA,S) \tred{t}{\ret{m(v')}_{PY}}{\!\!\!}'\ (\EE,-,\vec\RR',\PP',\AA,S)
\tag{\textsc{PAy}$'$}
\\
(\EE,-,\vec\RR,\PP,\AA,S) \tred{t}{\call{m(v)}_{OY}}{\!\!\!}'\ (m::\EE,M\{v/x\}^i,\vec\RR,\PP,\AA',S) 
\tag{\textsc{OQy}$'$}
\\ (m^i\!::\!E::\EE,-,\vec\RR,\PP,\AA,S) \tred{t}{\ret{m(v)}_{OY}}{\!\!\!}'\ (\EE,E[v^i],\vec\RR,\PP,\AA',S)
\tag{\textsc{OAy}$'$}
\end{gather*}
The side-conditions are similar to those for the relation $\tred{t}{}$ between ordinary configurations, with the following exceptions:
in (\textsc{PQy}$'$), if $\Meths(v)=\{m_1,\cdots,m_k\}$ then 
$v'=v\{m_j'/m_j\mid 1\leq j\leq k\}$, for fresh $m_j'$'s, and
$\vec\RR'=\vec\RR\uplus_i\{m_j'\mapsto\lambda x.m_jx\}$; and in
(\textsc{PAy}$'$), if $m\in\dom(\RR_i)$ then $\vec\RR'=\vec\RR\uplus_i\{m_j'\mapsto\lambda x.m_jx\}$, etc. Moreover, in
(\textsc{OQy}$'$) we have that $m\in\dom(\RR_i)$.
Finally, we let
\[
(\vec\CC,\vec\RR,\PP,\AA,S)\pred{N}{(t,x)_{XY}}{\!\!}'\,
(\vec\CC[t\mapsto \CC'],\vec\RR',\PP',\AA',S')
\]
just if 
$(\CC_t,\vec\RR,\PP,\AA,S)\tred{t}{\;x_{XY}\;}{\!\!\!}'\
(\CC',\vec\RR',\PP',\AA',S')$.

We next relate the transition systems induced by external (via $\otimes$) and internal composition (via $\doublewedge$). 
Let us write $(\SS_1,\qweto_1,\FF_1)$ for the transition system induced by external composition of compatible $N$-configurations (so $\qweto_1$ is $\lto$), and $(\SS_2,\qweto_2,\FF_2)$ be the one for internal composition (so $\qweto_2$ is $\pred{N}{}'$). Finality of extended $N$-configurations $(\CC_1\|\cdots\|\CC_N,\vec\RR,\cdots)$ is defined as expected: all $\CC_i$'s must be $([],-)$.
A relation $R\subseteq\SS_1\times\SS_2$ 
is called a \emph{bisimulation} if, for all $(x_1,x_2)\in R$:
\begin{itemize}
\item $x_1\in\FF_1$ iff $x_2\in\FF_2$,
\item if $x_1\qweto_1 x_1'$ then $x_2\qweto_2x_2'$ and $(x_1',x_2')\in R$,
\item if $x_1\xhookrightarrow{(t,x)_{XY}}_1 x_1'$ then $x_2\xhookrightarrow{(t,x)_{XY}}_2x_2'$ and $(x_1',x_2')\in R$,
\item if $x_2\qweto_2 x_2'$ then $x_1\qweto_1 x_1'$  and $(x_1',x_2')\in R$,
\item if $x_2\xhookrightarrow{(t,x)_{XY}}_2 x_2'$ then $x_1\xhookrightarrow{(t,x)_{XY}}_1 x_1'$  and $(x_1',x_2')\in R$.
\end{itemize}
Again, we say that $x_1$ and $x_2$ are \emph{bisimilar}, and write $x_1\sim x_2$, if there exists a bisimulation $R$ such that $(x_1,x_2)\in R$.

\begin{lemma}\label{lem:bisim2}
Let $\rho\asymp_\Pi^w\rho'$ be compatible $N$-configurations. Then, $(\rho\otimes_\Pi^w\rho')\sim(\rho\doublewedge_\Pi^w\rho')$.
\end{lemma}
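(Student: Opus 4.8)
\textbf{Proof plan for Lemma~\ref{lem:bisim2}.}
The plan is to exhibit an explicit relation and verify it is a bisimulation, exactly as in the proof of Lemma~\ref{l:comp1}, but now carrying the extra bookkeeping data ($w$, $\Pi$, and the derived $\Rho$). I would take
\[
R \;=\; \{\,(\rho_1\otimes_\Pi^w\rho_2,\ \rho_1\doublewedge_\Pi^w\rho_2)\ \mid\ \rho_1\asymp_\Pi^w\rho_2\,\}
\]
and show $R$ is a bisimulation. The finality clause is immediate from the definitions of $\doublewedge^{\!w_i}$ on stacks: both sides are final exactly when every thread component is $([],-)$, and $\EE\doublewedge^{\!u}\EE'=(\bullet,[])$ forces $\EE=\EE'=[]$ and $u=\epsilon$. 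The nontrivial content is matching transitions in both directions, and for this the earlier lemma guaranteeing that $\asymp_\Pi^w$ is preserved along $\otimes$-reductions ensures that the pairs we land in are again in $R$, so I only need to check the one-step correspondence.

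For the forward direction, I would do a case analysis on the rule used in $\rho_1\otimes_\Pi^w\rho_2\qweto_1(\,\cdot\,)$. The internal rules $\textsc{Int}_i$ are simulated by a single $\pred{N}{}'$ step inside the corresponding tagged subterm. The external $P$/$O$ rules $\textsc{PCall}_1,\textsc{PRetn}_1,\textsc{OCall}_1,\textsc{ORetn}_1$ (and duals) are simulated by the matching $\textsc{PQy}'/\textsc{PAy}'/\textsc{OQy}'/\textsc{OAy}'$ rule: the name-refreshing side condition $v'=v\{\vec m'/\vec m\}$ and the update $\vec\RR'=\vec\RR\uplus_i\{m_j'\mapsto\lambda x.m_jx\}$ coincide on the two sides, and $\Pi'=\Pi\uplus_t\Meths(v)$ tracks exactly the names that $\PP',\AA'$ absorb in the composed configuration. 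The genuinely interesting rules are $\textsc{Call}$ and $\textsc{Retn}$ with $m\in\Rho$: here both components move simultaneously, and on the internal side this becomes a \emph{single} reduction $E[m^iv]\tred{t}{}'E[\rett{m^i}M\{v'/x\}^{3-i}]$ (the rule in the extended operational semantics that crosses the $1/2$ tag boundary), together with the stack update $(m::E::\EE)\doublewedge^{\!0u}(m::\EE')=(E'[E[\rett{m}\bullet]^{\dots}],\EE'')$; one checks the $w$-component evolves from $w$ to $0+_tw$ on both sides, consistently with the $0$-prefix now appearing in the $w_t$-compatibility of the composed stacks. For the backward direction I would symmetrically case-split on the rule generating $\rho_1\doublewedge_\Pi^w\rho_2\qweto_2(\,\cdot\,)$: a plain $\textsc{Int}'$ step that does not touch a cross-tag redex is matched by an internal $\otimes$-move; a $\textsc{PQy}'/\textsc{PAy}'/\textsc{OQy}'/\textsc{OAy}'$ step on a method name in $\Pi$ is matched by the corresponding external $P$/$O$ rule; and a reduction firing on $m^iv$ or $\rett{m^i}v$ with $m$ a ``foreign'' method (i.e.\ $m\in\Rho$, living in the other component's repository) is matched by the synchronised $\textsc{Call}/\textsc{Retn}$ pair. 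Distinguishing which subcase applies is read off from the $w$-compatibility clauses, which tell us whether the top stack frame corresponds to an internal ($0$-prefixed, $m\in\Rho$) or external ($1$- or $2$-prefixed, $m\in\Pi$) call.

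The main obstacle I expect is the genuine bookkeeping: making sure that in every case the resulting configurations are still $(w',\Pi')$-compatible for the \emph{right} $w',\Pi'$, i.e.\ that the three invariants in the definition of $\asymp_\Pi^w$ (the name-partition identity involving $\Rho$, the conditions $\Pi\cap\Rho=(\Theta_2\cup\Theta_2')\cap(\Theta_1\cup\Theta_1')$ and $\Pi\cup\Rho=\bigcup\PP\cup\bigcup\AA$, and the $w_i$-compatibility of stacks) are reestablished after a step. This is routine but delicate, particularly for the cross-tag $\textsc{Call}/\textsc{Retn}$ case where a fresh batch $\Meths(v)$ of names must be added to $\Rho$ (and \emph{not} to $\Pi$) on the external side, while on the internal side the $\eta$-expansion entries $m_j'\mapsto\lambda x.m_jx$ are added to $\vec\RR$ — one must check these match under the implicit identification of the composed repository $\RR_1\uplus\RR_2$ with the pair $\vec\RR$. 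The side condition ``the leftmost $t$-move in the other component is not a $P$-move'' (determinacy of threads) is what guarantees the external rule is applicable exactly when the internal redex is a cross-tag call, and I would flag that as the key sanity check. Once these invariants are verified in each of the (roughly a dozen) cases, bisimilarity $\rho\otimes_\Pi^w\rho'\sim\rho\doublewedge_\Pi^w\rho'$ follows since $(\rho\otimes_\Pi^w\rho',\rho\doublewedge_\Pi^w\rho')\in R$.
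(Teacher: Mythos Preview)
Your proposal is correct and follows essentially the same approach as the paper: you take the same candidate relation $R=\{(\rho_1\otimes_\Pi^w\rho_2,\rho_1\doublewedge_\Pi^w\rho_2)\mid\rho_1\asymp_\Pi^w\rho_2\}$, verify finality, and do the same case split on the transition rules, with the synchronized \textsc{Call}/\textsc{Retn} cases matched by the cross-tag internal reduction and the external \textsc{XCall}/\textsc{XRetn} cases matched by the primed labeled rules. The paper's proof is in fact less detailed than your plan on the bookkeeping side, simply deferring the \textsc{Call}/\textsc{Retn} and unlabeled cases to the proof of Lemma~\ref{l:comp1}.
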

\begin{proof}
We prove that the relation
$
R =\{ (\rho_1\otimes_\Pi^w\rho_2,\rho_1\doublewedge_\Pi^w\rho_2)\mid \rho_1\asymp_\Pi^w \rho_2\}
$
 is a bisimulation. Let us suppose that $(\rho_1\otimes_\Pi^w\rho_2,\rho_1\doublewedge_\Pi^w\rho_2)\in R$. 
%The base cases and those involving internal calls/returns are treated like in Lemma~\ref{l:comp1}. We focus on the rest of the cases.
\begin{itemize}
%\item 
%If $\rho_1\otimes_\Pi^w\rho_2$ is final then clearly $\rho_1\doublewedge^w_\Pi\rho_2$ is final as well.
%\item 
%If $\rho_1\doublewedge^w_\Pi\rho_2$ is final then the evaluation stacks in $\rho_1,\rho_2$ must only contain base evaluation contexts of the form $\bullet$, and their terms must be (). By compatibility, we can therefore reduce $\rho_1\otimes_\Pi^w\rho_2$ to a final compound configuration by triggering a series of returns.
\item 
Let $\rho_1\otimes_\Pi^w\rho_2\xr{(t,x)}\rho_1'\otimes_{\Pi'}^{w'}\rho_2'$ with the transition being due to ({\sc XCall}$_1$), e.g.\
$\rho_1\pred{N}{(1,\call{m(v)})}\rho_1'$ and $\rho_2'=\rho_2$, $w'=1+_1w$ and $\Pi'=\Pi\uplus_1\Meths(v)$, 
$\Meths(v)=\{m_1',\cdots,m_j'\}$,
and recall that $\Meths(v)\cap\Meths(\rho_2)=\emptyset$.
Then, assuming $\rho_1=(\CC_1^1\|\cdots,\RR_1,\PP_1,\AA_1,S_1)$, we have that one of the following holds, for some $\texttt{x}\in\{\k,\l\}$:
\begin{align*}
&\CC_1^1=(\EE_1,E[mv'])\ \text{ and }\ (\CC_1^1,\RR_1,\PP_1,\AA_1,S_1)\tred{1}{\call{m(v)}}\\
&
\qquad(m::E::\EE_1,-,\RR_1\uplus\{m_j'\mapsto\lambda x.m_jx\mid 1\leq j\leq k\},\PP_1\cup_{\mathtt{x}}\Meths(v),\AA_1,S_1)\\
&\CC_1^1=(\EE_1,-)\ \text{ and }\ (\CC_1^1,\RR_1,\PP_1,\AA_1,S_1)\tred{1}{\call{m(v)}}\\
&\qquad\qquad\qquad\qquad\qquad
(m^1::\EE_1,mv,\RR_1,\PP_1,\AA_1\cup_{\mathtt{x}}\Meths(v),S_1)
\end{align*}
In the former case, if $\rho_2=((\EE_2,-)\|\cdots,\RR_2,\PP_2,\AA_2,S_2)$ with $\EE_1\doublewedge^{w_1}\EE_2=(E',\EE)$, we get:
\begin{align*}
&\rho_1\doublewedge^w_\Pi\rho_2=((\EE,E'[E[mv']^1])\|\cdots,\RR_1,\RR_2,\PP,\AA,S)\\
&\pred{N}{(1,\call{m(v)})}{\!\!}'\\
&(m^1\!::E'[E^1]::\EE,-)\|\cdots,\RR_1\uplus\{m_j'\mapsto\lambda x.m_jx\mid 1\leq j\leq k\},\RR_2,\PP'\!,\AA,S)
\end{align*}
with $\PP,\AA$ as in the definition of composition and $\PP'=\PP\cup_{\tt x}\Meths(v)$,
and the latter $N$-configuration equals $\rho_1'\doublewedge^{w'}_{\Pi'}\rho_2$.
The other case is treated in the same manner, and we work similarly 
for ({\sc Retn}$_1$). 
\item On the other hand, if the transition is due to 
({\sc Call}) or ({\sc Retn}) then we work as in the proof of Lemma~\ref{l:comp1}.
\item 
Suppose $\rho_1\doublewedge^w_\Pi\rho_2=(\CC_1\|\cdots,\vec \RR,\PP,\AA,S)\pred{N}{(1,\call{m(v)})}{\!\!\!\!}'$ $(\CC_1'\|\cdots,\vec\RR',\PP',\AA',S)$.
Then, assuming WLOG that $v\in\Meths$, one of the following must be the case,
for some $\texttt{x}\in\{\k,\l\}$ and $i\in\{1,2\}$:
\begin{align*}
&\CC_1=(\EE,E[m^iv'])\ \text{ and }\ (\CC_1,\vec\RR,\PP,\AA,S)\tred{1}{\call{m(v)}}{\!\!\!}'\\\
&\qquad\qquad\quad(m^i\!::E::\EE,\vec\RR\uplus_i\{m_j'\mapsto\lambda x.m_jx\mid 1\leq j\leq k\},\PP\cup_{\mathtt{x}}\Meths(v),\AA,S)\\
&\CC_1(\EE,-)\ \text{ and }\ (\CC_1,\vec\RR,\PP,\AA,S)\tred{1}{\call{m(v)}}{\!\!\!}'\ \\
&\qquad\qquad\qquad\qquad\qquad(m::\EE,M\{v/x\}^i,\vec\RR,\PP,\AA\cup_{\mathtt{x}}\Meths(v),S)
\end{align*}
We only examine the former case, as the latter one is similar, and suppose that $i=1$.
Taking
$\rho_j=(\CC_1^j\|\cdots,\RR_j,\PP_j,\AA_j,S_i)$, 
for $j=1,2$, we have that $(\CC_1^1,\CC_1^2)=((\EE_1,E'[mv'],(\EE_2,-))$,
for some $E,\EE_1,\EE_2$ such that
%(the other case is entirely dual). 
$\EE_1\doublewedge^{w_1}\EE_2=(E'',\EE)$ and $E=E''[E'^1]$. Moreover, taking 
$\RR_1'=\RR_1\uplus\{m_j'\mapsto\lambda x.m_jx\mid 1\leq j\leq k\}$, $\PP_1'=\PP_1\uplus_{\tt x}\{v\}$, 
$w'=1+_1w$ and $\Pi'=\Pi\uplus\Meths(v)$ (note $\Meths(v)=\{m_1',\cdots,m_k'\}$), 
\[
\rho_1\otimes^w_\Pi\rho_2\xr{(1,\call{m(v)})}
((m::E'::\EE_1,-)\|\cdots,\RR_1',\PP_1',\AA_1,S_1)\otimes^{w'}_{\Pi'}\rho_2=\rho_1'\otimes^{w'}_{\Pi'}\rho_2
\]
and $\rho_1'\doublewedge^{w'}_{\Pi'}\rho_2=(\CC_1'\|\cdots,\vec\RR',\PP',\AA',S)$ as required.
The case for return transitions is similar.
\item On the other hand, if the transition out of $\rho_1\doublewedge^w_\Pi\rho_2$ does not have a label then we work as in the proof of Lemma~\ref{l:comp1}.
\end{itemize}
Moreover, by definition of syntactic composition, $\rho_1\otimes^w_\Pi\rho_2$ is final iff 
$\rho_1\doublewedge^w_\Pi\rho_2$ is.
\end{proof}

\cutout{
Now let $h_1,h_2$ be histories, corresponding to libraries $L_1:\Theta_1\to\Theta_2$ and $L_2:\Theta_1'\to\Theta_2'$ respectively, and let $\Pi,\Rho$ 
be sets of names such that
$\Theta_1\cup\Theta_1'\cup\Theta_2\cup\Theta_2'\subseteq\Pi$ and
$\Pi\cap\Rho=(\Theta_1\cup\Theta_1')\cap(\Theta_2\cup\Theta_2')$.
We define the composition of $h_1$ and $h_2$ as a partial operation depending on $\Pi,\Rho$, in a similar manner to the external composition and with the use of an additional parameter $\sigma\in\{0,1,2\}^*$ which we call a \emph{scheduler}.
\cutout{
say that $h_1$ and $h_2$ are \emph{$\sigma$-compatible}, written $h_1\asymp_\sigma h_2$, if $h_1=h_2=\epsilon$ or:
\begin{itemize}
\item 
either $h_1=(t,x)_{XY}h_1', h_2=(t,x)_{X'Y'}h_2',x\in\{\call{m(v)},\ret{m(v)}\},X\not=X',Y\not=Y',\sigma=(m,0)\sigma',
h_1'\asymp_{\sigma'} h_2'$
 and, if $\sigma$ is defined on $v$ then $\sigma(v)=0$,
%\item or $h_1=(t,\ret{m(v)})h_1', h_2=(t,\ret{m(v)})h_2',\sigma=(m,0)\sigma'$ and $h_1'\asymp_{\sigma'} h_2'$,
\item
or
$h_1=(t,x)_{XY}h_1', x\in\{\call{m(v)},\ret{m(v)}\},\sigma=(m,1)\sigma',
h_1'\asymp_{\sigma'} h_2$ and, if $\sigma$ is defined on $v$ then $\sigma(v)=1$,
\item
or
$
h_2=(t,x)_{XY}h_2', x\in\{\call{m(v)},\ret{m(v)}\},\sigma=(m,2)\sigma',
h_1\asymp_{\sigma'} h_2'$
 and, if $\sigma$ is defined on $v$ then $\sigma(v)=2$.
\end{itemize}}%
That is, we define $h_1\doublewedge^{\sigma}_{\Pi,\Rho}h_2$ inductively as follows.
We let $\epsilon\doublewedge^{\epsilon}_{\Pi,\Rho}\epsilon=\epsilon$ and:
\begin{align*}
&(t,\call{m(v)})s_1\doublewedge^{0\sigma}_{\Pi,\Rho}(t,\call{m(v)})s_2
=s_1\doublewedge^{\sigma}_{\Pi,\Rho'}s_2\\
&(t,\ret{m(v)})s_1\doublewedge^{0\sigma}_{\Pi,\Rho}(t,\ret{m(v)})s_2
=s_1\doublewedge^{\sigma}_{\Pi,\Rho'}s_2\\
&(t,\call{m(v)})_{PY}s_1\doublewedge^{1\sigma}_{\Pi,\Rho}s_2
=(t,\call{m(v)})_{PY}(s_1\doublewedge^{\sigma}_{\Pi',\Rho}s_2)
\\
&(t,\ret{m(v)})_{PY}s_1\doublewedge^{1\sigma}_{\Pi,\Rho}s_2
=(t,\ret{m(v)})_{PY}(s_1\doublewedge^{\sigma}_{\Pi',\Rho}s_2)
\\
&(t,\call{m(v)})_{OY}s_1\doublewedge^{1\sigma}_{\Pi,\Rho}s_2
=(t,\call{m(v)})_{OY}(s_1\doublewedge^{\sigma}_{\Pi',\Rho}s_2)
\\
&(t,\ret{m(v)})_{OY}s_1\doublewedge^{1\sigma}_{\Pi,\Rho}s_2
=(t,\ret{m(v)})_{OY}(s_1\doublewedge^{\sigma}_{\Pi',\Rho}s_2)
\end{align*}
along with the dual rules for the last four cases (i.e.\ where we schedule 2 in each case). 
Note that the definition uses 
sequences of moves that are suffixes of histories (such as $s_i$). It also uses
the following side-conditions, which are similar to the conditions for external composition: 
\begin{itemize}
\item
$\Pi'=\Pi\uplus\Meths(v)$ and 
$\Rho'=\Rho\uplus\Meths(v)$,
\item $m\in\Rho$ in the 0-scheduling cases, 
\item $m\in\Pi$ in the 1-scheduling cases, and in particular $m\in\Pi\setminus\Rho$ in the third case above (the $P$-call),
\item in the 1-scheduling cases, $h_1$ cannot start with a P-move,
\item in every case, $\{v\}\cap(\Pi\cup\Rho)=\emptyset$.
\end{itemize}
Note in particular that history composition is a partial function: if the conditions above are not met, or $h_1,h_2,\sigma$ are not of the appropriate form, then the composition is undefined.
}
Given an $N$-configuration $\rho$ and a history $h$, 
let us write $\rho\Downarrow h$ if $\rho\pred{N}{h}\rho'$ for some final configuration $\rho'$. Similarly if $\rho$ is of the form $(\vec C,\vec \RR,\PP,\AA,S)$. 
We have the following connections in history productions.
The next lemma is  proven in a similar fashion as Lemma~\ref{lem:eta}.

\begin{lemma}\label{lem:eta2}
For any legal $(M_1\|\cdots\|M_N,\RR_1,\RR_2,\PP,\AA,S)$ and history $h$, we have that $(M_1\|\cdots\|M_N,\RR_1,\RR_2,\PP,\AA,S)\Downarrow h$ iff $(M_1\|\cdots\|M_N,\RR_1\cup\RR_2,\PP,\AA,S)\Downarrow h$.
\end{lemma}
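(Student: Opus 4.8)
\textbf{Proof plan for Lemma~\ref{lem:eta2}.}

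The plan is to mirror the proof of Lemma~\ref{lem:eta}, adapting the ``expansion class'' machinery to the setting where configurations additionally carry the bookkeeping components $\PP,\AA$ and produce a history $h$ as a side-effect of reduction. First I would note that the only computational difference between the tagged quadruple semantics $\pred{N}{}'$ and the untagged semantics $\pred{N}{}$ is, just as before, the treatment of a method call $E[m^iv]$ (or return $\rett{m^i}v$) when $m$ does not belong to the repository $\RR_i$ that owns the enclosing component: in that case the tagged semantics inserts a $\eta$-expanded copy $m_j'\mapsto\lambda y.m_jy$ for each method $m_j$ occurring in $v$ and records a $\rett{m^i}$ marker, whereas the untagged semantics performs the call directly. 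Crucially, none of these bookkeeping differences touch the $\PP,\AA$ components or the labels on moves: the externally observable actions (the history $h$) are generated only by the \textsc{PQy}$'$/\textsc{PAy}$'$/\textsc{OQy}$'$/\textsc{OAy}$'$ rules, and in those rules the relationship between a method name and its $\eta$-expanded alias is exactly the substitution that the expansion class $\{\RR\}$ undoes.

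The key steps are as follows. (1) Reuse verbatim the notions of \emph{expansion class} $\RR$, the directed graph $\clg{G}(\RR)$, the leaf-substitution $\{\RR\}$, and the projection $\phi^{\#\RR}$ from the proof of Lemma~\ref{lem:eta}, now applied to $N$-configurations of the form $(\vec\CC,\RR_1,\RR_2,\PP,\AA,S)$: the $\PP,\AA$ components are carried along unchanged, and $\{\RR\}$ acts on them as the identity since the $\eta$-aliases introduced during reduction are always fresh and hence do not belong to the interface sets recorded in $\PP,\AA$ (they become private names). (2) Strengthen the statement to an \emph{$\RR$-bisimulation up to histories}: define $R_\RR$ exactly as before, but require additionally that a labelled transition $\phi_1\pred{N}{(t,x)_{XY}}{}'\phi_1'$ be matched by $\phi_2\pred{N}{(t,x)_{XY}}{}\phi_2'$ with $\phi_1' R_{\RR'}\phi_2'$ (same move, possibly a larger expansion class), and symmetrically. (3) Verify the bisimulation conditions by the same case analysis as in Lemma~\ref{lem:eta}: internal, non-call/return transitions are simulated one-for-one; a call/return of a method in $\dom(\RR_i)$ is simulated one-for-one with $\RR$ unchanged; a call/return of a method \emph{not} in the owning repository either (a) concerns a name $m\notin\dom(\RR)$, simulated directly on the untagged side with $\RR'=\RR\uplus\{m_j'\mapsto\lambda y.m_jy\}$, or (b) concerns a name $m\in\dom(\RR)$, in which case $\{\RR\}$ collapses the $\eta$-step and $\phi'^{\#\RR'}=\phi^{\#\RR}$, so no untagged transition is needed; and the size function $\sz_\RR$ guarantees termination of the induction for the reverse direction. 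Because labelled transitions on both sides produce the \emph{same} move $(t,x)_{XY}$ at each simulated step, an easy induction on the length of the reduction sequence then shows $(M_1\|\cdots\|M_N,\RR_1,\RR_2,\PP,\AA,S)\Downarrow h$ iff $(M_1\|\cdots\|M_N,\RR_1\cup\RR_2,\PP,\AA,S)\Downarrow h$, which is the claim.

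The main obstacle I anticipate is checking that the labelled transitions are genuinely matched \emph{with the identical move}, i.e.\ that the freshly chosen $\eta$-aliases $m_j'$ appearing in the value $v'$ of a \textsc{PQy}$'$ transition are reconciled correctly with the value $v$ of the corresponding transition on the untagged side. The point is that in the untagged semantics the refresh also happens (rules \textsc{PQy}/\textsc{PAy} already rename the names occurring in $v$ to fresh names recorded in $\RR'$), so the two sides can be kept in step by choosing the same fresh names and folding the discrepancy between $\lambda x.m_jx$ on one side and the inherited body on the other into the enlarged expansion class $\RR'$; establishing that $\phi'^{\#\RR'}$ is exactly the untagged successor after this choice is the one genuinely fiddly verification. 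All the other cases are routine adaptations of Lemma~\ref{lem:eta}, and the $\PP,\AA$ components play no active role beyond being threaded through unchanged.
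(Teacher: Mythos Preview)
Your proposal is correct and takes essentially the same approach as the paper: the paper's own proof is just the sentence ``proven in a similar fashion as Lemma~\ref{lem:eta}'', and you have spelled out precisely that adaptation, including the one genuinely new ingredient---matching the labelled (history-producing) transitions move-for-move---which you correctly flag as the main point requiring care.
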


\begin{lemma}\label{lem:sched}
For any compatible $N$-configurations $\rho_1\asymp_\Pi^w\rho_2$ and history $h$, 
$(\rho_1\otimes_\Pi^w\rho_2)\Downarrow h$ iff:
\[
  \exists h_1,h_2,\sigma.\
\rho_1\Downarrow h_1\land\rho_2\Downarrow h_2\land h=h_1\doublewedge^\sigma_{\Pi,\Rho}h_2
\]
where $\Rho$ is computed from $\rho_1,\rho_2$ and $\Pi$ as before.
\end{lemma}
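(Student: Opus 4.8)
The plan is to prove both implications by induction, leaning on the preservation lemma for $\asymp$ stated just above (if $\rho_1\asymp_\Pi^w\rho_2$ and $\rho_1\otimes_\Pi^w\rho_2\xr{s}{}^{\ast}\rho_1'\otimes_{\Pi'}^{w'}\rho_2'$ then $\rho_1'\asymp_{\Pi'}^{w'}\rho_2'$) and on the fact that a composite $\rho_1\otimes_\Pi^w\rho_2$ is final precisely when both $\rho_1$ and $\rho_2$ are. The real content is an invariant about names that I would establish once and reuse in every case: $\Rho$ is read off $\rho_1,\rho_2$ by the formula in the definition of $\asymp$, and (i) whenever an internal $\textsc{Call}/\textsc{Retn}$ synchronisation passes a value $v$, the refreshed names are simultaneously placed in some $\PP_{i,Y}$ of the caller side and in the matching $\AA_{3-i,Y'}$ of the callee side, hence land in $\Rho$ — this matches the update $\Rho':=\Rho\uplus\Meths(v)$ in the $0$-scheduling clauses of $h_1\doublewedge^{\sigma}_{\Pi,\Rho}h_2$; (ii) names passed across an external ($\textsc{P/O\,Call/Retn}$) transition are fresh for the passive component by the $\otimes$-rule side condition, so they avoid $\Meths(\rho_{3-i})$ and therefore $\Pi\cup\Rho$, which by compatibility equals $\PP_\l\cup\PP_\k\cup\PP_\l'\cup\PP_\k'\cup\AA_\l\cup\AA_\k\cup\AA_\l'\cup\AA_\k'$ — this supplies the $\Meths(v)\cap(\Pi\cup\Rho)=\emptyset$ condition demanded by the $1$- and $2$-scheduling clauses.

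For the forward direction I would induct on the length of a run $\rho_1\otimes_\Pi^w\rho_2\pred{N}{s}{}^{\ast}\rho$ with $\rho$ final. The empty run gives $\rho_1,\rho_2$ final, so $\rho_i\Downarrow\epsilon$ and $s=\epsilon=\epsilon\doublewedge^{\epsilon}_{\Pi,\Rho}\epsilon$ with $\sigma=\epsilon$. Otherwise I case on the first rule. For $\textsc{Int}_i$ I extend the $i$-th component's witnessing run, keep $\sigma$, $s$ and the other component unchanged, and apply the IH after invoking preservation. For $\textsc{Call}/\textsc{Retn}$ with $m\in\Rho$, both components fire the move on $m$; by compatibility the two occurrences carry dual $O/P$ polarities (only the $\k/\l$ tag may differ), so prepending that move with the respective polarity to the two sub-histories and setting $\sigma:=0\sigma'$ rebuilds $s$ through a $0$-clause, using invariant (i) for the side conditions. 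For each of $\textsc{PCall}_i,\textsc{PRetn}_i,\textsc{OCall}_i,\textsc{ORetn}_i$ the active component fires an external move $(t,x)_{XY}$ and the other is untouched; I prepend $(t,x)_{XY}$ to both $s$ and the $i$-th sub-history and set $\sigma:=i\sigma'$, matching a $1$- or $2$-scheduling clause (the $P$-call sub-clause when the move is $\call{m(v)}_{PY}$, whose requirement $m\in\Pi\setminus\Rho$ is exactly what that $\otimes$-rule imposes), with invariant (ii) for freshness.

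For the backward direction I would fix runs $\alpha_1,\alpha_2$ witnessing $\rho_1\Downarrow h_1$, $\rho_2\Downarrow h_2$ and induct on $|\alpha_1|+|\alpha_2|$ (which bounds $|\sigma|$). If some $\alpha_i$ begins with an internal step $\rho_i\pred{N}{}{}\rho_i'$, I fire $\textsc{Int}_i$ on the composite (after $\alpha$-renaming the fresh private names so its side condition holds), use preservation to keep compatibility, and apply the IH with the shortened run, leaving $\sigma$ and $h$ unchanged. Otherwise both $\alpha_i$ begin with a labelled move (or are empty, already handled), and I case on the first symbol of $\sigma$. For $\sigma=0\sigma'$ the $\doublewedge$-clause forces $h_1=(t,x)_{X_1Y_1}h_1'$ and $h_2=(t,x)_{X_2Y_2}h_2'$ on the same $m\in\Rho$ with $X_1\neq X_2$, so the two standalone steps combine — via $\textsc{Call}$, or $\textsc{Retn}$ consistently with the $w$-component at $t$ — into $\rho_1\otimes_\Pi^w\rho_2\lto\rho_1'\otimes_\Pi^{0+_tw}\rho_2'$, and I recurse. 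For $\sigma=i\sigma'$ the $i$-th history starts with $(t,x)_{XY}$ on some $m\in\Pi$ and the other is unchanged, so I fire the matching $\textsc{PCall}_i/\textsc{PRetn}_i/\textsc{OCall}_i/\textsc{ORetn}_i$ and recurse. Here the nontrivial check is the $\otimes$-rule side condition that the $t$-th component of the passive configuration $\rho_{3-i}$ be an $O$-configuration: this follows from the $\doublewedge$-clause's requirement that the leftmost move of thread $t$ in the passive sub-history not be a $P$-move, since the single-threaded trace system is bipartite (a $P$-configuration's next move in its own thread is always a $P$-move, and a terminated component $([],-)$ is itself an $O$-configuration).

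I expect the main obstacle to be bookkeeping rather than any deep argument: one must keep three evolving name sets — $\Pi$, the implicitly tracked $\Rho$, and the per-component public/abstract sets — in lockstep across the two presentations and verify at every synchronisation that the names the $\otimes$ semantics refreshes are exactly those the $\doublewedge$ side conditions require to be fresh, invoking $\Pi\cup\Rho=\PP_\l\cup\cdots\cup\AA_\k'$ and $\Pi\cap\Rho=(\Theta_2\cup\Theta_2')\cap(\Theta_1\cup\Theta_1')$ each time. A secondary point, already accounted for above, is that $\sigma$ records only history-affecting transitions, so reconstructing a composite run from $\sigma$ may require interleaving arbitrarily many $\textsc{Int}_1$/$\textsc{Int}_2$ steps between consecutive symbols; peeling internal steps off the $\alpha_i$ before consulting $\sigma$ resolves this cleanly.
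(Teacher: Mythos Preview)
Your proposal is correct and follows essentially the same strategy as the paper: forward by induction on the length of the composite run, backward by induction driven by the scheduler. The one difference worth noting is the backward measure: the paper inducts on $|\sigma|$ alone and leaves the interleaved internal steps implicit (its base case $\sigma=\epsilon$ just says $s_1=s_2=s=\epsilon$, tacitly absorbing the $\textsc{Int}_i$ steps needed to reach finality), whereas you induct on $|\alpha_1|+|\alpha_2|$ and peel off internal steps explicitly before consulting $\sigma$; your treatment is more careful on exactly the point the paper elides, and your explicit name-tracking invariants (i) and (ii) fill in side-condition checks the paper does not spell out.
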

\begin{proof}
We show that,
for any compatible $N$-configurations $\rho_1\asymp_\Pi^w\rho_2$ and history suffix $s$, 
$(\rho_1\otimes_\Pi^w\rho_2)\Downarrow s$ iff:
\[
  \exists s_1,s_2,\sigma.\
\rho_1\Downarrow s_1\land\rho_2\Downarrow s_2\land s=s_1\doublewedge^\sigma_{\Pi,\Rho}s_2
\]
where $\Rho$ is computed from $\rho_1,\rho_2$ and $\Pi$ as in the beginning of this section.

The left-to-right direction follows from straightforward induction on the length of the reduction that produces $s$.
For the right-to-left direction, we do induction on the length of $\sigma$. If $\sigma=\epsilon$ then $s_1=s_2=s=\epsilon$. Otherwise, we do a case analysis on the first element of $\sigma$. We only look at the most interesting subcase, namely of $\sigma=0\sigma'$. Then, for some $m\in\Rho$:
\[
s_1=(t,\call{m(v)})s_1'\qquad
s_2=(t,\call{m(v)})s_2'
\]
By $\rho_i\Downarrow s_i$ and $\rho_1\asymp_\Pi^w\rho_2$ we have that $\rho_1\otimes^w_\Pi\rho_2\lto\rho_1'\otimes^{w'}_\Pi\rho_2$, where $w'=0+_tw$ and 
$\rho'_1\asymp_\Pi^{w'}\rho_2'$. Also, $\rho'_i\Downarrow s_i'$ and $s=s_1'\doublewedge^{\sigma'}_{\Pi,\Rho'}s_2'$ so, by IH, $(\rho_1'\otimes_\Pi^{w'}\rho_2')\Downarrow s$.
\end{proof}

We can now prove the correspondence between the traces of component libraries and those of their union.

\paragraph{Theorem~\ref{supercomp}}{\em%
%\begin{theorem}
Let $L_1:\Theta_1\to\Theta_2$ and  
$L_2:\Theta_1'\to\Theta_2'$ be libraries 
accessing disjoint parts of the store.
Then,
\[
\sem{L_1\cup L_2}_N
=\{ h\in\clg{H}^L\mid \exists \sigma,h_1\!\in\!\sem{L_1}_N\!, h_2\!\in\!\sem{L_2}_N\!.\,
h=h_1\doublewedge_{\Pi_0,\Rho_0}^{\sigma}h_2
\}\]
with $\Pi_0=\Theta_1\cup\Theta_2\cup\Theta_1'\cup\Theta_2'$ and
$\Rho_0=(\Theta_1\cup\Theta_1')\cap(\Theta_2\cup\Theta_2')$.}
%\end{theorem}
\begin{proof}
Let us suppose $(L_i)\redL^*(\epsilon,\RR_i,S_i)$, for $i=1,2$,
with $\dom(\RR_1)\cap\dom(\RR_2)=\dom(S_1)\cap\dom(S_2)=\emptyset$.
We set:
\begin{align*}
\rho_1 &=(([],-)\|\cdots\|([],-),\RR_1,(\emptyset,\Theta_2),(\Theta_1,\emptyset),S_1)\\ \rho_2 &=(([],-)\|\cdots\|([],-),\RR_2,(\emptyset,\Theta_2'),(\Theta_1',\emptyset),S_2) \end{align*}
We pick these as the initial configurations for  $\sem{L_1}_N$ and $\sem{L_2}_N$ respectively. Then,
$(L_1\cup L_2)\redL^*(\epsilon,\RR_0,S_0)$ where 
$\RR_0=\RR_1\uplus\RR_2$ and $S_0=S_1\uplus S_2$, and
 we take 
\[
\rho_0=(([],-)\|\cdots\|([],-),\RR_0,(\emptyset,\Theta_2\cup\Theta_2'),((\Theta_1\cup\Theta_1')\setminus\Rho_0,\emptyset),S_0)
\]
as the initial $N$-configuration for $\sem{L_1\cup L_2}_N$.
On the other hand, we have
$\rho_1\doublewedge^\epsilon_{\Pi_0}\rho_2= (([],-)\|\cdots\|([],-),\RR_1,\RR_2,(\emptyset,\Theta_2\cup\Theta_2'),((\Theta_1\cup\Theta_1')\setminus\Rho_0,S_0)$. From Lemma~\ref{lem:eta2}, we have that $\rho_0\Downarrow h$ iff $\rho_1\doublewedge^\epsilon_{\Pi_0}\rho_2\Downarrow h$, for all $h$.

Pick a history $h$.
For the forward direction of the claim,  
$\rho_0\Downarrow h$ implies 
$\rho_1\doublewedge^\epsilon_{\Pi_0}\rho_2\Downarrow h$ which, from Lemma~\ref{lem:bisim2}, 
implies $\rho_1\otimes^\epsilon_{\Pi_0}\rho_2\Downarrow h$.
We now use Lemma~\ref{lem:sched} to obtain $h_1,h_2,\sigma$ such that $\rho_i\Downarrow h_i$ and $h=h_1\doublewedge^{\sigma}_{\Pi_0,\Rho_0}h_2$.
Conversely, suppose that
$h_i\in\sem{L_i}_N$ and $h=h_1\doublewedge^{\sigma}_{\Pi_0,\Rho_0}h_2$.
WLOG assume that  $(\Meths(h_1)\cup\Meths(h_2))\cap(\dom(\RR_1)\cup\dom(\RR_2))\subseteq\Pi_0$ (or we appropriately alpha-covert $\RR_1$ and $\RR_2$).
Then, $\rho_i\Downarrow h_i$, for $i=1,2$, and therefore $\rho_1\otimes^\epsilon_{\Pi_0}\rho_2\Downarrow h$ by Lemma~\ref{lem:sched}.
By Lemma~\ref{lem:bisim2} we have that $\rho_1\doublewedge^\epsilon_{\Pi_0}\rho_2\Downarrow h$, which in turn implies that $\rho_0\Downarrow h$, i.e.\ $h\in\sem{L_1\cup L_2}_N$.
\end{proof}

\cutout{
A specific application of the theorem above is for showing that the trace semantics is compositional with respect to library composition ($\comp$). 

\paragraph{Theorem~\ref{supercomp}}{\em%
For all $L'\!:\emptyset\to\Theta,\Theta_1$ and $L\!:\Theta\to\Theta'$,
$\sem{L'\comp L}_N=\{ h\setminus\Theta\mid h\in\clg{H}^L\!\land
\exists \sigma,h_2\in\sem{L}_N, h_1\in\sem{L'}_N.\,
h=h_1\!\doublewedge_{\Pi_0,\Theta}^{\sigma}h_2
\}$,
with $\Pi_0=\Theta\cup\Theta'\cup\Theta_1$.}

\begin{proof}[Proof (sketch)]
By the previous theorem, using also the fact that, for all $L:\Theta_1\to\Theta_2$ and $\Theta\subseteq\Theta_2$, $\sem{L\setminus\Theta}_N=\{h\setminus\Theta\mid h\in\sem{L}_N\}$.
\end{proof}
}

%%% Local Variables: 
%%% mode: latex
%%% TeX-master: "holin"
%%% End: 

\section{Composition congruence}\label{sec:cong}

\subsection{Proof of Theorem~\ref{thm:gencomp}}\label{sec:gencomp}
\begin{proof}
Assume $L_1\genlin L_2$ and suppose $h_1\in \sem{L\cup L_1}$.
By Theorem~\ref{supercomp}, $h_1 = h'\doublewedge^\sigma_{\Pi,\Rho} h_1'$, where $h'\in\sem{L}$ and $h_1'\in\sem{L_1}$.
Because $L_1\genlin L_2$, there exists $h_2'\in \sem{L_2}$ such that $h_1'\genlin h_2'$, i.e.\ $h_1'\sat{P}{O}^\ast h_2'$.
Note that some of the rearrangements necessary to transform $h_1'$ into $h_2'$ may concern actions shared by $h_1'$ and $h'$;
their polarity will then be different in $h'$. 
%\\
Let $h''$ be obtained by applying such rearrangements to $h'$. We claim that $h'\sat{O}{P}^\ast h''$. 
Indeed, suppose that $(t',x')(t,x)_P$ are consecutive in $h_1'$, but swapped in order to obtain $h_2'$, and $(t,x)_P$ appears in $h'$ as $(t,x)_O$. 
Now, the move $(t',x')$ either appears in $h_1$, or it appears in $h'$ and gets hidden in $h_1$. In every case, let $s$ contain the moves of $h'$ that are after $(t',x')$ in the composition to $h_1$, and before $(t,x)_O$. We have that $s(t,x)_O$ is a subsequence of $h'$ and $h'\sat{O}{P}^\ast h''$ holds just if $s$ contains no moves from $t$.
But, if $s$ contained moves from $t$ then the rightmost one such would be some $(t,y)_P$. Moreover, in the composition towards $h_1$, the move would be scheduled with 1. The latter would break the conditions for trace composition as, at that point, the corresponding subsequence of $h_1'$ has as leftmost move in $t$ the P-move $(t,x)_P$. %(threads are deterministic and therefore we cannot have PP-configurations on the same thread). 
We can show similarly that $h'\sat{O}{P}^\ast h''$ holds in the case that the permutation in $h_1'$ is on consecutive moves $(t,x)_O(t',x')$. 
Finally, the rearrangements in $h_1'$ that do not affect moves shared with $h'$ can be treated in a simpler way: e.g.\ in the case of $(t',x')(t,x)_P$ consecutive in $h_1'$ and swapped in $h_2'$, if $(t,x)_P$ does not appear in $h'$ then we can check that $h'$ cannot contain any $t$-moves between $(t',x')$ and $(t,x)$ as the conditions for trace composition impose that only O is expected to play in that part of $h'$ (and any $t$-move would swap this polarity).
\\
Now, since $h'\in\sem{L}$,  Lemma~\ref{lem:genclosure} implies $h''\in\sem{L}$. 
Take $h_2$ to be  $h''\doublewedge^{\sigma'}_{\Pi,\Rho}  h_2'$, where $\sigma'$ is obtained from $\sigma$ following these move rearrangements. 
We then have $h_2\in\sem{L \cup L_2}$. Moreover, $h_1\genlin h_2$ thanks to $h_1'\genlin h_2'$. 
%and the fact that the rearrangements made to obtain $h''$ from $h'$ have been hidden in $h_2$.
Hence, $h_2\in \sem{L\cup L_2}$ and $h_1\genlin h_2$. Thus,  $L \cup L_1 \genlin L\cup L_2$.
\end{proof}

\subsection{Proof of Theorem~\ref{thm:enccomp}}\label{sec:enccomp}
\begin{proof}
Let us consider the first sequencing case (the second one is dual), and assume that 
$L_1,L_2:\Theta\to\Theta'$ and $L:\Theta''\to\Theta$.
Assume $L_1\enclin L_2$ and suppose $h_1\in \encsem{L\comp L_1}$.
By Theorem~\ref{supercomp}, $h_1 = h'\doublewedge^\sigma_{\Pi,\Rho} h_1'$, where $h'\in\sem{L}$, $h_1'\in\sem{L_1}$ and method calls from $\Theta$ are always scheduled with 0.
The fact that O cannot switch between $\l/\k$ components in (threads of) $h_1$ implies that the same holds for $h',h_1'$, hence $h'\in\encsem{L}$ and $h_1'\in\encsem{L_1}$.
Because $L_1\enclin L_2$, there exists $h_2'\in \encsem{L_2}$ such that $h_1'\genlin h_2'$, i.e.\ $h_1'(\sat{P}{O}\cup\satsym)^\ast h_2'$.
As before, some of the rearrangements necessary to transform $h_1'$ into $h_2'$ may concern actions shared by $h_1'$ and $h'$;
we need to check that these can lead to compatible $h''\in\encsem{L}$.
Let $h''$ be obtained by applying such rearrangements to $h'$. 
We claim that $h'\sat{O}{P}^\ast h''$. 
The transpositions covered by $\sat{P}{O}$ are treated as in Lemma~\ref{thm:gencomp}.
Suppose now that $(t',x')_{P\k}(t,x)_{O\l}$ are consecutive in $h_1'$ but swapped in order to obtain $h_2'$, and $(t,x)_{O\l}$ appears in $h'$ as $(t,x)_{P\k}$. 
Now, the move $(t',x')$ cannot appear in $h'$ as it is in $L_1$'s $\k$-component ($L$ is the $\l$-component of $L_1$). Let $s$ contain the moves of $h'$ that are after $(t',x')$ in the composition to $h_1$, and before $(t,x)_{P\k}$. We 
claim that $s$ contains no moves from $t$, so $h'$ can be directly composed with $h_2'$ as far as this transposition is concerned.
Indeed, if $s$ contained moves from $t$ then, taking into account the encapsulation conditions, the leftmost one such would be some $(t,y)_{O\k}$. 
But the $\k$-component of $L$ is $L_1$, which contradicts the fact that the moves we consider are consecutive in $h_1'$.
Hence,
taking $h_2$ to be  $h''\doublewedge^{\sigma'}_{\Pi,\Rho}  h_2'$, where $\sigma'$ is obtained from $\sigma$ following the $\sat{P}{O}$ move rearrangements,
we have $h_2\in\encsem{L \comp L_2}$ and $h_1\enclin h_2$.
Thus,  $L \comp L_1 \enclin L\comp L_2$.
\\
The case of $L \uplus L_1\enclin L \uplus L_2$ is treated in a similar fashion. In this case, because of disjointness, the moves transposed in $h_1'$ do not have any counterparts in $h'$. Again, we consider 
consecutive moves
$(t',x')_{P\k}(t,x)_{O\l}$ in $h_1'$ that are swapped in order to obtain $h_2'$.
Let $s$ contain the moves of $h'$ that are after $(t',x')$ in the composition to $h_1$, and before $(t,x)$. 
As $\Theta_1,\Theta_1'$ is first-order, $(t,x)_{O\l}$ must be a return move and the $t$-move preceding it in $h_1$ must be the corresponding call. The latter is a move in $h_1'$, which therefore implies that
there can be no moves from $t$ in $s$. Similarly for the other transposition case.
\end{proof}

\subsection{Proof of Theorem~\ref{thm:relcomp}}\label{sec:relcomp}
\begin{proof}
For the first claim,
suppose $L$ is $\relation\choose\clg{G}$-closed and $L_1\rellin L_2$. Consider $h_1\in \encsem{L\comp L_1}$.
By Theorem~\ref{supercomp}, $h_1=h'\doublewedge^\sigma_{\Pi,\Rho} h_1'$, where $h'\in\encsem{L}$ and $h_1'\in\encsem{L_1}$.
Since $L_1\rellin L_2$, there exists $h_2'\in\encsem{L_2}$ such that $(h_1'\restriction \k)\genlin (h_2'\restriction \k)$
and $(\overline{h_1'}\restriction \l)\,\relation\,(\overline{h_2'}\restriction \l)$.
By the permutation-closure of $\relation$, we can pick $h_2'$ not to contain any common names with $h'$ apart from those in the common moves of $h_1'$ and $h'$.
Because $L$ is $\relation\choose\clg{G}$-closed, $h'\in\encsem{L}$ and $(h'\restriction \l) = (\overline{h_1'}\restriction \l)$ and 
$(\overline{h_1'}\restriction \l)\,\relation\,(\overline{h_2'}\restriction \l)$,
we can conclude that there exists $h''\in\encsem{L}$ such that $(h''\restriction \k) = (\overline{h_2'}\restriction \l)$ and 
$(\overline{h'}\restriction \l)\,\,\clg{G}\,\,(\overline{h''}\restriction \l)$.
Applying the corresponding rearrangements to $\sigma$, we have that $h''$ and $h_2'$ are compatible, i.e.\ $(h'' \doublewedge^{\sigma'}_{\Pi,\Rho} h_2') \in \encsem{L\comp L_2}$.
Let $h_2=h''\doublewedge^{\sigma'}_{\Pi,\Rho} h_2'$. We want to show $h_1\rellin h_2$.
To that end, it suffices to make the following observations.
\begin{compactitem}
\item  We have $(h_1\restriction \k) \genlin (h_2\restriction \k)$ because
$(h_1\restriction \k)= (h_1'\restriction \k)$, $(h_1'\restriction \k)  \genlin (h_2'\restriction \k)$ and $(h_2'\restriction \k) = (h_2\restriction \k)$.
\item We have $(\overline{h_1}\restriction \l)\,\clg{G}\, (\overline{h_2}\restriction \l)$ because
$(\overline{h_1}\restriction \l)=(\overline{h'}\restriction \l)$,
$(\overline{h'}\restriction \l)\,\clg{G}\,(\overline{h''}\restriction \l)$
and $(\overline{h''}\restriction \l) = (\overline{h_2}\restriction \l)$.
\end{compactitem}
Consequently $L\comp L_1\sqsubseteq_{\clg{G}} L\comp  L_2$.
\\
Suppose now $L_1\rellin L_2$. Consider $h_1\in\encsem{L_1\comp L}$, i.e. $h_1=h_1'\doublewedge^{\sigma}_{\Pi,\Rho} h'$, where $h_1'\in\encsem{L_1}$ and $h'\in\encsem{L}$.
Because $L_1\rellin L_2$, there exists $h_2'\in \encsem{L_2}$ such that $h_1'\rellin h_2'$, i.e. $(h_1'\restriction \k)\genlin (h_2'\restriction \k)$ and
$(\overline{h_1'}\restriction \l)\relation (\overline{h_2'}\restriction \l)$.
Define $h''$ to be $h'$ in which $(\overline{h'} \restriction \l) = (h_1'\restriction \k)$ was modified by applying the same rearrangements as those 
witnessing $(h_1'\restriction \k)\genlin (h_2'\restriction \k)$. Consequently $h'\sat{P}{O}^\ast h''$. By Lemma~\ref{lem:encclosure}, $h''\in\encsem{L}$.
Moreover, $(\overline{h''}\restriction \l)= (h_2'\restriction \k)$. Consequently, $h_2'$ and $h''$ are compatible for the corresponding $\sigma'$. Let $h_2=h_2'\doublewedge^{\sigma'}_{\Pi,\Rho} h''\in \sem{L_2\comp L}$.
Then we get:
\begin{compactitem}
\item $(h_1\restriction \k) = (h'\restriction \k) = (h''\restriction \k) = (h_2\restriction \k)$;
\item $(\overline{h_1}\restriction \l) =(\overline{h_1'}\restriction \l)$,  $(\overline{h_1'}\restriction \l)\relation (\overline{h_2'}\restriction \l)$,
$(\overline{h_2'}\restriction \l)=(\overline{h_2}\restriction \l)$.
\end{compactitem} 
Consequently $h_1\rellin h_2$ and, hence, $L_1\comp L\rellin L_2\comp L$.
\\
For the last claim, we observe that because of the type-restrictions, the elements of $\encsem{L\uplus L_i}$ are interleavings of histories from $\encsem{L}$ and $\encsem{L_i}$.
Consider now $h_1\in\encsem{L\uplus L_1}$, i.e.\ $h_1=h'\doublewedge^{\sigma}_{\Pi,\Rho} h_1'$, where $h_1'\in\encsem{L_1}$ and $h'\in\encsem{L}$, and let 
$h_2'\in \encsem{L_2}$ be such that $h_1'\rellin h_2'$. From our previous observation, we have that $h'$ can still be composed with $h_2'$, for appropriate $\sigma'$.
Thus, taking $h_2=h'\doublewedge^{\sigma'}_{\Pi,\Rho}h_2'$, we have $h_2\in\encsem{L\uplus L_2}$ and, moreover, $h_1'\rellin h_2'$ implies $h_1\rellin[\relation^+]h_2$.
\end{proof}

%%% Local Variables:
%%% mode: latex
%%% TeX-master: "holin"
%%% End:

\end{document}

%%% Local Variables:
%%% mode: latex
%%% TeX-master: t
%%% End: